\newcommand{\intervalle}[4]{\mathopen{#1}#2
                            \mathclose{}\mathpunct{},#3
                            \mathclose{#4}}
\newcommand{\intff}[2]{\intervalle{[}{#1}{#2}{]}}
\newcommand{\intof}[2]{\intervalle{(}{#1}{#2}{]}}
\newcommand{\intfo}[2]{\intervalle{[}{#1}{#2}{)}}
\newcommand{\intoo}[2]{\intervalle{(}{#1}{#2}{)}}
\newcommand{\petito}[1]{o\mathopen{}\left(#1\right)}
\newcommand{\grandO}[1]{O\mathopen{}\left(#1\right)}
\newcommand{\abs}[1]{\left\lvert#1\right\rvert}
\newcommand{\norme}[1]{\left\lVert#1\right\rVert}
\newcommand{\pdtsc}[2]{\left\langle#1,#2\right\rangle}
\newcommand{\enstq}[2]{\left\{#1\mathrel{}\middle|\mathrel{}#2\right\}}
\newcommand{\et}{\quad \text{and} \quad}
\newcommand{\ensemblenombre}[1]{\mathbb{#1}}
\newcommand{\N}{\ensemblenombre{N}}
\newcommand{\R}{\ensemblenombre{R}}
\newcommand{\pt}{\, .}
\newcommand{\limit}[1]{\quad\underset{#1}{\longrightarrow}\quad}
\newcommand{\diff}{\mathop{}\mathopen{}\mathrm{d}}
\DeclareMathOperator{\spn}{span}
\DeclareMathOperator{\dist}{d}
\DeclareMathOperator{\supp}{supp}
\theoremstyle{plain}
\newtheorem{theo}{Theorem}
\newtheorem{prop}[theo]{Proposition}
\newtheorem{cor}[theo]{Corollary}
\newtheorem{lemma}[theo]{Lemma}
\theoremstyle{remark}
\newtheorem{rem}[theo]{Remark}
\theoremstyle{definition}
\renewcommand\thesubsection{\arabic{section}.\arabic{subsection}}
\titleformat{\subsection}[runin]{\normalfont\bfseries}{\thesubsection}{.5em}{}[.]
\title{The diatomic Hartree model at dissociation}
\author{Jean Cazalis\thanks{CEREMADE, CNRS, UMR 7534, Universit\'e Paris-Dauphine, PSL University, 75016 Paris, France; email: \href{mailto:cazalis@ceremade.dauphine.fr}{cazalis@ceremade.dauphine.fr}}}
\date{\today}
\begin{document}
	
\maketitle

\begin{abstract}
	We study the Hartree model for two electrons with spin, living in the two-dimensional or three-dimensional space with Coulomb interactions and submitted to the potential induced by two nuclei of charge +1. In the limit where the nuclei move away from each other, we show that the two lowest eigenfunctions of the mean-field hamiltonian are asymptotically given by an even, respectively odd, superposition of the minimizer associated with the corresponding Hartree one nucleus model. We then give upper and lower bounds on the exponentially small gap between the first two eigenvalues, due to (nonlinear) quantum tunneling.
\end{abstract}	

\tableofcontents

\section{Introduction and statement of the main theorem}

\subsection{Diatomic Hartree model}
We consider a neutral diatomic system formed of two electrons and two point nuclei of charge $+1$ located at distance $L$ in $\R^d$ where $d \in \{2,3\}$. We assume the charges interact through the three-dimensional Coulomb potential $\frac{1}{\abs{x}}$. Our aim is to study the behavior of the electrons in the (restricted) Hartree approximation in the limit $L\to\infty$. Without loss of generality, we assume that the two nuclei are located at $\pm\mathbf{x}_L$ where $\mathbf{x}_L= (\frac{L}{2},0)$ if $d=2$ and $\mathbf{x}_L = (\frac{L}{2},0,0)$ if $d=3$. The potential generated by these nuclei is denoted by
\begin{align}
\label{eq: definition_diatomic_potential}
V_L(x) \coloneqq  -\left(\frac{1}{\abs{x - \mathbf{x}_L}} + \frac{1}{\abs{x + \mathbf{x}_L}}\right) \pt
\end{align}
The nucleus located at $-\mathbf{x}_L$ (resp. $+\mathbf{x}_L$) will be called the \emph{left} (resp. \emph{right}) nucleus. When $d=3$ we recover the usual Coulomb potential for two pointlike charges. In dimension $d=2$, this potential corresponds to the one generated by a system of nuclei confined to a plane.

The (restricted) Hartree energy of the electrons in the state $v \in H^1(\R^d)$ is given by \cite{benguria1981thomasfermi, lieb1981thomasfermi}
\begin{align}
\label{eq: restricted_hartree_functional}
\mathcal{E}_L(v) \coloneqq \int_{\R^d} \abs{\nabla v(x)}^2\diff x + \int_{\R^d} V_L(x) \abs{v(x)}^2\diff x + \frac{1}{2}\iint_{\R^d \times \R^d} \frac{\abs{v(x)}^2\abs{v(y)}^2}{\abs{x-y}} \diff x \diff y + \frac{1}{L} \pt
\end{align}
We have chosen a system of units such that $\hbar = e = \frac{1}{4\pi \epsilon_0} = 1$ and $m = \frac{1}{2}$ where $m$ and $e$ are respectively the mass and charge of an electron, $\epsilon_0$ is the dielectric permittivity of the vacuum and $\hbar$ is the reduced Planck constant. The first term in \eqref{eq: restricted_hartree_functional} is the kinetic energy of the two electrons in the state $v$, the second term its interaction energy with respect to the potential $V_L$ induced by the nuclei at $\pm \mathbf{x}_L$, the third and fourth terms are respectively the self-interaction energy of the charge distribution $\abs{v}^2$ and of the nuclei. The reason we have only one function $v$ as variable is because we look at a singlet state where the anti-symmetry is in the spin. In our setting, the Hartree model consists in minimizing the energy functional $\mathcal{E}_L$ on the set of admissible states
\begin{align*}
\mathcal{P} \coloneqq \enstq{v \in H^1(\R^d)}{\int_{\R^d} \abs{v}^2=2} \pt
\end{align*}
The normalization $\int_{\R^d} \abs{v}^2=2$ is because we have two electrons and it ensures the neutrality of the system. It is well known that the functional $v \mapsto \mathcal{E}_L(v)$ is bounded from below and strongly continuous on $\mathcal{P}$ and weakly lower semi-continuous on $\enstq{v \in H^1(\R^d)}{\int_{\R^d} \abs{v}^2\leq2}$ (see \cite[Section~VII]{lieb1981thomasfermi}). The variational problem corresponding to the Hartree model in consideration reads
\begin{align}
\label{eq: restricted_hartree}
\boxed{E_L \coloneqq \frac{1}{2} \inf_{v \in \mathcal{P}} \mathcal{E}_L(v)  \pt}
\end{align}
The factor $\frac{1}{2}$ is because we compute the energy per electron.
It is known (see \cite[Section VII]{lieb1981thomasfermi}) that the variational problem \eqref{eq: restricted_hartree} has a unique positive minimizer $u_L^+$ which satisfies the Euler-Lagrange equation
\begin{align}
\label{eq: EL_equation}
h_L u_L^+ = \left(-\Delta + V_L +  \lvert u_L^+ \rvert^2 \ast |\cdot|^{-1}\right) u_L^+ = \mu_L^+u_L^+ \, ,
\end{align}
where $\mu_L^+$ is a Lagrange multiplier and the first eigenvalue of the mean-field hamiltonian $h_L$. The operator $h_L$ is self-adjoint when defined on the domain $\mathcal{D}(h_L) = H^2(\R^3)$ if $d=3$ and on
\begin{align}
\label{eq: domain_1}
\mathcal{D}(h_L) 
&= \enstq{v \in H^1(\R^2)}{(-\Delta + V_L + \lvert u_L^+ \rvert^2 \ast |\cdot|^{-1})v \in L^2(\R^2)}\\
\label{eq: domain_2}
&= \enstq{v \in H^1(\R^2)}{(-\Delta + V_L)v \in L^2(\R^2)}\, ,
\end{align}
when $d=2$. The equality above comes from the fact that $\lvert v \rvert^2 \ast |\cdot|^{-1}$ belongs to $L^\infty(\R^2)$ whenever $v\in H^1(\R^2)$. By standard arguments \cite{kato1982short, reed1978methodsIV}, the essential spectrum of $h_L$ is $\intfo{0}{\infty}$, $\mu_L^+ $ is negative and non-degenerated and, up to a change of phase, we have $u_L^+ >0$ everywhere on $\R^d$.

Since the nuclei have the same charge, the model is invariant under reflection with respect to the hyperplan $\{x_1=0\}$. To mathematically translate this property, we introduce the reflection operator $\mathcal{R}$, which is the unitary operator defined for all $v \in L^2(\R^d)$ by $\mathcal{R}[v] (x) = v(R^* x)$ where
\begin{align}
\label{eq: def_reflection}
R = \begin{pmatrix}
-1 & 0 \\ 
0 & 1
\end{pmatrix} \quad \text{if}\quad d=2 \et
R = \begin{pmatrix}
-1 & 0 & 0 \\ 
0 & 1 & 0 \\
0 & 0 & 1
\end{pmatrix} \quad \text{if}\quad d=3 \pt
\end{align}
Since $[h_L,\mathcal{R}] = 0$, we can decompose $\mathcal{D}(h_L)$ into an orthogonal sum of two isotypic subspaces, each one carrying irreducible representations of $\mathcal{R}$. As a consequence, we can always choose the eigenfunctions of $h_L$ to be even or odd with respect to the $x_1$ variable. Since the ground state energy of $h_L$ is non degenerate, $u_L^+$ is $\mathcal{R}$-invariant, that is,
\begin{align*}
\mathcal{R}[u_L^+] = u_L^+ \pt
\end{align*}
When $L$ is large enough, we will show that $h_L$ admits a second negative eigenvalue $\mu_L^- > \mu_L^+$ and we denote the corresponding eigenstate by $u_L^-$. We can show that $u_L^-$ is invariant under rotations along the first axis. Up to a change of phase, we can also choose it positive on the half space $\{x_1>0\}$ and negative on $\{x_1<0\}$.

In the limit where $L\to \infty$, $u_L^\pm$ will split into two bubbles, each of them minimizing the one nucleus Hartree model to leading order. Results of this kind have been shown in the large literature about double well type potentials, see for instance \cite{harrell1980double} in the linear case or \cite{olgiati2020hartree} for a nonlinear model. Our goal will be to prove this in the Coulomb case and determine the behavior of the Lagrange multipliers $\mu_L^\pm$ more precisely.

\subsection{Monoatomic Hartree model}
Now, we detail the monoatomic Hartree model which will constitute our elementary brick to describe the behavior of $u_L^\pm$ and $\mu_L^\pm$ when $L$ goes to $\infty$.
We introduce the energy functional
\begin{gather*}
\mathcal{E}(v) \coloneqq \int_{\R^d} \abs{\nabla v(x)}^2\diff x - \int_{\R^d} \frac{\abs{v(x)}^2}{\abs{x}}\diff x +  \frac{1}{2} \iint_{\R^d \times \R^d} \frac{\abs{v(x)}^2 \abs{v(y)}^2}{\abs{x-y}} \diff x \diff y \, ,
\end{gather*} 
defined for all $v\in H^1(\R^d)$. The associated minimization problem reads
\begin{gather}
\label{eq: min_problem_ow}
\boxed{I \coloneqq \inf \enstq{\mathcal{E}(v)}{v \in H^1(\R^d)\et \int_{\R^d} \abs{v}^2 = 1 } \pt}
\end{gather}
This minimization problem has been extensively studied in the litterature, at least in dimension~3 (see for instance \cite{benguria1981thomasfermi, lieb1981thomasfermi}). In particular, all the minimizing sequences for \eqref{eq: min_problem_ow} are precompact in $H^1(\R^d)$ and \eqref{eq: min_problem_ow} admits a unique minimizer $u$, up to a phase, which is the unique positive ground state of the self-adjoint operator
\begin{align}
\label{eq: mean-field_hamiltonian_ow}
h \coloneqq -\Delta +V + \abs{u}^2\ast|\cdot|^{-1} \, ,
\end{align}
defined on the domain 
\begin{align}
\label{eq: domain_mono}
\mathcal{D}(h) = \enstq{v \in H^1(\R^2)}{(-\Delta + V)v \in L^2(\R^2)} \, ,
\end{align}
if $d=2$ and on $H^2(\R^3)$ if $d=3$. The essential spectrum of $h$ is equal to $\intfo{0}{\infty}$ and we denote by $\mu <0$ its first nondegenerate eigenvalue. Because the energy functional $\mathcal{E}$ is invariant under rotations, $u$ is radial.

\subsection{Main result}
Now, we can state our main result which provides leading order when $L\to\infty$ of the ground state $u_L^+$ and the first excited state $u_L^-$ of the diatomic model \eqref{eq: restricted_hartree}. Those can be expressed as an even or odd superposition of right and left translations of the function $u$, respectively. Moreover, we give upper and lower exponential bounds on the spectral gap $\mu_L^- - \mu_L^+$ which is our main contribution in this article.
\begin{theo}
	\label{theo: main_theorem}
	Let $\alpha = 0$ when $d=3$ and $\alpha >0$ when $d=2$. Let $\epsilon >0$. Then we have
	\begin{gather}
	\label{eq: convergence_lagrange_multipliers}
	\abs{\mu_L^\pm - \mu} =
	\begin{cases}
	\grandO{L^{-3+\epsilon}} & \text{if }d=2 \, , \\
	\grandO{L^{-\infty}} & \text{if }d=3 \, ,
	\end{cases} \\
	\label{eq: convergence_ground_and_first_excited_states}
	\norme{u_L^\pm - (u(\cdot - \mathbf{x}_L) \pm u(\cdot + \mathbf{x}_L))}_{H^{2-\alpha}(\R^d)} =
	\begin{cases}
	\grandO{L^{-3+\epsilon}} & \text{if }d=2 \, , \\
	\grandO{L^{-\infty}} & \text{if }d=3 \pt
	\end{cases}
	\end{gather}
	When $d=2$, the constants appearing in the $O$ depend on $\alpha$ and $\epsilon$. The energy $E_L$ satisfies
	\begin{align}
	\label{eq: energy_difference}
	E_L =  I + 
	\begin{cases}
	\left(\frac{3m_1}{4}\right)^2 \frac{1}{L^5} + \petito{\frac{1}{L^5}} & \text{if }d=2 \, ,\\
	\grandO{L^{-\infty}} & \text{if }d=3 \, ,
	\end{cases}
	\end{align}
	where the one-electron energy $I$ is defined in \eqref{eq: min_problem_ow} and $m_1 \coloneqq \int_{\R^2} \abs{u(x)}^2 \abs{x}^2 \diff x$ is the second moment of $\abs{u}^2$.
	Moreover, there exists $C>0$ such that the following lower and upper exponential bounds on the spectral gap $\mu_L^- - \mu_L^+$ holds
	\begin{align}
	\label{eq: spectral_gap_estimation}
	\boxed{\frac{1}{C} \frac{e^{-\sqrt{\abs{\mu}}L}}{L^d} \leq \mu_L^- - \mu_L^+ \leq C e^{-\sqrt{\abs{\mu}}L}\pt}
	\end{align}
\end{theo}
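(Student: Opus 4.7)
The overall approach is the classical one for nonlinear double-well problems, but the long-range Coulomb interaction forces us to be careful with error terms. The first step is to establish pointwise Agmon-type bounds $c(1+|x|)^{-k} e^{-\sqrt{\abs{\mu}}|x|} \leq u(x) \leq C(1+|x|)^{-k'} e^{-\sqrt{\abs{\mu}}|x|}$ for the monoatomic ground state, together with analogous upper bounds for $u_L^\pm$ uniform in $L$. The key observation is that each neutral fragment (nucleus + electron) has vanishing monopole and dipole moments (by radiality of $u$), so the effective one-body potential decays as $|x|^{-3}$ at infinity and the Agmon distance reduces to the Euclidean one. The upper bound follows from standard Agmon arguments; the matching pointwise lower bound, which is crucial for the lower bound on the spectral gap, is obtained via a Harnack comparison against an explicit subsolution.

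\textbf{Upper bounds and convergence of eigenfunctions.} I then test $\mathcal{E}_L$ against the trial states $\tilde{u}_L^\pm := u(\cdot - \mathbf{x}_L) \pm u(\cdot + \mathbf{x}_L)$, renormalised to belong to $\mathcal{P}$ at a cost of an exponentially small correction. A careful multipole expansion of the cross Coulomb interactions, combined with the vanishing of monopole and dipole moments of each fragment, shows that the residual interaction is purely exponentially small when $d=3$ and of order $\left(\frac{3m_1}{4}\right)^2 L^{-5}$ when $d=2$; this gives \eqref{eq: energy_difference} and, via the min-max principle applied separately on the $\mathcal{R}$-even and $\mathcal{R}$-odd sectors (both preserved by $h_L$), the upper halves of \eqref{eq: convergence_lagrange_multipliers}. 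The matching lower bounds and the $H^{2-\alpha}$ convergence \eqref{eq: convergence_ground_and_first_excited_states} then follow from a quantitative coercivity of the monoatomic Hartree functional at $u$ (its constrained Hessian being bounded below on the tangent space to $\mathcal{P}$): localising $u_L^\pm$ to each half-space and comparing to the translate of $u$ gives an $L^2$ estimate, which the Euler-Lagrange equation \eqref{eq: EL_equation} combined with elliptic regularity upgrades to $H^{2-\alpha}$. One power of regularity must be conceded when $d=2$ because the form domain \eqref{eq: domain_1}--\eqref{eq: domain_2} is not $H^2$.

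\textbf{Spectral gap and main obstacle.} The upper bound in \eqref{eq: spectral_gap_estimation} is the easy half: testing the $\mathcal{R}$-odd sector with $\tilde{u}_L^-$ and expanding the Rayleigh quotient shows the gap is controlled by the cross matrix element $\pdtsc{h_L\, u(\cdot - \mathbf{x}_L)}{u(\cdot + \mathbf{x}_L)}$, which is $\grandO{e^{-\sqrt{\abs{\mu}}L}}$ by Step 1. The lower bound is the main obstacle. The plan is a Helffer--Sjöstrand-type tunneling formula: introduce the one-well functions $f_R := (u_L^+ + u_L^-)/\sqrt{2}$ and $f_L := \mathcal{R} f_R$, each localised near a single nucleus by the previous convergence, and exploit the identity $(h_L - \mu_L^+) f_R = \tfrac{1}{2}(\mu_L^- - \mu_L^+)(f_R - f_L)$ combined with a Green-type integration by parts over the right half-space $\{x_1>0\}$ to rewrite $\mu_L^- - \mu_L^+$ as a surface integral over $\{x_1 = 0\}$ plus a nonlinear remainder. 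The surface integral is bounded from below using the pointwise lower bound on $u$ at distance $L/2$, producing exactly the expected $L^{-d} e^{-\sqrt{\abs{\mu}}L}$. The most delicate step is to prove that the nonlinear remainder, coming from the coupling of the two wells through $\abs{u_L^\pm}^2 \ast |\cdot|^{-1}$, is strictly smaller than the leading surface integral; this should follow from the fact that $\abs{u_L^+}^2 - \abs{u_L^-}^2$ has zero mass, so its Coulomb potential decays one order faster than naively expected and only contributes exponentially smaller error terms.
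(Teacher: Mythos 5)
Your first two steps (Agmon-type two-sided bounds on $u$, trial states $u(\cdot-\mathbf{x}_L)\pm u(\cdot+\mathbf{x}_L)$, multipole cancellations, and coercivity of the monoatomic functional to convert energy gains into norm estimates) follow the same lines as the paper, but even there you skip the step that actually produces the rate $L^{-3+\epsilon}$ in $d=2$: a single application of coercivity does not close the estimate, because the cross Coulomb terms in the lower bound are evaluated at the localized piece $v_L$ of $u_L^+$, not at $u$, so the error you must absorb is itself of the form $\grandO{L^{-3}\norme{\abs{v_L}^2-\abs{u}^2}_{L^1}}+\grandO{L^{-1}\norme{\abs{v_L}^2-\abs{u}^2}_{L^1}^2}$ and reinvokes the unknown distance. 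The paper runs a bootstrap (the fixed point of $\nu\mapsto\min(\tfrac12+\nu,\tfrac32+\tfrac{\nu}{2})$ is $3$) to reach $L^{-3+\epsilon}$, and this rate is also what makes the $\petito{L^{-5}}$ precision in \eqref{eq: energy_difference} legitimate; without it your argument only yields a much weaker polynomial rate and only the upper bound in the energy expansion.

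The more serious gaps are in \eqref{eq: spectral_gap_estimation}. For the upper bound, testing the odd sector with $u(\cdot-\mathbf{x}_L)-u(\cdot+\mathbf{x}_L)$ controls $\mu_L^-$ but not the gap: you would need $\mu_L^+$ to agree with the even Rayleigh quotient up to $\grandO{e^{-\sqrt{\abs{\mu}}L}}$, whereas that difference is generically of size $\norme{\tilde u_L^+-u_L^+}^2\sim L^{-6+\epsilon}$ in $d=2$ (the diagonal corrections coming from $V_L^{\ell}$ and the mean-field potential mismatch are only polynomially small, unlike in the short-range or $d=3$ case), which dwarfs $T_L$. The paper avoids this by taking $g_L u_L^+$ with $g_L$ an odd switch function as the odd trial state and using the ground state substitution formula, which eliminates all potential terms exactly and leaves only $\norme{(\nabla g_L)u_L^+}^2$, supported in the barrier region. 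For the lower bound, the flux identity over $\{x_1>0\}$ is exact here (both $u_L^\pm$ are eigenfunctions of the same linear operator $h_L$, so there is in fact no nonlinear remainder; if instead you compare with one-well mean-field operators, the potential mismatch is $\grandO{L^{-3}}$ in $d=2$, not exponentially small, and cannot be dismissed via the zero-mass argument alone), but it reads $(\mu_L^--\mu_L^+)\int_{x_1>0}u_L^+u_L^-=\int_{x_1=0}u_L^+\,\partial_{x_1}u_L^-$, and the pointwise lower bound on $u$ (or $u_L^+$) at distance $L/2$ does not bound this from below: you need a quantitative lower bound on $\partial_{x_1}u_L^-$ along the nodal hyperplane (a Hopf/boundary-Harnack type input), and no lower bound on $u_L^-$ away from the nuclei is available. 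The paper circumvents this by writing $u_L^-=\widetilde f\,u_L^+$, applying the substitution formula, and using a one-dimensional Cauchy--Schwarz along lines crossing the barrier together with $\widetilde f\geq\tfrac12$ near $\mathbf{x}_L$ (from the convergence of $u_L^\pm$ to $u$ there) and the sharp lower bound on $u_L^+$ in the tube between the nuclei; only then does the factor $L^{-d}$ in \eqref{eq: spectral_gap_estimation} come out. As written, your proposal is missing the ingredient that makes both halves of the gap estimate work.
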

In this statement, we have used the notation $\grandO{L^{-\infty}}$ to denote a $\grandO{L^{-k}}$ for all $k$, where the $O$ may depend on $k$.

It is well-known that the eigenvalues of a linear Schrödinger operator in a double well tend to group in pairs as the distance between the wells increases \cite{harrell1980double}.
The spectral gap between pairs of eigenvalues comes from the tunneling effect between the two wells and its estimation amounts to finding the asymptotics of the corresponding eigenfunctions at infinity, see \cite{harrell1980double, simon1984semiclassicalII}. We refer for instance to \cite{daumer1996schrodinger} where Daumer has precisely quantified the tunneling effect for linear Schrödinger operators with $-\Delta$-compact potentials vanishing at infinity. In dimension 3, this class of potentials contains in particular Coulomb potentials which are physically relevant when one wants to study molecules at dissociation. When the potential is homogeneous of degree $-1$, the regime where the wells get far apart is equivalent to a semi-classical limit, which has been intensively studied in the 80's. We refer to the series of papers from Helffer and Sjöstrand \cite{helffer1984multipleI,helffer1985puitsII,helffer1985multipleIII,helffer1985puitsIV,helffer1986puitsV,helffer1987puitsVI} and their collaborators \cite{outassourt1984, mohamed1991estimations, daumer1994hartree, daumer1996schrodinger}.

To our knowledge, the literature about the tunneling effect emerging from nonlinear models at dissociation seems rather scarce. In \cite{daumer1994hartree}, using fixed point arguments, Daumer constructed solutions to the Hartree-Fock equations in multiple wells from the ground state of reference monoatomic operators. Nevertheless, the assumptions on the well and interaction potentials are rather restrictive and do not cover Coulomb systems. In \cite{conlon1983semiclassical}, Conlon studied the Hartree-Fock model where the atomic potentials are assumed to be smooth in the vicinity of the nuclei and with Coulombic behavior at infinity. In the semi-classical regime, the author shows that the exchange energy for the ground state of this system converges toward the Dirac exchange energy of the Thomas-Fermi one-body density. More recently, in a series of papers \cite{rougerie2018interacting,olgiati2020hartree,olgiati2021bosons}, Olgiati, Rougerie and Spehner consider bosonic systems trapped in a symmetric double-well potential in the limit where the distance between the wells increases to infinity and the potential barrier is high. These last works are probably the closest to our and we will use several arguments from \cite{olgiati2020hartree} in this paper.

In dimension $d=3$, our model has already been considered by Catto and Lions in \cite{catto1993bindingIII}, but in a slightly different context. More specifically, the authors precisely compute the leading order of the energy difference between the diatomic model at dissociation and the two non-interacting monoatomics models. They also give asymptotic estimates on the corresponding ground states but did not considered the spectral gap. The strategy of proof for Theorem \ref{theo: main_theorem} does not depend on the dimension. However, thanks to the estimates from \cite{catto1993bindingIII}, many difficulties arising in the two-dimensional scenario are avoided when $d=3$. In particular, when $d=3$, all the estimates are exponentially small when $L \to \infty$ which is a manifestation of Newton's theorem (recall the system is neutral), that is, of the fact that the potential $V(x) = -\abs{x}^{-1}$ is, up to a constant factor, the Green function of the Laplacian in $\R^3$. That is why we will mainly focus on the more difficult $d=2$ case and only outline the argument when $d=3$, giving references whenever it is relevant. When $d=2$, the multipolar expansion of the potential generated by a radial charge distribution localized in space admits a non zero quadrupole moment (see Lemma \ref{lemma: potential_expansion} in Appendix \ref{sec: expansion_formula}). As a consequence, we obtain a polynomial convergence rate for the Lagrange multipliers, the ground state, the first excited state and the energy (see \eqref{eq: convergence_lagrange_multipliers}, \eqref{eq: convergence_ground_and_first_excited_states} and \eqref{eq: energy_difference}). The main contribution of this article is the estimation \eqref{eq: spectral_gap_estimation} of the tunneling effect when $d=2$ in spite of the absence of Newton's theorem. To this end, we use at our advantage the fact that tunneling depends to leading order only on the behavior around the origin of the corresponding eigenfunctions (see Lemma \ref{lemma: ground_state_resolution}). We are able to determine the exact exponential decay, but not the polynomial factor. We think one could get rid of the $L^{-d}$ in the left side of \eqref{eq: spectral_gap_estimation}.

\subsection{Strategy of proof}
We sketch the strategy for the proof of Theorem \ref{theo: main_theorem}. 
It is convenient to introduce some notation relative to the translated version of the monoatomic model defined in~\eqref{eq: min_problem_ow}.
We denote by
\begin{gather}
\label{eq: definition_translated_potential}
V^r_L \coloneqq V(\cdot - \mathbf{x}_L) \, ,\quad V^\ell_L \coloneqq V(\cdot + \mathbf{x}_L)\, ,\quad u^r_L \coloneqq u(\cdot - \mathbf{x}_L) \et u^\ell_L \coloneqq u(\cdot + \mathbf{x}_L) \, ,
\end{gather}
the left and right potentials and monoatomic eigenstates. For $\kappa \in \{\ell,r\}$, $u^\kappa_L$ satisfies the Euler-Lagrange equation
\begin{align}
\label{eq: euler_lagrange_equation_one_nucleus}
\boxed{h^\kappa_L u^\kappa_L \coloneqq \left(-\Delta +V^\kappa_L + \abs{u^\kappa_L}^2\ast|\cdot|^{-1}\right) u^\kappa_L = \mu\, u^\kappa_L \pt}
\end{align}

First we state regularity properties (see Proposition \ref{prop: regularity}, Proposition \ref{prop: study of d(h)} and Remark \ref{rem: continuity_around_singularities}) about the eigenfunctions associated with the two mean-field hamiltonians we are interested in. Then, we use the maximum principle (Lemma \ref{lemma: comparison_supersolution} and Lemma \ref{lemma: comparison_hoffmann_ostenhoff}) to establish exponential pointwise and integral bounds on the minimizer $u$ of the monoatomic model (Proposition \ref{prop: u_exponential_falloff}). They allow us to estimate interaction terms (Lemma \ref{lemma: tunneling} and Lemma \ref{lemma: interaction_left_right_d=3}) which will appear later on.

Then we show a first convergence result in $H^1(\R^d)-$norm (Proposition \ref{prop: strong_convergence_H1}) by inserting the trial function
\[
u_\mathrm{trial} = \frac{ \sqrt{2}(u^r_L +  u^\ell_L)}{\norme{ u^r_L +  u^\ell_L}_{L^2(\R^d)}} \, ,
\]
into the energy functional $\mathcal{E}_L$ of the diatomic model. The argument uses the precompactness of the minimizing sequence for the monoatomic model. We get the upper bound $E_L \leq \mathcal{E}(u) + \petito{1}$ from the expansion of the mean-field hamiltonian $h$ (Lemma \ref{lemma: potential_expansion}). To get the lower bound we localize in the vicinity of the two nuclei. This allows us to show convergence of the Lagrange multiplier (Proposition~\ref{lemma: convergence energy})
\[
\abs{\mu_L^+ - \mu} \limit{L\to\infty} 0 \, ,
\]
and convergence in $H^{2-\alpha}(\R^d)-$norm of the eigenfunctions (Proposition \ref{prop: stronger_convergence}). This finally provides pointwise and integral exponential bounds on $u_L^+$ (Proposition \ref{cor: exponential decay u^+_n}).

After that, we will aim at estimating the rates of convergence for both $H^1(\R^d)-$norm errors and for the Lagrange multipliers (Proposition~\ref{prop: rate_convergence} and Proposition \ref{prop: rate_convergence_ground_state_energy}). For this purpose, we state a stability result (Proposition \ref{prop: stability}) and we make use of bootstrap-type arguments. We also get similar results for the first excited state $u_L^-$ and its associated Lagrange multiplier $\mu_L^-$ (Proposition~\ref{prop: rate_convergence_bis}). We also show a uniform lower bound on the gap between $\mu_L^\pm$ and the remaining spectrum of $h_L$ (Proposition \ref{prop: the_remaining_spectrum_is_far_away}).

Then, we give convergence rates in higher Sobolev space (Proposition~\ref{prop: convergence_rate_higher_sobolev_spaces}). In Proposition \ref{prop: sharper_exponential_bounds}, we give exact pointwise exponential bounds on $u_L^\pm$ at finite but large distance from the nuclei. This allows us to study the spectral gap $\mu_L^- - \mu_L^+$ (Theorem \ref{th: spectral_gap_estimation}). By the ground state substitution formula (Lemma \ref{lemma: ground_state_resolution}), this spectral gap depends only on the behavior of $u_L^\pm$ in the vicinity of the nuclei which is covered by our exponential bound.

\subsection{Notations and conventions}

We will denote by
\begin{align}
\label{eq: coulomb_energy}
D(\rho,\mu) \coloneqq \frac{1}{2} \iint_{\R^d \times \R^d} \frac{\rho(x) \mu(y)}{\abs{x-y}} \diff x \diff y \, ,
\end{align}
the $d$-dimensional \emph{Coulomb interaction energy }between two charge densities $\rho$ and $\mu$ whenever it makes sense. This is a non-negative bilinear form \cite[Theorem 9.8]{lieb2001analysis} and, by the Hardy-Littlewood-Sobolev inequality \cite[Theorem 4.3]{lieb2001analysis}, it is continuous on $L^{\frac{2d}{2d-1}}(\R^d) \times L^{\frac{2d}{2d-1}}(\R^d)$.

For $R>0$ and $x \in \R^d$, we denote by $B(x,R)$ the open ball of $\R^d$ centered in $x$ with radius $R$.

For any $\mathbf{x} \in \R^d$, we introduce the translation operator $\tau_\mathbf{x}$ defined for all $v \in L^2(\R^d)$ by
\[
\forall x\in\R^d,~(\tau_\mathbf{x}v)(x) = v(x-\mathbf{x}) \, ,
\]
which is unitary as an operator acting on $L^2(\R^d)$.

In the estimates, the constants can change from line to line. Whenever we consider the exact determination of the constant irrelevant for our purpose, we will drop it, writing $A \lesssim B$ if $A \leq C \cdot B$ for some constant $C>0$ which is independent from the parameters.

\subsection{Organization of the paper}
In Section \ref{sec:properties-of-the-eigenfunctions}, we show some basic properties of the eigenfunctions of the mean-field hamiltonians $h$ and $h_L$. In Section \ref{sec:construction-of-quasi-modes}, we construct quasi-modes for $h_L$ from even or odd superposition of translated versions of $u$. We also give convergence rates and study the spectral gap $\mu_L^- - \mu_L^+$.

\subsection*{Acknowledgments}
The author would like to thank his PhD advisor M. Lewin for valuable discussions. This project has received funding from the European Research Council (ERC) under the European Union’s Horizon 2020 research and innovation programme (grant agreement MDFT No 725528 of M. Lewin).

\section{Properties of eigenfunctions}\label{sec:properties-of-the-eigenfunctions}

\subsection{Regularity away from singularities}

In this section, we study the regularity of the eigenfunctions of the mean-field hamiltonians $h$ and $h_L$. Our first result is that these functions are smooth away from the nuclei. In this section, we assume the existence of the first excited state $u_L^-$ of $h_L$ for $L$ large enough. This will be shown later on (see Proposition \ref{prop: rate_convergence_ground_state_energy}).
\begin{prop}[Regularity of eigenfunctions away from the singularities]
	\label{prop: regularity}
	Let $d\in \{2,3\}$. The eigenfunction $u$ belongs to $\mathcal{C}^\infty(\R^d\setminus \{0\})$. The eigenfunctions $u_L^+$ and $u_L^-$ belong to $\mathcal{C}^\infty(\R^d\setminus \{-\mathbf{x}_L,+\mathbf{x}_L\})$.
\end{prop}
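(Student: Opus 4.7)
The plan is to proceed by an elliptic bootstrap on open subsets bounded away from the nuclei. Fix an open set $\Omega \subset \R^d$ whose closure is disjoint from the set of singularities (namely $\{0\}$ for $u$ and $\{-\mathbf{x}_L,+\mathbf{x}_L\}$ for $u_L^\pm$). On $\Omega$, the external potential $V$ or $V_L$ is $C^\infty$ since its Coulomb singularities are excluded. Rewriting the Euler--Lagrange equation as $-\Delta u = (\mu - V - W)u$ (and similarly for $u_L^\pm$), where $W \coloneqq |u|^2 \ast |\cdot|^{-1}$ denotes the Hartree potential, we obtain a linear Schr\"odinger-type equation whose regularity can be bootstrapped once the regularity of $W$ on $\Omega$ is under control.

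The key ingredient is that $W$ inherits the regularity of $|u|^2$, with an additional gain coming from convolution with the locally integrable kernel $|\cdot|^{-1}$. Writing $W(x) = \int_{\R^d} |u(x-z)|^2\, |z|^{-1}\, \diff z$ and differentiating under the integral sign transfers partial derivatives onto $|u|^2$; since $|z|^{-1}$ is locally integrable near the origin in both $d=2$ and $d=3$, this is legitimate as soon as $\partial^\alpha |u|^2$ is locally integrable with sufficient decay at infinity. Combined with the already mentioned fact that $W \in L^\infty(\R^d)$ whenever $u \in H^1(\R^d)$, this gives a workable starting point in both dimensions. (In $d=3$ one can equivalently use that $W$ solves $-\Delta W = 4\pi |u|^2$ and invoke interior elliptic regularity directly.)

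The bootstrap then runs as follows. From the initial regularity $u \in H^2(\R^3)$ or $u \in \mathcal{D}(h)$ as in~\eqref{eq: domain_mono}, Sobolev embedding yields $u \in L^p_{\mathrm{loc}}(\Omega)$ for some $p$. Combined with $V + W \in L^\infty(\Omega)$, the right-hand side of the equation lies in $L^2_{\mathrm{loc}}(\Omega)$, so interior elliptic regularity gives $u \in H^2_{\mathrm{loc}}(\Omega)$. Hence $|u|^2 \in H^2_{\mathrm{loc}}(\Omega)$, which upgrades $W$ to a higher Sobolev (or H\"older) class on $\Omega$ via the smoothing mechanism above; feeding this back into the PDE and iterating, one obtains $u \in H^k_{\mathrm{loc}}(\Omega)$ for every $k$, hence $u \in C^\infty(\Omega)$ by Sobolev embedding. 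The same argument applies verbatim to $u_L^\pm$, using the assumed existence of the first excited state for large $L$ and the fact that $V_L$ is $C^\infty$ on $\Omega$.

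The main subtlety is the two-dimensional case, where the embedding $H^1 \hookrightarrow L^\infty$ fails and the kernel $|\cdot|^{-1}$ is not the Green's function of $-\Delta$ on $\R^2$. This is overcome by exploiting the a priori global $L^\infty$-bound on $W$ to launch the bootstrap, and by transferring derivatives onto the (progressively smoother) density $|u|^2$ rather than differentiating the kernel, thereby avoiding the non-integrable singularity $|z|^{-3}$ of $\nabla^2(|z|^{-1})$ in~$\R^2$.
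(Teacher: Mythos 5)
Your overall skeleton (rewrite the Euler--Lagrange equation on an open set $\Omega$ away from the nuclei as a linear Schr\"odinger-type equation with bounded coefficients, start the bootstrap with interior elliptic regularity, and in $d=3$ use $-\Delta W = 4\pi|u|^2$) matches the paper. The genuine gap is in the step that upgrades the regularity of $W=|u|^2\ast|\cdot|^{-1}$ on $\Omega$. You propose to differentiate under the integral sign and put \emph{all} derivatives on $|u|^2$, declaring this legitimate ``as soon as $\partial^\alpha|u|^2$ is locally integrable with sufficient decay.'' But the convolution is global: it sees the density near the nuclei, where $u$ only lies in $\mathcal{D}(h)\subset H^{2-\alpha}(\R^2)$ (resp.\ $H^2(\R^3)$) and has a Coulomb cusp, so $\partial^\alpha|u|^2$ ceases to be a locally integrable function there beyond a fixed finite order (for an $e^{-c|x|}$-type cusp, fourth-order derivatives of the density already behave like $|x|^{-3}$ near the nucleus when $d=3$). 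The regularity gained by your bootstrap is only local on $\Omega$ and never improves $|u|^2$ at the singularities, so the mechanism stalls after finitely many steps; and differentiating the kernel globally instead is forbidden in $d=2$, as you yourself note. Hence the claim that ``$W$ inherits the regularity of $|u|^2$'' on $\Omega$ is not justified as written.

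The missing idea is the near/far decomposition of the density used in the paper's Lemma \ref{prop: fractional regularity}: write $|u|^2=|\chi u|^2+|\eta u|^2$ with $\chi^2+\eta^2=1$, $\chi\equiv 1$ near $\Omega$ and $\supp\eta$ at positive distance from $\Omega$ (but containing the nuclei). For the near piece, the induction hypothesis $u\in H^r_{\mathrm{loc}}$ away from the nuclei makes $\partial^\alpha|\chi u|^2$ (for $\abs{\alpha}\le r-1$) a finite sum of products of $H^1$ functions, so derivatives may be moved onto the density and Lemma \ref{lemma: technical2} bounds the resulting convolution in $L^\infty$; for the far piece, the kernel $\abs{x-y}^{-1}$ and all its derivatives are bounded for $x\in\Omega$, $y\in\supp\eta$, so derivatives fall on the kernel harmlessly and only $u\in L^2(\R^d)$ is needed. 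Note also that the paper does not (and need not) claim any smoothing from the 2D convolution: it settles for $W\in H^{r-1}_{\mathrm{loc}}$ when $u\in H^{r}_{\mathrm{loc}}$, the two derivatives gained by $-\Delta$ carrying the induction. A smaller omission in your write-up is the product step $W(u)\,u$ in low-regularity Sobolev classes for $d=2$, which the paper handles with the Runst--Sickel estimate; that is routine compared with the main gap above.
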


In dimension $d=3$, this result is fairly standard and it is obtained by bootstrap-type arguments like in \cite[Lemma 3.2]{olgiati2020hartree}, for instance. In dimension $d=2$, the analysis is more difficult due to the combination of a singular potential $V \notin L^2_\mathrm{loc}(\R^2)$ and the non-local equation
\[
\sqrt{-\Delta}\, V = 2\pi \delta_0 \, ,
\]
satisfied by $V$.
We start with a technical lemma.
\begin{lemma}
	\label{lemma: technical2}
	Let $d \geq 2$.
	For all $u,v \in H^1(\R^d)$ and all $r \in \intof{d}{\infty}$, we have
	\begin{gather}
	\label{lemma: technical2_inequality1}
	\norme{(uv)*|\cdot|^{-1}}_{L^r(\R^d)} \lesssim \norme{u}_{H^1(\R^d)} \norme{v}_{H^1(\R^d)} \pt
	\end{gather}
\end{lemma}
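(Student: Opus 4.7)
The plan is to split the Coulomb kernel into its short-range and long-range parts and then combine Young's convolution inequality with the Sobolev embedding of $H^1(\R^d)$. Concretely, write $\abs{\cdot}^{-1} = K_1 + K_2$ with $K_1 \coloneqq \abs{\cdot}^{-1} \mathbf{1}_{B(0,1)}$ and $K_2 \coloneqq \abs{\cdot}^{-1} \mathbf{1}_{\R^d \setminus B(0,1)}$. A direct computation in polar (resp.\ spherical) coordinates gives $K_1 \in L^p(\R^d)$ for every $p \in \intfo{1}{d}$ and $K_2 \in L^q(\R^d)$ for every $q \in \intof{d}{\infty}$. Although $\abs{\cdot}^{-1}$ itself lies in no single $L^p(\R^d)$, the two admissible ranges of exponents complement each other precisely around the critical value $d$, which is the value excluded from the statement for $r$.

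Second, I would invoke the Sobolev embedding $H^1(\R^d) \hookrightarrow L^s(\R^d)$ -- valid for $s \in \intfo{2}{\infty}$ when $d=2$ and for $s \in \intff{2}{6}$ when $d=3$ -- together with Hölder's inequality, to obtain
\[
\norme{uv}_{L^a(\R^d)} \lesssim \norme{u}_{H^1(\R^d)} \norme{v}_{H^1(\R^d)}
\]
for every $a \in \intfo{1}{\infty}$ in dimension two and every $a \in \intff{1}{3}$ in dimension three.

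Third, I would apply Young's convolution inequality to each piece, namely $\norme{(uv) * K_i}_{L^r(\R^d)} \leq \norme{uv}_{L^{a_i}(\R^d)} \norme{K_i}_{L^{b_i}(\R^d)}$ with $\frac{1}{a_i} + \frac{1}{b_i} = 1 + \frac{1}{r}$, choosing $b_1 \in \intfo{1}{d}$ close to $d$ for the short-range piece and $b_2 \in \intof{d}{\infty}$ close to $d$ (and no larger than $r$) for the long-range piece. The hypothesis $r > d$ is exactly what ensures that in both cases the resulting exponent $a_i$ falls in the admissible range provided by the Hölder--Sobolev estimate. Summing the two contributions then yields the desired bound.

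The whole argument is elementary once the splitting of the kernel is fixed; the only bookkeeping to keep track of is dimension $d=3$, where the Hölder--Sobolev range for $uv$ is restricted to $\intff{1}{3}$ and one must verify that $\frac{1}{a_i} = 1 + \frac{1}{r} - \frac{1}{b_i}$ remains in $\intff{1/3}{1}$ for an admissible $b_i$. This follows at once from $r > 3$ and the flexibility in the choice of $b_i$, so I do not anticipate any real obstacle.
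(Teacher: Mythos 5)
Your argument is correct, and for finite $r$ it takes a genuinely different route from the paper. The paper treats $r \in \intoo{d}{\infty}$ in one line via the Hardy--Littlewood--Sobolev inequality applied to $uv \in L^p(\R^d)$ with $p = \frac{dr}{d(r+1)-r}$, and only resorts to the splitting $|\cdot|^{-1} = |\cdot|^{-1}\mathds{1}_{B_1} + |\cdot|^{-1}\mathds{1}_{B_1^c}$ for the endpoint $r=\infty$, where HLS is unavailable. You instead use that same splitting uniformly for every $r \in \intof{d}{\infty}$ and replace HLS by Young's inequality, exploiting that the short-range piece lies in $L^{b}$ for $b \in \intfo{1}{d}$ and the long-range piece in $L^{b}$ for $b \in \intof{d}{\infty}$; your exponent bookkeeping checks out (for the long-range piece one may even just take $b_2 = r$, $a_2 = 1$, so that $\norme{(uv)\ast K_2}_{L^r} \leq \norme{uv}_{L^1}\norme{K_2}_{L^r} \leq \norme{u}_{L^2}\norme{v}_{L^2}\norme{K_2}_{L^r}$). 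What you gain is a more elementary proof that handles all $r \in \intof{d}{\infty}$, including $r=\infty$, by a single mechanism, and that makes visible why $r=d$ is excluded: the admissible integrability ranges of the two kernel pieces meet exactly at $d$. What the paper's HLS route buys is brevity for finite $r$ and no need to choose auxiliary exponents. One cosmetic point: the lemma is stated for all $d \geq 2$, whereas you spell out the Sobolev step only for $d \in \{2,3\}$; the identical computation with the embedding $H^1(\R^d) \subset L^{q}(\R^d)$, $q \in \intff{2}{\frac{2d}{d-2}}$, covers general $d \geq 3$, so nothing essential is missing.
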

\begin{proof}
	Let $u,v \in H^1(\R^d)$. By Sobolev inequalities, we have the continuous embedding $H^1(\R^d) \subset L^q(\R^d)$ for all $q \in \intfo{2}{\infty}$ if $d=2$ and for all $q \in \intff{2}{\frac{2d}{d-2}}$ if $d\geq 3$. Let $r \in \intoo{d}{\infty}$ and denote $p = \frac{dr}{d(r+1)-r} \in \intoo{1}{\frac{d}{d-2}}$. Then, by the Hardy-Littlewood-Sobolev inequality and the Cauchy-Schwarz inequality , we have
	\[
	\norme{(uv) * |\cdot|^{-1}}_{L^r(\R^d)} \lesssim \norme{uv}_{L^p(\R^d)} \lesssim \norme{u}_{L^{2p}(\R^d)}\norme{v}_{L^{2p}(\R^d)} \lesssim \norme{u}_{H^1(\R^d)} \norme{v}_{H^1(\R^d)} \pt
	\]
	This shows inequality \eqref{lemma: technical2_inequality1} except for the case $r = \infty$ which cannot be treated this way. To bypass this issue, we write
	\[
	|\cdot|^{-1} = |\cdot|^{-1} \mathds{1}_{B_1} + |\cdot|^{-1} \mathds{1}_{B_1^c}\in L^q(\R^d) + L^\infty(\R^d) \, ,
	\]
	where $q \in \intfo{1}{d}$ and $B_1$ denotes the unit ball of $\R^d$. We choose $q$ such that $q' \coloneqq \frac{q}{q-1} \in \intoo{1}{\frac{d}{d-2}}$. Then, we have, using Young's inequality and the Cauchy-Schwarz inequality
	\begin{gather*}
	\norme{(uv)*(|\cdot|^{-1} \mathds{1}_{B_1}(x))}_{L^\infty(\R^d)} \lesssim  \norme{uv}_{L^{q'}(\R^d)} \lesssim \norme{u}_{L^{2q'}(\R^d)} \norme{v}_{L^{2q'}(\R^d)}  \lesssim \norme{u}_{H^1(\R^d)} \norme{v}_{H^1(\R^d)} \, , \\
	\norme{(uv)*(|\cdot|^{-1} \mathds{1}_{B_1^c}(x))}_{L^\infty(\R^d)} \lesssim \norme{uv}_{L^1(\R^d)} \lesssim \norme{u}_{L^2(\R^d)} \norme{v}_{L^2(\R^d)} \lesssim \norme{u}_{H^1(\R^d)} \norme{v}_{H^1(\R^d)}\pt
	\end{gather*}
	This shows inequality \eqref{lemma: technical2_inequality1} in the case $r=\infty$.
\end{proof}
\begin{rem}
	We cannot have $(uv)*|\cdot|^{-1} \in L^d(\R^d)$ for all $u,v \in H^1(\R^d)$. Indeed, the first order of the large $\abs{x}$ behavior of the convolution $\abs{u}^2 \ast |\cdot|^{-1}$ is given by $\norme{u}^2_{L^2(\R^d)} \abs{x}^{-1}$ which is not in $L^d(\R^d)$ whenever $u \neq 0$.
\end{rem}
\begin{lemma}[Local fractional elliptic regularity]
	\label{prop: fractional regularity}
	Let $u \in H^r_\mathrm{loc}(\R^2 \setminus \{0\}) \cap L^2(\R^2)$ for some integer $r \geq 1$. Then $\abs{u}^2 \ast |\cdot|^{-1} \in H^{r-1}_\mathrm{loc}(\R^2 \setminus \{0\})$.
\end{lemma}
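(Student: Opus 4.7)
The plan is to work locally around an arbitrary point $x_0 \in \R^2 \setminus \{0\}$ and decompose $|u|^2$ into a piece near $x_0$, to which I will apply Hardy--Littlewood--Sobolev-type estimates, and a piece away from $x_0$, for which the kernel $|\cdot|^{-1}$ is harmless. Fix $\delta \in (0, |x_0|/4)$ so that $\overline{B(x_0, 4\delta)} \subset \R^2 \setminus \{0\}$, choose $\chi \in C_c^\infty(\R^2)$ with $\chi \equiv 1$ on $B(x_0, 2\delta)$ and $\supp \chi \subset B(x_0, 3\delta)$, and write
\[
|u|^2 \ast |\cdot|^{-1} = \underbrace{(\chi |u|^2) \ast |\cdot|^{-1}}_{\mathrm{(I)}} + \underbrace{((1-\chi)|u|^2) \ast |\cdot|^{-1}}_{\mathrm{(II)}}.
\]
It will be enough to show that $\mathrm{(I)} \in H^{r-1}_{\mathrm{loc}}(\R^2)$ and that $\mathrm{(II)} \in C^\infty(B(x_0, \delta))$.

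For $\mathrm{(I)}$, I first argue that $\chi |u|^2 \in H^{r-1}(\R^2)$ with compact support in $\R^2 \setminus \{0\}$. Since $\supp \chi$ is a compact subset of $\R^2 \setminus \{0\}$, the hypothesis yields $\tilde\chi u \in H^r(\R^2)$ for any cutoff $\tilde\chi \in C_c^\infty(\R^2 \setminus \{0\})$ equal to $1$ on $\supp \chi$. For $r \geq 2$, the Banach algebra property of $H^r(\R^2)$ (valid for $r > d/2 = 1$) then gives $|\tilde\chi u|^2 \in H^r$, whence $\chi |u|^2 \in H^r(\R^2)$; for $r = 1$ the Sobolev embedding $H^1 \hookrightarrow L^4$ gives directly $\chi |u|^2 \in L^2(\R^2) = H^0(\R^2)$. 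In both cases $\chi |u|^2 \in H^{r-1}(\R^2)$ with compact support. Next, for any multi-index $|\alpha| \leq r - 1$, the distribution $\partial^\alpha(\chi |u|^2)$ is an $L^2$ function with compact support, hence belongs to $L^{4/3}(\R^2)$, and the Hardy--Littlewood--Sobolev inequality applied to the Riesz kernel $|\cdot|^{-1}$ (of order $1$ in dimension $2$) gives $\partial^\alpha(\chi |u|^2) \ast |\cdot|^{-1} \in L^4(\R^2) \subset L^2_{\mathrm{loc}}(\R^2)$. Commuting derivatives with convolution is legitimate because $\chi |u|^2$ has compact support and $|\cdot|^{-1}$ is a tempered distribution; this yields $\mathrm{(I)} \in H^{r-1}_{\mathrm{loc}}(\R^2)$.

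For $\mathrm{(II)}$, whenever $x \in B(x_0, \delta)$ and $y \in \supp(1-\chi) \subset \R^2 \setminus B(x_0, 2\delta)$, one has $|x - y| \geq \delta$, so the kernel $(x,y) \mapsto |x-y|^{-1}$ is $C^\infty$ on this set with all $x$-derivatives bounded by $C_\alpha \delta^{-1-|\alpha|}$. Since $(1-\chi)|u|^2 \in L^1(\R^2)$ (because $u \in L^2$), dominated convergence allows differentiating under the integral sign to arbitrary order, showing $\mathrm{(II)} \in C^\infty(B(x_0, \delta))$. Combining $\mathrm{(I)}$ and $\mathrm{(II)}$ gives $|u|^2 \ast |\cdot|^{-1} \in H^{r-1}(B(x_0, \delta))$, and arbitrariness of $x_0$ concludes the proof. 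The only slightly delicate point is the case $r=1$, in which $H^r$ is not an algebra in dimension $2$ and one must use the Sobolev embedding $H^1 \hookrightarrow L^q$ ($q<\infty$) to give meaning to the product $|u|^2$ and to put the relevant cut-off piece into $L^2$.
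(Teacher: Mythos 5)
Your proof is correct, and it follows the same overall strategy as the paper: a cutoff decomposition of $\abs{u}^2$ into a piece supported near the region of interest and a piece supported away from it, with the far piece handled exactly as in the paper (smoothness of $\abs{x-y}^{-1}$ off the diagonal plus $u\in L^2$, so $(1-\chi)\abs{u}^2\in L^1$ and one differentiates under the integral). The only real divergence is the treatment of the near piece: the paper distributes $\partial^\alpha$ by Leibniz onto $\abs{\chi u}^2$, observes that each term is a product of two $H^1(\R^2)$ functions, and invokes its Lemma \ref{lemma: technical2} to conclude that $\partial^\alpha(\abs{\chi u}^2\ast|\cdot|^{-1})\in L^\infty(\R^2)$; you instead first put $\chi\abs{u}^2$ into $H^{r-1}(\R^2)$ with compact support (via the algebra property of $H^r(\R^2)$ for $r\geq2$, and $H^1\subset L^4$ for $r=1$), then use compact support to pass from $L^2$ to $L^{4/3}$ and apply Hardy--Littlewood--Sobolev to land in $L^4\subset L^2_{\mathrm{loc}}$. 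Both are legitimate; your variant bypasses Lemma \ref{lemma: technical2} at the cost of separating the case $r=1$ and of a slightly weaker conclusion ($L^4$ rather than $L^\infty$), which is still more than enough for the statement. The commutation $\partial^\alpha(f\ast|\cdot|^{-1})=(\partial^\alpha f)\ast|\cdot|^{-1}$ for compactly supported $f$ is used in the paper as well, so no gap there.
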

\begin{proof}
	Let $\Omega$ be an open and relatively compact set of $\R^2\setminus \{0\}$. Let $\alpha \in \N^2$ such that $\abs{\alpha} \leq r-1$. We will show that $\partial^\alpha(\abs{u}^2 \ast |\cdot|^{-1}) \in L^2(\Omega)$. Let $\delta >0$ such that $0 \notin \Omega + B(0,\delta)$.
	Let $\chi \in \mathcal{C}_c^\infty(\R^2)$ be such that
	\[
	0\leq \chi \leq 1,\quad \chi\equiv 1 \text{ on } \Omega + B(0,\delta/2) \et \chi \equiv 0  \text{ on } \Omega + B(0,\delta) \pt
	\]
	Let $\eta$ be such that $\chi^2 + \eta^2 = 1$. We have $\partial^\alpha (\abs{\chi u}^2 \ast |\cdot|^{-1}) = (\partial^\alpha \abs{\chi u}^2) \ast |\cdot|^{-1}$ in $\mathcal{D}'(\R^2)$. Moreover, because $u \in H^r_\mathrm{loc}(\R^2 \setminus \{0\})$, we see that $\partial^\alpha \abs{\chi u}^2$ is a finite sum of product of two $H^1(\R^2)$ functions. By Lemma \ref{lemma: technical2}, this implies that $\partial^\alpha (\abs{\chi u}^2 \ast |\cdot|^{-1}) \in L^\infty(\R^2) \subset L^2(\Omega)$. Because the function $\partial^\alpha \abs{x-y}^{-1}$ is bounded when $x \in \Omega$ and $y \in \supp \eta$, we have
	$\partial^\alpha (\abs{\eta u}^2 \ast |\cdot|^{-1}) = \abs{\eta u}^2 \ast \left(\partial^\alpha|\cdot|^{-1}\right)$ in $\mathcal{D}'(\Omega)$. Finally, because the support of $\eta$ is localized away from $\Omega$ and because $u \in L^2(\R^2)$, we have $\abs{\eta u}^2 \ast \left(\partial^\alpha|\cdot|^{-1}\right) \in L^\infty(\Omega) \subset L^2(\Omega)$.
\end{proof}
\begin{rem}
	The same result holds with the same proof when $\R^2 \setminus\{0\}$ is replaced by any open set of $\R^2$. For instance, we can take $\R^2 \setminus \{\pm \mathbf{x}_L\}$.
\end{rem}
\begin{proof}[Proof of Proposition \ref{prop: regularity}]
	We treat the two-dimensional case for $u$ and only mention the changes for $d=3$. For $u_L^\pm$, the proof is the same so we do not write it .
	Let $\Omega \subset \R^2 \setminus \{0\}$ be an open and relatively compact set of $\R^2 \setminus \{0\}$. We denote
	\[
	W(u) = \abs{u}^2 *|\cdot|^{-1} \et \tilde{V}(x) = - \abs{x}^{-1} - \mu \, ,
	\]
	which are locally essentially bounded in $\Omega$. This is obvious for $\tilde{V}$ and a consequence of Lemma \ref{lemma: technical2} for $W(u)$.
	We recall that $u$ is solution, in the sense of distribution, of the strictly elliptic partial differential equation
	\begin{align}
	\label{eq: lemma_regularity}
	-\Delta u + W(u) u + \tilde{V} u = 0 \pt
	\end{align}
	As a consequence, $u$ is also a weak solution in $H^1(\Omega)$ (in the sense of \cite[Chapter 8]{gilbarg2015elliptic}) of the same equation. By \cite[Theorem 8.8]{gilbarg2015elliptic}, this implies that $u \in H^2(\Omega)$. Thus, we have $u \in H^2_\mathrm{loc}(\R^2 \setminus \{0\})$.
	
	To complete the proof, we use a bootstrap type argument.
	Assume that $u \in H^r_\mathrm{loc}(\R^2 \setminus \{0\})$ for some integer $r\geq 2$. By Lemma \ref{prop: fractional regularity}, this implies that $W(u) \in H^{r-1}_\mathrm{loc}(\R^2 \setminus \{0\})$.
	By Theorem~1 of \cite[Section 4.3]{runst1996sobolev}, for all open and bounded set $\Omega \subset  \R^2 \setminus \{0\}$ with smooth boundary, for all $(f,g) \in H^{s_1}(\Omega) \times H^{s_2}(\Omega)$, we have
	\begin{align}
	\label{eq: lemma_regularity_ineq}
	\norme{fg}_{H^s(\Omega)} \lesssim \norme{f}_{H^{s_1}(\Omega)} \norme{g}_{H^{s_2}(\Omega)} \, ,
	\end{align}
	provided that $s_i \geq s \geq 0$ for all $i \in\{1,2\}$ and $s_1+s_2 > s + 1$.
	In particular, $u \in H^r_\mathrm{loc}(\R^2 \setminus \{0\})$ and $W(u) \in H^{r-1}_\mathrm{loc}(\R^2 \setminus \{0\})$ imply $W(u)u \in H^{r-1}_\mathrm{loc}(\R^2 \setminus \{0\})$. Let $\chi \in \mathcal{C}_c^\infty(\R^2)$ be a localization function such that
	\[
	\supp \chi \subset \R^2\setminus\{0\} \et \chi \equiv 1 \quad \text{on} \quad K \, ,
	\]
	where $K \subset \R^2\setminus\{0\}$ is a compact set. Multiplying \eqref{eq: lemma_regularity} by $\chi$, we find
	\[
	-\Delta (\chi u) =  -(\Delta \chi + \chi \tilde{V})u - \chi W(u)u -2 \nabla \chi \cdot \nabla u \pt
	\]
	The first term on the right hand side belongs to $H^r(\R^2)$ and the last two terms are in $H^{r-1}(\R^2)$. This shows that $\chi u \in H^{r+1}(\R^2)$. As a consequence $u \in H^{r+1}_\mathrm{loc}(\R^2\setminus \{0\})$. By induction, $u \in H^r_\mathrm{loc}(\R^2\setminus\{0\})$ for all $r \geq 0$ and, by the Sobolev embeddings, this shows that $u \in \mathcal{C}^\infty(\R^2\setminus\{0\})$.
	
	The easier case $d=3$ is treated similarly, replacing Lemma \ref{prop: fractional regularity} by the usual elliptic regularity for solutions of the Laplace equation $-\Delta u = f$ in $\R^3$.
\end{proof}

\subsection{Regularity around singularities}

In the sequel, $\mathcal{C}^{\ell,\theta}_0(\R^d)$ denotes the space of $\mathcal{C}^\ell(\R^d)$ functions which vanish at infinity as well as their first $\ell$ derivatives and such that the derivatives of order $\ell$ are Hölder continuous with exponent $\theta$.
In dimension 3, Sobolev embeddings give that $ u \in H^2(\R^3) \subset \mathcal{C}_0^{0,1/2}(\R^3)$. Hence the eigenfunctions $u_L^\pm$ and $u$ are continuous (even Hölder) in the vicinity of the singularities.
In dimension 2, because $|\cdot|^{-1} \notin L^2_\mathrm{loc}(\R^2)$, the mean-field hamiltonians $h$ and $h_L$ are not self-adjoint on $H^2(\R^2)$ but on the domains $\mathcal{D}(h)$ or $\mathcal{D}(h_L)$ (see \eqref{eq: domain_mono} and \eqref{eq: domain_1}). In the next proposition, we study the structure of $\mathcal{D}(h)$.
\begin{prop}[Regularity around singularities]
	\label{prop: study of d(h)}
	Let $\alpha \in \intoo{0}{2}$ if $d=2$ and $\alpha=0$ if $d=3$. Let $\epsilon>0$. Then there exists constants $C(\epsilon)$ and $C(\epsilon,\alpha )$ such that for all $v \in \mathcal{D}(h)$ we have
	\begin{align*}
	\norme{v}_{H^{2-\alpha}(\R^d)} \leq
	\begin{cases}
	\epsilon \norme{h v}_{L^2(\R^d)} +C(\epsilon,\alpha) \norme{v}_{L^2(\R^d)} & \text{if }d=2 \, ,\\
	(1+\epsilon)\norme{hv}_{L^2(\R^d)} + C(\epsilon) \norme{v}_{L^2(\R^d)}  & \text{if }d=3 \pt
	\end{cases}	 
	\end{align*}
	In particular, we have $\mathcal{D}(h)\subset H^{2-\alpha}(\R^d)$ with continuous embedding.
\end{prop}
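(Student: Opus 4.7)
Decompose $h = -\Delta + V + W$ with $W \coloneqq |u|^2 \ast |\cdot|^{-1}$. By Lemma~\ref{lemma: technical2} (applied with $r=\infty$), $W \in L^\infty(\R^d)$, and hence $\|Wv\|_{L^2}\leq C\|v\|_{L^2}$. The proposition then reduces to controlling $Vv$, where $V=-|x|^{-1}$, in a suitable norm.

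\textbf{Three-dimensional case.} Since $V\in L^2(\R^3)+L^\infty(\R^3)$, the Kato-Rellich theorem (equivalently, Hardy's inequality combined with Sobolev embedding $H^2(\R^3)\hookrightarrow L^\infty(\R^3)$) shows that $V$ is $-\Delta$-bounded with arbitrarily small relative bound: for every $\epsilon_1>0$,
\[
\|Vv\|_{L^2}\leq\epsilon_1\|\Delta v\|_{L^2}+C_{\epsilon_1}\|v\|_{L^2}.
\]
Substituting into $-\Delta v = hv-Vv-Wv$ and absorbing $\epsilon_1\|\Delta v\|_{L^2}$ on the left yields $\|\Delta v\|_{L^2}\leq(1-\epsilon_1)^{-1}\|hv\|_{L^2}+C\|v\|_{L^2}$, which gives the claim with $\epsilon_1 = \epsilon/(1+\epsilon)$.

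\textbf{Two-dimensional case.} Now $V\notin L^2_{\mathrm{loc}}(\R^2)$, so $V$ is not $-\Delta$-bounded in $L^2$ and one must sacrifice some derivative regularity. I proceed in two stages. \emph{Stage 1}: for each $\alpha_0\in(0,1)$, prove the pure boundedness estimate
\[
\|v\|_{H^{2-\alpha_0}(\R^2)}\leq C(\alpha_0)\bigl(\|hv\|_{L^2}+\|v\|_{L^2}\bigr),\qquad v\in\mathcal{D}(h).
\]
Writing $-\Delta v = hv-Vv-Wv$ and using $\|v\|_{H^{2-\alpha_0}}^2\simeq\|\Delta v\|_{H^{-\alpha_0}}^2+\|v\|_{L^2}^2$, it suffices to bound $\|Vv\|_{H^{-\alpha_0}}\lesssim\|v\|_{H^1}+\|v\|_{L^2}$. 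Split $V=V\mathds{1}_{|x|<1}+V\mathds{1}_{|x|>1}$; the tail is bounded and contributes $\lesssim\|v\|_{L^2}$. For the singular core, pick $p\in(1,2)$ so that $V\mathds{1}_{|x|<1}\in L^p(\R^2)$ and apply Hölder
\[
\|V\mathds{1}_{|x|<1}\, v\|_{L^{2/(1+\alpha_0)}}\leq\|V\mathds{1}_{|x|<1}\|_{L^p}\|v\|_{L^q},\qquad \tfrac1p+\tfrac1q=\tfrac{1+\alpha_0}{2},
\]
then invoke the 2D Sobolev embedding $H^1(\R^2)\hookrightarrow L^q(\R^2)$ (valid for every finite $q$) to bound $\|v\|_{L^q}\lesssim\|v\|_{H^1}$, and finally use the dual Sobolev embedding $L^{2/(1+\alpha_0)}(\R^2)\hookrightarrow H^{-\alpha_0}(\R^2)$. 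The factor $\|v\|_{H^1}$ is absorbed through the standard interpolation $\|v\|_{H^1}\leq\delta\|v\|_{H^{2-\alpha_0}}+C_\delta\|v\|_{L^2}$, which is legitimate because $1<2-\alpha_0$. \emph{Stage 2}: given $\alpha\in(0,2)$ and $\epsilon>0$, choose any $\alpha_0\in(0,\min(\alpha,1))$ and interpolate
\[
\|v\|_{H^{2-\alpha}}\leq\|v\|_{L^2}^{\theta}\|v\|_{H^{2-\alpha_0}}^{1-\theta},\qquad \theta=\tfrac{\alpha-\alpha_0}{2-\alpha_0}\in(0,1).
\]
Plugging in Stage 1 and applying Young's inequality with exponent $1/(1-\theta)$ to the product $\|v\|_{L^2}^{\theta}\|hv\|_{L^2}^{1-\theta}$ produces the announced bound $\epsilon\|hv\|_{L^2}+C(\epsilon,\alpha)\|v\|_{L^2}$.

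\textbf{Main obstacle.} The delicate step is Stage 1: balancing the Hölder exponents so that $V\mathds{1}_{|x|<1}\in L^p$ with $p<2$ combines with $v\in L^q$ to land in exactly $L^{2/(1+\alpha_0)}$, the pivot space that the dual Sobolev embedding sends into $H^{-\alpha_0}$. The impossibility of pushing $\alpha_0$ down to $0$ reflects the absence, in dimension two, of an $L^2$-Hardy inequality of the form $\|v/|x|\|_{L^2(\R^2)}\lesssim\|\nabla v\|_{L^2}$; this is precisely why the final statement requires $\alpha>0$ when $d=2$.
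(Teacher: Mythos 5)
Your proof is correct, but it takes a genuinely different route from the paper in the two-dimensional case (the $d=3$ argument via Kato--Rellich is the same as the paper's). The paper proves the $d=2$ bound at the operator level: it uses the Stein--Weiss inequality (Lemma \ref{lemma: fractional_hardy_bis}) to bound $(-\Delta)^{-\alpha}\abs{x}^{-1}(-\Delta)^{-\frac12+\alpha}$, factorizes $-\Delta-\abs{x}^{-1}+\nu$ through $(-\Delta+\nu)^{1/2}$, inverts by a Neumann series for $\nu$ large, and reads off $\norme{(-\Delta+1)^{1-\alpha}(-\Delta-\abs{x}^{-1}+\nu)^{-1}}\leq 2\nu^{-\alpha}$, the arbitrarily small relative bound coming from taking $\nu$ large. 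You instead work directly with the distributional identity $-\Delta v=hv-Vv-Wv$, place $Vv$ in $H^{-\alpha_0}(\R^2)$ by the splitting $V\mathds{1}_{\abs{x}<1}+V\mathds{1}_{\abs{x}>1}$, H\"older, the embedding $H^1(\R^2)\subset L^q(\R^2)$ and the dual Sobolev embedding $L^{2/(1+\alpha_0)}(\R^2)\subset H^{-\alpha_0}(\R^2)$, and then obtain the small coefficient in front of $\norme{hv}_{L^2}$ by interpolating $H^{2-\alpha}$ between $L^2$ and $H^{2-\alpha_0}$ (with $\alpha_0<\alpha$) plus Young's inequality; your exponent bookkeeping in Stage 1 and the interpolation exponent $\theta=\frac{\alpha-\alpha_0}{2-\alpha_0}$ are consistent. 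What each approach buys: yours is more elementary (no Stein--Weiss, no Neumann series) and covers the whole range $\alpha\in\intoo{0}{2}$ in one sweep, while the paper's resolvent formulation transfers almost verbatim to $h_L$ with constants uniform in $L$ (see Remark \ref{rem: regularity_h_L}), which is how it is reused later; your argument would also transfer, since it only needs $\sup_L\norme{\lvert u_L^+\rvert^2\ast|\cdot|^{-1}}_{L^\infty}<\infty$ and the two Coulomb singularities can be split as you split $V$. One small point you should make explicit: the absorption of $\delta\norme{v}_{H^{2-\alpha_0}}$ presupposes that this quantity is finite, but this is not circular in your scheme, because the chain $\norme{\Delta v}_{H^{-\alpha_0}}\lesssim\norme{hv}_{L^2}+\norme{v}_{H^1}$ already shows $\Delta v\in H^{-\alpha_0}$, which together with $v\in L^2$ gives $v\in H^{2-\alpha_0}$ before any absorption is performed.
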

\begin{rem}\label{rem: continuity_around_singularities}
	\begin{enumerate}[noitemsep, label=(\roman*)]
		\item The conclusions of Proposition \ref{prop: study of d(h)} also hold for $h_L$ but, in that case, the proof needs uniform bounds on $\norme{\rvert u_L^+ \lvert^2 \ast |\cdot|^{-1}}_{L^\infty(\R^d)}$ (see Remark \ref{rem: regularity_h_L}). 
		\item Sobolev inequalities give the continuous inclusion $H^{2-\alpha}(\R^2) \subset \mathcal{C}^{0,1-\alpha}_0(\R^2)$ for $\alpha <1$. Hence the eigenfunctions of the mean-field hamiltonians are Hölder with parameter $1-\alpha$
		for any $\alpha \in \intoo{0}{1}$ in the vicinity of the nuclei.
		\item Because $|\cdot|^{-1} \notin L^2_\mathrm{loc}(\R^2)$, we cannot have $\mathcal{D}(h) \subset H^2(\R^2)$. Otherwise, for all $v \in \mathcal{D}(h)$, we would have $v/|\cdot| \in L^2(\R^2)$. Similar remarks hold for $h^\ell_L$, $h^r_L$ and $h_L$.
	\end{enumerate}	
\end{rem}
Now, we write the proof of Proposition \ref{prop: study of d(h)}. We start with a lemma.
\begin{lemma}
	\label{lemma: fractional_hardy_bis}
	Let $\alpha \in \intoo{0}{\frac{1}{2}}$. The operator $(-\Delta)^{-\alpha} \abs{x}^{-1} (-\Delta)^{-\frac{1}{2}+\alpha}$ acting on $L^2(\R^2)$ is bounded:
	\begin{align}
	\label{eq: fractional_hardy_bis}
	\norme{(-\Delta)^{-\alpha} \abs{x}^{-1} (-\Delta)^{-\frac{1}{2}+\alpha}} \leq C_\alpha \, ,
	\end{align}
	for some constant $C_\alpha >0$.
\end{lemma}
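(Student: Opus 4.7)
The plan is to split the middle weight $|x|^{-1}$ symmetrically and reduce the estimate to two independent applications of the fractional Hardy inequality in $\R^2$. Writing $|x|^{-1}=|x|^{-2\alpha}\cdot|x|^{-(1-2\alpha)}$, the operator factorises as
\begin{equation*}
(-\Delta)^{-\alpha}\,|x|^{-1}\,(-\Delta)^{-\frac12+\alpha}
=\bigl((-\Delta)^{-\alpha}|x|^{-2\alpha}\bigr)\bigl(|x|^{-(1-2\alpha)}(-\Delta)^{-\frac12+\alpha}\bigr),
\end{equation*}
and it suffices to prove that each factor is bounded on $L^2(\R^2)$ separately.

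For this I would invoke the classical fractional Hardy inequality in the plane: for every $s\in\intoo{0}{1}$ there exists $C_s>0$ with
\begin{equation*}
\norme{|x|^{-s}f}_{L^2(\R^2)}\leq C_s\,\norme{(-\Delta)^{s/2}f}_{L^2(\R^2)},
\end{equation*}
valid in the full admissible range $s<d/2=1$. Equivalently, $|x|^{-s}(-\Delta)^{-s/2}$ is bounded on $L^2(\R^2)$, and by taking adjoints so is $(-\Delta)^{-s/2}|x|^{-s}$. Applying this with $s=2\alpha$ bounds the first factor (since $2\alpha\in\intoo{0}{1}$ by hypothesis), and applying it with $s=1-2\alpha$ bounds the second factor (since $1-2\alpha\in\intoo{0}{1}$ as well). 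The product of two bounded operators is bounded, which gives \eqref{eq: fractional_hardy_bis} with $C_\alpha\lesssim C_{2\alpha}C_{1-2\alpha}$.

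There is no real obstacle in the argument: the whole point of the splitting is that the symmetric exponent choice $a=2\alpha$, $b=1-2\alpha$ is exactly the one that matches the scaling of each fractional Laplacian, so that Hardy applies on both sides. The constraint $\alpha\in\intoo{0}{\frac12}$ enters precisely to keep both $2\alpha$ and $1-2\alpha$ strictly inside the interval $\intoo{0}{1}$ where the two-dimensional fractional Hardy inequality holds; the boundary cases $\alpha=0$ and $\alpha=\frac12$ would correspond to the logarithmically divergent endpoint of Hardy in $\R^2$ and must be excluded.
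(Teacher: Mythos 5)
Your proof is correct and follows essentially the same route as the paper: the identical splitting $|x|^{-1}=|x|^{-2\alpha}\,|x|^{-(1-2\alpha)}$, with each factor controlled by the boundedness of $|x|^{-s}(-\Delta)^{-s/2}$ and $(-\Delta)^{-s/2}|x|^{-s}$ on $L^2(\R^2)$ for $s\in\intoo{0}{1}$. The paper quotes this as the Stein--Weiss inequality, which in this $L^2$ setting is exactly the fractional Hardy inequality you invoke, so the two arguments coincide.
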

\begin{proof}
	By \cite[Theorem 6.2.2]{simon2015comprehensiveIII} (Stein-Weiss inequality) and the remark which follows, the operators $(-\Delta)^{-\beta/2} \abs{x}^{-\beta}$ and $\abs{x}^{-\beta} (-\Delta)^{-\beta/2}$ acting on $L^2(\R^2)$ are bounded for all $\beta \in \intoo{0}{1}$. It remains to write
	\[
	(-\Delta)^{-\alpha} \abs{x}^{-1} (-\Delta)^{-\frac{1}{2}+\alpha} = (-\Delta)^{-\alpha} \abs{x}^{-2\alpha} \abs{x}^{-1 +2 \alpha} (-\Delta)^{-\frac{1}{2}+\alpha} \, ,
	\]
	to conclude the proof of Lemma \ref{lemma: fractional_hardy_bis}.
\end{proof}
\begin{proof}[Proof of Proposition \ref{prop: study of d(h)}]
	When $d=3$, Proposition \ref{prop: study of d(h)} is a consequence of the fact that for all $\epsilon>0$ there exists $C_\epsilon>0$ such that for all $v \in H^2(\R^3)$ we have (see for instance \cite[Theorem X.19]{reed1975methodsII})
	\[
	\norme{\frac{v}{|\cdot|}}_{L^2(\R^3)} \leq \epsilon \norme{v}_{H^2(\R^3)}  + C_\epsilon\norme{v}_{L^2(\R^3)} \pt
	\]
	
	Now, we turn our attention to the two-dimensional case. We write the proof for $h$ in detail and we point out the changes necessary to handle $h_L$. First, we express the inverse of $-\Delta -\abs{x}^{-1}$ in terms of Neumann series when $\nu >0$ large enough. We write
	\begin{align*}
	(-\Delta -\abs{x}^{-1} +\nu) = (-\Delta+\nu)^{1/2} \left(1 - (-\Delta+\nu)^{-1/2} \abs{x}^{-1} (-\Delta+\nu)^{-1/2}\right) (-\Delta+\nu)^{1/2} \pt
	\end{align*}
	Using Lemma \ref{lemma: fractional_hardy_bis} with the estimates $\norme{\left(\frac{-\Delta}{-\Delta+\nu}\right)^{1/4}} \leq 1$ and $\norme{\left(\frac{1}{-\Delta+\nu}\right)^{1/4}} \leq \nu^{-1/4}$ we obtain
	\begin{align*}
	&\norme{(-\Delta+\nu)^{-1/2} \abs{x}^{-1} (-\Delta+\nu)^{-1/2}} \\
	&= \norme{(-\Delta+\nu)^{-1/4} \left(\frac{-\Delta}{-\Delta+\nu}\right)^{1/4} (-\Delta)^{-1/4} \abs{x}^{-1} (-\Delta)^{-1/4} \left(\frac{-\Delta}{-\Delta+\nu}\right)^{1/4}  (-\Delta+\nu)^{-1/4}} \\
	&\leq C_{1/4}\nu^{-1/2} \pt
	\end{align*}
	If we choose $\nu > C_{1/4}^2$ then the operator $1- (-\Delta+\nu)^{-1/2} \abs{x}^{-1} (-\Delta+\nu)^{-1/2}$ is invertible and its inverse is given by a Neumann series. Thus we have
	\begin{align*}
	(-\Delta - \abs{x}^{-1} +\nu)^{-1} = (-\Delta+\nu)^{-1/2} \sum_{n \geq 0} \left[(-\Delta+\nu)^{-1/2} \abs{x}^{-1} (-\Delta+\nu)^{-1/2}\right]^n (-\Delta+\nu)^{-1/2} \pt
	\end{align*}
	Let $\alpha \in \intoo{0}{\frac{1}{2}}$. We have
	\begin{multline}
	\label{eq: that}
	(-\Delta+1)^{1-\alpha}(-\Delta - \abs{x}^{-1} +\nu)^{-1} \\
	= \left(\frac{-\Delta+1}{-\Delta+\nu} \right)^{1-\alpha}  (-\Delta+\nu)^{1/2 - \alpha} \sum_{n \geq 0} \left[(-\Delta+\nu)^{-1/2} \abs{x}^{-1} (-\Delta+\nu)^{-1/2}\right]^n (-\Delta+\nu)^{-1/2} \pt
	\end{multline}
	By functional calculus, we have $\norme{\left((-\Delta+1)(-\Delta+\nu)^{-1} \right)^{1-\alpha}} \leq 1$ if we take $\nu > \max(1,C_{1/4}^2)$.
	The term of order $n$ in \eqref{eq: that} is estimated by writing
	\begin{multline*}
	(-\Delta+\nu)^{1/2 - \alpha} \left[(-\Delta+\nu)^{-1/2} \abs{x}^{-1} (-\Delta+\nu)^{-1/2}\right]^n (-\Delta+\nu)^{-1/2}\\
	= \left[(-\Delta+\nu)^{- \alpha}\abs{x}^{-1} (-\Delta+\nu)^{-1/2 + \alpha} (-\Delta+\nu)^{-1/2} \right]^n (-\Delta+\nu)^{- \alpha} \pt
	\end{multline*}
	Then, using $n$ times Lemma \ref{lemma: fractional_hardy_bis} and the estimate $\norme{(-\Delta+\nu)^{-\beta}} \leq \nu^{-\beta}$ for any $\beta>0$, we have
	\begin{align*}
	\norme{\left[(-\Delta+\nu)^{- \alpha}\abs{x}^{-1} (-\Delta+\nu)^{-1/2 + \alpha} (-\Delta+\nu)^{-1/2} \right]^n (-\Delta+\nu)^{-1/2 - \alpha}} \leq \nu^{-\alpha} \left(\frac{C_\alpha}{\sqrt{\nu}}\right)^n \, ,
	\end{align*}
	where $C_\alpha$ is the constant appearing in Lemma \ref{lemma: fractional_hardy_bis}.
	Inserting this into \eqref{eq: that}, we obtain
	\begin{align*}
	\norme{(-\Delta+1)^{1-\alpha}(-\Delta - \abs{x}^{-1} +\nu)^{-1}} \leq 2\nu^{-\alpha}  \, ,
	\end{align*}
	for $\nu >  \max(1,C^2_{1/4},4C^2_\alpha)$. If in addition we assume $\nu \geq (2/\epsilon)^{1/\alpha}$ then we have
	\[
	\norme{v}_{H^{2-2\alpha}(\R^2)} \leq \epsilon \norme{hv}_{L^2(\R^2)} + \left(\nu + \norme{\abs{u}^2 \ast |\cdot|^{-1}}_{L^\infty(\R^2)} \right)\norme{v}_{L^2(\R^2)} \, ,
	\]
	for all $v \in \mathcal{D}(h)$.	We conclude the proof by noticing that $\abs{u}^2 \ast |\cdot|^{-1} \in L^\infty(\R^2)$ by Lemma~\ref{lemma: technical2}.
\end{proof}

\subsection{Exponential decay of \texorpdfstring{$u$}{u}}

In this section, we give the precise long-range behavior of the minimizer $u$ of the monoatomic problem \eqref{eq: min_problem_ow}. In particular, $u$ decays exponentially fast at infinity. We will treat both the two and three-dimensional cases. However, in dimension 3, more precise results have already been obtained by Catto and Lions in \cite[Appendix 2 \& 3]{catto1993bindingIII}. The main result of this section is the following proposition.
\begin{prop}[Exponential bounds on $u$]
	\label{prop: u_exponential_falloff}
	Let $d\in\{2,3\}$.
	There exists a constant $C>0$ such that we have for all $x\in \R^d$ the pointwise exponential estimates
	\begin{gather}
	\label{eq: u_exponential_falloff}
	\frac{1}{C} \frac{e^{-\abs{\mu}^{\frac{1}{2}} \abs{x}}}{1+ \abs{x}^{\frac{d-1}{2}}} \leq u(x) \leq C \frac{e^{-\abs{\mu}^{\frac{1}{2}} \abs{x}}}{1+ \abs{x}^{\frac{d-1}{2}}} \quad \text{and} \quad 
	\abs{\nabla u(x)} \leq C \frac{e^{-\abs{\mu}^{\frac{1}{2}} \abs{x}}}{1+ \abs{x}^{\frac{d-1}{2}}} \, ,
	\end{gather}	
	and such that for all $R>0$ we have the integral exponential estimate
	\begin{align}
	\label{eq: a priori exp decay2}
	\int_{\abs{x}>R} \abs{u(x)}^2 \diff x+ \int_{\abs{x}>R} \abs{\nabla u(x)}^2 \diff x \leq C e^{-2\abs{\mu}^{\frac{1}{2}} R} \pt
	\end{align}
	In addition, for all $\epsilon \in \intoo{0}{1}$ there exists a constant $C_\epsilon>0$ such that
	\begin{align}
	\label{eq: a priori exp decay3}
	\norme{u}_{H^{2-\alpha}(B(0,R)^c)} \leq C e^{-(1-\epsilon)\sqrt{\abs{\mu}R} } \pt
	\end{align}
\end{prop}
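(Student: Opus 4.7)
The starting point is to read the Euler--Lagrange equation \eqref{eq: euler_lagrange_equation_one_nucleus} (specialized to $u$) in the form
\[
\left(-\Delta + \abs{\mu}\right) u = W u, \qquad W(x) \coloneqq \frac{1}{\abs{x}} - \left(\abs{u}^2 \ast |\cdot|^{-1}\right)(x) \pt
\]
Since $\int_{\R^d} \abs{u}^2 = 1$, the multipolar expansion (Lemma \ref{lemma: potential_expansion}) together with the fact that $u$ is radial forces the monopole of $\abs{u}^2 \ast |\cdot|^{-1}$ to cancel exactly the $1/\abs{x}$ in $W$ and the dipole to vanish, leaving $W(x) = \grandO{\abs{x}^{-3}}$ in $d=2$ and, by Newton's theorem combined with the tail behavior of $\abs{u}^2$, an even faster decay in $d=3$. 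This is the key feature: the equation for $u$ reads asymptotically as the massive Helmholtz equation $(-\Delta + \abs{\mu})u \approx 0$, whose solutions decay like the Yukawa kernel.

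The plan is then to proceed in three steps. First I would establish a weak integral bound of the form $\int_{\R^d} e^{2(1-\epsilon)\sqrt{\abs{\mu}}\abs{x}}\abs{u}^2 < \infty$ for every $\epsilon > 0$, by the standard Agmon weighted-$L^2$ identity: with $v = e^{(1-\epsilon)\sqrt{\abs{\mu}}\abs{x}} u$, multiplying the equation by $e^{2(1-\epsilon)\sqrt{\abs{\mu}}\abs{x}} u$ and integrating by parts (after a soft cutoff near the origin to give meaning to the weight) yields
\[
\int_{\R^d} \abs{\nabla v}^2 + \int_{\R^d} \left( \abs{\mu}(2\epsilon - \epsilon^2) - W \right) v^2 = 0 \pt
\]
Since $W(x) \to 0$, the coefficient in front of $v^2$ is bounded below by a positive constant outside a large ball, and a standard absorption argument gives the desired integrability.

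Second, I would upgrade to sharp \emph{pointwise} bounds \eqref{eq: u_exponential_falloff} by invoking the comparison principles in Lemmas \ref{lemma: comparison_supersolution} and \ref{lemma: comparison_hoffmann_ostenhoff} with the Yukawa-type barrier
\[
\Phi(x) \coloneqq \frac{e^{-\sqrt{\abs{\mu}}\abs{x}}}{1+\abs{x}^{(d-1)/2}} \, ,
\]
which is the large-distance asymptotic of the fundamental solution of $-\Delta + \abs{\mu}$ in $\R^d$. A direct computation shows that $(-\Delta + \abs{\mu})\Phi = \grandO{\abs{x}^{-1}}\Phi$ for $\abs{x}$ large, so that for $R$ large enough and appropriate constants $c,C>0$, $C\Phi$ is a supersolution and $c\Phi$ a subsolution of the operator $-\Delta + \abs{\mu} - W$ on $\{\abs{x}>R\}$, since $W(x) = \petito{\abs{x}^{-1}}$ swamps the $\grandO{\abs{x}^{-1}\Phi}$ error only with the right sign. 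Combined with the continuity and strict positivity of $u$ on $\partial B(0,R)$ (Proposition \ref{prop: regularity} and Remark \ref{rem: continuity_around_singularities}), the comparison lemmas yield $c\Phi \leq u \leq C\Phi$ on $\{\abs{x}>R\}$, and the inequality extends to all of $\R^d$ by compactness and positivity.

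Third, the remaining bounds follow. The gradient estimate comes from interior elliptic regularity on unit balls $B(x_0,1)$ with $\abs{x_0}$ large: the right-hand side of $-\Delta u = (\mu - V - \abs{u}^2 \ast |\cdot|^{-1})u$ is bounded on $B(x_0,1)$ by a constant times $\Phi(x_0)$, so classical $W^{2,p}$ estimates give $\norme{\nabla u}_{L^\infty(B(x_0,1/2))} \lesssim \Phi(x_0)$. The integral estimate \eqref{eq: a priori exp decay2} is then obtained by integrating the pointwise bounds in spherical shells, and the Sobolev estimate \eqref{eq: a priori exp decay3} by localizing with a smooth cutoff $\chi_R$ equal to $1$ outside $B(0,R)$ and supported in $\{\abs{x}>R/2\}$: applying Proposition \ref{prop: study of d(h)} to $\chi_R u$ reduces to controlling $h(\chi_R u) = \mu \chi_R u - 2\nabla \chi_R \cdot \nabla u - (\Delta \chi_R) u$ in $L^2$, the commutator terms being dominated by the pointwise bound on the transition shell (the small $\epsilon$ loss in the exponential rate of \eqref{eq: a priori exp decay3} comes from the width of this shell). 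The main technical obstacle I expect is obtaining the sharp rate $\sqrt{\abs{\mu}}$ without $\epsilon$ loss in the pointwise lower bound: the Hoffmann-Ostenhoff-type comparison must be carried all the way up to the exact mass, which in $d=2$ requires exploiting the quadrupole decay $W = \grandO{\abs{x}^{-3}}$ explicitly to certify that $W$ vanishes strictly faster than $\abs{x}^{-1}$, so as not to spoil the barrier $\Phi$.
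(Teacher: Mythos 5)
Your overall scheme (a priori decay, multipole expansion of the mean-field potential, comparison principles with a Yukawa-type barrier, then elliptic regularity and a cutoff for the $H^{2-\alpha}$ bound) is the paper's scheme, but the key step --- the sharp pointwise \emph{upper} bound --- has a genuine gap. For the profile $\Phi(x)=e^{-\sqrt{\abs{\mu}}\abs{x}}/(1+\abs{x}^{(d-1)/2})$ a direct computation gives, at large $\abs{x}$, $(-\Delta+\abs{\mu})\Phi=-\tfrac14\abs{x}^{-2}\Phi\,(1+\petito{1})$ when $d=2$, and $(-\Delta+\abs{\mu})\Phi=0$ to leading order when $d=3$. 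Since in $d=2$ the mean-field tail is $V^{\mathrm{MF}}=+\tfrac{m_1}{4}\abs{x}^{-3}+\grandO{\abs{x}^{-5}}$, hence strictly smaller than the barrier's own error, and in $d=3$ one has $V^{\mathrm{MF}}\le 0$, in both cases $(h-\mu)\Phi\le 0$ for $\abs{x}$ large: $\Phi$ is a \emph{subsolution}, never a supersolution, and multiplying by a constant $C$ cannot change this. So Lemma \ref{lemma: comparison_supersolution} cannot be invoked with $C\Phi$, and your sentence about the error having ``the right sign'' is exactly backwards for the upper bound (the direction you flag as delicate, the lower bound, is in fact the one that works with your barrier, and is essentially the paper's argument via Lemma \ref{lemma: comparison_hoffmann_ostenhoff}). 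The paper's resolution is to take as supersolution the exact modified Bessel function $K_0(\sqrt{\abs{\mu}}\abs{x})$, which satisfies $(-\Delta-\mu)K_0=0$, so that $(h-\mu)K_0=V^{\mathrm{MF}}K_0\ge 0$ on $\{\abs{x}\ge R\}$ precisely because of the positivity of the quadrupolar tail of $V^{\mathrm{MF}}$ in $d=2$; the prefactor $\abs{x}^{-1/2}$ then comes out of the $K_0$ asymptotics. Equivalently you could repair your barrier by the corrected profile $\abs{x}^{-1/2}e^{-\sqrt{\abs{\mu}}\abs{x}}\bigl(1-\tfrac{1}{8\sqrt{\abs{\mu}}\abs{x}}\bigr)$. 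In $d=3$ the sign of $V^{\mathrm{MF}}$ is unfavorable even for the exact Yukawa kernel, so a small supersolution-restoring correction (or the citation to Catto--Lions, as in the paper) is needed there as well.

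Two smaller points. Your opening claim $W=\grandO{\abs{x}^{-3}}$ rests on Lemma \ref{lemma: potential_expansion}, whose hypothesis is \emph{pointwise} exponential decay of $\abs{u}^2$; so the weighted-$L^2$ Agmon bound of your Step 1 must first be upgraded to a pointwise a priori bound (standard via local elliptic estimates, or directly by the paper's simpler comparison with $e^{-\sqrt{\abs{\mu}-\epsilon}\abs{x}}$), otherwise the presentation is circular. The remaining items are fine and close to the paper: for the gradient bound the paper uses the Yukawa-kernel representation together with Lemma \ref{lemma: lemma_technical_exponential_convolution} instead of interior $W^{2,p}$ estimates (either works), \eqref{eq: a priori exp decay2} follows by integrating the pointwise bounds, and your cutoff argument with Proposition \ref{prop: study of d(h)} for \eqref{eq: a priori exp decay3} is exactly the paper's.
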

We could also give non optimal exponential bounds on $u_L^\pm$. 
However, as we do not yet have any information about the eigenvalues of the mean-field hamiltonian $h_L$, we prefer to postpone this analysis to another section.

The first step will be an easy and non optimal estimate on the decay of $u$ which will allow us to obtain some information about the behavior at infinity of the mean-field potential associated with the monoatomic model \eqref{eq: min_problem_ow}, denoted by
\begin{align}
\label{eq: def_mean_field_potential}
V^\mathrm{MF} \coloneqq - |\cdot|^{-1} + \abs{u}^2 \ast |\cdot|^{-1} \pt
\end{align}
Then, a finer analysis will give the optimal exponential decay rate.

We recall two fundamental comparison lemmas (\cite[Corollary 2.8]{agmon1985bounds} and \cite[Theorem 1.1]{hoffmann1980comparison}) which will be useful in the sequel. We will use the first one for the upper bounds in \eqref{eq: u_exponential_falloff} and the second one for the lower bound.
\begin{lemma}[First comparison lemma \cite{agmon1985bounds}]
	\label{lemma: comparison_supersolution}
	Let $d\in \N$ and $p> d/2$. Let $W \in L^p_\mathrm{loc}(\R^d)$ be real-valued and satisfying the condition: there exists $\theta \in \intoo{0}{1}$ and $C>0$ such that
	\[
	\int_{\R^d} W_- \abs{\varphi}^2 \leq \theta \int_{\R^d} \abs{\nabla \varphi}^2 + C \int_{\R^d} \abs{\varphi}^2\, ,\quad\forall \varphi \in \mathcal{C}^\infty_0(\R^d)\, ,
	\]
	where $W_- \geq 0$ denotes the negative part of $W$.
	Let $\psi$ be an eigenfunction of $P = -\Delta + W$ with eigenvalue $\lambda$. Let $\varphi$ be a positive and continuous supersolution of the equation $P - \lambda$ in the region $\{\abs{x} \geq R\}$. Then there exists a constant $C = C(R)$ such that
	\[
	\forall \abs{x} \geq R+1,~ \abs{\psi(x)} \leq C \varphi(x) \pt
	\]
\end{lemma}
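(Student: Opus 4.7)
The plan is to prove this comparison principle by a distributional maximum principle argument in the exterior region $\{\abs{x} > R+1\}$. The two main ingredients are Kato's inequality, which reduces the statement about the possibly complex-valued $\psi$ to a scalar subsolution inequality for $\abs{\psi}$, and the form-boundedness hypothesis on $W_-$, which allows the associated energy estimate to close despite the local singularities of $W$.

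First, Kato's inequality gives $(-\Delta + W - \lambda)\abs{\psi} \leq 0$ in $\mathcal{D}'(\R^d)$, while the hypothesis on $\varphi$ yields $(-\Delta + W - \lambda)\varphi \geq 0$ in $\mathcal{D}'(\{\abs{x} > R\})$. Since $W \in L^p_{\mathrm{loc}}(\R^d)$ with $p > d/2$, local elliptic regularity (Moser or Stampacchia iteration) produces a pointwise bound on $\abs{\psi}$ near the sphere $\{\abs{x}=R+1\}$. Because $\varphi$ is continuous and strictly positive on this compact sphere, one can choose $K>0$ large enough that $\abs{\psi} \leq K\varphi$ there. Setting $w := (\abs{\psi}-K\varphi)^+$, the goal becomes to show that $w \equiv 0$ on $\{\abs{x}\geq R+1\}$. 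By construction $w$ vanishes at the inner boundary and satisfies $(-\Delta + W - \lambda)w \leq 0$ distributionally on $\{\abs{x}>R+1\}$.

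The main energy estimate proceeds as follows. Let $\chi_n \in \mathcal{C}^\infty_c(\{\abs{x}>R+1\})$ be a family of cutoffs approximating the indicator of $\{R+1<\abs{x}<n\}$. Testing the distributional inequality against $\chi_n^2 w$, integrating by parts, and applying Cauchy--Schwarz on the cross term produces
\[
\int_{\R^d} \abs{\nabla(\chi_n w)}^2 + \int_{\R^d}\chi_n^2 W w^2 - \lambda \int_{\R^d} \chi_n^2 w^2 \leq C\int_{\R^d} \abs{\nabla\chi_n}^2 w^2 \pt
\]
The form-boundedness hypothesis applied to $\chi_n w$ absorbs the contribution of $W_-$ into the Dirichlet energy with factor $\theta<1$, and the pointwise bound $w \leq \abs{\psi} \in L^2(\R^d)$ guarantees that the cutoff error $\int \abs{\nabla\chi_n}^2 w^2$ vanishes as $n \to \infty$ by dominated convergence.

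To finally force $w \equiv 0$, I would insert an Agmon-type exponential weight $e^{2F}$ into the test function, with $F$ Lipschitz and $\abs{\nabla F}^2$ slightly smaller than $W - \lambda$ outside a compact set; the corresponding weighted energy identity forces the positive-weighted $L^2$ norm of $w$ to be zero on the exterior. Translating back gives $\abs{\psi}\leq K\varphi$ on $\{\abs{x}\geq R+1\}$, which is the claim with $C=K$. The principal obstacle throughout is the singular negative part of $W$: without the form-bound $\theta<1$, the term $\int W_- (\chi_n w)^2$ could not be absorbed into the Dirichlet energy, and the argument would collapse---this is precisely why the hypothesis is stated in that form rather than by a pointwise bound on $W$.
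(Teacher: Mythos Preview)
The paper does not prove this lemma; it simply quotes it from Agmon's lecture notes \cite{agmon1985bounds} (Corollary~2.8 there). So there is no ``paper's own proof'' to compare against beyond the reference. That said, your sketch contains a genuine gap in the closing step.

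Everything up to and including the energy identity is fine: Kato's inequality gives $(-\Delta+W-\lambda)\abs{\psi}\leq 0$, local elliptic regularity with $p>d/2$ makes $\abs{\psi}$ continuous near $\{\abs{x}=R+1\}$, and with $K$ large the function $w=(\abs{\psi}-K\varphi)^+$ is a nonnegative $L^2$ subsolution supported in $\{\abs{x}>R+1\}$. Testing against $\chi_n^2 w$ and absorbing $W_-$ via the form bound is also correct. The problem is the last paragraph: you propose an Agmon weight $e^{2F}$ with $\abs{\nabla F}^2$ ``slightly smaller than $W-\lambda$ outside a compact set''. Nothing in the hypotheses guarantees $W-\lambda\geq 0$ at infinity, so there may be no admissible nonzero $F$ (and even $F\equiv 0$ gives nothing). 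In the applications of the present paper one always has $W\to 0$ and $\lambda<0$, so your route would accidentally succeed there, but it does not prove the lemma as stated.

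The positivity you need is encoded in $\varphi$ itself, not in $W-\lambda$ pointwise. The standard closure (which is essentially Agmon's argument, and is the Allegretto--Piepenbrink mechanism) is the ground state representation: for $f\in \mathcal{C}^\infty_c(\{\abs{x}>R\})$, writing $f=\varphi g$ and integrating by parts against the supersolution inequality $(-\Delta+W-\lambda)\varphi\geq 0$ yields
\[
\int \abs{\nabla f}^2 + (W-\lambda)\abs{f}^2 \;\geq\; \int \varphi^2\,\abs{\nabla(f/\varphi)}^2 \pt
\]
Apply this with $f=\chi_n w$. The left side is bounded by $C\int\abs{\nabla\chi_n}^2 w^2\to 0$ exactly as in your computation; hence by Fatou $\int_{\{\abs{x}>R+1\}}\varphi^2\abs{\nabla(w/\varphi)}^2=0$, so $w/\varphi$ is constant on $\{\abs{x}>R+1\}$, and since $w$ vanishes on the inner sphere the constant is zero. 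This replaces your exponential weight by the supersolution $\varphi$ as weight, which is precisely what the hypothesis on $\varphi$ is designed for.
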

\begin{lemma}[Second comparison lemma \cite{hoffmann1980comparison}]
	\label{lemma: comparison_hoffmann_ostenhoff}
	Let $d\in\N$ and $\Omega \subset \R^d$ be an open set. Let $W_1$ and $W_2$ two potentials in $L^1_\mathrm{loc}(\Omega)$ such that $W_2 \geq W_1 \geq 0$ in $\Omega$. Let $\psi$ and $\varphi$ such that :
	\begin{enumerate}[noitemsep, label=(\roman*)]
		\item $\psi,\varphi \in \mathcal{C}^0(\overline{\Omega})$, $\psi,\varphi \geq 0$ a.e in $\Omega$ and $\varphi,\psi \to 0$ as $\abs{x} \to \infty$ if $\Omega$ is unbounded;
		\item $\psi \geq \varphi$ on $\partial \Omega$;
		\item $W_1 \psi \geq \Delta  \psi$ and $W_2\varphi \leq \Delta \varphi$ in the sense of distribution.
	\end{enumerate}
	Then, we have $\psi \geq \varphi$ in $\Omega$.
\end{lemma}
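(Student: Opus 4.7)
The plan is to set $w \coloneqq \varphi - \psi \in \mathcal{C}^0(\overline{\Omega})$ and prove by a maximum principle argument that $w \leq 0$ on $\Omega$, which is the desired inequality $\psi \geq \varphi$. First, combining the two distributional inequalities from hypothesis (iii) with $\varphi \geq 0$ and $W_2 \geq W_1 \geq 0$, one has $\Delta \varphi \geq W_2 \varphi \geq W_1 \varphi$ while $\Delta \psi \leq W_1 \psi$. Subtracting gives
\[
-\Delta w + W_1 w \leq 0 \quad \text{in} \quad \mathcal{D}'(\Omega) \pt
\]

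Next, introduce the open set $U \coloneqq \{x \in \Omega : w(x) > 0\}$, which is open by continuity of $w$. Since $W_1 \geq 0$ and $w > 0$ on $U$, the term $W_1 w$ is non-negative there, so the inequality above restricted to $U$ reduces to $\Delta w \geq 0$ in $\mathcal{D}'(U)$. Because $w$ is continuous, this makes $w$ a subharmonic function on $U$ in the classical sense: mollification produces smooth subharmonic approximants satisfying the sub-mean-value inequality, and the uniform local limit retains the property.

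The boundary values of $w$ on $\partial U$ are then controlled as follows. By continuity of $w$ on $\overline{\Omega}$, points of $\partial U \cap \Omega$ satisfy $w = 0$. Points of $\partial U \cap \partial \Omega$ satisfy $w \leq 0$ by hypothesis (ii). Finally, when $\Omega$ is unbounded, hypothesis (i) yields $w(x) \to 0$ as $\abs{x} \to \infty$. In particular, $\limsup w \leq 0$ approaching any point of the topological boundary of $U$, including infinity if relevant.

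Applying the weak maximum principle for continuous subharmonic functions, via an exhaustion by bounded subsets $U \cap B(0,R)$ with $R \to \infty$, yields $w \leq 0$ on $\overline{U}$. But $w > 0$ on $U$ by construction, which forces $U = \emptyset$, hence $w \leq 0$ on $\Omega$. The main technical point to verify carefully is the maximum principle on the possibly unbounded and irregular set $U$, handled by a standard cutoff argument: the sub-mean-value inequality together with the decay at infinity ensures that $\sup_{\overline{U} \cap B(0,R)} w$ is attained on the boundary of $U \cap B(0,R)$ and tends to zero as $R \to \infty$, since on $\partial U \cap \overline{B(0,R)}$ we have $w \leq 0$ and on $U \cap \partial B(0,R)$ the uniform decay $|w| \to 0$ holds.
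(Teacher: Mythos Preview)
The paper does not supply its own proof of this lemma: it is stated with a citation to \cite{hoffmann1980comparison} and used as a black box. Your argument is correct and is essentially the standard maximum-principle proof that underlies the cited result. The key steps---deriving $-\Delta w + W_1 w \leq 0$ from the two differential inequalities, restricting to the positivity set $U = \{w>0\}$ where the zeroth-order term has a good sign, and then applying the weak maximum principle for continuous subharmonic functions with an exhaustion to handle the unbounded case---are all sound. The only point worth tightening slightly is the passage from distributional subharmonicity to the sub-mean-value inequality for a merely continuous $w$; you gesture at mollification, which is indeed the standard route, but in a fully written-out version one would note that the mollified functions $w_\varepsilon$ are smooth with $\Delta w_\varepsilon \geq 0$ on compact subsets of $U$, hence satisfy the mean-value inequality, and then pass to the limit using locally uniform convergence.
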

We begin by proving the following \emph{a priori} exponential bound on $u$.
\begin{lemma}[\emph{A priori} pointwise exponential bound on $u$]
	\label{lemma: a priori exp decay}
	Let $d\in \{2,3\}$.	Let $u$ be the minimizer of~\eqref{eq: min_problem_ow}. Then, for all $\epsilon >0$ there exists a constant $C_\epsilon> 0$ such that the following pointwise estimate holds
	\begin{align}
	\label{eq: a priori exp decay}
	\forall x\in \R^d,~0 \leq u(x) \leq C_\epsilon e^{-\sqrt{\abs{\mu} - \epsilon} \abs{x}}  \pt
	\end{align}
\end{lemma}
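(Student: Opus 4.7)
The plan is to establish \eqref{eq: a priori exp decay} by constructing an explicit exponentially decaying supersolution of the Schr\"odinger-type equation satisfied by $u$ in an exterior region, and then applying the first comparison lemma (Lemma~\ref{lemma: comparison_supersolution}). Nonnegativity $u\geq 0$ is automatic, since $u$ is chosen as the unique positive minimizer of~\eqref{eq: min_problem_ow}.

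The only piece of information on the interaction that the proof uses is the pointwise lower bound $V^{\mathrm{MF}}(x) \geq -\abs{x}^{-1}$, which follows from the nonnegativity of the convolution $\abs{u}^2 \ast \abs{\cdot}^{-1}$. In particular, for any $\delta >0$ and any $\abs{x}\geq 1/\delta$, one has $V^{\mathrm{MF}}(x) \geq -\delta$, so $V^{\mathrm{MF}}$ can be made as close to $0$ as wanted outside a sufficiently large ball.

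Fix $\epsilon \in \intoo{0}{\abs{\mu}}$ (the case $\epsilon \geq \abs{\mu}$ is vacuous) and set $k = \sqrt{\abs{\mu}-\epsilon}$. Using $\Delta e^{-k\abs{x}} = \bigl(k^2 - (d-1)k/\abs{x}\bigr) e^{-k\abs{x}}$ away from the origin, a direct computation yields
\[
(-\Delta + V^{\mathrm{MF}} - \mu)\, e^{-k\abs{x}} = \left(\epsilon + V^{\mathrm{MF}}(x) + \frac{(d-1)k}{\abs{x}}\right) e^{-k\abs{x}} \, ,
\]
which is nonnegative on $\{\abs{x}\geq R\}$ as soon as $R \geq 1/\epsilon$. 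Hence $\varphi(x) = e^{-k\abs{x}}$ is a positive continuous supersolution of $-\Delta + V^{\mathrm{MF}} - \mu$ on the exterior of $B(0,R)$.

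To invoke Lemma~\ref{lemma: comparison_supersolution} with $P = -\Delta + V^{\mathrm{MF}}$ and $\lambda = \mu$, I check that $V^{\mathrm{MF}} \in L^p_{\mathrm{loc}}(\R^d)$ for some $p > d/2$, which holds since $\abs{\cdot}^{-1} \in L^p_{\mathrm{loc}}$ for all $p<d$ and $\abs{u}^2 \ast \abs{\cdot}^{-1} \in L^\infty$ by Lemma~\ref{lemma: technical2}; the required form-smallness of the negative part reduces to infinitesimal form-boundedness of $\abs{x}^{-1}$ by $-\Delta$, which is Hardy's inequality in $d=3$ and, in $d=2$, follows from the Sobolev embedding $H^1(\R^2)\hookrightarrow L^q(\R^2)$ ($q<\infty$) combined with a Gagliardo-Nirenberg interpolation and Young's inequality to produce an arbitrarily small coefficient on the gradient term. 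The lemma then yields $u(x) \leq C(R)\, e^{-k\abs{x}}$ for $\abs{x}\geq R+1$, and the bound extends to all of $\R^d$ up to enlarging the constant, because $u$ is continuous and vanishes at infinity by Propositions~\ref{prop: regularity} and~\ref{prop: study of d(h)}, hence is bounded on $B(0,R+1)$. The main technical subtlety is the form-bound for $\abs{x}^{-1}$ in dimension 2, where $\abs{x}^{-1}\notin L^2_{\mathrm{loc}}(\R^2)$ precludes a direct Kato-Rellich argument; beyond this, the proof is a compact application of the supersolution comparison principle.
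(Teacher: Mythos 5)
Your proposal is correct and follows essentially the same route as the paper: the explicit supersolution $e^{-\sqrt{|\mu|-\epsilon}\,|x|}$ of $h-\mu$ on $\{|x|\geq 1/\epsilon\}$ (using only $V^{\mathrm{MF}}\geq -|x|^{-1}$), the first comparison Lemma~\ref{lemma: comparison_supersolution}, and continuity of $u$ to cover the bounded region. Your extra verification of the hypotheses of the comparison lemma (local $L^p$ and form-smallness of $|x|^{-1}$, including the $d=2$ case) is a welcome but inessential addition, since these facts are already available in the paper.
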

\begin{proof}
	Recall that $h = -\Delta -|\cdot|^{-1} +\abs{u}^2 \ast |\cdot|^{-1}$ is the mean-field hamiltonian associated with the minimization problem \eqref{eq: min_problem_ow} and that $\mu <0$ is the first eigenvalue of $h$.
	
	Let $\gamma >0$ and $f(x) = \exp(-\gamma \abs{x})$. We shall prove that $f$ is a supersolution associated with the eigenvalue problem $(h - \mu)u = 0$ in the domain $\abs{x}>R$ for $R $ large enough and $\gamma$ well-chosen. We have : $\Delta f(x) = f(x)(\gamma^2 - \gamma (d-1) \abs{x}^{-1})$. Then
	\begin{align*}
	(h - \mu) f(x) 
	= f(x) \left(-\gamma^2 + (\gamma(d-1) - 1) \abs{x}^{-1} + \abs{u}^2 * \abs{\,\cdot\,}^{-1} + \abs{\mu}\right) 
	\geq f(x) \left(-\gamma^2 + \abs{\mu} - \abs{x}^{-1} \right) \pt
	\end{align*}
	If we choose $\gamma = \sqrt{\abs{\mu} - \epsilon}$ for some $0 < \epsilon < \abs{\mu}$ then $f$ is a supersolution for $h - \mu$ on the domain $\abs{x}>\epsilon^{-1}$. By the first comparison Lemma \ref{lemma: comparison_supersolution}, we conclude that for all $\abs{x} > \epsilon^{-1} + 1$, we have the pointwise estimate \eqref{eq: a priori exp decay}. We extend this estimate to $\R^d$ by using the continuity of $u$ on $\R^d$ (see Remark \ref{rem: continuity_around_singularities}).
\end{proof}
Now, we study the behavior of $V^\mathrm{MF}$ at infinity. Notice that since $u$ is radial, this is also the case for $V^\mathrm{MF}$.
\begin{lemma}[Long range behavior of $V^\mathrm{MF}$]
	\label{lemma: long range behavior_MF_potential}
	For any $\epsilon >0$ there exists $c_\epsilon>0$ such that 
	\begin{gather}
	\label{eq: estimation_MF_potential_d=3}
	\forall x \in \R^3\setminus\{0\},~ -c_\epsilon e^{-\sqrt{\abs{\mu} - \epsilon} \abs{x}} \leq V^\mathrm{MF}(x) \leq  0  \quad \text{if}\quad d=3\, , \\
	\label{eq: behavior_infinity}
	V^\mathrm{MF}(x) = \frac{m_1}{4\abs{x}^3} + \frac{9m_2}{64\abs{x}^5} + \grandO{\frac{1}{\abs{x}^{7}}}\quad \text{if}\quad d=2\, ,
	\end{gather}
	where $m_1 = \int_{\R^2} \abs{u(x)}^2 \abs{x}^2 \diff x$ and $m_2 = \int_{\R^2} \abs{u(x)}^2 \abs{x}^4 \diff x$.
\end{lemma}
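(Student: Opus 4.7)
The plan is to study $V^{\mathrm{MF}}(x) = |u|^2 \ast |\cdot|^{-1}(x) - |x|^{-1}$ by exploiting the radial symmetry of $|u|^2$, using different tools in each dimension.

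\textbf{Dimension $d=3$.} I would first invoke Newton's theorem for radial distributions, which writes
\[
|u|^2 \ast |\cdot|^{-1}(x) = \frac{1}{|x|}\int_{|y| \leq |x|} |u(y)|^2 \diff y + \int_{|y| > |x|} \frac{|u(y)|^2}{|y|} \diff y \pt
\]
Combined with the normalization $\int_{\R^3} |u|^2 = 1$, this rearranges to
\[
V^{\mathrm{MF}}(x) = \int_{|y| > |x|} |u(y)|^2 \left(\frac{1}{|y|} - \frac{1}{|x|}\right) \diff y \, ,
\]
which is manifestly non-positive on the domain of integration, giving $V^{\mathrm{MF}} \leq 0$. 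The lower exponential estimate then follows directly from the a priori decay of Lemma \ref{lemma: a priori exp decay} applied to $\int_{|y|>|x|} |u|^2$, the polynomial prefactor $|x|^{-1}$ being absorbed into the exponential by shrinking the exponent slightly (which is why the $\epsilon$ in the statement is needed).

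\textbf{Dimension $d=2$.} Here Newton's theorem is unavailable, so the plan is to perform a multipolar expansion. Split $\R^2$ into the inner region $A = \{|y| \leq |x|/2\}$ and its complement $A^c$. On $A$, where $|y|/|x| \leq 1/2$, the generating-function identity
\[
\frac{1}{|x-y|} = \frac{1}{|x|}\sum_{n \geq 0}\left(\frac{|y|}{|x|}\right)^n P_n(\cos \theta_{xy}) \, ,
\]
with $\theta_{xy}$ the angle between $x$ and $y$, gives a uniformly convergent Legendre series that I would truncate at $n=5$ with pointwise remainder $O(|y|^6/|x|^7)$. Integrating against the radial density $|u|^2$, only even multipoles survive by parity, and the angular averages $a_n \coloneqq (2\pi)^{-1}\int_0^{2\pi} P_n(\cos\phi)\diff\phi$ are elementary: $a_0 = 1$, $a_2 = 1/4$, $a_4 = 9/64$. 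After the $n=0$ term cancels the contribution $-|x|^{-1}$ from $V^{\mathrm{MF}}$, this yields the three explicit coefficients in the claimed expansion.

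\textbf{Outer region and main obstacle.} On $A^c$, the a priori bound $|u(y)|^2 \lesssim e^{-2\sqrt{|\mu|-\epsilon}|y|}$ from Lemma \ref{lemma: a priori exp decay} makes the contribution $\int_{A^c} |u|^2/|x-y| \diff y$, as well as the tail moments $\int_{A^c} |u|^2 |y|^k \diff y$ for $k \in \{0,2,4\}$, exponentially small in $|x|$, hence absorbable into the $O(|x|^{-7})$ remainder. The main technical obstacle is the clean bookkeeping of the 2D remainders: one must check that the truncated Legendre series, integrated against $|u|^2$ over $A$, produces a genuine $O(|x|^{-7})$ rather than a weaker $O(|x|^{-5})$ with polynomial prefactors, and that the crossover zone $|y| \sim |x|/2$ does not spoil this rate. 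In contrast, the $d=3$ case is essentially a one-line consequence of Newton's theorem once the radiality of $u$ is known.
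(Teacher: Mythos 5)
Your proposal is correct and follows the paper's overall strategy: for $d=3$ your rearrangement via Newton's theorem is literally the identity used in the paper, followed by the \emph{a priori} decay of Lemma \ref{lemma: a priori exp decay}; for $d=2$ both arguments rest on the Legendre (Gegenbauer) expansion of $\abs{u}^2\ast|\cdot|^{-1}$ with only even angular averages surviving. The difference is how the two-dimensional expansion is justified. The paper does not prove it inside this lemma but quotes Lemma \ref{lemma: potential_expansion} of Appendix \ref{sec: expansion_formula}, which is an all-orders statement for general exponentially decaying $\rho$ (and, via Remark \ref{rem: expansion_formula}, for the kernels $|\cdot|^{-a}$ reused in Lemma \ref{lemma: interaction_left_right_d=3}); its proof keeps the full series on $\{\abs{y}\le R(\abs{x})\}$ with a sublinearly growing radius and splits off a separate near-singularity zone $\{\abs{x-y}\le R(\abs{x})\}$, which requires trading the series tail against the exponential tails. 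Your route—splitting at $\abs{y}=\abs{x}/2$, truncating at $n=5$, and using $\abs{P_n}\le 1$ with the geometric bound $\sum_{n\ge 6}(\abs{y}/\abs{x})^n\le 2(\abs{y}/\abs{x})^6$—is simpler and fully adequate for the finite-order statement needed here: the remainder integrates against the finite sixth moment of $\abs{u}^2$ to give a genuine $\grandO{\abs{x}^{-7}}$, the odd angular averages vanish exactly on the rotation-invariant inner disk, and your values $a_2=1/4$, $a_4=9/64$ match the paper's coefficients $4^{-2n}\binom{2n}{n}^2$. Your stated worry about the crossover zone is already resolved by your own estimates, since the singularity $y=x$ lies in the outer region where the pointwise bound on $\abs{u}^2$ and the local integrability of $\abs{x-y}^{-1}$ give an exponentially small contribution; so the paper's heavier machinery buys generality that is reused elsewhere, while your argument buys a shorter, self-contained proof of exactly this lemma. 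One shared caveat: in $d=3$ the absorption of the $\abs{x}^{-1}$ prefactor works only for $\abs{x}$ bounded away from the origin (near $0$ one has $V^\mathrm{MF}\sim -\abs{x}^{-1}$), but the paper's own proof has the same feature and the bound is only ever used as a long-range estimate.
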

\begin{proof}
	In dimension 3, Newton's theorem gives the equality (recall that  $\int \abs{u}^2 =1$)
	\begin{align*}
	\forall x \in \R^3,~(\abs{u}^{2} \ast |\cdot|^{-1})(x) 
	= \int_{\R^3} \abs{u(y)}^2 \frac{1}{\max(\abs{x},\abs{y})} \diff y
	= \frac{1}{\abs{x}} - \int_{\abs{y} \geq \abs{x}} \left(\frac{1}{\abs{x}} - \frac{1}{\abs{y}}\right) \abs{u(y)}^2 \diff y \pt
	\end{align*}
	Recalling Lemma \ref{lemma: a priori exp decay}, we get the estimate \eqref{eq: estimation_MF_potential_d=3}.
	
	In the two-dimensional case, Newton's theorem does not apply since $|\cdot|^{-1}$ is not the Green function of the Laplace operator. The multipole expansion of the potential induced by $\abs{u}^2$ produces non-negative odd order terms. In Appendix \ref{sec: expansion_formula}, we prove the expansion
	\begin{align}
	\label{potential_expansion4}
	(\abs{u}^2 \ast |\cdot|^{-1})(x)
	= \frac{1}{\abs{x}} + \frac{m_1}{4\abs{x}^3} + \frac{9m_2}{64\abs{x}^5} + \grandO{\frac{1}{\abs{x}^{7}}} \, ,
	\end{align}
	as $\abs{x} \to \infty$. Expansion \eqref{eq: behavior_infinity} is a direct consequence of \eqref{potential_expansion4} and of the definition of $V^\mathrm{MF}$.
\end{proof}
Before proving Proposition \ref{prop: u_exponential_falloff}, we state a technical lemma whose proof is given in Appendix~\ref{sec:technical-results}. 
\begin{lemma}
	\label{lemma: lemma_technical_exponential_convolution}
	Let $d\in \{2,3\}$.
	Let $\nu,k >0$ and $v : \R^d \to \R$ a radial function satisfying the pointwise exponential estimate
	\[
	\forall x\in \R^d,\quad \abs{v(x)} \leq C \frac{e^{-\nu \abs{x}}}{1+ \abs{x}^k} \, ,
	\]
	for some constant $C>0$. Then there exists a constant  $C'>0$ such that for $x \in \R^d$, we have the pointwise estimates
	\begin{gather}
	\label{eq: lemma_technical_exponential_convolution}
	\abs{(v \ast v)(x)} \leq C'\frac{e^{-\nu\abs{x}}}{1+ \abs{x}^{k + 1-d}} \et	\abs{\left(v \ast \frac{v}{|\cdot|}\right)(x)} \leq C'\frac{e^{-\nu\abs{x}}}{1+ \abs{x}^{k + \frac{3}{2} - d}} \pt
	\end{gather}
\end{lemma}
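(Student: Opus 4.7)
The idea is to extract the exponential factor via the pointwise identity
\[
e^{-\nu\abs{x-y}}\,e^{-\nu\abs{y}} \;=\; e^{-\nu\abs{x}}\, e^{-\nu(\abs{x-y}+\abs{y}-\abs{x})},
\]
whose second factor is $\leq 1$ by the triangle inequality and vanishes precisely on the segment $[0,x]$. Substituting the hypothesis on $v$ into both convolutions and factoring out $e^{-\nu\abs{x}}$, the lemma reduces to controlling polynomial integrals of the form
\[
I_\alpha(x) \;:=\; \int_{\R^d} \frac{e^{-\nu(\abs{x-y}+\abs{y}-\abs{x})}\,dy}{(1+\abs{x-y}^k)(1+\abs{y}^k)\,\abs{y}^{\alpha}},
\]
with $\alpha=0$ for $v\ast v$ and $\alpha=1$ for $v\ast(v/|\cdot|)$.

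\textbf{Region splitting.} I would decompose $\R^d = A_0 \cup A_x \cup A_\infty$ with $A_0 = \{\abs{y}\leq\abs{x}/2\}$, $A_x = \{\abs{x-y}\leq\abs{x}/2\}$ and $A_\infty$ their complement. On $A_0$ one has $\abs{x-y}\geq\abs{x}/2$, so $1+\abs{x-y}^k \gtrsim 1+\abs{x}^k$, and the remaining integral of $(1+\abs{y}^k)^{-1}\abs{y}^{-\alpha}$ over a ball of radius $\abs{x}/2$ is finite (using the local integrability of $\abs{y}^{-1}$ in $d\geq 2$) and elementary to estimate. Region $A_x$ is handled by the symmetry $y\mapsto x-y$.

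\textbf{The tube region.} The delicate region is $A_\infty$. Parametrizing $y = s\,e_1 + r_\perp$ with $e_1 = x/\abs{x}$, $s\in\R$, $r_\perp\in\R^{d-1}$, the quadratic expansion
\[
\abs{x-y}+\abs{y}-\abs{x} \;\gtrsim\; \frac{\abs{r_\perp}^2\,\abs{x}}{s(\abs{x}-s)} \quad \text{for } 0<s<\abs{x},\ \abs{r_\perp} \ll \sqrt{s(\abs{x}-s)},
\]
together with a linear lower bound $\gtrsim \abs{r_\perp}$ outside the tube, permits a transverse Gaussian integration in $r_\perp$, producing a factor of order $(s(\abs{x}-s)/\abs{x})^{(d-1)/2}$ and reducing matters to a one-dimensional Beta-type integral in $s \in [0,\abs{x}]$, which after rescaling $s=\abs{x}t$ contributes the remaining polynomial factor in $\abs{x}$.

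\textbf{Main obstacle.} The main analytic ingredient is the sharp quadratic expansion of $\abs{y}+\abs{x-y}-\abs{x}$ near the segment $[0,x]$ together with the transverse Gaussian integration; the rest is power counting. Assembling the contributions from $A_0$, $A_x$ and $A_\infty$ and case-splitting on the sign of $k+1-d$ (respectively $k+3/2-d$) to verify the claimed exponent in each regime is the remaining bookkeeping; the shift from $k+1-d$ to $k+3/2-d$ when passing from $v\ast v$ to $v\ast(v/|\cdot|)$ arises from the extra factor $\abs{y}^{-1}\sim s^{-1}$ in the tube, which modifies the Beta-type integrand accordingly.
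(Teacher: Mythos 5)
Your route is genuinely different from the paper's: the paper reduces to the single weight $(1+\abs{y-x/2}+\abs{y+x/2})^{-k}$ and integrates over the confocal ellipses (prolate spheroids for $d=3$) $\abs{y-x/2}+\abs{y+x/2}=2a$, using circumference/surface formulas and elliptic integrals, whereas you run a Laplace-type analysis in a tube around the segment $[0,x]$ after factoring out $e^{-\nu\abs{x}}$. Your method is the natural one (it is essentially the argument of the Gontier--Lewin--Nazar lemma the paper cites) and, carried out correctly, gives sharper information than the paper's computation. But as written there is a genuine gap in the "easy" regions: on $A_0=\{\abs{y}\le\abs{x}/2\}$ you discard the exponential $e^{-\nu(\abs{x-y}+\abs{y}-\abs{x})}$ and bound the contribution by $(1+\abs{x}^k)^{-1}\int_{\abs{y}\le\abs{x}/2}(1+\abs{y}^k)^{-1}\abs{y}^{-\alpha}\,dy$, which is of size $\abs{x}^{d-2k-\alpha}$ when $k+\alpha<d$. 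In the paper's main two-dimensional case $k=\tfrac12$, $\alpha=0$, this gives $O(\abs{x}\,e^{-\nu\abs{x}})$, while the bound to be proved allows at most $O(\abs{x}^{1/2}e^{-\nu\abs{x}})$ (this is exactly how the lemma is used in the tunneling estimates). The point is that the near-segment Laplace mechanism is also needed for $\abs{y}$ small, up to the endpoints: on $A_0$ one has $\abs{x-y}+\abs{y}-\abs{x}\ge\abs{y}(1-\cos\theta)$, and keeping this factor and doing the transverse (angular) integration there — equivalently, running your tube analysis over the whole range $0\le s\le\abs{x}$ instead of only on $A_\infty$ — recovers the missing $\abs{x}^{-1/2}$. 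So the claim that only $A_\infty$ is delicate is wrong whenever $k$ (resp. $k+\alpha$) is small, and that is precisely the regime the paper needs in $d=2$.

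Two further points. The quadratic lower bound as you state it, $\abs{x-y}+\abs{y}-\abs{x}\gtrsim\abs{r_\perp}^2\abs{x}/(s(\abs{x}-s))$ for $\abs{r_\perp}\ll\sqrt{s(\abs{x}-s)}$, is false in the intermediate range $\min(s,\abs{x}-s)\lesssim\abs{r_\perp}\lesssim\sqrt{s(\abs{x}-s)}$, where the increment is only of order $\abs{r_\perp}$; the quadratic regime is $\abs{r_\perp}\lesssim\min(s,\abs{x}-s)$, and the correct statement is a lower bound by the minimum of the quadratic and linear expressions. Finally, the promised bookkeeping cannot "verify the claimed exponent in each regime" for the first estimate: your own tube computation (plus a trivial lower bound obtained by restricting to a tube of width $\sqrt{\abs{x}}$ around the middle of the segment) shows $(v\ast v)(x)$ is of size $\abs{x}^{(d+1)/2-2k}e^{-\nu\abs{x}}$, which exceeds the stated bound when $d=2$ and $0<k<\tfrac12$; the first inequality therefore only holds for $k\ge(3-d)/2$ (which covers every application in the paper, $k=\tfrac{d-1}{2}$ and $k$ arising from gradients), and a complete write-up must either restrict to that range or record the discrepancy rather than assert the statement for all $k>0$.
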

The $d=3$ case of this lemma will not be employed in the proof of Proposition \ref{prop: u_exponential_falloff} but will be of later use, especially when we will compute interaction terms in Lemma \ref{lemma: tunneling}. Now, we are able to prove Proposition \ref{prop: u_exponential_falloff}.
\begin{proof}[Proof of Proposition \ref{prop: u_exponential_falloff}]
	We focus on the two-dimensional case. The argument is inspired by the proof of \cite[Lemma 19]{gontier2020nonlinear}. We begin with the proof of the first part of \eqref{eq: u_exponential_falloff}. By \eqref{eq: behavior_infinity}, we can choose $R>0$ large enough in order to have for all $\abs{x} \geq R$
	\[
	0 \leq \abs{x}^3 V^\mathrm{MF}(x) \leq c \, ,
	\]
	for some constant $c>0$.
	We recall that the modified Bessel function $K_\alpha$ of the second kind with parameter $\alpha \in \R$ solves the ordinary differential equation
	\[
	r^2 K_\alpha''(r) + rK_\alpha'(r) - (r^2 + \alpha^2) K_\alpha(r) = 0 \, ,
	\]
	and satisfies the asymptotic $K_\alpha(r) \sim \sqrt{\frac{\pi}{2r}} e^{-r}$, see for instance \cite[Eq. 9.7.2]{abramowitz1964handbook}. This exponential decay is independent of the parameter $\alpha$. Let $\epsilon \geq 0$.
	The function $Y_\epsilon(x) \coloneqq K_\epsilon(\abs{\mu}^{1/2} \abs{x})$ satisfies the equation
	\[
	\left(-\Delta + \frac{\epsilon}{\abs{x}^2} - \mu \right)Y_\epsilon = 0 \, ,
	\]
	in the region $\abs{x} \geq R$. By the first comparison Lemma \ref{lemma: comparison_supersolution}, as $V^\mathrm{MF} \geq 0$ in the region $\abs{x}\geq R$, we have $u(x) \leq \overline{C}Y_0(x)$ for all $\abs{x} \geq R$ and some constant $\overline{C}$. For the lower bound, we choose $\epsilon>0$, $R \geq \frac{c_2}{\epsilon}$ and a constant $\underline{C}> 0$ such that $\underline{C}Y_\epsilon(x) \leq u(x)$ for all $\abs{x}=R$. Then we invoke the second comparison Lemma \ref{lemma: comparison_hoffmann_ostenhoff}, remarking that $u(x) \to 0$ as $\abs{x}\to \infty$ by the upper bound, to get
	$\underline{C}Y_\epsilon(x) \leq u(x)$ for all $\abs{x} \geq R$. To obtain the first bound in \eqref{eq: u_exponential_falloff}, it remains to use the asymptotics for the modified Bessel functions and to remark that $u$ is positive and continuous on $\R^2$ (see Remark \ref{rem: continuity_around_singularities}).
	
	To get the pointwise estimates concerning $\abs{\nabla u}$, we introduce the Yukawa potential $\tilde{Y}$ which solves in the sense of distribution the equation
	\[
	(-\Delta - \mu) \tilde{Y} = \delta_0 \, ,
	\]
	where $\delta_0$ denotes the Dirac distribution on 0. Then, the Euler-Lagrange equation satisfied by $u$ is equivalent to
	\[
	u = -(-\Delta - \mu)^{-1} \left(V^\mathrm{MF} u\right) = -\tilde{Y} \ast \left(V^\mathrm{MF}u\right) \pt
	\]
	This implies 
	\[
	\abs{\nabla u} \leq \abs{\nabla \tilde{Y}} \ast \abs{V^\mathrm{MF} u} \pt
	\]
	By \cite[Theorem 6.23]{lieb2001analysis}, we can deduce that $\abs{\nabla \tilde{Y}}$ is integrable in a neighborhood of the origin and behaves like $\tilde{Y}(x) = \grandO{u(x)}$ at infinity. Moreover, by Lemma \ref{lemma: technical2} and the fact $u \in H^1(\R^2)$,  we have $\abs{u}^2 \ast |\cdot|^{-1} \in L^\infty(\R^2)$. Hence, we have
	\[
	\abs{\nabla u} \lesssim u \ast \left(|\cdot |^{-1} u + u \right)
	\]
	Then, the second part of \eqref{eq: u_exponential_falloff} follows immediately from Lemma \ref{lemma: lemma_technical_exponential_convolution}. The integral exponential decay estimates \eqref{eq: a priori exp decay2} follows from the pointwise ones.
	
	Now, we show \eqref{eq: a priori exp decay3}. Let $R>0$ and $\epsilon \in \intoo{0}{R}$. Let $\chi \in \mathcal{C}^\infty(\R^d)$ a cutoff function such that
	\begin{gather*}
	\chi \in W^{1,\infty}(\R^d) \, ,\quad 0 \leq \chi \leq 1\, ,\quad \chi \equiv 1 \quad \text{on} \quad \{\abs{x} \geq R\}\, , \\
	\quad \supp \chi \subset B(0,R-\epsilon) \et \supp \nabla \chi \subset B(0,R) \setminus B(0,R-\epsilon) \pt
	\end{gather*}
	Let $\alpha = 0$ if $d=3$ and $\alpha \in \intoo{0}{1}$ if $d=2$.
	By Proposition \ref{prop: study of d(h)}, we have
	\begin{align*}
	\norme{u}_{H^{2-\alpha}(B(0,R)^c)} 
	&\leq \norme{\chi u}_{H^{2-\alpha}(\R^d)} \leq \norme{h \chi u}_{L^2(\R^d)} + C \norme{\chi u }_{L^2(\R^d)} \\
	&\leq (\abs{\mu} + C) \norme{\chi u}_{L^2(\R^d)} + \norme{\chi}_{W^{1,\infty}(\R^d)} \norme{u}_{H^1(B(0,R-\epsilon)^c)} \\
	&\lesssim \norme{u}_{H^1(B(0,R-\epsilon)^c)} \pt
	\end{align*}
	We use estimate \eqref{eq: a priori exp decay2} and choose $\epsilon$ small enough in order to show estimate \eqref{eq: a priori exp decay3}.
	
	The estimates in dimension 3 are shown as in dimension 2.
	But in fact in \cite[Appendix 2]{catto1993bindingIII} the authors give a more precise statement for \eqref{eq: u_exponential_falloff} in the case $d=3$. Indeed, they show the existence of a constant $a \in \R$ such that 
	\[
	u(x) e^{\abs{\mu}^{1/2} \abs{x}} \abs{x} \limit{\abs{x} \to \infty} a \pt \qedhere
	\]
\end{proof}

\subsection{Some interaction terms}\label{sec:some-tunneling-terms}

We recall that $u_L^r = u(\cdot - \mathbf{x}_L)$ and $u_L^\ell = u(\cdot + \mathbf{x}_L)$ where $u$ is the unique positive minimizer of the monoatomic model \eqref{eq: min_problem_ow} and that $u_L^r $ and $u_L^\ell$ satisfy the Euler-Lagrange equation \eqref{eq: euler_lagrange_equation_one_nucleus}.
As in \cite{olgiati2020hartree}, we introduce a \emph{tunneling term}
\begin{align}
\label{eq: tunneling_term}
\boxed{T_L = \exp \left(-\abs{\mu}^{\frac{1}{2}} L \right) \, ,}
\end{align}
which is the relevant scale to measure the interaction intensity between $u^r_L$ and $u^\ell_L$. In the next proposition, we gather many estimates involving $T_L$.
\begin{lemma}[Interaction terms I]
	\label{lemma: tunneling}
	Let $d\in \{2,3\}$. As $L \to \infty$, we have the estimates
	\begin{gather}
	\label{tunneling1}\int_{\R^d} u^\ell_L u^r_L = \grandO{L^{\frac{d-1}{2}} T_L} \, ,\quad D(u^\ell_L u^r_L,u^\ell_L u^r_L) = \grandO{L T_L^2} \, ,\quad D(\lvert u^\ell_L\rvert^2,u^\ell_L u^r_L) = \grandO{T_L} \, ,\\
	\label{tunneling2}\int_{\R^d} \nabla u^\ell_L \cdot \nabla u^r_L =  \grandO{L^{\frac{d-1}{2}} T_L} \, ,\quad \norme{u^\ell_L u^r_L}_{L^2(\R^d)} = \grandO{ T_L} \, , \quad\int_{\R^d} V_Lu^\ell_L u^r_L = \grandO{L^{\frac{d-2}{2}} T_L} \pt
	\end{gather}
\end{lemma}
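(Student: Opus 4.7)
The unifying idea is that every one of the six bilinear quantities is an integral of two factors exponentially localized around the opposite points $\pm\mathbf{x}_L$. After the translation $y = x + \mathbf{x}_L$ and using the radial symmetry $u(-y)=u(y)$ (and the parity $\partial_j u(-y) = -\partial_j u(y)$ for the gradient term), each overlap reduces to, or is controlled by, a convolution of radial exponentially decaying functions evaluated at the single point $2\mathbf{x}_L$, whose norm is $L$. Combining this with the pointwise bounds of Proposition~\ref{prop: u_exponential_falloff} and the convolution estimates of Lemma~\ref{lemma: lemma_technical_exponential_convolution} yields each estimate.

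Concretely, the change of variables gives
\begin{gather*}
\int_{\R^d} u^\ell_L u^r_L = (u\ast u)(2\mathbf{x}_L), \qquad \int_{\R^d} V_L u^\ell_L u^r_L = -2\left(u \ast \frac{u}{\abs{\cdot}}\right)(2\mathbf{x}_L), \\
\norme{u^\ell_L u^r_L}_{L^2(\R^d)}^2 = (\abs{u}^2 \ast \abs{u}^2)(2\mathbf{x}_L), \qquad \abs{\int_{\R^d} \nabla u^\ell_L \cdot \nabla u^r_L} \leq (\abs{\nabla u}\ast \abs{\nabla u})(2\mathbf{x}_L).
\end{gather*}
For the first two I apply the two bounds of Lemma~\ref{lemma: lemma_technical_exponential_convolution} with $v = u$ and $(\nu,k)=(\sqrt{\abs{\mu}}, (d-1)/2)$; the same Lemma with $v = \abs{\nabla u}$ handles the gradient overlap; and applied to $v = \abs{u}^2$ (which decays at twice the rate) it immediately produces $\grandO{T_L^2}$ for the squared $L^2$-norm.

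For the cross Coulomb term, set $W = \abs{u}^2 \ast \abs{\cdot}^{-1}$; then $\abs{u^\ell_L}^2 \ast \abs{\cdot}^{-1} = W(\cdot + \mathbf{x}_L)$, so the translation leads to $D(\abs{u^\ell_L}^2, u^\ell_L u^r_L) = \frac{1}{2}((Wu)\ast u)(2\mathbf{x}_L)$. By Lemma~\ref{lemma: technical2}, $W \in L^\infty(\R^d)$, and by Newton's theorem if $d=3$ or by Lemma~\ref{lemma: long range behavior_MF_potential} if $d=2$, $W(y) \lesssim 1/(1+\abs{y})$. Hence $Wu$ is pointwise dominated by $u/(1+\abs{\cdot})$, and splitting the convolution between $\abs{y}\leq 1$ and $\abs{y}\geq 1$ reduces the estimate to the second bound of Lemma~\ref{lemma: lemma_technical_exponential_convolution}, producing $\grandO{T_L}$.

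The step that requires the most care is the self-Coulomb energy $D(u^\ell_L u^r_L, u^\ell_L u^r_L)$, which does not reduce to a single convolution at $2\mathbf{x}_L$. Here I would combine the Hardy--Littlewood--Sobolev inequality $D(\rho,\rho) \lesssim \norme{\rho}_{L^{2d/(2d-1)}(\R^d)}^2$ with two estimates on $\rho = u^\ell_L u^r_L$: the $L^1$-bound $\norme{\rho}_{L^1(\R^d)} = \grandO{L^{(d-1)/2} T_L}$ just established, and the pointwise bound $\norme{\rho}_{L^\infty(\R^d)} \lesssim T_L/L^{(d-1)/2}$ obtained directly from $\abs{x+\mathbf{x}_L} + \abs{x-\mathbf{x}_L}\geq L$ and Proposition~\ref{prop: u_exponential_falloff}. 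A Hölder interpolation between these two bounds yields the exponential factor $T_L^2$; keeping the polynomial prefactor small enough to reach $\grandO{L T_L^2}$ is the delicate point and, in dimension $d=3$, may require a finer description of the ``tube'' shape of $\supp \rho$ rather than these two norms alone.
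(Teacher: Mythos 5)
Your reductions of the overlap terms to convolutions evaluated at $2\mathbf{x}_L$ and the use of Proposition \ref{prop: u_exponential_falloff} together with Lemma \ref{lemma: lemma_technical_exponential_convolution} are exactly the paper's argument for $\int u^\ell_L u^r_L$, the gradient term, $\norme{u^\ell_L u^r_L}_{L^2}$ and $\int V_L u^\ell_L u^r_L$. Your treatment of the cross Coulomb term is a legitimate (and in fact cleaner) variant: the paper deduces $D(\lvert u^\ell_L\rvert^2,u^\ell_L u^r_L)$ from $D(u^\ell_L u^r_L,u^\ell_L u^r_L)=\grandO{LT_L^2}$ via the Cauchy--Schwarz inequality for $D(\cdot,\cdot)$, which as written only yields $\grandO{\sqrt{L}\,T_L}$, whereas your identity $D(\lvert u^\ell_L\rvert^2,u^\ell_L u^r_L)=\tfrac12\left((Wu)\ast u\right)(2\mathbf{x}_L)$ combined with $W\lesssim (1+\abs{\cdot})^{-1}\le \abs{\cdot}^{-1}$ reduces directly to the second bound of Lemma \ref{lemma: lemma_technical_exponential_convolution} and gives the stated $\grandO{T_L}$ (no splitting at $\abs{y}=1$ is even needed, since $1+\abs{y}\ge\abs{y}$).

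The genuine gap is the one you flag yourself, and it is real: for $\rho=u^\ell_L u^r_L$ the interpolation $\norme{\rho}_{L^{2d/(2d-1)}}\le\norme{\rho}_{L^1}^{\frac{2d-1}{2d}}\norme{\rho}_{L^\infty}^{\frac{1}{2d}}$ with $\norme{\rho}_{L^1}=\grandO{L^{\frac{d-1}{2}}T_L}$ and $\norme{\rho}_{L^\infty}=\grandO{L^{-\frac{d-1}{2}}T_L}$ gives $\norme{\rho}_{L^{2d/(2d-1)}}^2=\grandO{L^{\frac{(d-1)^2}{d}}T_L^2}$, i.e. $\grandO{L^{1/2}T_L^2}$ when $d=2$ (fine) but $\grandO{L^{4/3}T_L^2}$ when $d=3$, missing the claim by $L^{1/3}$; the two norms alone cannot see the tube geometry of $\rho$. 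Two repairs. (i) Estimate $\norme{\rho}_{L^{2d/(2d-1)}}$ directly from the pointwise bound $\rho(x)\lesssim e^{-\sqrt{\abs{\mu}}(\abs{x-\mathbf{x}_L}+\abs{x+\mathbf{x}_L})}\left(1+\abs{x-\mathbf{x}_L}\right)^{-\frac{d-1}{2}}\left(1+\abs{x+\mathbf{x}_L}\right)^{-\frac{d-1}{2}}$ by the same ellipse/spheroid computation as in Appendix \ref{sec:technical-results}; this is what the paper's one-line appeal to Hardy--Littlewood--Sobolev presupposes, and it yields even better than $\grandO{LT_L^2}$. (ii) Staying entirely within your toolkit: in $d=3$, Lemma \ref{lemma: lemma_technical_exponential_convolution} applied with $k=\frac{d-1}{2}=1$ actually gives the sharper bound $\norme{\rho}_{L^1}=(u\ast u)(2\mathbf{x}_L)=\grandO{T_L}$, because the exponent $k+1-d=-1$ is negative; with this input your $L^1$--$L^\infty$ interpolation gives $\norme{\rho}_{L^{6/5}}^2=\grandO{L^{-1/3}T_L^2}$ and the $d=3$ case closes. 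With either repair the proposal is complete.
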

\begin{proof}
	We begin with the first estimate of \eqref{tunneling1}. We remark that
	\[
	\int_{\R^d} u^\ell_L u^r_L = (u \ast u)(2\mathbf{x}_L) \, ,
	\]
	where $\abs{2\mathbf{x}_L}=L$.
	Then $\int_{\R^d} u^\ell_L u^r_L = \grandO{L^{\frac{d-1}{2}} T_L}$ is a direct consequence of Lemma \ref{lemma: lemma_technical_exponential_convolution} and of the exponential estimates from Proposition \ref{prop: u_exponential_falloff}. The estimates of \eqref{tunneling2} are obtained in the same way.
	To obtain the second estimate of \eqref{tunneling1}, we use the Hardy-Littlewood-Sobolev inequality. The third estimate of \eqref{tunneling1} comes from $D(u^\ell_L u^r_L,u^\ell_L u^r_L) = \grandO{L T_L^2}$ and the Cauchy-Schwarz inequality for the Coulomb energy \cite[Theorem 9.8]{lieb2001analysis}: $D(v,w) \leq \sqrt{D(v,v)} \sqrt{D(w,w)}$.
\end{proof}
All the interaction terms we consider in Lemma \ref{lemma: tunneling} only depend on the long range behavior of $u$. These are thus exponentially small, irrespective of the dimension. In the next lemma, we study interaction terms which appear to be exponentially small in dimension 3 but not in dimension~2.
We recall that $m_1 = \int_{\R^2} \abs{u(x)}^2 \abs{x}^2 \diff x$ and $m_2 = \int_{\R^2} \abs{u(x)}^2 \abs{x}^4\diff x$.
\begin{lemma}[Interaction terms II]
	\label{lemma: interaction_left_right_d=3}
	We have
	\begin{gather}
	\label{eq: interaction_terms_d=3_first}
	\int_{\R^3} V^\ell_L \abs{u^r_L}^2 = 
	\begin{cases}
	\displaystyle- \left(\frac{1}{L} + \frac{m_1}{4L^3} + \frac{9m_2}{64L^5}\right) + \grandO{\frac{1}{L^{7}}}& \text{if }d=2 \, ,\\
	\displaystyle- \frac{1}{L} + \grandO{\frac{T_L^2}{L^2}} & \text{if }d=3 \, ,
	\end{cases} \\
	\label{eq: interaction_terms_d=3_second}
	D(\lvert u^\ell_L\rvert^2,\lvert u^r_L\rvert^2) =
	\begin{cases}
	\displaystyle \frac{1}{2L} + \frac{m_1}{4L^3} +  \frac{9(m_2 + 2m_1^2)}{ 64L^5} + \grandO{\frac{1}{L^{7}}}& \text{if }d=2 \, ,\\
	\displaystyle \frac{1}{2L} + \grandO{T_L^2} & \text{if }d=3 \pt
	\end{cases}
	\end{gather}
\end{lemma}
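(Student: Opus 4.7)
My plan is to reduce both quantities to a convolution evaluated at the vector $2\mathbf{x}_L$, then apply the multipolar expansion from Lemma \ref{lemma: potential_expansion} when $d=2$ and Newton's theorem when $d=3$. Specifically, the translation $x \mapsto x+\mathbf{x}_L$ gives $\int V^\ell_L |u^r_L|^2 = -(|u|^2 \ast |\cdot|^{-1})(2\mathbf{x}_L)$, and the two changes of variables $z_1 = x+\mathbf{x}_L$, $z_2 = y-\mathbf{x}_L$ together with the evenness of $|u|^2$ give $D(|u^\ell_L|^2, |u^r_L|^2) = \tfrac12(\tilde\rho \ast |\cdot|^{-1})(2\mathbf{x}_L)$ with $\tilde\rho \coloneqq |u|^2 \ast |u|^2$. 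Both $|u|^2$ and $\tilde\rho$ are nonnegative, radial, integrate to $1$, and decay exponentially (the second by Lemma \ref{lemma: lemma_technical_exponential_convolution}).

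In dimension $2$, Lemma \ref{lemma: potential_expansion} applies to each density and yields an expansion in inverse powers of $|x|$. For $|u|^2$ this is precisely expansion \eqref{potential_expansion4} evaluated at $|x|=L$, giving the first claim. For $\tilde\rho$ the same expansion holds with the moments $\tilde m_j \coloneqq \int \tilde\rho(z)|z|^{2j}\, dz$. Expanding $|z_1-z_2|^{2j}$ and using the radial identities $\int z\,|u|^2 = 0$ and $\int z_i z_j |u|^2 = \tfrac{m_1}{2}\delta_{ij}$ (valid in $\R^2$), I would obtain $\tilde m_1 = 2m_1$ and $\tilde m_2 = 2m_2 + 4m_1^2$; inserting these into the expansion and dividing by two reproduces the stated constants.

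In dimension $3$, Newton's theorem (as used in the proof of Lemma \ref{lemma: long range behavior_MF_potential}) gives $W(\xi) \coloneqq (|u|^2 \ast |\cdot|^{-1})(\xi) = \tfrac{1}{|\xi|} + F(\xi)$ with $F(\xi) = -\int_{|y|>|\xi|}\tfrac{|y|-|\xi|}{|\xi||y|}|u(y)|^2\, dy$. Substituting $s = |y|-|\xi|$ and using the sharp bound $|u(y)|^2 \lesssim e^{-2\sqrt{|\mu|}|y|}/(1+|y|)^2$ from Proposition \ref{prop: u_exponential_falloff} gives $|F(\xi)| \lesssim e^{-2\sqrt{|\mu|}|\xi|}/|\xi|^2$ for $|\xi| \geq 1$; evaluating at $\xi = 2\mathbf{x}_L$ proves the first claim. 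For the second, I would apply Newton to the radial density $|u^\ell_L|^2$ to get $D = \tfrac{1}{2}\int |u(y)|^2\,[\tfrac{1}{|y+2\mathbf{x}_L|} + F(y+2\mathbf{x}_L)]\, dy$: the main term reproduces $\tfrac{1}{2}W(L) = \tfrac{1}{2L} + \grandO{T_L^2/L^2}$ by the first claim, and the error is controlled by factoring $e^{-2\sqrt{|\mu|}(|y|+|y+2\mathbf{x}_L|)} \leq T_L^2$ via the triangle inequality $|y|+|y+2\mathbf{x}_L| \geq L$ and by using the extra $1/|y+2\mathbf{x}_L|^2$ decay of $F$, reducing the problem to a uniformly bounded integral and thus an $\grandO{T_L^2}$ contribution.

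The most delicate step is the computation of $\tilde m_2$ in dimension $2$: expanding $|z_1-z_2|^4$ produces several terms whose combination requires both $\int z\,|u|^2 = 0$ (to kill the odd-moment cross terms) and the $\R^2$-specific tensor identity $\int z_i z_j |u|^2 = \tfrac{m_1}{2}\delta_{ij}$ (to evaluate the square of the scalar product) in order to arrive at the correct coefficient $m_2 + 2m_1^2$ rather than a different combination; the three-dimensional estimate is, by contrast, a routine application of the Newton decomposition together with the exponential bounds from Proposition \ref{prop: u_exponential_falloff}.
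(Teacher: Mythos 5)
Your proposal is correct, and it reproduces the stated constants. The three-dimensional part is essentially the paper's argument: the same Newton decomposition $\abs{u}^2\ast|\cdot|^{-1}=|\cdot|^{-1}+F$ with $F=\grandO{|\xi|^{-2}e^{-2\sqrt{\abs{\mu}}\abs{\xi}}}$, the paper merely estimating the resulting error term $(\abs{u}^2\ast\varphi_u)(2\mathbf{x}_L)$ (its $\varphi_u=-F$) by citing Lemma \ref{lemma: lemma_technical_exponential_convolution} rather than bounding the integral by hand as you do; either way one gets $\grandO{T_L^2}$, and your remark about the sharp $(1+\abs{y})^{-2}$ factor is exactly what is needed to reach $\grandO{T_L^2/L^2}$ in \eqref{eq: interaction_terms_d=3_first} (just note in passing that your bound on $F$ is stated for $\abs{\xi}\geq 1$, so the region $\abs{y+2\mathbf{x}_L}\leq 1$ should be disposed of by the crude bound $\abs{F(\xi)}\leq\abs{\xi}^{-1}$ together with $\abs{u(y)}^2\lesssim T_L^2$ there — a routine point). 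Where you genuinely depart from the paper is the two-dimensional estimate of $D(\lvert u^\ell_L\rvert^2,\lvert u^r_L\rvert^2)$: the paper splits $\abs{u}^2=\mathds{1}_{B_L}\abs{u}^2+\mathds{1}_{B_L^c}\abs{u}^2$ with $B_L=B(0,L/4)$, treats the tail by Hardy--Littlewood--Sobolev and exponential decay, and then needs the auxiliary expansion \eqref{eq: mais_c_est_dure} of $(\mathds{1}_{B_L}\abs{u}^2\ast|\cdot|^{-a})(2\mathbf{x})$ for the powers $a=1,3,5$ produced by \eqref{potential_expansion4}; you instead apply Lemma \ref{lemma: potential_expansion} once to the auto-convolution $\tilde\rho=\abs{u}^2\ast\abs{u}^2$ (radial, mass one, exponentially decaying by Lemma \ref{lemma: lemma_technical_exponential_convolution}) and compute $\tilde m_1=2m_1$, $\tilde m_2=2m_2+4m_1^2$ from $\int z\,\abs{u}^2=0$ and $\int z_iz_j\abs{u}^2=\tfrac{m_1}{2}\delta_{ij}$, which indeed yields $\tfrac{9(m_2+2m_1^2)}{64L^5}$ after dividing by two. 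Your route avoids the cutoff and the generalized $|\cdot|^{-a}$ expansion of Remark \ref{rem: expansion_formula} at the price of a short moment computation; the paper's route avoids moment algebra but leans on the appendix formula \eqref{eq: mais_c_est_dure}. Both are sound and give the same coefficients.
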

\begin{proof}
	We start with the three-dimensional case. By Newton's theorem, we have
	\begin{align*}
	\int_{\R^3} V^\ell_L \abs{u^r_L}^2 = -\left(\abs{u}^2 \ast |\cdot|^{-1}\right)(2\mathbf{x}_L) = -\frac{1}{L} + \int_{\abs{y} \geq L} \abs{u(y)}^2 \left(\frac{1}{L} - \frac{1}{\abs{y}}\right)\diff y\pt
	\end{align*}
	To estimate the last integral, we use the pointwise exponential bound on $u$ from Proposition \ref{prop: u_exponential_falloff} and get
	\[
	\abs{\int_{\abs{y} \geq L} \abs{u(y)}^2 \left(\frac{1}{L} - \frac{1}{\abs{y}}\right)\diff y}
	\lesssim \frac{e^{-2\sqrt{\abs{\mu}}}L}{2\sqrt{\abs{\mu}}L} - E_1(2\sqrt{\abs{\mu}}L) = \grandO{L^{-2}T_L^2}\, ,
	\]
	where $E_1(z) \coloneqq \int_z^\infty \frac{e^{-t}}{t} \diff t$ denotes the exponential integral with parameter 1. We get the $O$ by using the asymptotic expansion for $E_1$, see \cite[Eq. 5.1.51]{abramowitz1964handbook}
	\[
	E_1(z) = \frac{e^{-z}}{z} \left( \sum_{k =0}^n \frac{(-1)^k k!}{z^k} + \grandO{\frac{1}{\abs{z}^{n+1}}}\right) \pt
	\]
	This shows \eqref{eq: interaction_terms_d=3_first} when $d=3$.
	To get \eqref{eq: interaction_terms_d=3_second}, we use again Newton's theorem
	\begin{align*}
	D(\lvert u^\ell_L\rvert^2,\lvert u^r_L\rvert^2) 
	= \frac{1}{2} \left(\abs{u}^2\ast \abs{u}^2\ast |\cdot|^{-1}\right)(2\mathbf{x}_L)
	= \frac{1}{2} \left(\left(\abs{u}^2 \ast |\cdot|^{-1}\right)(2\mathbf{x}_L) - \left(\abs{u}^2 \ast \varphi_u\right)(2\mathbf{x}_L)\right) \, ,
	\end{align*}
	with $\varphi_u(x) = \int_{\abs{y} \geq \abs{x}} \abs{u(y)}^2 \left(\frac{1}{\abs{x}} - \frac{1}{\abs{y}}\right)\diff y$. The same computation as above shows that $\varphi_u(x) = \grandO{\abs{x}^{-2}e^{-2\sqrt{\abs{\mu}}}\abs{x}}$. It remains to apply Lemma \ref{lemma: lemma_technical_exponential_convolution} to conclude.
	
	In the two-dimensional case, the expansion \eqref{eq: interaction_terms_d=3_first} is a consequence of \eqref{potential_expansion4} and the fact that
	\[
	\int_{\R^2} V^\ell_L \abs{u^r_L}^2 = - \left(\abs{u}^2 \ast |\cdot|^{-1}\right)(2\mathbf{x}_L) \pt
	\]
	Now, we show \eqref{eq: interaction_terms_d=3_second}.
	By translation invariance of $D(\cdot,\cdot)$, we have $D(\lvert u^\ell_L\rvert^2,\lvert u^r_L\rvert^2) = D((\tau^r_L)^2\abs{u}^2,\abs{u}^2)$. If we denote by $B_L$  the ball centered at the origin and with radius $L/4$, we have
	\begin{align*}
	D(\lvert u^\ell_L\rvert^2,\lvert u^r_L\rvert^2)
	= D((\tau^r_L)^2 (\mathds{1}_{B_L^c}\abs{u}^2), \abs{u}^2) + D( (\tau^r_L)^2(\mathds{1}_{B_L}\abs{u}^2), \abs{u}^2) \pt
	\end{align*}
	The first term is coarsely estimated by the Hardy-Littlewood-Sobolev inequality
	\[
	D((\tau^r_L)^2(\mathds{1}_{B_L^c}\abs{u}^2),  \abs{u}^2) \lesssim \lVert u\rVert^2_{L^{8/3}(B_L^c)} = \grandO{\frac{1}{L^7}} \pt
	\]
	To get the $\grandO{L^{-7}}$, we have used the exponential bounds on $u$ (see Proposition \ref{prop: u_exponential_falloff}).	
	Now, we turn our attention to the second term. We recall the expansion formula \eqref{potential_expansion4} 
	\begin{align}
	\label{eq: on_avance}
	\left( \abs{u}^2 * |\cdot|^{-1}\right)(x) = \frac{1}{\abs{x}} + \frac{m_1}{4\abs{x}^3} + \frac{9m_2}{64\abs{x}^5} + \grandO{\frac{1}{\abs{x}^{7}}} \, ,
	\end{align}
	as $\abs{x}$ goes to $\infty$. In Appendix \ref{sec: expansion_formula}, we also show that for all $a>0$
	\begin{align}
	\label{eq: mais_c_est_dure}
	(\mathds{1}_{B_L}\abs{u}^2 \ast |\cdot|^{-a})(2\mathbf{x}) = \frac{1}{L^{a}} + \frac{a^2}{4}\frac{m_1}{L^{2+a}} + \left(\frac{a^2}{16} + \frac{a^3}{16}+\frac{a^4}{64} \right)\frac{ m_2}{L^{4+a}} + \grandO{\frac{1}{L^{7}}} \pt
	\end{align}
	Inserting \eqref{eq: on_avance} into
	\[
	D(( \tau^r_L)^2(\mathds{1}_{B_L}\abs{u}^2), \abs{u}^2) = \frac{1}{2}\left(\mathds{1}_{B_L} \abs{u}^2 \ast \left( \abs{u}^2 \ast |\cdot|^{-1}\right)\right)(2\mathbf{x}_L) \, ,
	\]
	and using \eqref{eq: mais_c_est_dure} give \eqref{eq: interaction_terms_d=3_second}.
\end{proof}

\section{Construction of quasi-modes}\label{sec:construction-of-quasi-modes}

In this section, we consider a sequence $(L_n)_{n\in \N}$ such that $L_n \to \infty$ as $n\to \infty$.  We want to show that to leading order the solution $u_{L_n}^+$ of the minimization problem \eqref{eq: restricted_hartree} splits into two parts which are given by the monoatomic model minimizer translated by $+ \mathbf{x}_{L_n}$ or $-\mathbf{x}_{L_n}$.

For simplicity, we will use the subscript $n$ instead of $L_n$. For instance, we will write $\mathbf{x}_n$ instead of $\mathbf{x}_{L_n}$. We will use the $r$ (resp. $\ell$) superscript for quantities related to the monoatomic model translated by $\mathbf{x}_n$ (resp. $-\mathbf{x}_n$), see \eqref{eq: definition_translated_potential} and \eqref{eq: euler_lagrange_equation_one_nucleus}.

Let $d\in \{2,3\}$. It will be handful to define a partition of unity $(\chi^\ell)^2+(\chi^r)^2 =1$ such that for all $x \in\R^d$ and $\kappa \in \{\ell,r\}$
\[
\chi^\kappa \in \mathcal{C}^\infty(\R^d),\quad 0 \leq \chi^\kappa (x) \leq 1,\quad  \norme{\nabla \chi^\kappa}_{L^\infty}< \infty \et \mathcal{R}[\chi^r] = \chi^\ell \, ,
\]
where $\mathcal{R}$ is the natural action of the reflection symmetry with respect to the hyperplane $\{x_1=0\}$ on $L^2(\R^d)$, see \eqref{eq: def_reflection}. Moreover, we assume that there exists $\delta > 0$ such that
\[
\supp \chi^\ell \subset\{ x_1 \leq \delta\} \et \supp \chi^r \subset \{x_1 \geq - \delta\} \pt
\]
Then, we denote the $L_n-$dependent partition of unity by
\[
\chi^\ell_n(x) = \chi^\ell(x/\sqrt{L_n}) \et \chi^r_n (x) = \chi^r(x/\sqrt{L_n}) \pt
\]
We have, in particular
\begin{gather}
\label{eq: properties_chi1}
\supp \chi^\ell_n \subset\{x_1 \leq \sqrt{L_n}\delta\},\quad \supp \nabla \chi^\ell_n\subset\{\abs{x_1}\leq \sqrt{L_n}\delta\} \, , \\
\label{eq: properties_chi2}\norme{\nabla \chi_n^\ell}_{L^\infty(\R^d)} = \grandO{L_n^{-1/2}} \, , \quad \norme{\Delta \chi_n^\ell}_{L^\infty(\R^d)} = \grandO{L_n^{-1}} \, ,
\end{gather}
and similar statements for $\chi^r_n$. The cutoff function $\chi_n^r$ (resp. $\chi_n^\ell$) will be useful to localize the right (resp. left) part of $u_n^+$. The scale $\sqrt{L_n}$ is chosen for convenience, any $L_n^{\beta}$ with $\beta < 1$ would do.

Finally, we introduce the left and right translation operators
\begin{align}
\label{eq: translation_operators}
\tau_n^r \coloneqq \tau_{\mathbf{x}_n} \et \tau_n^\ell \coloneqq \tau_{-\mathbf{x}_n} \pt
\end{align}

\subsection{Approximation for \texorpdfstring{$u_n^+$}{un+} in \texorpdfstring{$H^1(\R^d)$}{H1(Rd)}}\label{sec:first-approximation-for-u}

Our first result shows that $u_n^+$ splits into two symmetric parts which both converge, up to a translation, toward the minimizer $u$ of the one nucleus model~\eqref{eq: min_problem_ow} in $H^1(\R^d)-$norm.
\begin{prop}[Strong convergence in $H^1(\R^d)$]
	\label{prop: strong_convergence_H1}
	We have the strong convergence
	\begin{align}
	\label{a priori strong convergence1}
	\norme{\chi^\ell_n u_n^+ - u(\cdot + \mathbf{x}_n)}_{H^1(\R^d)} = \norme{\chi^r_n u_n^+ - u(\cdot - \mathbf{x}_n)}_{H^1(\R^d)} \limit{n \to \infty}0 \pt
	\end{align}
\end{prop}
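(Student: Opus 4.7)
The two claims in \eqref{a priori strong convergence1} are equivalent by the $\mathcal{R}$-invariance of $u_n^+$ together with $\mathcal{R}[\chi_n^\ell]=\chi_n^r$ and $\mathcal{R}[u_n^\ell]=u_n^r$, so it suffices to prove $v_n^\ell \coloneqq \tau_n^r(\chi_n^\ell u_n^+) \to u$ strongly in $H^1(\R^d)$; set also $v_n^r \coloneqq \tau_n^\ell(\chi_n^r u_n^+)$. The plan is to show $E_{L_n}\to I$ by matching upper and lower bounds, deduce that both sequences $(v_n^\kappa)$ are minimizing for the monoatomic problem \eqref{eq: min_problem_ow}, and conclude via precompactness. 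The upper bound $E_{L_n}\le I+\petito{1}$ follows from plugging the trial function $u_{\mathrm{trial}}=\sqrt{2}(u_n^\ell+u_n^r)/\norme{u_n^\ell+u_n^r}_{L^2(\R^d)}$ into $\mathcal{E}_{L_n}$: by Lemma \ref{lemma: tunneling} the denominator squared equals $2+\petito{1}$, and all overlap contributions (kinetic cross term, $\int V_{L_n}^\ell \abs{u_n^r}^2$, $D(\abs{u_n^\ell}^2,\abs{u_n^r}^2)$ and their analogues) together with the nuclear repulsion $1/L_n$ are controlled by Lemmas \ref{lemma: tunneling} and \ref{lemma: interaction_left_right_d=3}, yielding $\petito{1}$ in every dimension after cancellation of the $1/L_n$ charge-charge contributions in $d=2$.

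For the lower bound I would localize $u_n^+$ via the partition $(\chi_n^\ell)^2+(\chi_n^r)^2=1$. The IMS identity
\begin{align*}
\int_{\R^d} \abs{\nabla u_n^+}^2 = \sum_{\kappa\in\{\ell,r\}} \int_{\R^d} \abs{\nabla(\chi_n^\kappa u_n^+)}^2 - \int_{\R^d} \bigl(\abs{\nabla \chi_n^\ell}^2+\abs{\nabla \chi_n^r}^2\bigr)\abs{u_n^+}^2
\end{align*}
together with \eqref{eq: properties_chi2} handles the kinetic part modulo a $\grandO{L_n^{-1}}$ error. Splitting $V_{L_n}=V_{L_n}^\ell+V_{L_n}^r$ and $\abs{u_n^+}^2=\sum_\kappa (\chi_n^\kappa)^2\abs{u_n^+}^2$ separates the potential and Coulomb energies into two same-side pieces, which by unitarity of $\tau_n^\kappa$ equal $\mathcal{E}(v_n^\ell)+\mathcal{E}(v_n^r)$, plus cross terms. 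On $\supp\chi_n^r$ one has $\abs{V_{L_n}^\ell}\lesssim 1/L_n$ since the distance to the left nucleus is at least $L_n/2-\sqrt{L_n}\delta$, and the same estimate controls the cross Coulomb term $2\,D((\chi_n^\ell u_n^+)^2,(\chi_n^r u_n^+)^2)$; the cross potential contributions, the cross Coulomb term and the nuclear repulsion $1/L_n$ together represent the electrostatic interaction of two neutral clusters (each of mass $1$ thanks to $\mathcal{R}$-invariance of $u_n^+$) and sum to $\petito{1}$. Since $\norme{v_n^\kappa}_{L^2(\R^d)}^2 = 1$, each summand is bounded below by $I$, hence $\mathcal{E}_{L_n}(u_n^+)\ge 2I+\petito{1}$.

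The matching bounds force $E_{L_n}\to I$ and saturate the inequalities, so both $(v_n^\kappa)$ are minimizing sequences for \eqref{eq: min_problem_ow}. The $H^1$-precompactness of such sequences combined with the uniqueness of the positive minimizer $u$ (recalled above \eqref{eq: mean-field_hamiltonian_ow}) and the absence of translation invariance of the Coulomb well (which rules out any bubble drift) yield $v_n^\kappa \to u$ in $H^1(\R^d)$ along any subsequence, hence along the full sequence, which after unwinding the translation is exactly \eqref{a priori strong convergence1}. The main technical obstacle, specific to $d=2$, is the cross-term accounting in the lower bound: Newton's theorem being unavailable, the leading $1/L_n$ asymptotics of the cross potential and cross Coulomb terms must be extracted via the multipole expansion of Lemma \ref{lemma: potential_expansion}, and only after the nuclei-nuclei ($+1/L_n$), electron-nuclei ($-1/L_n$, twice) and electron-electron ($+1/L_n$) cross contributions are seen to cancel does the $\petito{1}$ bound on the net cluster-cluster interaction emerge.
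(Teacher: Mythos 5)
Your proposal follows essentially the same route as the paper: the same trial state $u_\mathrm{trial}=\sqrt{2}(u_n^\ell+u_n^r)/\norme{u_n^\ell+u_n^r}_{L^2}$ for the upper bound, the same IMS localization with $(\chi_n^\ell)^2+(\chi_n^r)^2=1$ for the lower bound, and the same conclusion via precompactness of minimizing sequences plus uniqueness and positivity of $u$. One step, however, is asserted without justification: you claim that the cross Coulomb term $D(\abs{\chi_n^\ell u_n^+}^2,\abs{\chi_n^r u_n^+}^2)$ is $\grandO{L_n^{-1}}$ "by the same estimate" as the cross potential, and more generally that the cross terms cancel by neutrality. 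At this stage no decay of $u_n^+$ is known (the exponential bounds come only after this proposition), and the two localized densities overlap on the strip $\{\abs{x_1}\leq \sqrt{L_n}\delta\}$, so nothing better than $\grandO{1}$ is available for this term; the neutral-cluster cancellation is likewise a heuristic, not a proof. Fortunately the claim is dispensable: for the lower bound you only need the cross terms to be bounded below by $-\petito{1}$, and since $D(\cdot,\cdot)\geq 0$ and the nuclear repulsion is positive, the only negative contribution is the cross potential, which is genuinely $\grandO{L_n^{-1}}$ on $\supp\chi_n^r$ — this is exactly how the paper argues, simply discarding the positive Coulomb cross term. A final cosmetic difference: the paper additionally uses the Euler--Lagrange equation for $u_n^\ell$ to repackage the overlap terms in the upper bound, but that refinement is only needed for the sharper expansion reused later (Proposition \ref{prop: rate_convergence}); for the $\petito{1}$ statement here your term-by-term use of Lemmas \ref{lemma: tunneling} and \ref{lemma: interaction_left_right_d=3} suffices.
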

\begin{proof}[Proof]
	In this proof, we will frequently use the interaction estimates from Lemma \ref{lemma: tunneling} and Lemma~\ref{lemma: interaction_left_right_d=3}. We recall that $u_n^r = u(\cdot - \mathbf{x}_n)$ and $u_n^\ell = u(\cdot+\mathbf{x}_n)$. We insert the trial state
	\[
	u_\mathrm{trial} = \frac{ \sqrt{2}(u^r_n +  u^\ell_n)}{\norme{ u^r_n +  u^\ell_n}_{L^2(\R^d)}}
	\]
	into the expression of the energy functional $\mathcal{E}_n$ defined in \eqref{eq: restricted_hartree_functional} and look for an upper bound involving $\mathcal{E}(u)$. We give a precise estimate of the interaction energy between $u^\ell_n$ and $u^r_n$
	\begin{align*}
	\mathcal{E}_n(u_\mathrm{trial}) - 2 \mathcal{E}(u) \, ,
	\end{align*}
	which will also be useful later in the proof of Proposition \ref{prop: rate_convergence}.
	First, using that $\norme{ u^\ell_n}_{L^2(\R^d)} = \norme{u^r_n}_{L^2(\R^d)} = 1$, we have
	\[
	\norme{ u^r_n +  u^\ell_n}^{-2}_{L^2(\R^d)}
	=\left(2 + 2\pdtsc{u^r_n}{ u^\ell_n}_{L^2(\R^d)}\right)^{-1} = \frac{1}{2} + \grandO{L_n^{\frac{d-1}{2}}T_n} \, ,
	\]
	where $T_n = T_{L_n} \coloneqq \exp(-\abs{\mu}^{1/2}L_n)$ is the tunneling term introduced in \eqref{eq: tunneling_term}.
	Then, computing the energy, we have
	\begin{align*}
	\mathcal{E}_n(u_\mathrm{trial}) &= \left(1+ \grandO{L_n^{\frac{d-1}{2}}T_n}\right) \mathcal{E}_n(u^\ell_n + u^r_n) \\
	&\leq 2\left(1+ \grandO{L_n^{\frac{d-1}{2}}T_n}\right) \left( \mathcal{E}(u) + \int_{\R^d} \nabla u^r_n \cdot \nabla u^\ell_n + 2\int_{\R^d} V^\ell_n u^\ell_n u^r_n + 4 D(\lvert u^\ell_n \rvert^2,u^\ell_n u^r_n) \right. \\
	& \left. + \int_{\R^d} V^\ell_n \lvert u^r_n \rvert^2 + D(\lvert u^\ell_n \rvert^2, \lvert u^r_n \rvert^2) + \frac{1}{2L_n} + 2 D(u^\ell_n u^r_n,u^\ell_n u^r_n) \right) \pt
	\end{align*}
	In this previous estimate, we have used the reflection symmetry and the translation invariance of the integral and, more particularly, the following identities
	\begin{gather*}
	\int_{\R^d} V^\ell_n \lvert u^r_n \rvert^2 = \int_{\R^d} V^r_n \lvert u^\ell_n \rvert^2 \, ,\quad \int_{\R^d} V^\ell_n u^\ell_n u^r_n = \int_{\R^d} V^r_n u^\ell_n u^r_n\, , \quad D(\lvert u^r_n \rvert^2, u_\ell u^r_n) = D(\lvert u^\ell_n \rvert^2, u^\ell_n u^r_n) \, , \\
	\mathcal{E}^\ell_n(u^\ell_n) = \mathcal{E}^r_n(u^r_n) = \mathcal{E}(u) \pt
	\end{gather*}
	Multiplying by $u^r_n$ the Euler-Lagrange equation \eqref{eq: euler_lagrange_equation_one_nucleus} satisfied by $u^\ell_n$ and integrating, we obtain 
	\[
	\int_{\R^d} \left(\nabla u^\ell_n \cdot \nabla u^r_n + V^\ell_n u^\ell_n u^r_n\right) + 2 D(\lvert u^\ell_n \rvert^2, u^\ell_n u^r_n) = \mu \int_{\R^d} u^\ell_n u^r_n \pt
	\]
	We can thus simplify our estimate into
	\begin{align*}
	\mathcal{E}_n(u_\mathrm{trial}) 
	=&~ 2 \left(1+ \grandO{L_n^{\frac{d-1}{2}}T_n}\right)\left(\mathcal{E}(u) - \int_{\R^d} \nabla u^\ell_n \cdot \nabla u^r_n + 2\mu \int_{\R^d} u^\ell_n u^r_n + \int_{\R^d} V^\ell_n \lvert u^r_n \rvert^2 \right. \\
	&\left. + D(\lvert u^\ell_n \rvert^2, \lvert u^r_n \rvert^2) + \frac{1}{2L_n} + 2 D(u^\ell_n u^r_n,u^\ell_n u^r_n) \right) \pt
	\end{align*}
	In the $d=2$ case, we use the estimates from Lemma \ref{lemma: tunneling} and the expansions from Lemma \ref{lemma: interaction_left_right_d=3} to get
	\begin{align}
	\label{eq: upper_bound_energy}
	\mathcal{E}_n(u_n^+) \leq \mathcal{E}_n(u_\mathrm{trial}) = 2 \mathcal{E}(u) + \left(\frac{3\mu_1}{4}\right)^2 \frac{1}{L_n^5} + \grandO{\frac{1}{L_n^7}} \, ,
	\end{align}
	where $m_1 = \int_{\R^2} \abs{u(y)}^2 \abs{y}^2 \diff y$ is the second moment of the probability distribution $\abs{u}^2$. In the $d=3$ case, we use Lemma \ref{lemma: interaction_left_right_d=3} and find
	\begin{align}
	\label{eq: upper_bound_energy_d=3}
	\mathcal{E}_n(u_n^+) \leq \mathcal{E}_n(u_\mathrm{trial}) = 2 \mathcal{E}(u) + \grandO{L_nT_n} \pt
	\end{align}
	In both cases, passing to the limit in \eqref{eq: upper_bound_energy} and \eqref{eq: upper_bound_energy_d=3}, we get
	\begin{align}
	\label{eq: a priori upper bound_bis}
	\limsup_{n\to\infty} \mathcal{E}_n(u_n^+) \leq \lim_{n\to\infty} \mathcal{E}_n(u_\mathrm{trial}) \leq 2 \mathcal{E}(u) = 2I \pt
	\end{align}
	Next, we seek a lower bound on the energy of $u_n^+$.
	Using the IMS formula \cite[Theorem 3.2]{cycon1987schrodinger}, the positivity of $D(\cdot, \cdot)$ and the reflection symmetry $\mathcal{R}$, we can bound $\mathcal{E}_n(u_n^+)$ from below as follows
	\begin{align*}
	\mathcal{E}_n(u_n^+) 
	=& \int_{\R^d} \lvert \nabla (\chi^\ell_n u_n^+) \rvert^2 + \int_{\R^d} \lvert \nabla (\chi^r_n u_n^+) \rvert^2 - \int_{\R^d} \left(\lvert \nabla \chi^\ell_n\rvert^2+\lvert \nabla \chi^r_n \rvert^2\right) \lvert u_n^+ \rvert^2 + \frac{1}{L_n} \\
	&+ \int_{\R^d} V_n (\lvert\chi^\ell_n u_n^+\rvert^2 + \lvert\chi^r_nu_n^+\rvert^2) + D(\lvert u_n^+ \rvert^2,\lvert u_n^+ \rvert^2) \\
	\geq & ~\mathcal{E}^\ell_n(\chi^\ell_n u_n^+) + \mathcal{E}^r_n(\chi^r_nu_n^+) + 2\left( \int_{\R^d} V^\ell_n \lvert\chi^r_nu_n^+\rvert^2 + D\left(\abs{\chi^\ell_n u_n^+}^2,\abs{\chi^r_n u_n^+}^2\right) -\int_{\R^d} \lvert\nabla \chi^\ell_n\rvert^2 \lvert u_n^+ \rvert^2\right)  \\
	\geq &~ 2 \mathcal{E}^\ell_n(\chi^\ell_n u_n^+) + \grandO{L_n^{-1}} \pt
	\end{align*}
	The $\grandO{L_n^{-1}}$ comes from the properties of $\chi_n^\ell$, see \eqref{eq: properties_chi1} and \eqref{eq: properties_chi2}.
	Passing to the limit, we find
	\[
	\liminf_{n\to\infty} \mathcal{E}_n(u_n^+) \geq 2 \liminf_{n\to\infty} \mathcal{E}^\ell_n(\chi^\ell_n u_n^+) \pt
	\]
	If we insert this lower bound into \eqref{eq: a priori upper bound_bis}, we deduce 
	\[
	\lim\limits_{n \to \infty} \mathcal{E}(\tau^r_n(\chi^\ell_n u_n^+) ) = \mathcal{E}(u) = I \, ,
	\]
	where $\tau_n^r$ is the right translation operator defined by \eqref{eq: translation_operators}.
	Moreover, by reflection symetry, $\norme{\chi^\ell_n u_n^+}_{L^2(\R^2)}=\norme{\chi^r_n u_n^+}_{L^2(\R^2)}=1$. Thus $\tau^r_n(\chi^\ell_n u_n^+)$ is a minimizing sequence for the minimization problem \eqref{eq: min_problem_ow}. By reflection symmetry, $\tau^\ell_n(\chi^r_nu_n^+)$ is also minimizing sequence for $I$. By the precompactness of all minimizing sequences (see \cite[Section VII]{lieb1981thomasfermi}), we have the strong convergences
	\begin{align*}
	\tau^r_n(\chi^\ell_n u_n^+) \limit{n \to \infty} u \et \tau^\ell_n(\chi^r_n u_n^+) \limit{n\to \infty} u \quad \text{in} \quad H^1(\R^d) \, ,
	\end{align*}
	up to a subsequence.
	However, since $u$ is unique up to a phase and $u_n^+ >0$, this statement is true for every sequence $(L'_n)_{n\geq 0}$ such that $L'_n \to \infty$ as $n\to \infty$. Hence, the sequences $\tau^r_n(\chi^\ell_n u_n^+)$ and $\tau^\ell_n(\chi^r_n u_n^+)$ admit only one limit point in $H^1(\R^d)$ and every subsequence admits a converging subsequence in $H^1(\R^d)$ (which converges toward $u$). This shows \eqref{a priori strong convergence1}.
\end{proof}
\begin{cor}[Uniform bound in $H^1(\R^d)$]
	\label{cor: boundedness_u}
	We have
	\begin{align}
	\label{eq: boundedness_u}
	\sup_{n\geq 0} \norme{u_n^+}_{H^1(\R^d)} <\infty \pt
	\end{align}
\end{cor}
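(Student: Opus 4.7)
The plan is short: combine the normalization $\int \abs{u_n^+}^2 = 2$ with the IMS-type identity that was already used in the proof of Proposition \ref{prop: strong_convergence_H1}, together with the strong $H^1$-convergence just established.

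First, the $L^2$-part is free: since $u_n^+ \in \mathcal{P}$, we have $\norme{u_n^+}_{L^2(\R^d)}^2 = 2$ for every $n$, so only $\norme{\nabla u_n^+}_{L^2(\R^d)}$ needs to be controlled. Using $(\chi_n^\ell)^2 + (\chi_n^r)^2 = 1$ and the IMS localization formula, I would write
\[
\int_{\R^d} \abs{\nabla u_n^+}^2
= \int_{\R^d} \abs{\nabla(\chi_n^\ell u_n^+)}^2 + \int_{\R^d} \abs{\nabla(\chi_n^r u_n^+)}^2 - \int_{\R^d}\left(\abs{\nabla \chi_n^\ell}^2 + \abs{\nabla \chi_n^r}^2\right) \abs{u_n^+}^2 \pt
\]
The last term is bounded by $2\norme{\nabla \chi_n^\ell}_{L^\infty}^2 \cdot 2 = \grandO{L_n^{-1}}$ thanks to \eqref{eq: properties_chi2} and the $L^2$-normalization, hence is uniformly bounded in $n$.

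For the first two terms, Proposition \ref{prop: strong_convergence_H1} tells us that $\chi_n^\ell u_n^+ - u(\cdot+\mathbf{x}_n) \to 0$ in $H^1(\R^d)$ (and symmetrically for the right piece). Since the translations $\tau_n^{\ell,r}$ are unitary on $H^1$, it follows that $\norme{\chi_n^\ell u_n^+}_{H^1(\R^d)} \to \norme{u}_{H^1(\R^d)}$ and $\norme{\chi_n^r u_n^+}_{H^1(\R^d)} \to \norme{u}_{H^1(\R^d)}$, so both are uniformly bounded in $n$. Plugging these bounds into the IMS identity yields a uniform bound on $\norme{\nabla u_n^+}_{L^2(\R^d)}$, which together with the $L^2$-normalization proves \eqref{eq: boundedness_u}.

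There is no real obstacle here; the corollary is a direct bookkeeping consequence of the previous proposition and the fact that the cross-term $\int(\abs{\nabla \chi_n^\ell}^2 + \abs{\nabla \chi_n^r}^2)\abs{u_n^+}^2$ in the IMS formula is $\grandO{L_n^{-1}}$ thanks to the scaling $\chi^\kappa(\cdot/\sqrt{L_n})$ of the partition of unity.
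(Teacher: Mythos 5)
Your argument is correct and non-circular, but it is not the route the paper takes. The paper's proof is purely energetic: it only uses the elementary upper bound \eqref{eq: upper_bound_energy} from the proof of Proposition \ref{prop: strong_convergence_H1} to get $\sup_n \mathcal{E}_n(u_n^+)<\infty$, and then the coercivity of $\mathcal{E}_n$ — positivity of $D(\cdot,\cdot)$ plus the infinitesimal form-boundedness of $|\cdot|^{-1}$ with respect to $-\Delta$, which gives $\mathcal{E}_n(v)\geq \epsilon\norme{\nabla v}^2_{L^2(\R^d)}-C_\epsilon$ — together with the normalization. You instead invoke the full conclusion of Proposition \ref{prop: strong_convergence_H1} (strong $H^1$ convergence of the localized pieces, which behind the scenes rests on the precompactness of minimizing sequences for \eqref{eq: min_problem_ow}) and recombine via the IMS identity; this is legitimate since the Proposition is proved without the Corollary (its localization errors only need $\norme{u_n^+}_{L^2}^2=2$), and the finitely many small $n$ are harmless because each $u_n^+\in H^1(\R^d)$. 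What each approach buys: the paper's argument is lighter in its dependencies — it needs only the trial-state energy bound and would survive even if one had not yet established convergence — while yours is a quick bookkeeping consequence of the stated Proposition and avoids quoting form-boundedness. Two small remarks on your write-up: since the subtracted term $\int(\abs{\nabla\chi_n^\ell}^2+\abs{\nabla\chi_n^r}^2)\abs{u_n^+}^2$ is nonnegative, you do not even need to estimate it — the inequality $\norme{\nabla u_n^+}_{L^2}^2\leq\norme{\nabla(\chi_n^\ell u_n^+)}_{L^2}^2+\norme{\nabla(\chi_n^r u_n^+)}_{L^2}^2$ is immediate from the IMS identity; and boundedness of $\norme{\chi_n^\ell u_n^+}_{H^1}$ follows from the triangle inequality alone, the unitarity of translations on $H^1$ being needed only to identify the limit value $\norme{u}_{H^1}$, which is not required here.
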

\begin{proof}
	Estimate \eqref{eq: upper_bound_energy} from the proof of Proposition \ref{prop: strong_convergence_H1} gives
	\[
	\sup_{n\geq 0} \mathcal{E}_n(u_n^+) < \infty \pt
	\]
	Using the positivity of $D(\cdot,\cdot)$ and the fact that $|\cdot|^{-1}$ is infinitesimally form-bounded with respect to $-\Delta$ (see for instance \cite[Theorem X.19]{reed1975methodsII}), we have $\mathcal{E}_n(v) \geq \epsilon\norme{\nabla u_n^+}^2_{L^2(\R^d)} - C_\epsilon $ for all $\epsilon>0$ and some constant $C_\epsilon$. The uniform bound \eqref{eq: boundedness_u} follows immediately.
\end{proof}
\begin{rem}
	\label{rem: regularity_h_L}
	A consequence of Corollary  is the validity of Proposition \ref{prop: study of d(h)} when $h$ is replaced by $h_L$. Indeed, in the last step of the proof, we need to use the fact that $ \sup_{L \geq 1} \norme{\rvert u_L^+ \lvert^2 \ast |\cdot|^{-1}}_{L^\infty(\R^d)} < \infty$ which comes from Lemma \ref{lemma: technical2_inequality1} and Corollary \ref{cor: boundedness_u}. Also, in that case there are $2^n$ terms of order $n$ appearing in \eqref{eq: that}. These are bounded by
	\begin{align*}
	&\norme{(-\Delta+\nu)^{1/2 - \alpha} \left[(-\Delta+\nu)^{-1/2} \left(\abs{x - \mathbf{x}_L}^{-1}+ \abs{x - \mathbf{x}_L}^{-1}\right) (-\Delta+\nu)^{-1/2}\right]^n (-\Delta+\nu)^{-1/2}} \\
	&\leq \nu^{-\alpha} \left(\frac{2C_\alpha}{\sqrt{\nu}}\right)^n \pt
	\end{align*}	
\end{rem}

\subsection{Convergence of the Lagrange multiplier}

Using Proposition \ref{prop: strong_convergence_H1}, we show that the lowest eigenvalue $\mu_n^+$ of the mean-field hamiltonian $h_n$ for the two nuclei model \eqref{eq: restricted_hartree} converges toward the one associated with the one nucleus model \eqref{eq: min_problem_ow} as $n \to \infty$.
\begin{prop}
	\label{lemma: convergence energy}
	We have $\mu_n^+ \to \mu$ as $n\to \infty$.
\end{prop}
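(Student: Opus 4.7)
The approach is to exploit two complementary characterizations of $\mu_n^+$: the Rayleigh-quotient upper bound, valid because $h_n$ is a linear self-adjoint operator whose lowest eigenvalue is $\mu_n^+$, and an identity relating $\mu_n^+$ to $\mathcal{E}_n(u_n^+)$ extracted from the Euler-Lagrange equation \eqref{eq: EL_equation}. For the upper bound, I would test the Rayleigh quotient against $v = u^\ell_n$, which has $L^2$-norm equal to $1$. Writing $V_n = V^\ell_n + V^r_n$ and using the monoatomic Euler-Lagrange equation \eqref{eq: euler_lagrange_equation_one_nucleus}, one obtains
\begin{align*}
\mu_n^+ \leq \pdtsc{u^\ell_n}{h_n u^\ell_n} = \mu - 2 D(\abs{u^\ell_n}^2, \abs{u^\ell_n}^2) + \int_{\R^d} V^r_n \abs{u^\ell_n}^2 + 2 D(\abs{u^\ell_n}^2, \abs{u_n^+}^2) \pt
\end{align*}
The potential cross term is $\grandO{1/L_n}$ by Lemma \ref{lemma: interaction_left_right_d=3}, and the Coulomb cross term will converge to $2 D(\abs{u}^2, \abs{u}^2)$ by the analysis described below, yielding $\limsup_n \mu_n^+ \leq \mu$.

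For the matching lower bound I would multiply $h_n u_n^+ = \mu_n^+ u_n^+$ by $u_n^+$ and integrate; since $\int \abs{u_n^+}^2 = 2$ this produces the identity
\begin{align*}
2 \mu_n^+ = \mathcal{E}_n(u_n^+) + D(\abs{u_n^+}^2, \abs{u_n^+}^2) - \frac{1}{L_n} \pt
\end{align*}
The convergence $\mathcal{E}_n(u_n^+) \to 2I$ is essentially already contained in the proof of Proposition \ref{prop: strong_convergence_H1}: the trial state gives $\limsup \mathcal{E}_n(u_n^+) \leq 2I$, while the IMS lower bound combined with $\mathcal{E}_n^\ell(\chi^\ell_n u_n^+) = \mathcal{E}(\tau^r_n(\chi^\ell_n u_n^+))$ and the $H^1$-strong convergence $\tau^r_n(\chi^\ell_n u_n^+) \to u$ gives the matching $\liminf$. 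The monoatomic analogue of the identity above reads $\mu = I + D(\abs{u}^2, \abs{u}^2)$, so everything reduces to proving $D(\abs{u_n^+}^2, \abs{u_n^+}^2) \to 2 D(\abs{u}^2, \abs{u}^2)$, which will simultaneously close the upper bound.

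To establish this Coulomb convergence I would use the partition $(\chi^\ell_n)^2 + (\chi^r_n)^2 = 1$ to write $\abs{u_n^+}^2 = \abs{\chi^\ell_n u_n^+}^2 + \abs{\chi^r_n u_n^+}^2$ and expand $D(\abs{u_n^+}^2, \abs{u_n^+}^2)$ as two diagonal pieces (equal by reflection symmetry) plus one cross piece. The diagonal pieces tend to $D(\abs{u}^2, \abs{u}^2)$ thanks to Proposition \ref{prop: strong_convergence_H1}, the Sobolev embedding $H^1(\R^d) \hookrightarrow L^{2p}(\R^d)$ with $p = 2d/(2d-1)$, and the continuity of $D$ on $L^p(\R^d) \times L^p(\R^d)$ via Hardy-Littlewood-Sobolev. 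I expect the main obstacle to be the cross piece: the supports of $\chi^\ell_n u_n^+$ and $\chi^r_n u_n^+$ overlap in a strip of width $\sim \sqrt{L_n}$ around $\{x_1 = 0\}$, so one cannot use a naive distance bound. The remedy is to expand the cross piece around $D(\abs{u^\ell_n}^2, \abs{u^r_n}^2) = \grandO{1/L_n}$ (Lemma \ref{lemma: interaction_left_right_d=3}), the one quantity where the two profiles are genuinely well separated, and to control the remaining error terms in the $L^p$-norm using again the $H^1$-convergence of Proposition \ref{prop: strong_convergence_H1}.
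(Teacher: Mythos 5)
Your argument is correct, but it takes a genuinely different route from the paper. The paper localizes the Euler--Lagrange equation \eqref{eq: EL_equation} with the cutoff $\chi^\ell_n$, tests the localized equation against $\chi^\ell_n u_n^+$, translates, and passes to the limit: the left side is the one-center quantity $\mathcal{E}(\tau^r_n(\chi^\ell_n u_n^+)) + D(\abs{\tau^r_n(\chi^\ell_n u_n^+)}^2,\abs{\tau^r_n(\chi^\ell_n u_n^+)}^2)$, which converges to $\pdtsc{hu}{u}_{L^2} = \mu$ by Proposition \ref{prop: strong_convergence_H1} and the strong $H^1$-continuity of $\mathcal{E}$ and $D$, while the localization errors $w_n$ (terms in $\Delta\chi^\ell_n$, $\nabla\chi^\ell_n$ and $V^r_n\chi^\ell_n u_n^+$) vanish in $L^2$. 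You instead test the equation globally against $u_n^+$ to get $2\mu_n^+ = \mathcal{E}_n(u_n^+) + D(\lvert u_n^+\rvert^2,\lvert u_n^+\rvert^2) - L_n^{-1}$, combine it with $\mu = I + D(\abs{u}^2,\abs{u}^2)$, and then split the two-center quantities afterwards via $(\chi_n^\ell)^2+(\chi_n^r)^2=1$, Hardy--Littlewood--Sobolev and Lemma \ref{lemma: interaction_left_right_d=3}; the price is that the cross Coulomb terms must be shown to vanish, rather than the $L^2$ errors $w_n$. Both mechanisms rest on the same two inputs (Proposition \ref{prop: strong_convergence_H1} and HLS continuity of $D$), so neither is deeper, but yours has the small advantage of producing the explicit identity for $\mu_n^+$ that the paper reuses later (in the proof of Proposition \ref{prop: rate_convergence_ground_state_energy}), while the paper's localization avoids any discussion of the Coulomb self-energy of $u_n^+$. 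Two points you should make explicit when writing this up: (i) the fact that $\mathcal{E}_n(u_n^+)\to 2I$ is indeed contained in the proof of Proposition \ref{prop: strong_convergence_H1}, because there the chain $2I \geq \limsup_n\mathcal{E}_n(u_n^+) \geq \liminf_n\mathcal{E}_n(u_n^+) \geq 2\liminf_n \mathcal{E}(\tau^r_n(\chi^\ell_n u_n^+)) \geq 2I$ forces all inequalities to be equalities, but this deserves a sentence since it is not a stated conclusion; (ii) the HLS/Sobolev estimates for the diagonal and cross pieces need the uniform bound $\sup_n\norme{u_n^+}_{H^1(\R^d)}<\infty$ of Corollary \ref{cor: boundedness_u}, which you use implicitly. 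Finally, note that your Rayleigh-quotient upper bound with trial state $u^\ell_n$ is redundant: once the global identity and the two convergences are in place, they give the two-sided limit $\mu_n^+\to\mu$ by themselves.
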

\begin{proof}
	We recall the Euler-Lagrange equation \eqref{eq: EL_equation} which is satisfied by $u^+_n$
	\begin{align*}
	\left(-\Delta + V_n + \lvert u^+_n \rvert^2 \ast |\cdot|^{-1}\right) u^+_n = \mu_n^+u^+_n \, ,
	\end{align*}
	where $\mu_n^+ < 0$. We localize this equation in the sector $\{x_1 \leq \sqrt{L_n} \delta\}$ using the cutoff function $\chi^\ell_n$
	\begin{align}
	\label{eq: EL_localized}
	\left(-\Delta + V_\ell + \lvert\chi^\ell_n u^+_n\rvert^2 \ast |\cdot|^{-1}\right) \chi^\ell_n u^+_n 
	=& ~(-\Delta \chi^\ell_n) u^+_n - 2\nabla \chi^\ell_n \cdot \nabla u^+_n - V^r_n \chi^\ell_n u^+_n\\
	&\nonumber - \left(\lvert\chi^r_n u^+_n\rvert^2 \ast |\cdot|^{-1}\right) \chi^\ell_n u^+_n +  \mu_n^+ \chi^\ell_n u^+_n \pt \nonumber
	\end{align}
	We denote
	\[
	w_n \coloneqq(-\Delta \chi^\ell_n) u^+_n - 2\nabla \chi^\ell_n \cdot \nabla u^+_n - V^r_n \chi^\ell_n u^+_n  \pt
	\]
	From Corollary \ref{cor: boundedness_u}, $\norme{u^+_n}_{H^1(\R^d)}$ is uniformly bounded with respect to $n$. Hence, the terms involving the derivatives of $\chi^\ell_n$ are $\grandO{L_n^{-1/2}}$ in $L^2(\R^d)-$norm. Moreover, since $\supp \chi^\ell_n \subset \{x_1 \leq \sqrt{L_n}\delta\}$, we have $\norme{V^r_n \chi^\ell_n u^+_n}_{L^2(\R^d)} = \grandO{L_n^{-1}}$.
	Hence $\norme{w_{n}}_{L^2(\R^d)} \to 0$ as $n\to \infty$. 
	Now, we multiply \eqref{eq: EL_localized} by $\chi^\ell_n u^+_n$, integrate and translate by $+\mathbf{x}_n$ to get
	\begin{align}
	\label{eq: lemma convergence energy}
	\mathcal{E}(\tau^r_n(\chi^\ell_n u^+_n)) + D\left(\abs{\tau^r_n(\chi^\ell_n u^+_n)}^2,\abs{\tau^r_n(\chi^\ell_n u^+_n)}^2\right) 
	= \pdtsc{w_n}{\chi^\ell_n u^+_n}_{L^2(\R^d)} + \mu_n^+ \pt
	\end{align}
	We have used the fact that $\int_{\R^d} \abs{\tau^r_n(\chi^\ell_n u^+_n)}^2 = \int_{\R^d} \abs{\chi^\ell_n u^+_n}^2= 1$.
	By Proposition \ref{prop: strong_convergence_H1}, $\tau^r_n(\chi^\ell_n u^+_n) - u$ converges to 0 in $H^1(\R^d)$ as $n\to \infty$. So, by the strong continuity of the energy functional $v\mapsto\mathcal{E}(v)$ and of the Coulomb energy $v \mapsto D(v,v)$ in $H^1(\R^d)-$norm, we can take the limit as $n\to \infty$ in the left term of \eqref{eq: lemma convergence energy}
	\[
	\lim\limits_{n\to \infty}\left(\mathcal{E}(\tau^r_n(\chi^\ell_n u^+_n)) + D\left(\abs{\tau^r_n(\chi^\ell_n u^+_n)}^2,\abs{\tau^r_n(\chi^\ell_n u^+_n)}^2\right)\right) = \pdtsc{h u}{u}_{L^2(\R^d)} = \mu \pt
	\]
	Then, taking the limit into \eqref{eq: lemma convergence energy} leads to $\lim\limits_{n \to\infty} \mu_n^+ = \mu$.
\end{proof}
The next corollary results from Remark \ref{rem: regularity_h_L} and Proposition \ref{lemma: convergence energy}.
\begin{cor}[Uniform bound in $H^{2-\alpha}(\R^d)$]
	\label{cor: boundedness_u_better}
	$u_n^+$ is uniformly bounded in $H^{2-\alpha}(\R^d)$ for $\alpha \in \intof{0}{1}$ if $d=2$ and $\alpha=0$ if $d=3$. In particular, $u_n^+$ is uniformly bounded in $L^\infty(\R^d)$.
\end{cor}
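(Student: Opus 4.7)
The plan is a direct application of the uniform version of Proposition \ref{prop: study of d(h)} provided by Remark \ref{rem: regularity_h_L}, combined with the eigenvalue equation and the convergence of the Lagrange multiplier proved in Proposition \ref{lemma: convergence energy}.

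First, I would recall from Remark \ref{rem: regularity_h_L} that for $\alpha \in \intoo{0}{2}$ if $d=2$ (resp. $\alpha = 0$ if $d=3$) and any $\epsilon > 0$, there exist constants $C(\epsilon,\alpha)$ (resp. $C(\epsilon)$), \emph{independent of $n$}, such that for all $v \in \mathcal{D}(h_n)$,
\[
\norme{v}_{H^{2-\alpha}(\R^d)} \leq \epsilon \norme{h_n v}_{L^2(\R^d)} + C(\epsilon,\alpha) \norme{v}_{L^2(\R^d)}\pt
\]
The uniformity in $n$ is the key point, and it stems from the uniform bound $\sup_n \norme{\lvert u_n^+\rvert^2 \ast |\cdot|^{-1}}_{L^\infty(\R^d)} < \infty$ deduced in Remark \ref{rem: regularity_h_L} from Corollary \ref{cor: boundedness_u} and Lemma \ref{lemma: technical2}.

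Second, I apply this inequality to $v = u_n^+$. Using the Euler-Lagrange equation \eqref{eq: EL_equation}, $h_n u_n^+ = \mu_n^+ u_n^+$, together with the normalization $\norme{u_n^+}_{L^2(\R^d)} = \sqrt{2}$, I obtain
\[
\norme{u_n^+}_{H^{2-\alpha}(\R^d)} \leq \sqrt{2}\left(\epsilon \abs{\mu_n^+} + C(\epsilon,\alpha)\right)\pt
\]
By Proposition \ref{lemma: convergence energy}, $\mu_n^+ \to \mu$ as $n\to\infty$, so $\sup_n \abs{\mu_n^+} < \infty$, and the right-hand side is bounded uniformly in $n$. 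This proves the uniform $H^{2-\alpha}(\R^d)$ bound for $\alpha \in \intof{0}{1}$ (in fact for the full range allowed by Proposition \ref{prop: study of d(h)}).

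Finally, the $L^\infty(\R^d)$ bound follows from Sobolev embedding: picking any fixed $\alpha \in \intoo{0}{1}$ when $d=2$, we have $2-\alpha > d/2$ so $H^{2-\alpha}(\R^d) \hookrightarrow L^\infty(\R^d)$ continuously; the same embedding holds for $H^2(\R^3)$. There is no real obstacle here --- the only subtlety, already handled in Remark \ref{rem: regularity_h_L}, is to make sure the constants in the fractional Hardy-type estimate do not depend on $n$, which is why Corollary \ref{cor: boundedness_u} was needed as input.
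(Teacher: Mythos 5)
Your proposal is correct and follows essentially the same route as the paper, which deduces the corollary precisely from the uniform estimate of Remark \ref{rem: regularity_h_L} (itself resting on Lemma \ref{lemma: technical2} and Corollary \ref{cor: boundedness_u}) applied to the eigenvalue equation, together with the convergence $\mu_n^+ \to \mu$ of Proposition \ref{lemma: convergence energy}. The concluding Sobolev embedding argument for the $L^\infty(\R^d)$ bound is the standard step the paper leaves implicit.
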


\subsection{Approximation for \texorpdfstring{$u_n^+$}{un+} in higher Sobolev spaces}

The convergence of $\mu^+_n$ toward $\mu$ allows us to extend the result stated in Proposition \ref{prop: strong_convergence_H1} by showing that strong convergence also holds in $H^{2-\alpha}(\R^2)$ for any $\alpha >0$ if $d=2$ and in $H^2(\R^3)$ if $d=3$. 
\begin{prop}[Strong convergence in higher Sobolev spaces]
	\label{prop: stronger_convergence}
	Let $0< \alpha \leq 1$ if $d=2$ and $\alpha=0$ if $d=3$. We have the strong convergence
	\[
	\norme{\chi^\ell_n u^+_n - u(\cdot + \mathbf{x}_n)}_{ H^{2 - \alpha}(\R^d)} = \norme{\chi^r_n u^+_n - u(\cdot - \mathbf{x}_n)}_{ H^{2 - \alpha}(\R^d)} \limit{n \to \infty} 0 \pt
	\]
\end{prop}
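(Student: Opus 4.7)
The plan is to upgrade the $H^1$-convergence from Proposition~\ref{prop: strong_convergence_H1} to $H^{2-\alpha}$-convergence by inserting the localized Euler–Lagrange equation into the fractional elliptic regularity estimate of Proposition~\ref{prop: study of d(h)}. Set $v_n \coloneqq \tau^r_n(\chi^\ell_n u^+_n)$, so that the statement is equivalent to $\|v_n - u\|_{H^{2-\alpha}(\R^d)} \to 0$. From the previous two propositions I already know that $v_n \to u$ in $H^1(\R^d)$ and that $\mu_n^+ \to \mu$.

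I will translate the localized equation \eqref{eq: EL_localized} by $+\mathbf{x}_n$. Since $\tau^r_n V^\ell_n = V$, this yields a monoatomic-type equation
\[
(-\Delta + V) v_n + \bigl(\abs{v_n}^2 \ast |\cdot|^{-1}\bigr) v_n + g_n v_n = \mu_n^+ v_n + \tau_n^r w_n,
\]
where $g_n = \tau_n^r(|\chi^r_n u^+_n|^2) \ast |\cdot|^{-1}$ is the cross-interaction term and $w_n$ is the cutoff-commutator remainder, already shown in the proof of Proposition~\ref{lemma: convergence energy} to satisfy $\|w_n\|_{L^2} \to 0$. Subtracting the Euler–Lagrange equation for $u$ and reorganizing the cubic nonlinearity via the identity $\abs{v_n}^2\ast|\cdot|^{-1}\cdot v_n - \abs{u}^2\ast|\cdot|^{-1}\cdot u = (\abs{v_n}^2-\abs{u}^2)\ast|\cdot|^{-1}\cdot v_n + \abs{u}^2\ast|\cdot|^{-1}\cdot (v_n-u)$, I obtain
\[
(h - \mu_n^+)(v_n - u) = R_n,
\]
with
\[
R_n = (\mu_n^+ - \mu)\,u - \bigl((\abs{v_n}^2 - \abs{u}^2)\ast|\cdot|^{-1}\bigr) v_n - g_n v_n + \tau_n^r w_n.
\]

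The key step is then to verify that $\|R_n\|_{L^2(\R^d)} \to 0$. The first term tends to $0$ by Proposition~\ref{lemma: convergence energy}. For the cubic term, Lemma~\ref{lemma: technical2} applied to $(v_n-u)(v_n+u)$ gives
\[
\bigl\|(\abs{v_n}^2 - \abs{u}^2)\ast|\cdot|^{-1}\bigr\|_{L^\infty(\R^d)} \lesssim \|v_n+u\|_{H^1}\,\|v_n-u\|_{H^1} \to 0,
\]
combined with the $L^2$-boundedness of $v_n$. The residual $\tau_n^r w_n$ already goes to $0$ in $L^2$. The term $g_n v_n$ is the geometric heart of the argument: by construction, the density $\tau_n^r(|\chi^r_n u^+_n|^2)$ is supported in the half-space $\{x_1 \geq L_n/2 - \sqrt{L_n}\delta\}$, so for $\abs{x} \leq L_n/4$ one has $g_n(x) \lesssim L_n^{-1}\,\|u_n^+\|_{L^2}^2$, whereas Corollary~\ref{cor: boundedness_u_better} gives $\|g_n\|_{L^\infty} = O(1)$, so
\[
\|g_n v_n\|_{L^2(\R^d)} \lesssim L_n^{-1}\|v_n\|_{L^2(B(0,L_n/4))} + \|v_n\|_{L^2(B(0,L_n/4)^c)} \to 0,
\]
the tail being controlled by the fact that $v_n \to u$ in $L^2$ and $u$ decays exponentially by Proposition~\ref{prop: u_exponential_falloff}.

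With $\|R_n\|_{L^2} \to 0$ in hand, I conclude by Proposition~\ref{prop: study of d(h)} (together with Remark~\ref{rem: regularity_h_L}): since $v_n,u \in \mathcal{D}(h)$, the inequality applied to $v_n - u$ gives
\[
\|v_n - u\|_{H^{2-\alpha}(\R^d)} \leq \epsilon\,\|h(v_n - u)\|_{L^2(\R^d)} + C_\epsilon \|v_n - u\|_{L^2(\R^d)} \leq \epsilon\bigl(\|R_n\|_{L^2} + |\mu_n^+|\,\|v_n-u\|_{L^2}\bigr) + C_\epsilon \|v_n - u\|_{L^2},
\]
whose right-hand side tends to $0$ as $n\to\infty$. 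The main obstacle I anticipate is the clean $L^2$-estimate on $g_n v_n$: it is the only term whose smallness comes from the geometric separation of the two nuclei rather than from the $H^1$-approximation, and it is essential to split $\R^d$ into a near-nucleus region (using $|x-y| \gtrsim L_n$) and a tail region (using the exponential decay of $u$) to combine Lemma~\ref{lemma: technical2} with the pointwise bounds of Proposition~\ref{prop: u_exponential_falloff}.
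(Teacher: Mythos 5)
Your proposal is correct, and for $d=3$ it is essentially the paper's own argument: localize the Euler--Lagrange equation with $\chi_n^\ell$, subtract the monoatomic equation, show the residual tends to $0$ in $L^2$, and conclude with the elliptic estimate of Proposition \ref{prop: study of d(h)}. Where you genuinely differ is in $d=2$: the paper settles that case in one line by interpolation, combining the $H^1$-convergence of Proposition \ref{prop: strong_convergence_H1} with the uniform $H^{2-\alpha}$ bound of Corollary \ref{cor: boundedness_u_better}, and runs the equation-based argument only for $d=3$ (where $\alpha=0$ leaves no room to interpolate). You instead run the equation-based argument uniformly in both dimensions, invoking the fractional estimate of Proposition \ref{prop: study of d(h)} also for $d=2$, and you treat the cross term $g_n v_n$ directly by support separation (the translated density $\tau_n^r\abs{\chi_n^r u_n^+}^2$ lives in $\{x_1 \geq L_n/2-\delta\sqrt{L_n}\}$, giving $g_n = \grandO{L_n^{-1}}$ near the origin), whereas the paper first replaces $\chi_n^r u_n^+$ by $u_n^r$ via the multilinear continuity of $(u,v,w)\mapsto((uv)\ast|\cdot|^{-1})w$ and then uses $\abs{u_n^r}^2\ast|\cdot|^{-1}\lesssim(\abs{x-\mathbf{x}_n}+1)^{-1}$. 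Your route is more systematic (one proof for both dimensions, using only results established before this point), at the price of redoing in $d=2$ the $L^2$-smallness of the residual that interpolation renders unnecessary. Two small points you should make explicit: (i) to apply Proposition \ref{prop: study of d(h)} when $d=2$ you need $v_n=\tau_n^r(\chi_n^\ell u_n^+)\in\mathcal{D}(h)$; this does hold and follows from your translated equation, since every term on its right-hand side and the mean-field term $(\abs{v_n}^2\ast|\cdot|^{-1})v_n$ lie in $L^2(\R^2)$, but it deserves a line; (ii) the bound $\norme{g_n}_{L^\infty(\R^d)}=\grandO{1}$ really comes from Lemma \ref{lemma: technical2} together with the uniform $H^1$ bound of Corollary \ref{cor: boundedness_u}, rather than from Corollary \ref{cor: boundedness_u_better}. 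Neither point affects the validity of your proof.
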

\begin{proof}
	In the two-dimensional case, this is a consequence of the interpolation inequality $\norme{v}_{H^{2-2\alpha}(\R^d)} \leq \norme{v}_{H^{1}(\R^d)}^{\theta} \norme{v}_{H^{2-\alpha}(\R^d)}^{1-\theta}$ (for $\alpha \leq 3/4$ and with $\theta = \frac{\alpha}{3/2-\alpha}$), Proposition~\ref{lemma: convergence energy} and Corollary \ref{cor: boundedness_u_better}. However, this argument does not adapt to the three-dimensional case and $\alpha =0$. So we treat it apart. By reflection symmetry, we have 
	\[
	\norme{\chi^\ell_n u^+_n - u(\cdot + \mathbf{x}_n)}_{ H^{2}(\R^3)} = \norme{\chi^r_n u^+_n - u(\cdot - \mathbf{x}_n)}_{ H^{2}(\R^3)} \pt
	\]
	We recall that $h_n^\ell$ and $u_n^\ell$ are defined by \eqref{eq: definition_translated_potential} and \eqref{eq: euler_lagrange_equation_one_nucleus}. In particular, $u_n^\ell$ solves the Euler-Lagrange equation $h_n^\ell u_n^\ell = \mu u_n^\ell$. As a direct consequence of Proposition \ref{prop: study of d(h)}, we have the bound
	\[
	\norme{v}_{H^{2}(\R^3)} \leq \norme{h_n^\ell v}_{L^2(\R^3)} + C \norme{v}_{L^2(\R^3)} \, ,
	\]
	for any $v \in \mathcal{D}(h_n^\ell)$. By Proposition \ref{prop: strong_convergence_H1}, we already have $\lim\limits_{n\to \infty} \norme{\chi_n^\ell u_n^+ - u_n^\ell}_{L^2(\R^3)} = 0$.	Thus, to prove Proposition \ref{prop: stronger_convergence} in the three-dimensional case, it is sufficient to prove that $h_n^\ell(\chi_n^\ell u_n^+ - u_n^\ell)$ converges to 0 in $L^2(\R^3)-$norm as $n \to \infty$. As in the proof of Proposition \ref{lemma: convergence energy}, we localize the Euler-Lagrange equation \eqref{eq: EL_equation} using the cutoff function $\chi^\ell_n$
	\begin{align*}
	(-\Delta + V_n + \lvert u^+_n \rvert^2 \ast |\cdot|^{-1}) \chi^\ell_n u^+_n = \mu^+_n \chi^\ell_n u^+_n - (\Delta \chi^\ell_n)u^+_n - 2 \nabla \chi^\ell_n \cdot \nabla u^+_n \pt
	\end{align*}
	Subtracting the Euler-Lagrange equation $h_n^\ell u_n^\ell = \mu u_n^\ell$ and rearranging the terms, we get
	\begin{align}
	\label{eq: 30_03_2020}
	h_n^\ell(\chi_n^\ell u_n^+ - u_n^\ell) 
	=&~ \mu^+_n \chi_n^\ell u_n^+ - \mu  u_n^\ell  -  V_n^r \chi^\ell_n u^+_n -  (\Delta \chi^\ell_n)u^+_n - 2 \nabla \chi^\ell_n \cdot \nabla u^+_n \\
	\label{eq: 30_03_2020_bis}
	&\notag +  \left[(\lvert u_n^\ell\rvert^2 -  \lvert \chi^\ell_n u^+_n\rvert^2)\ast |\cdot|^{-1}\right]\chi^\ell_n u^+_n -  (\lvert\chi^r_n u^+_n\rvert^2 * |\cdot|^{-1})\chi^\ell_n u^+_n  \pt
	\end{align}
	By Proposition \ref{prop: strong_convergence_H1} and Proposition \ref{prop: rate_convergence_ground_state_energy}, we have $\lim\limits_{n\to\infty}\norme{\mu^+_n \chi_n^\ell u_n^+ - \mu  u_n^\ell}_{L^2(\R^3)} = 0$.
	From the definition of $\chi_n^\ell$, we have the estimates
	\[
	\norme{\Delta \chi^\ell_n}_{L^\infty(\R^3)} = \grandO{L_n^{-1}}\, ,\quad \norme{ \nabla \chi^\ell_n}_{L^\infty(\R^3)} = \grandO{L_n^{-1/2}}\, , \quad \norme{V_n^r \chi^\ell_n u^+_n}_{L^2(\R^3)} = \grandO{L_n^{-1}} \pt
	\]
	The first two come from \eqref{eq: properties_chi2} and the last one is a consequence of \eqref{eq: properties_chi1}. Then, the three last terms of the right side of \eqref{eq: 30_03_2020} converge to 0. A direct consequence of Lemma \ref{lemma: technical2} is the continuity of the multilinear application $ (u,v,w) \in H^1(\R^3)^3 \longmapsto \left((uv) \ast |\cdot|^{-1}\right) w \in L^2(\R^3)$.
	From this previous statement, Proposition \ref{prop: strong_convergence_H1} and the uniform bound \eqref{eq: boundedness_u} in $H^1(\R^3)-$norm, we deduce
	\[
	\norme{\left[(\lvert u_n^\ell\rvert^2 -  \lvert \chi^\ell_n u^+_n\rvert^2)\ast |\cdot|^{-1}\right]\chi^\ell_n u^+_n }_{L^2(\R^3)} \limit{n\to \infty} 0 \, ,
	\]
	and the following identity
	\[
	\lim\limits_{n\to \infty}\norme{(\lvert\chi^r_n u^+_n\rvert^2 * |\cdot|^{-1})\chi^\ell_n u^+_n}_{L^2(\R^3)} = \lim\limits_{n\to\infty} \norme{(\lvert u_n^r\rvert^2 * |\cdot|^{-1}) u_n^\ell}_{L^2(\R^3)} \pt
	\]
	The right side of the previous identity is zero. Indeed, we have
	\[
	\norme{(\lvert u_n^r\rvert^2 * |\cdot|^{-1}) u_n^\ell}_{L^2(\R^3)} \lesssim \norme{\frac{1}{\abs{x -\mathbf{x}_n}+1} u_n^\ell}_{L^2(\R^3)} = \grandO{L_n^{-1}} \pt
	\]
	This concludes the proof of Proposition \ref{prop: stronger_convergence}.
\end{proof}

\subsection{\emph{A priori} exponential decay bounds for \texorpdfstring{$u_n^+$}{un+}}

We denote by $V_n^\mathrm{MF}$ the mean-field potential associated with the diatomic model. It is defined by
\begin{align}
\label{eq: mean_field_potential_diatomic}
V_n^\mathrm{MF} \coloneqq V_n + \lvert u_n^+ \rvert^2 \ast |\cdot|^{-1} \, ,
\end{align}
where $V_n(x) = -(\abs{x-\mathbf{x}_n}^{-1} + \abs{x+\mathbf{x}_n}^{-1})$ is the diatomic potential. 
\begin{prop}[Exponential decay bounds on $u^+_n$]
	\label{cor: exponential decay u^+_n}
	For all $\epsilon \in\intoo{0}{1}$ and for all $n$ large enough, there exists $C_{\epsilon}> 0$ such that the following pointwise estimates hold
	\begin{gather}
	\label{eq: a priori exp decay_u^+_n}
	\frac{1}{C_\epsilon} \left(e^{-(1+\epsilon)\abs{\mu}^{\frac{1}{2}} \abs{x - \mathbf{x}_n}} + e^{-(1+\epsilon)\abs{\mu}^{\frac{1}{2}} \abs{x+\mathbf{x}_n}}\right) \leq u^+_n(x) \leq C_{\epsilon} \left(e^{-(1-\epsilon)\abs{\mu}^{\frac{1}{2}} \abs{x - \mathbf{x}_n}} + e^{-(1-\epsilon)\abs{\mu}^{\frac{1}{2}} \abs{x+\mathbf{x}_n}}\right)  \, , \\
	\label{eq: a priori exp decay_u^+_n_nabla}
	\abs{\nabla u_n^+(x)} \leq C_{\epsilon} \left(e^{-(1-\epsilon)\abs{\mu}^{\frac{1}{2}} \abs{x - \mathbf{x}_n}} + e^{-(1-\epsilon)\abs{\mu}^{\frac{1}{2}} \abs{x+\mathbf{x}_n}}\right)   \, ,
	\end{gather}
	for all $x\in\R^d$. Let $\alpha = 0$ if $d=3$ and $\alpha \in \intof{0}{1}$ if $d=2$. There exists $C_\epsilon' >0$ such that for all $R >0$, we have
	\begin{align}
	\label{eq: a priori exp decay_u^+_n_integral}
	\norme{u_n^+}_{H^{2-\alpha}(B(-\mathbf{x}_n,R)^c) \cap B(\mathbf{x}_n,R)^c)} \leq C_\epsilon' e^{-(1-\epsilon)\abs{\mu}^{\frac{1}{2}}R} \pt
	\end{align}
\end{prop}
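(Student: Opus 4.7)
My plan is to follow the scheme used for Proposition~\ref{prop: u_exponential_falloff} in the monoatomic case, with super- and sub-solutions taken as sums of two exponentials centered at $\pm\mathbf{x}_n$ instead of a single one. The argument proceeds in four steps: upper bound, bootstrap-type control of the mean-field convolution, lower bound, and finally the gradient and $H^{2-\alpha}$ estimates.

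For the upper bound I would fix $\epsilon\in\intoo{0}{1}$, set $\gamma=(1-\epsilon)\abs{\mu}^{1/2}$ and consider the trial supersolution $\phi_n(x)=e^{-\gamma\abs{x-\mathbf{x}_n}}+e^{-\gamma\abs{x+\mathbf{x}_n}}$ on the two-ball exterior $\Omega_R\coloneqq\enstq{x\in\R^d}{\min(\abs{x-\mathbf{x}_n},\abs{x+\mathbf{x}_n})\geq R}$. A direct computation gives, for $\kappa\in\{\pm 1\}$,
\[
(h_n-\mu_n^+)\,e^{-\gamma\abs{x-\kappa\mathbf{x}_n}}=\left[\abs{\mu_n^+}-\gamma^2+\frac{\gamma(d-1)-1}{\abs{x-\kappa\mathbf{x}_n}}-\frac{1}{\abs{x+\kappa\mathbf{x}_n}}+\abs{u_n^+}^2\ast\abs{\cdot}^{-1}\right]e^{-\gamma\abs{x-\kappa\mathbf{x}_n}}\pt
\]
Since $\mu_n^+\to\mu$ by Proposition~\ref{lemma: convergence energy} and the convolution term is nonnegative, the bracket is bounded below by a positive constant on $\Omega_R$ for $R$ and $n$ large enough, so $\phi_n$ is a positive supersolution of $h_n-\mu_n^+$ on $\Omega_R$. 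The uniform $L^\infty$ bound from Corollary~\ref{cor: boundedness_u_better} controls $u_n^+$ on the two boundary spheres, and a direct two-ball adaptation of the maximum principle behind Lemma~\ref{lemma: comparison_supersolution} will then yield $u_n^+\leq C_\epsilon\phi_n$ on $\Omega_R$; the bound extends to $\R^d$ by Corollary~\ref{cor: boundedness_u_better} again.

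For the lower bound, the reflection symmetry $\mathcal{R}[u_n^+]=u_n^+$ reduces the task to proving $u_n^+(x)\geq c_\epsilon e^{-\gamma'\abs{x-\mathbf{x}_n}}$ on $\Omega_R$ with $\gamma'=(1+\epsilon)\abs{\mu}^{1/2}$, since the stated sum bound then follows by combining with the symmetric inequality. Before invoking Lemma~\ref{lemma: comparison_hoffmann_ostenhoff}, I would first exploit the upper bound just proved, together with the elementary estimate $(e^{-2\gamma\abs{\cdot}}\ast\abs{\cdot}^{-1})(x)\lesssim\abs{x}^{-1}$ at infinity, to improve the trivial bound into $(\abs{u_n^+}^2\ast\abs{\cdot}^{-1})(x)\lesssim\abs{x-\mathbf{x}_n}^{-1}+\abs{x+\mathbf{x}_n}^{-1}$ on $\Omega_R$. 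Taking $\varphi(x)=c_\epsilon e^{-\gamma'\abs{x-\mathbf{x}_n}}$, $W_1=V_n^\mathrm{MF}-\mu_n^+$ and $W_2=\gamma'^2-\gamma'(d-1)\abs{x-\mathbf{x}_n}^{-1}$, this refined convolution estimate and the fact that $\gamma'^2>\abs{\mu}\geq\abs{\mu_n^+}$ for $n$ large will ensure $W_2\geq W_1\geq 0$ on $\Omega_R$ for $R$ large enough. A uniform positive lower bound on $u_n^+$ on the two spheres $\{\abs{x\mp\mathbf{x}_n}=R\}$ is provided by Proposition~\ref{prop: stronger_convergence}, the Hölder regularity of Remark~\ref{rem: continuity_around_singularities}, and the pointwise lower bound on $u$ in Proposition~\ref{prop: u_exponential_falloff}, so that $u_n^+\geq\varphi$ holds on $\partial\Omega_R$ for $c_\epsilon$ small enough. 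Lemma~\ref{lemma: comparison_hoffmann_ostenhoff} then delivers the lower bound on $\Omega_R$, and Hölder continuity extends it up to the nuclei.

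The gradient estimate \eqref{eq: a priori exp decay_u^+_n_nabla} is obtained as in the monoatomic case by writing $u_n^+=-\tilde{Y}_n\ast(V_n^\mathrm{MF}u_n^+)$ with $\tilde{Y}_n$ the Yukawa potential of parameter $\abs{\mu_n^+}^{1/2}$, bounding $\abs{\nabla u_n^+}\leq\abs{\nabla\tilde{Y}_n}\ast\abs{V_n^\mathrm{MF}u_n^+}$ and invoking Lemma~\ref{lemma: lemma_technical_exponential_convolution} together with the pointwise upper bound on $u_n^+$; the convergence $\mu_n^+\to\mu$ is what gives the stated decay rate. Finally, \eqref{eq: a priori exp decay_u^+_n_integral} is obtained by applying Proposition~\ref{prop: study of d(h)} (valid for $h_n$ thanks to Remark~\ref{rem: regularity_h_L}) to $\chi u_n^+$, where $\chi$ is a smooth cutoff supported outside $B(\mathbf{x}_n,R-\delta)\cup B(-\mathbf{x}_n,R-\delta)$ and equal to $1$ outside $B(\mathbf{x}_n,R)\cup B(-\mathbf{x}_n,R)$, exactly as in the proof of \eqref{eq: a priori exp decay3}. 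The main obstacle is the presence of the \emph{two} singularities drifting to infinity: it forces the comparison principles to be performed on $\Omega_R$ rather than on the exterior of a single ball, and, more crucially, it requires the bootstrap from the easy upper bound to obtain an estimate on $\abs{u_n^+}^2\ast\abs{\cdot}^{-1}$ sharp enough for the lower-bound comparison to go through with arbitrarily small $\epsilon>0$.
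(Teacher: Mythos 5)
Your proposal is correct and follows the same overall scheme as the paper: two-center exponential super/subsolutions compared on the exterior $\Omega_R$ of the two balls via the comparison lemmas, boundary data supplied by the strong convergence of Proposition \ref{prop: stronger_convergence} (and positivity/continuity of $u$), the Yukawa-kernel convolution argument with Lemma \ref{lemma: lemma_technical_exponential_convolution} for \eqref{eq: a priori exp decay_u^+_n_nabla}, and the cutoff plus Proposition \ref{prop: study of d(h)}/Remark \ref{rem: regularity_h_L} argument for \eqref{eq: a priori exp decay_u^+_n_integral}. The one genuine difference is how you control the Hartree term on $\Omega_R$: the paper replaces $V_n^\mathrm{MF}$ by $(\tau_n^\ell+\tau_n^r)V^\mathrm{MF}$ up to an $\epsilon$ in $L^\infty$ (Proposition \ref{prop: stronger_convergence} plus Lemma \ref{lemma: technical2}) and then invokes the decay of $V^\mathrm{MF}$ from Lemma \ref{lemma: long range behavior_MF_potential}, for both bounds; you instead discard the nonnegative convolution term for the upper bound and then bootstrap that upper bound into $\abs{u_n^+}^2\ast|\cdot|^{-1}\lesssim \abs{x-\mathbf{x}_n}^{-1}+\abs{x+\mathbf{x}_n}^{-1}$ for the lower bound. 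This buys you independence from the multipole expansion at this stage and is perfectly sound, at the price of a genuinely two-step argument (the lower bound needs the upper bound first) and of the "two-ball adaptation" of Agmon's Lemma \ref{lemma: comparison_supersolution}, which the paper sidesteps by using Agmon only on the exterior of one large ball to get decay of $u_n^+$ at infinity and then Lemma \ref{lemma: comparison_hoffmann_ostenhoff} on $\Omega_R$ for both bounds; note you could avoid the adaptation entirely, since $u_n^+\in\mathcal{D}(h_n)\subset H^{2-\alpha}(\R^d)\subset \mathcal{C}_0(\R^d)$ already gives the vanishing at infinity required by Lemma \ref{lemma: comparison_hoffmann_ostenhoff}. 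One cosmetic slip: the inequality $\abs{\mu}\geq\abs{\mu_n^+}$ is not known (only $\mu_n^+\to\mu$), but it is also not needed — $\gamma'^2=(1+\epsilon)^2\abs{\mu}>\abs{\mu_n^+}+c$ for $n$ large follows directly from Proposition \ref{lemma: convergence energy}, which is all your comparison of $W_2$ and $W_1$ requires.
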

\begin{proof}
	We want to apply the second comparison Lemma \ref{lemma: comparison_hoffmann_ostenhoff}. Let $\epsilon \in \intoo{0}{\abs{\mu}}$. First, using Proposition~\ref{prop: stronger_convergence} and Lemma \ref{lemma: technical2}, we can approximate the two nuclei mean-field potential $V_n^\mathrm{MF}$ by a superposition of translated monoatomic mean-field potentials. For $n$ large enough, we have
	\[
	\norme{V_n^\mathrm{MF} - \left(\tau_n^\ell + \tau_n^r\right)V^\mathrm{MF}}_{L^\infty(\R^d)} \leq \epsilon/8 \pt
	\]
	Now, we choose $n$ large enough so that $\abs{\mu_n^+ - \mu} \leq \epsilon/8$. We insert these two estimates in the Euler-Lagrange equation \eqref{eq: EL_equation} solved by $u_n^+$ and we obtain
	\begin{align*}
	\left(-\Delta + \left(\tau_n^\ell + \tau_n^r\right)V^\mathrm{MF} - \mu - \epsilon/2 \right) u_n^+ \leq 0 \pt
	\end{align*}
	We have also used that $u_n^+ >0$. We have just shown that $u_n^+$ is a positive subsolution for the operator $\tilde{h}_n \coloneqq -\Delta + \left(\tau_n^\ell + \tau_n^r\right)V^\mathrm{MF} - \mu - \epsilon/2$. Moreover, by Lemma \ref{lemma: long range behavior_MF_potential}, there exists $R_\epsilon>0$ such that $\abs{\left(\tau_n^\ell + \tau_n^r\right)V^\mathrm{MF}} \leq \epsilon/2 $ on the open set $\Omega_{n,\epsilon} \coloneqq \{\abs{x\pm \mathbf{x}_n} > R_\epsilon\}$. In particular, we have $\left(\tau_n^\ell + \tau_n^r\right)V^\mathrm{MF} - \mu - \epsilon/2 \geq 0$ on $\Omega_{n,\epsilon}$.
	
	We define $f(x) = \exp(-\gamma \abs{x - \mathbf{x}_n}) + \exp(-\gamma \abs{x + \mathbf{x}_n})$ for $\gamma < \abs{\mu}^{1/2}$. A computation gives
	\[
	\left(-\Delta + \left(\tau_n^\ell + \tau_n^r\right)V^\mathrm{MF} - \mu - \epsilon/2\right)f(x) \geq f(x) \left(-\gamma^2 + \lvert \mu \rvert - \epsilon\right) \, ,
	\]
	on $\Omega_{n,\epsilon}$. If we choose $\gamma = \sqrt{\abs{\mu} - \epsilon}$ then $f$ is a supersolution for $\tilde{h}_n$. Similarly, we can show that $f$ is a supersolution for $h_n$ in the region $\{\abs{x} \geq L_n/2 + R_\epsilon\}$ and the first comparison Lemma \ref{lemma: comparison_supersolution} shows that $u_n^+$ tends to 0 at infinity for all $n$. From Proposition \ref{prop: stronger_convergence}, there exists $C_\epsilon >0$ (independent from $n$), such that $u_n^+ \leq C_\epsilon f$ on $\R^d \setminus \Omega_{n,\epsilon}$. Now, we can apply the second comparison Lemma \ref{lemma: comparison_hoffmann_ostenhoff} which shows the upper bound in \eqref{eq: a priori exp decay_u^+_n}. The lower bound can be shown similarly.
	
	The proof of the pointwise estimate on $\nabla u_n^+$ is similar to those of proof of Proposition \ref{prop: u_exponential_falloff}, so we will only sketch the arguments. We introduce the Yukawa potential $\tilde{Y}_n$ defined as the solution in the sense of distribution of the equation
	\[
	(-\Delta - \mu_n^+)\tilde{Y}_n = \delta_0 \, ,
	\]
	where $\delta_0$ is the Dirac distribution on 0. Then, we write
	\[
	u_n^+ = - \tilde{Y}_n \ast (V_n^\mathrm{MF} u_n^+) \pt
	\]
	Using the pointwise estimate \eqref{eq: a priori exp decay_u^+_n} on $u_n^+$ and the asymptotic $\abs{\nabla \tilde{Y}_n(x)} = \grandO{e^{-\sqrt{\abs{\mu} - \epsilon}\abs{x}}}$ (see \cite[Theorem 6.23]{lieb2001analysis}), this implies 
	\[
	\abs{\nabla u_n^+} \lesssim g \ast \left[\left(1 + \frac{1}{\abs{x-\mathbf{x}_n}} + \frac{1}{\abs{x+\mathbf{x}_n}}\right)\left(\tau_n^\ell g + \tau_n^r g\right)\right] \, ,
	\]
	where $g(x) = \exp(-\sqrt{\abs{\mu} - \epsilon}\abs{x})$.
	Then, it remains to invoke Lemma \ref{lemma: lemma_technical_exponential_convolution} to show \eqref{eq: a priori exp decay_u^+_n_nabla}. We can absord any remaining polynomial terms by slightly modifying the constants. Finally, the integral estimate \eqref{eq: a priori exp decay_u^+_n_integral} is shown exactly as estimate \eqref{eq: a priori exp decay3} from Proposition \ref{prop: u_exponential_falloff}.
\end{proof}

\section{Precising the rate of convergence}

In the previous section, we have shown the local convergence of $u_n^+$ to $u$ in some Sobolev spaces. In this section, we show the existence of a first excited state $u_n^-$ for $h_n$ and give several rates of convergence for $u_n^\pm$ and $\mu_n^\pm$.

\subsection{Stability of the monoatomic model in \texorpdfstring{$H^1(\R^d)$}{H1(Rd)}-norm}
Let $d\in \{2,3\}$.
The next proposition is a stability result in $H^1(\R^d)-$norm on the monoatomic model which allows us to convert energy estimates into $H^1(\R^d)-$norm estimates. The proof follows the arguments given by Carlen, Frank and Lieb in \cite[Theorem 5.1]{carlen2014stability}.
In the sequel, we denote by 
\[
q_h : v \in H^1(\R^d) \longmapsto \int_{\R^d} \abs{\nabla v(x)}^2 \diff x + \int_{\R^d} V^\mathrm{MF}(x)\abs{v(x)}^2  \diff x \, ,
\]
the quadratic form associated with the mean-field operator $h$. 
\begin{prop}[Stability of the monoatomic model \eqref{eq: min_problem_ow}]
	\label{prop: stability}
	There exists $C>0$ such that for all $v \in H^1(\R^d)$ such that $\norme{v}_{L^2(\R^d)} = 1$, we have
	\begin{gather}
	\label{eq: stability_operator}
	q_h(v) \geq \mu +C \min_{\theta \in \intff{0}{2\pi}} \norme{e^{i\theta} v - u}_{H^1(\R^d)}^2 \et 
	\mathcal{E}(v) \geq \mathcal{E}(u) + C \min_{\theta \in \intff{0}{2\pi}} \norme{e^{i\theta} v - u}_{H^1(\R^d)}^2 \pt
	\end{gather}
\end{prop}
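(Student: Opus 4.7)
Following the template of Carlen--Frank--Lieb cited by the author, my plan is to reduce the nonlinear stability statement to a linear spectral gap on $\{u\}^\perp$ and then promote the resulting $L^2$-coercivity to an $H^1$-one. First I would fix the phase $\theta$ so that $a \coloneqq \pdtsc{u}{e^{i\theta} v}_{L^2(\R^d)} \geq 0$ and decompose $e^{i\theta} v = a u + w$ with $w \perp u$ in $L^2(\R^d)$, so that $a^2 + \norme{w}_{L^2(\R^d)}^2 = 1$. Using $h u = \mu u$ this yields the key identity
\[
q_h(v) - \mu = a^2 \mu + q_h(w) - \mu = q_h(w) - \mu \norme{w}_{L^2(\R^d)}^2,
\]
so everything reduces to a coercivity estimate on the complementary subspace.

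Since $\mu$ is a simple, isolated eigenvalue of $h$ and the essential spectrum of $h$ starts at $0$, one has $\mu_2 \coloneqq \inf \sigma\bigl(h|_{\{u\}^\perp}\bigr) > \mu$, and therefore $q_h(w) \geq \mu_2 \norme{w}_{L^2(\R^d)}^2$ for all $w \perp u$. Independently, because $\abs{\cdot}^{-1}$ is infinitesimally form-bounded with respect to $-\Delta$ and $\abs{u}^2 \ast \abs{\cdot}^{-1} \in L^\infty(\R^d)$ by Lemma~\ref{lemma: technical2}, there exist constants $c_0, K > 0$ such that
\[
q_h(w) \geq c_0 \norme{w}_{H^1(\R^d)}^2 - K \norme{w}_{L^2(\R^d)}^2
\]
for every $w \in H^1(\R^d)$. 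Forming the convex combination $\alpha q_h(w) + (1-\alpha) q_h(w)$ with $\alpha \in (0,1)$ large enough that $\alpha \mu_2 - (1-\alpha) K > \mu$ (possible since $\mu_2 > \mu$) and subtracting $\mu \norme{w}_{L^2}^2$, I obtain
\[
q_h(w) - \mu \norme{w}_{L^2(\R^d)}^2 \geq C \norme{w}_{H^1(\R^d)}^2, \quad w \perp u,
\]
for some $C > 0$.

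It remains to convert $\norme{w}_{H^1}^2$ into $\norme{e^{i\theta}v - u}_{H^1}^2$. Writing $e^{i\theta}v - u = (a-1) u + w$ with $a = \sqrt{1 - \norme{w}_{L^2}^2}$, the elementary bound $1 - \sqrt{1-x} \leq x$ gives $\abs{a-1}^2 \leq \norme{w}_{L^2}^4 \leq \norme{w}_{H^1(\R^d)}^2$ as long as $\norme{w}_{L^2} \leq 1$; in the complementary regime one uses the a priori coercivity $q_h(v) + K \gtrsim \norme{v}_{H^1}^2$ (same form-boundedness argument) together with the coarse estimate $\norme{e^{i\theta}v - u}_{H^1}^2 \lesssim \norme{u}_{H^1}^2 + \norme{v}_{H^1}^2$ to close the gap. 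This establishes the first inequality of \eqref{eq: stability_operator}. The second one then follows immediately from the algebraic identity
\[
\mathcal{E}(v) - \mathcal{E}(u) = q_h(v) - \mu + D\bigl(\abs{v}^2 - \abs{u}^2, \abs{v}^2 - \abs{u}^2\bigr),
\]
combined with the positivity of the Coulomb energy.

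The key technical step, and the main obstacle, is the spectral gap on $\{u\}^\perp$: it rests on the nondegeneracy of $\mu$ (already recorded after \eqref{eq: mean-field_hamiltonian_ow}) and on the strict separation $\mu < 0 = \inf \sigma_{\mathrm{ess}}(h)$. A secondary subtlety is the uniformity of the $H^1$-conversion for $v$ far from $u$, where the perturbative linearization $\abs{a-1} \approx \norme{w}_{L^2}^2/2$ fails and must be replaced by the crude global energy estimate above.
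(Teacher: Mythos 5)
Your proof is correct, and it takes a genuinely more direct route than the paper. The paper first proves a \emph{local} stability lemma (Lemma \ref{lemma: stability_local}), valid only when $\min_\theta \norme{e^{i\theta}v-u}_{L^2(\R^d)}$ is small, because it converts $\norme{w}_{H^1}^2$ into $\norme{\tilde v-u}_{H^1}^2$ by expanding the square and estimating the cross term by $\norme{u}_{H^1}\norme{\tilde v-u}_{H^1}^2\norme{\tilde v-u}_{L^2}$, which is only absorbable when the $L^2$-distance is small; it then globalizes by a contradiction/compactness argument on a hypothetical sequence $(v_n)$ with degenerating constants. You avoid this two-step structure: after fixing the phase so that $a=\pdtsc{u}{e^{i\theta}v}_{L^2}\geq 0$ and writing $e^{i\theta}v=au+w$ with $w\perp u$, the constraint $a^2+\norme{w}_{L^2}^2=1$ gives $\abs{a-1}\leq\norme{w}_{L^2}^2$ \emph{globally}, so the passage from $\norme{w}_{H^1}$ to $\norme{e^{i\theta}v-u}_{H^1}$ needs no smallness at all, and the coercivity $q_h(w)-\mu\norme{w}_{L^2}^2\geq C\norme{w}_{H^1}^2$ on $\{u\}^\perp$ follows from exactly the same two ingredients the paper uses (the spectral gap above the simple isolated eigenvalue $\mu$, and the form-equivalence $q_h+C_\epsilon\gtrsim\norme{\cdot}_{H^1}^2$). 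Your argument therefore yields the statement in one pass with an explicit constant, which is arguably cleaner; the paper's route is the one it imports from Carlen--Frank--Lieb and is more robust when the phase/cross-term bookkeeping is less favorable, but here nothing is lost by the direct approach. One small remark: the ``complementary regime'' $\norme{w}_{L^2}>1$ you guard against is empty, since $\norme{w}_{L^2}^2=1-a^2\leq 1$ automatically; this is fortunate, because the fallback you sketch there (coarse energy bounds alone) would not by itself produce the inequality, but it is never needed. The reduction of the second inequality to the first via $\mathcal{E}(v)-\mathcal{E}(u)=q_h(v)-\mu+D(\abs{v}^2-\abs{u}^2,\abs{v}^2-\abs{u}^2)\geq q_h(v)-\mu$ is identical to the paper's.
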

First, we show that, up to a additive constant, the quadratic form associated with $h$ is equivalent to the $H^1(\R^d)-$norm.
\begin{lemma}[Properties of $v \mapsto q_h(v)$]
	\label{lemma: properties_q_h}
	For all $\epsilon \in \intoo{0}{1}$, there exists constants $C_\epsilon,C_\epsilon'>0$ such that for all $v \in H^1(\R^d)$, we have
	\begin{align}
	\label{eq: q_h_equivalent}
	(1-\epsilon)\norme{v}^2_{H^1(\R^d)} \leq q_h(v) + C_\epsilon \norme{v}^2_{L^2(\R^d)} \leq C_\epsilon' \norme{v}^2_{H^1(\R^d)} \pt
	\end{align}
\end{lemma}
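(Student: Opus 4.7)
The plan is to decompose $V^\mathrm{MF} = -|\cdot|^{-1} + |u|^2 \ast |\cdot|^{-1}$ into its two components and treat each separately, exploiting that the positive convolution piece is bounded while the negative Coulomb piece is infinitesimally form-bounded with respect to $-\Delta$.

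For the upper bound, I would first drop the negative Coulomb term since it contributes non-positively, yielding
\[
q_h(v) \leq \norme{\nabla v}_{L^2(\R^d)}^2 + \int_{\R^d} \bigl(\abs{u}^2 \ast |\cdot|^{-1}\bigr) \abs{v}^2.
\]
Applying Lemma \ref{lemma: technical2} with $r = \infty$ (taking both arguments equal to $u \in H^1(\R^d)$) gives $\abs{u}^2 \ast |\cdot|^{-1} \in L^\infty(\R^d)$. Writing $M \coloneqq \norme{\abs{u}^2 \ast |\cdot|^{-1}}_{L^\infty(\R^d)}$, we obtain $q_h(v) \leq (1+M)\norme{v}_{H^1(\R^d)}^2$, and this trivially yields the right-hand inequality in \eqref{eq: q_h_equivalent} with $C_\epsilon' = 1 + M + C_\epsilon$ (since $C_\epsilon \norme{v}_{L^2}^2 \leq C_\epsilon \norme{v}_{H^1}^2$).

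For the lower bound, the key input is the infinitesimal form-boundedness of the Coulomb potential: for every $\epsilon \in \intoo{0}{1}$, there exists $C_\epsilon > 0$ such that for all $v \in H^1(\R^d)$,
\[
\int_{\R^d} \frac{\abs{v(x)}^2}{\abs{x}} \diff x \leq \epsilon \norme{\nabla v}_{L^2(\R^d)}^2 + C_\epsilon \norme{v}_{L^2(\R^d)}^2.
\]
In dimension $d=3$ this is standard (see \cite[Theorem X.19]{reed1975methodsII}). In dimension $d=2$ it follows from the decomposition $\abs{x}^{-1} = \abs{x}^{-1} \mathds{1}_{\abs{x} \leq 1} + \abs{x}^{-1} \mathds{1}_{\abs{x}>1}$: the tail part is bounded by $1$, while the singular part belongs to $L^p_\mathrm{loc}$ for any $p < 2$, and Hölder's inequality combined with the Sobolev embedding $H^1(\R^2) \subset L^q(\R^2)$ for all finite $q$ and the Gagliardo-Nirenberg interpolation $\norme{v}_{L^q}^2 \lesssim \norme{v}_{L^2}^{2\theta}\norme{\nabla v}_{L^2}^{2(1-\theta)}$ (with appropriate $\theta$, followed by Young's inequality with parameter $\epsilon$) yields the claim. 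Since the convolution term is non-negative, we deduce
\[
q_h(v) \geq \norme{\nabla v}_{L^2(\R^d)}^2 - \int_{\R^d} \frac{\abs{v}^2}{\abs{x}} \geq (1-\epsilon)\norme{\nabla v}_{L^2(\R^d)}^2 - C_\epsilon \norme{v}_{L^2(\R^d)}^2,
\]
and adding $(1 + C_\epsilon) \norme{v}_{L^2(\R^d)}^2$ on both sides gives the left-hand inequality in \eqref{eq: q_h_equivalent} after relabelling the constant.

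The proof is essentially a bookkeeping exercise; there is no real obstacle. The only point requiring mild attention is the $d=2$ form bound, where the interpolation exponent must be chosen so that Young's inequality absorbs the gradient with an arbitrarily small constant — but once the decomposition at $\abs{x} = 1$ is made, Sobolev embedding makes this routine.
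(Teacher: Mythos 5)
Your proof is correct and follows essentially the same route as the paper: bound the convolution term via $\abs{u}^2 \ast |\cdot|^{-1} \in L^\infty(\R^d)$ (Lemma \ref{lemma: technical2}) and use the infinitesimal form-boundedness of $|\cdot|^{-1}$ relative to $-\Delta$, which the paper simply cites from \cite[Chap.~11]{lieb2001analysis} for both $d=2$ and $d=3$. The only difference is that you supply an in-line proof of the $d=2$ form bound (decomposition at $\abs{x}=1$, H\"older, Gagliardo--Nirenberg, Young), which is a valid substitute for the citation.
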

\begin{proof}
	By \cite[Chap. 11]{lieb2001analysis}, for any $\epsilon >0$, there exists $C_\epsilon>0$ such that
	\[
	\abs{\int_{\R^d} \frac{\abs{v(x)}^2}{\abs{x}} \diff x} \leq \epsilon \norme{\nabla v}^2_{L^2(\R^d)} + C_\epsilon \norme{v}^2_{L^2(\R^d)} \, ,
	\]
	for all $v \in H^1(\R^d)$.
	Recall that $\abs{u}^2 \ast |\cdot|^{-1} \in L^\infty(\R^d)$ by Lemma \ref{lemma: technical2}, then we have
	\[
	0 \leq \int_{\R^d} \left(\abs{u}^2 \ast |\cdot|^{-1}\right) \abs{v}^2 \leq \norme{\abs{u}^2 \ast |\cdot|^{-1}}_{L^\infty(\R^d)} \norme{v}^2_{L^2(\R^d)} \pt
	\]
	Estimate \eqref{eq: q_h_equivalent} follows from these two estimates.
\end{proof}
The next lemma is a local version of Proposition \ref{prop: stability}.
\begin{lemma}
	\label{lemma: stability_local}
	There exists $\delta,C>0$ such that for all $v \in H^1(\R^d)$ with $\norme{v}_{L^2(\R^d)} = 1$, we have
	\[
	\min_{\theta \in \intff{0}{2\pi}} \norme{e^{i\theta} v - u}_{L^2(\R^d)} \leq \delta \implies
	q_h(v) \geq \mu +C \min_{\theta \in \intff{0}{2\pi}} \norme{e^{i\theta} v - u}^2_{H^1(\R^d)} \pt
	\]
\end{lemma}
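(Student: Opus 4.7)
The plan is to perform a second-order expansion of $q_h$ around $u$, using $hu = \mu u$, and then combine the spectral gap of $h$ above $\mu$ with the $H^1$-ellipticity provided by Lemma \ref{lemma: properties_q_h}. Since $q_h$ is phase invariant, I may at the outset replace $v$ by $\tilde v \coloneqq e^{i\theta^\star}v$ where $\theta^\star$ realizes $\min_\theta \norme{e^{i\theta}v - u}_{L^2(\R^d)}$, so that $c \coloneqq \pdtsc{\tilde v}{u}_{L^2(\R^d)}$ is real and non-negative. Decomposing $\tilde v = c u + w$ with $w \perp u$ in $L^2(\R^d)$, Pythagoras gives $c^2 + \norme{w}^2_{L^2} = 1$ and $\norme{w}_{L^2} \leq \norme{\tilde v - u}_{L^2} \leq \delta$. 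Expanding the associated bilinear form and using $hu = \mu u$, $\pdtsc{u}{w}_{L^2} = 0$ and the reality of $c$ and $u$, the cross-term vanishes and
\[
q_h(\tilde v) = c^2\mu + q_h(w), \qquad q_h(\tilde v) - \mu = q_h(w) - \mu \norme{w}^2_{L^2} = \pdtsc{(h - \mu) w}{w}_{L^2(\R^d)}.
\]

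The spectral gap of $h$ then provides $\gamma > 0$ such that $\pdtsc{(h - \mu)w}{w}_{L^2(\R^d)} \geq \gamma \norme{w}^2_{L^2(\R^d)}$ for every $w \perp u$. Combining this $L^2$-coercivity with the $H^1$-coercivity $q_h(w) + C_\epsilon \norme{w}^2_{L^2} \geq (1-\epsilon) \norme{w}^2_{H^1}$ of Lemma \ref{lemma: properties_q_h} via a convex combination (with weight chosen so that the net coefficient of $\norme{w}^2_{L^2}$ is non-negative) produces a constant $C_1 > 0$, independent of $v$, with $q_h(\tilde v) - \mu \geq C_1 \norme{w}^2_{H^1(\R^d)}$. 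To transfer this bound to $\tilde v - u = (c - 1) u + w$, the elementary inequality $\sqrt{1 - s} \geq 1 - s$ on $\intff{0}{1}$ gives $(c - 1)^2 \leq \norme{w}^4_{L^2} \leq \delta^2 \norme{w}^2_{L^2}$, hence by the triangle inequality
\[
\norme{\tilde v - u}^2_{H^1(\R^d)} \leq 2 (c - 1)^2 \norme{u}^2_{H^1} + 2 \norme{w}^2_{H^1} \leq \bigl(2 \delta^2 \norme{u}^2_{H^1} + 2\bigr) \norme{w}^2_{H^1}.
\]
Combining the two bounds and using $q_h(\tilde v) = q_h(v)$ and $\norme{\tilde v - u}_{H^1} \geq \min_\theta \norme{e^{i\theta} v - u}_{H^1}$ yields the announced estimate, provided $\delta$ is chosen sufficiently small so that the left-hand side in the transfer step controls the right-hand side with a constant bounded away from zero.

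The main obstacle is to justify the spectral gap of $h$ above $\mu$ used above, that is, the existence of $\gamma > 0$ with $(h - \mu)\restreinta_{\{u\}^\perp} \geq \gamma$. This is standard: since $\mu$ is the simple, isolated ground-state eigenvalue of the self-adjoint operator $h$ whose essential spectrum is $\intfo{0}{\infty}$, we have $\Sp(h) \setminus \{\mu\} \subset \intfo{\mu + \gamma}{\infty}$ for some $\gamma > 0$, and the required $L^2$-coercivity on $\{u\}^\perp$ follows from the spectral theorem applied to $h - \mu$.
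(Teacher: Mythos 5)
Your proof is correct, and its skeleton is the same as the paper's: fix the optimal phase so that $c=\pdtsc{\tilde v}{u}_{L^2(\R^d)}\geq 0$, decompose $\tilde v = cu+w$ with $w\perp u$, kill the cross term via $hu=\mu u$, and then combine the spectral gap $q_{h-\mu}(w)\geq G\norme{w}_{L^2}^2$ with the coercivity of Lemma \ref{lemma: properties_q_h} through a convex combination to get $q_h(v)-\mu\geq C_1\norme{w}^2_{H^1(\R^d)}$. The only genuine divergence is the final transfer step: the paper expands $\norme{w}^2_{H^1}$ around $\norme{\tilde v-u}^2_{H^1}$ and must absorb a cross term, which is exactly where the smallness of $\delta$ (namely $\delta<\norme{u}_{H^1}^{-1}$) is used; you instead go the other way, bounding $\norme{\tilde v-u}^2_{H^1}\leq 2(1-c)^2\norme{u}^2_{H^1}+2\norme{w}^2_{H^1}$ and controlling $(1-c)^2\leq\norme{w}^4_{L^2}$ via $\sqrt{1-s}\geq 1-s$. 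This is a cleaner endgame: since $\norme{w}_{L^2}\leq 1$ automatically, your argument in fact needs no smallness at all (the closing caveat about choosing $\delta$ small is superfluous — any $\delta$ yields a positive, $\delta$-dependent constant), so it proves the first inequality of Proposition \ref{prop: stability} directly and would let one bypass the compactness-by-contradiction step there, at least for the $q_h$ part. One cosmetic point: the identity $q_h(\tilde v)-\mu=\pdtsc{(h-\mu)w}{w}_{L^2(\R^d)}$ should be read as the quadratic form $q_{h-\mu}(w)$, since $w$ need only lie in the form domain $H^1(\R^d)$, not in $\mathcal{D}(h)$; with that reading, the spectral-theorem lower bound you invoke is exactly the one the paper uses.
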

\begin{proof}
	Let $v \in H^1(\R^d)$ such that $\norme{v}_{L^2(\R^d)}=1$. Let $\theta \in \intff{0}{2\pi}$ such that
	\[
	\norme{e^{i\theta} v - u}_{L^2(\R^d)} =  \min_{\tilde{\theta} \in \intff{0}{2\pi}} \norme{e^{i\tilde{\theta}} v - u}_{L^2(\R^d)}\pt
	\]
	We have $q_h(v) = \mu + q_{h - \mu}(e^{i\theta}v)$.	We denote $\tilde{v} = e^{i\theta}v$ and we have $\pdtsc{\tilde{v}}{u}_{L^2(\R^d)} \in \R_+$.
	Let $w  = \tilde{v} - \pdtsc{u}{\tilde{v}}_{L^2(\R^d)}u$ be the orthogonal projection in $L^2(\R^d)$ of $\tilde{v}$ on the subspace $(\spn u)^\perp$. An integration by parts shows $q_{h-\mu}(\tilde{v}) = q_{h-\mu}(w)$. Let $\epsilon \in \intoo{0}{1}$. By Lemma \ref{lemma: properties_q_h}, there exists a constant $C_\epsilon\geq \abs{\mu}$ such that $q_h + C_\epsilon \geq \epsilon \norme{\cdot}_{H^1(\R^d)}$. By the functional calculus, we also have $q_{h-\mu}(\tilde{v}) \geq G \norme{w}^2_{L^2(\R^d)}$ where $G>0$ denotes the spectral gap between $\mu$ and the remaining spectrum of $h$. These three previous estimates imply
	\begin{align}
	\label{eq: 02_04_2020}
	q_{h - \mu}(\tilde{v}) \geq c_\epsilon \left(q_h(w) + C_\epsilon\norme{w}_{L^2(\R^d)}^2\right) \geq c_\epsilon \epsilon \norme{w}_{H^1(\R^d)}^2 \, ,
	\end{align}
	with $c_\epsilon = \frac{G}{G+\mu+C_\epsilon}$. Moreover, we have
	\begin{align*}
	\norme{w}^2_{H^1(\R^d)} 
	& = \norme{\tilde{v}-u}^2_{H^1(\R^d)} + \abs{1 - \pdtsc{u}{\tilde{v}}_{L^2(\R^d)}}^2 \norme{u}_{H^1(\R^d)}^2 + 2 \Re\left(\pdtsc{\tilde{v}-u}{\tilde{v}}_{L^2(\R^d)}\pdtsc{\tilde{v} -u}{u}_{H^1(\R^d)}\right) \\
	& \geq\norme{\tilde{v}-u}^2_{H^1(\R^d)} +2 \Re\left(\pdtsc{\tilde{v}-u}{\tilde{v}}_{L^2(\R^d)}\pdtsc{\tilde{v} -u}{u}_{H^1(\R^d)}\right) \, ,
	\end{align*}
	where $\pdtsc{\cdot}{\cdot}_{H^1(\R^d)}$ denotes the usual scalar product in $H^1(\R^d)$.
	Using the Cauchy-Schwarz inequality and the identity (recall that $\pdtsc{u}{\tilde{v}}_{L^2(\R^d)} \in \R_+$)
	\[
	\pdtsc{\tilde{v}-u}{\tilde{v}}_{L^2(\R^d)} = \frac{1}{2}\pdtsc{\tilde{v}-u}{\tilde{v}-u}_{L^2(\R^d)}\, ,
	\]
	we obtain the bound
	\begin{align*}
	\abs{2\Re\left(\pdtsc{\tilde{v}-u}{\tilde{v}}_{L^2(\R^d)}\pdtsc{\tilde{v} -u}{u}_{H^1(\R^d)}\right)} 
	&\leq \norme{u}_{H^1(\R^d)} \norme{\tilde{v}-u}_{H^1(\R^d)} \norme{\tilde{v}-u}_{L^2(\R^d)}^2 \\
	&\leq \norme{u}_{H^1(\R^d)} \norme{\tilde{v}-u}_{H^1(\R^d)}^2 \norme{\tilde{v}-u}_{L^2(\R^d)} \pt 
	\end{align*}
	If $\norme{\tilde{v}-u}_{L^2(\R^d)} \leq (1-\beta) \norme{u}_{H^1(\R^d)}^{-1} \eqqcolon \delta $ for some $\beta \in \intoo{0}{1}$, we get
	\[
	q_h(v) \geq \mu + c_\epsilon\beta\epsilon \norme{\tilde{v} - u}_{H^1(\R^d)}^2 \geq \mu + c_\epsilon\beta \epsilon\min_{\theta \in \intff{0}{2\pi}} \norme{e^{i\theta} v - u}_{H^1(\R^d)}^2\, ,
	\]
	which concludes the proof of Lemma \ref{lemma: stability_local}.
\end{proof}
\begin{proof}[Proof of Proposition \ref{prop: stability}]
	First, using the positive-definiteness of $D(\cdot, \cdot)$, we find
	\[
	\mathcal{E}(v) = \mathcal{E}(u) + q_h(v) - \mu \norme{v}_{L^2(\R^2)}^2 + D(\abs{u}^2 - \abs{v}^2,\abs{u}^2 - \abs{v}^2) \geq \mathcal{E}(u) + q_h(v) - \mu \norme{v}_{L^2(\R^2)}^2 \pt
	\]	
	Then the right side of \eqref{eq: stability_operator} is a consequence of its left side.	
	To show the left part of \eqref{eq: stability_operator}, we argue by contradiction and assume that there exists a sequence $(\delta_n)_{n\in \N}$ of positive numbers and a sequence $(v_n)_{n\in\N}$ of functions in $H^1(\R^d)$ such that
	\begin{gather*}
	\delta_n \limit{n\to\infty} 0\, , \quad \norme{v_n}_{L^2(\R^d)}=1 \et
	q_h(v_n) \leq \mu + \delta_n \min_{\theta \in \intff{0}{2\pi}} \norme{v_n - u}^2_{H^1(\R^d)} \pt
	\end{gather*}
	First we show that $\min_{\theta \in \intff{0}{2\pi}} \norme{v_n - u}^2_{H^1(\R^d)}$ is bounded uniformly in $n$.
	For $n$ large enough, we have
	\[
	q_h(v_n) \leq \mu + \frac{1}{2} \min_{\theta \in \intff{0}{2\pi}} \norme{e^{i\theta}v_n - u}^2_{H^1(\R^d)} \pt
	\]
	Moreover, by the Cauchy-Schwarz inequality, we have
	\begin{align*}
	\min_{\theta \in \intff{0}{2\pi}} \norme{e^{i\theta}v_n - u}^2_{H^1(\R^d)} 
	&\leq \norme{u}^2_{H^1(\R^d)} + \norme{v_n}^2_{H^1(\R^d)} + 2\norme{u}_{H^1(\R^d)}\norme{v_n}_{H^1(\R^d)} \\
	&\leq 3\norme{u}^2_{H^1(\R^d)} + \frac{3}{2}\norme{v_n}^2_{H^1(\R^d)} \pt
	\end{align*}
	By Lemma \ref{lemma: properties_q_h}, there exists a constant $C'>0$ such that $\pdtsc{v_n}{h v_n}_{L^2(\R^d)} \geq \frac{7}{8} \norme{v_n}^2_{H^1(\R^d)} - C'$. Then, we have shown that there exists $C'' >0$ such that $\frac{1}{8} \norme{v_n}^2_{H^1(\R^d)} \leq C''$	for all $n\in\N$. As a consequence, we have $\lim\limits_{n\to\infty} q_h(v_n) = \mu$ and, by the functional calculus and the fact $\mu$ is non degenerate, we deduce 
	\begin{align}
	\label{eq: croatian_reggea}
	\min_{\theta \in \intff{0}{2\pi}}\norme{e^{i\theta}v_n - u}_{L^2(\R^d)} = 2\left(1 - \abs{\pdtsc{v_n}{u}_{L^2(\R^d)}}\right) \limit{n \to \infty} 0 \pt
	\end{align}
	Let $\epsilon \in \intoo{0}{1}$ and  $C_\epsilon \geq \abs{\mu}$ as in Lemma \ref{lemma: properties_q_h}. An integration by parts shows
	\begin{align*}
	(1-\epsilon)\min_{\theta \in \intff{0}{2\pi}} \norme{e^{i\theta}v_n - u}_{H^1(\R^d)}^2 
	&\leq q_{h+C_\epsilon}(v_n) + q_{h+C_\epsilon}(u) - 2(\mu+C_\epsilon) \abs{\pdtsc{v_n}{u}_{L^2(\R^d)}} \pt
	\end{align*}
	Taking the limit, we obtain
	\[
	\lim\limits_{n\to \infty}\min_{\theta \in \intff{0}{2\pi}} \norme{e^{i\theta}v_n - u}_{H^1(\R^d)}^2 \leq \frac{2\left(\mu + C_\epsilon\right)}{1-\epsilon}\lim\limits_{n\to \infty}\left(1 - \abs{\pdtsc{v_n}{u}_{L^2(\R^d)}}\right)  = 0 \, ,
	\]
	which is a contradiction by Lemma \ref{lemma: stability_local}. This concludes the proof of Proposition \ref{prop: stability}.	
\end{proof}

\subsection{Rate of convergence for \texorpdfstring{$u_n^+$}{un+}}
\begin{prop}[Rate of convergence for $u_n^+$]
	\label{prop: rate_convergence}
	For any $\epsilon \in \intoo{0}{3}$, there exists a constant $C_\epsilon>0$ such that the following estimates hold
	\begin{gather}
	\norme{\abs{u(\cdot + \mathbf{x}_n)}^2 - \abs{\chi_n^\ell u_n^+}^2}_{L^1(\R^d)} = \norme{\abs{u(\cdot - \mathbf{x}_n)}^2 - \abs{\chi_n^r u_n^+}^2}_{L^1(\R^d)} \leq
	\begin{cases}
	C_\epsilon L_n^{-3 + \epsilon} & \text{if }d=2 \, , \\
	\grandO{L_n^{-\infty}} & \text{if }d=3 \, ,
	\end{cases} \\
	\label{eq: first a priori estimate}
	\norme{u(\cdot + \mathbf{x}_n) - \chi_n^\ell u_n^+}_{H^1(\R^d)} = \norme{u(\cdot - \mathbf{x}_n) - \chi_n^ru_n^+}_{H^1(\R^d)}  \leq
	\begin{cases}
	C_\epsilon L_n^{-3 + \epsilon} & \text{if }d=2 \, , \\
	\grandO{L_n^{-\infty}} & \text{if }d=3 \pt
	\end{cases}
	\end{gather}
\end{prop}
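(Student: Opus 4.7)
The plan is to combine the IMS decomposition of $\mathcal{E}_n(u_n^+)$, the trial-state upper bound from the proof of Proposition~\ref{prop: strong_convergence_H1}, the stability estimate of Proposition~\ref{prop: stability}, and a careful cancellation of cross-interaction terms using Lemma~\ref{lemma: interaction_left_right_d=3} together with the Euler-Lagrange equation~\eqref{eq: euler_lagrange_equation_one_nucleus}. To set notation, let $v_n = \tau_n^r(\chi_n^\ell u_n^+)$, so that by reflection symmetry $\norme{v_n}_{L^2(\R^d)}^2 = 1$ and $\mathcal{E}(v_n) = \mathcal{E}_n^\ell(\chi_n^\ell u_n^+) = \mathcal{E}_n^r(\chi_n^r u_n^+)$, and write $\delta_n = \norme{v_n - u}_{H^1(\R^d)} = \norme{\chi_n^\ell u_n^+ - u_n^\ell}_{H^1(\R^d)}$. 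Since $v_n, u > 0$, the phase minimization in \eqref{eq: stability_operator} is trivial and yields $C\delta_n^2 \leq \mathcal{E}(v_n) - \mathcal{E}(u)$.

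Following the IMS calculation from the proof of Proposition~\ref{prop: strong_convergence_H1},
\begin{gather*}
\mathcal{E}_n(u_n^+) = 2 \mathcal{E}(v_n) + 2 \int_{\R^d} V_n^\ell \abs{\chi_n^r u_n^+}^2 + 2 D\bigl(\abs{\chi_n^\ell u_n^+}^2, \abs{\chi_n^r u_n^+}^2\bigr) + \frac{1}{L_n} + \grandO{e^{-cL_n}},
\end{gather*}
where the localization error is exponentially small by the pointwise decay of $u_n^+$ from Proposition~\ref{cor: exponential decay u^+_n}. I would then compare the two cross terms with their $u_n^\ell, u_n^r$ counterparts from Lemma~\ref{lemma: interaction_left_right_d=3}, using the factorization $\abs{\chi_n^\kappa u_n^+}^2 - \abs{u_n^\kappa}^2 = (\chi_n^\kappa u_n^+ - u_n^\kappa)(\chi_n^\kappa u_n^+ + u_n^\kappa)$. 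A careful regrouping, in which the potential $V_n^r$ is combined with the Coulomb self-potential $\abs{u_n^r}^2 \ast \abs{\cdot}^{-1}$ into $\tau_n^r V^\mathrm{MF}$ -- which by Lemma~\ref{lemma: long range behavior_MF_potential} decays like $m_1/(4|\cdot-\mathbf{x}_n|^3)$ near the other nucleus -- and in which the remaining linear-in-$r_n$ contribution is eliminated via the Euler-Lagrange equation \eqref{eq: euler_lagrange_equation_one_nucleus} for $u_n^\ell$, shows that the linear-in-$\delta_n$ error is actually $\grandO{L_n^{-3}\delta_n}$ in dimension two; in dimension three Newton's theorem makes the corresponding terms exponentially small. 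The quadratic-in-$\delta_n$ remainder is controlled via Lemma~\ref{lemma: technical2}. Subtracting the $9m_1^2/(16L_n^5) + \grandO{L_n^{-7}}$ contribution common to both upper and lower bounds and invoking stability gives
\begin{gather*}
\delta_n^2 \leq C \bigl( L_n^{-7} + L_n^{-3}\delta_n \bigr) + \grandO{e^{-cL_n}} \quad \text{if } d=2, \qquad \delta_n^2 \leq C e^{-cL_n} \quad \text{if } d=3.
\end{gather*}

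In dimension three the conclusion follows immediately. In dimension two, solving the quadratic inequality gives $\delta_n = \grandO{L_n^{-3}} \subseteq \grandO{L_n^{-3+\epsilon}}$ for every $\epsilon > 0$. The $L^1$-bound on $\norme{\abs{u_n^\ell}^2 - \abs{\chi_n^\ell u_n^+}^2}_{L^1(\R^d)}$ follows from the same factorization, Cauchy-Schwarz, and the uniform $L^2$-boundedness (Corollary~\ref{cor: boundedness_u}) of $\chi_n^\ell u_n^+$ and $u_n^\ell$. The main obstacle is extracting the $L_n^{-3}\delta_n$ coefficient in the linear error: the naive bound using only $\abs{V_n^\ell} \lesssim 1/L_n$ on the support of $\chi_n^r u_n^+$ produces an $L_n^{-1}\delta_n$ coefficient, which together with stability yields only the useless bound $\delta_n \lesssim L_n^{-1}$. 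The sharper estimate requires combining potential and Coulomb contributions into $\tau_n^r V^\mathrm{MF}$, absorbing the residual linear-in-$r_n$ piece via the monoatomic Euler-Lagrange equation, and using the multipole expansion of Lemma~\ref{lemma: long range behavior_MF_potential} to bound the remainder; a bootstrap iteration feeding the rough rate back into the quadratic-in-$\delta_n$ error terms may be needed, which could account for the $\epsilon$-loss appearing in the statement.
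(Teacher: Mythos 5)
Your strategy is the same as the paper's: IMS localization plus the trial-state upper bound, the stability estimate of Proposition \ref{prop: stability}, and the crucial cancellation of the leading $1/L_n$ cross-interaction by combining $V_n^r$ with $\abs{u_n^r}^2\ast|\cdot|^{-1}$ into $\tau_n^r V^{\mathrm{MF}}=\grandO{\abs{\cdot-\mathbf{x}_n}^{-3}}$. The organizational difference is that you try to close a single quadratic inequality in $\delta_n$, using $\norme{\abs{v_n}^2-\abs{u}^2}_{L^1}\leq\norme{v_n-u}_{L^2}\norme{v_n+u}_{L^2}\lesssim\delta_n$, whereas the paper bootstraps on the $L^1$ norm of the density difference localized in a ball of radius $R_n=L_n^{\gamma}$; your variant, once completed, would actually give $\grandO{L_n^{-3}}$ without the $\epsilon$-loss (the paper loses the $\epsilon$ through $\norme{v_{n,1}}_{L^1}\lesssim R_n\delta_n$ and the fixed-point iteration $\nu\mapsto\gamma(\nu)$). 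The mention of the Euler--Lagrange equation \eqref{eq: euler_lagrange_equation_one_nucleus} in the regrouping is misplaced, though harmless: it only serves, in the trial-state upper bound, to reduce overlap terms to exponentially small quantities; the linear-in-$r_n$ error is killed by the $V^{\mathrm{MF}}$ cancellation together with the reflection identity at the point $2\mathbf{x}_n$, not by the Euler--Lagrange equation.

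The genuine gap is in the quadratic remainder. After the cancellation, the difference of the cross terms contains $D\bigl(\abs{\chi_n^\ell u_n^+}^2-\abs{u_n^\ell}^2,\ \abs{\chi_n^r u_n^+}^2-\abs{u_n^r}^2\bigr)$, quadratic in the density error $r_n$. Lemma \ref{lemma: technical2} alone bounds this only by $C_1\delta_n^2$ with a constant $C_1$ that has nothing to do with (and in general exceeds) the stability constant $C$ of Proposition \ref{prop: stability}; such a term cannot be absorbed into the left-hand side, so your inequality $\delta_n^2\leq C(L_n^{-7}+L_n^{-3}\delta_n)$ does not follow as written, and the bootstrap you hedge with cannot repair an $O(1)\delta_n^2$ error either. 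What supplies the missing smallness is spatial localization: split $r_n$ into a part supported in $B(0,cL_n)$ and an exponentially small tail (Propositions \ref{prop: u_exponential_falloff} and \ref{cor: exponential decay u^+_n}), so that the Coulomb kernel is evaluated at distance $\geq L_n/2$ from the support of the convolution and contributes a $1/L_n$ prefactor, i.e. the quadratic remainder is $\grandO{L_n^{-1}\delta_n^2}$ -- this is precisely the role of the cutoff $\chi_n$ and of estimate \eqref{eq: 1101} in the paper. The same localization is needed to legitimize your linear bound $\grandO{L_n^{-3}\delta_n}$: the decay $\abs{V^{\mathrm{MF}}}\lesssim\abs{\cdot}^{-3}$ of Lemma \ref{lemma: long range behavior_MF_potential} is only available at distance of order $L_n$ from the far nucleus, and since $\abs{\cdot}^{-3}$ is not locally integrable in $\R^2$ the far tail of $r_n$ must be treated separately (it is exponentially small). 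With these localization steps added, your single quadratic inequality closes and the stated rates follow.
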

\begin{proof}
	We denote $v_n \coloneqq \tau_n^r \chi_n^\ell u_n^+$ (the right translation operator $\tau_n^r$ and the function are defined in the beginning of Section \ref{sec:construction-of-quasi-modes}). Notice that most of the mass of $v_n$ is localized in the vicinity of the origin. Using the reflection symmetry $\mathcal{R}$, we have $\norme{v_n}_{L^2(\R^d)} = 1$. 
	Recall the notation $V_n^\ell$, $V_n^r$, $u_n^\ell$ and $u_n^r$ defined in \eqref{eq: definition_translated_potential}.
	As in the proof of Proposition \ref{prop: strong_convergence_H1}, we look for a lower bound on $\mathcal{E}_n(u_n^+)$ using the IMS formula but without discarding the positive term $D(\abs{\chi_n^\ell u_n^+}^2,\abs{\chi_n^r u_n^+}^2)$. We find
	\begin{align*}
	\mathcal{E}_n(u_n^+)
	= &~2 \mathcal{E}(v_n) +2 \left(\int_{\R^d} V_n^r \abs{\chi^\ell_n u_n^+}^2 + D(\abs{\chi^\ell_n u_n^+}^2,\abs{\chi^r_nu_n^+}^2)  + \frac{1}{2L_n} - \int_{\R^d} \abs{\nabla \chi^\ell_n}^2 \abs{u_n^+}^2\right)  \\
	\geq&~ 2\mathcal{E}(u) + 2 \left(\int_{\R^d} V_n^r \abs{\chi^\ell_n u_n^+}^2 + D(\abs{\chi^\ell_n u_n^+}^2,\abs{\chi^r_nu_n^+}^2)  + \frac{1}{2L_n} - \int_{\R^d} \abs{\nabla \chi^\ell_n}^2 \abs{u_n^+}^2\right) \\
	&+ C \norme{v_n - u}^2_{H^1(\R^d)} \, ,
	\end{align*}
	for some constant $C>0$. To get the lower bound, we have used Proposition \ref{prop: stability} and the positiveness of $v_n$ and $u$. As $\supp \nabla \chi_n^\ell \subset \{\abs{x_1} \leq \sqrt{L_n}\delta \}$, by the exponential decay of $u_n^+$ proved in Proposition~\ref{cor: exponential decay u^+_n}, the term involving $\nabla \chi_n^\ell$ is a $\grandO{L_n^{-\infty}}$.
	We remark that we can express the potential terms as 
	\begin{gather*}
	\int_{\R^d} V_n^r \abs{\chi^\ell_n u_n^+}^2 = - \left(\lvert v_n \rvert^2 \ast |\cdot|^{-1}\right)(2\mathbf{x}_n) \et \\ 
	D(\abs{\chi^\ell_n u_n^+}^2,\abs{\chi^r_nu_n^+}^2) = \frac{1}{2} \left(\lvert v_n \rvert^2 \ast  \mathcal{S}[\lvert v_n\rvert^2 ]  \ast |\cdot|^{-1}\right)(2\mathbf{x}_n) \, ,
	\end{gather*}
	where the transformation $\mathcal{S}$ is defined by $\mathcal{S}[v](x) = v (-Rx)$ with $R$ the reflection matrix with respect to the first coordinate defined in \eqref{eq: def_reflection}. Notice that $S\mathbf{x}_n = \mathbf{x}_n$ and $\mathcal{S}[u] = u$. Hence, we have obtained the lower bound
	\[
	\mathcal{E}_n(u_n^+) \geq 2 \left(\mathcal{E}(u) + q_n(v_n) \right) + C\norme{v_n - u}^2_{H^1(\R^d)} + \grandO{L_n^{-\infty}} \, ,
	\]
	where $q_n$ is a quartic function plus a constant defined by
	\[
	q_n (v) \coloneqq \left(\left( \frac{\abs{v}^2 \ast \mathcal{S}[\abs{v}^2]  }{2} - \abs{v}^2\right) \ast |\cdot|^{-1}\right)(2\mathbf{x}_n) + \frac{1}{2L_n} \pt
	\]
	Let $(R_n)_{n\in\N}$ be an increasing sequence such that $R_n \leq L_n/4$ and $L_n^{\gamma} = \petito{R_n}$ for some $\gamma \in \intoo{0}{1}$.
	In the following, we will choose the constants in the $O$'s independently of the choice of the sequence $R_n$. Let $\chi \in \mathcal{C}^\infty(\R^d)$ be a cutoff function such that
	\begin{align*}
	0 \leq \chi \leq 1\, ,\quad \chi \equiv 1 \quad \text{on} \quad B(0,1/2) \et \supp \chi \subset B(0,1) \pt
	\end{align*}
	Let $\chi_n(x) = \chi(x/R_n)$. We define $v_{n,1} \coloneqq \chi_n (\lvert v_n \rvert^2 - \abs{u}^2)$ and $v_{n,2} \coloneqq (1-\chi_n) (\lvert v_n \rvert^2 - \abs{u}^2)$. Hence, we have the decomposition 
	\begin{align}
	\label{eq: decomposition_cool_and_nice}
	\lvert v_n \rvert^2 = \abs{u}^2 + v_{n,1} + v_{n,2} \pt
	\end{align}
	where $\supp v_{n,1} \subset B(0,R_n)$ and $\supp v_{n,2} \subset B(0,\frac{R_n}{2})^c$.
	In particular, using the fact $u$ and $v_n$ are exponentially decaying away from the origin (see Proposition \ref{prop: u_exponential_falloff} and Proposition \ref{cor: exponential decay u^+_n}), we have $\lVert v_{n,2} \rVert_{W^{2-\alpha,1}(\R^d)} = \grandO{L_n^{-\infty}}$ where $\alpha \in \intoo{0}{1}$ if $d=2$ and $\alpha = 0$ if $d=3$. Moreover, from Proposition \ref{prop: strong_convergence_H1} and Corollary \ref{cor: boundedness_u_better}, there exists $\nu \geq 0$ such that $\lVert v_{n,1}\rVert_{L^1(\R^d)} = \grandO{L_n^{-\nu}}$. We expand $q(v_n)$ with respect to the decomposition \eqref{eq: decomposition_cool_and_nice} and we can show, by means of Sobolev embeddings and Young's inequality, that all the terms involving $v_{n,2}$ are $\grandO{L^{-\infty}_n}$. Then, we have
	\begin{align}
	\label{eq: 11010}
	q_n( v_n ) 
	=& ~q_n(u) + \left[\left(\abs{u}^2 \ast \left(\frac{v_{n,1} + \mathcal{S}[v_{n,1}]}{2}\right) - v_{n,1}\right) \ast |\cdot|^{-1}\right](2\mathbf{x}_n) \\
	&\label{eq: 11011}+ \frac{1}{2} \left[v_{n,1} \ast \mathcal{S}[v_{n,1}] \ast |\cdot|^{-1}\right](2\mathbf{x}_n) + \grandO{L_n^{-\infty}}\pt
	\end{align}
	Assume $d=2$. From the expansion \eqref{potential_expansion4}, we have $(\abs{u}^2 \ast |\cdot|^{-1})(x) = |x|^{-1} + \grandO{|x|^{-3}}$ as $\abs{x} \to \infty$. Recall that $v_{n,1}$ is compactly supported in the ball $B(0,R_n) \subset B(0,L_n/4)$. Then, we use 
	\[
	\left(\mathcal{S}[v_{n,1}]\ast |\cdot|^{-1}\right)(2\mathbf{x}_n) = \left(v_{n,1}\ast |\cdot|^{-1}\right)(2\mathbf{x}_n) \, , 
	\]
	to get
	\begin{align*}
	&\left[\left(\abs{u}^2 \ast \left(\frac{v_{n,1} + \mathcal{S}[v_{n,1}]}{2}\right) - v_{n,1}\right) \ast |\cdot|^{-1}\right](2\mathbf{x}_n) = \grandO{\left(v_{n,1} \ast |\cdot|^{-3}\right)(2\mathbf{x}_n)} \pt
	\end{align*}
	As $\supp v_{n,1} \subset B(0,R_n) \subset B(0,L_n/4)$, the quantity $\left(v_{n,1} \ast |\cdot|^{-3}\right)(2\mathbf{x}_n)$ is well defined (recall that $\abs{2\mathbf{x}_n} = L_n$) and we can estimate it by $\grandO{\lVert v_{n,1}\rVert_{L^1(\R^2)} L_n^{-3}} = \grandO{L_n^{-3-\nu}}$. If $d=3$, we use Newton's theorem instead of expansion \eqref{potential_expansion4} and we can estimate by $\grandO{L_n^{-\nu} T_n} = \grandO{L_n^{-\infty}}$.
	For the remaining term of \eqref{eq: 11011}, we remark that 
	\[
	\supp \left(v_{n,1} \ast \mathcal{S}[v_{n,1}]\right) \subset B(0,2R_n) \subset B(0,L_n/2)\pt
	\]
	Irrespective of the dimension, we have
	\begin{align}
	\label{eq: 1101}
	\abs{\left[v_{n,1} \ast \mathcal{S}[v_{n,1}] \ast |\cdot|^{-1}\right](2\mathbf{x}_n)} = \grandO{ L_n^{-1} \norme{v_{n,1} \ast \mathcal{S}[v_{n,1}]}_{L^1(\R^d)}} = \grandO{L_n^{-1-2\nu}} \pt
	\end{align}
	In the last inequality, we have used Young's inequality. Hence, we end up with
	\begin{align}
	\label{eq: lower_bound_28_12}
	\mathcal{E}_n(u_n^+) \geq 2\left(\mathcal{E}(u) + q_n(u)\right) + C\norme{v_n - u}^2_{H^1(\R^d)} + \grandO{L_n^{-2\gamma(\nu)}} \, ,
	\end{align}
	where
	\begin{align*}
	\gamma(\nu) \coloneqq
	\begin{cases}
	\min\left(\frac{1}{2} + \nu, \frac{3}{2} + \frac{\nu}{2}\right) & \text{if }d=2 \, ,\\
	\frac{1}{2} + \nu & \text{if }d=3 \pt
	\end{cases}
	\end{align*}
	Now, following the proof of Proposition \ref{prop: strong_convergence_H1}, we can obtain the upper bound
	\begin{align}
	\label{eq: upper_bound_28_12}
	\mathcal{E}_n(u_n^+) \leq 2 \left(\mathcal{E}(u) + q_n(u)\right) + \grandO{L_n^{-\infty}} \pt
	\end{align}
	From \eqref{eq: lower_bound_28_12} and \eqref{eq: upper_bound_28_12}, we obtain
	\begin{align}
	\label{eq: 28_12}
	\norme{v_n - u}_{H^1(\R^d)} = \grandO{L_n^{-\gamma(\nu)}} \pt
	\end{align}
	Recall $\supp v_{n,1} \subset B(0,R_n)$ and $\sup_n \norme{v_n}_{L^\infty(\R^d)} <\infty$ (see Proposition \ref{prop: stronger_convergence}). By the Cauchy-Schwarz inequality, we get
	\[
	\norme{v_{n,1}}_{L^1(\R^d)} = \grandO{R_n L_n^{-\gamma(\nu)}} \pt
	\]
	Recall that we can choose $R_n$ of the form $L_n^\gamma/4$ for any $\gamma \in \intoo{0}{1}$ small enough. Then, an induction argument shows that 
	\begin{align*}
	\norme{v_{n,1}}_{L^1(\R^d)} = \begin{cases}
	\grandO{L_n^{-3 + \epsilon}} & \text{if }d=2 \, ,\\
	\grandO{L_n^{-\infty}} & \text{if }d=3 \, ,
	\end{cases} 		
	\end{align*}
	for any $\epsilon >0$ (notice that $3$ is the unique fixed point of the map $\nu \mapsto \gamma(\nu)$ when $d=2$). Recalling estimate~\eqref{eq: 28_12} with $\nu = 3 - \epsilon$ if $d=2$ and $\nu = k$ for any $k\in\N$ if $d=3$ and this concludes the proof of Proposition \ref{prop: rate_convergence}.
\end{proof}

\begin{rem}
	In the three-dimensional case, we are only able to obtain superpolynomial convergence rates but we think they should be exponential.
\end{rem}

\subsection{Rate of convergence for mean-field potentials}
We recall that the mean-field hamiltonians associated with monoatomic and diatomic model are defined by
\begin{gather*}
V^\mathrm{MF} = - \frac{1}{|\cdot|} + \abs{u}^2 \ast |\cdot|^{-1} \et
V_n^\mathrm{MF} = - \left(\frac{1}{\abs{\cdot - \mathbf{x}_n}}+\frac{1}{\abs{\cdot + \mathbf{x}_n}}\right) + \lvert u_n^+ \rvert^2 \ast |\cdot|^{-1} \pt
\end{gather*}
Recall that $\tau_n^r$ (resp. $\tau_n^\ell$) denotes the translation operator by $\mathbf{x}_n$ (resp. $-\mathbf{x}_n$).
\begin{prop}
	\label{lemma: lemme_utile_mais_relou}
	Let $w$ such that for all $x\in\R^d$, we have $\abs{w(x)} \leq c \exp(-\alpha\abs{x})$ where $c,\alpha >0$. Then, there exists a constant $C(c,\alpha,\epsilon)$ such that we have the estimates
	\begin{align*}
	\abs{\int_{\R^d} w \left(\tau_{n}^{r} V_n^\mathrm{MF} - V^\mathrm{MF}\right)}\leq
	\begin{cases}
	C(c,\alpha,\epsilon) L_n^{-3+\epsilon}\left( \norme{w}_{L^1(\R^d)} + \norme{w}_{L^2(\R^d)}\right) + \grandO{L_n^{-\infty}} & \text{if }d=2 \, ,\\
	\grandO{L_n^{-\infty}}\left(1+ \norme{w}_{L^1(\R^d)} + \norme{w}_{L^2(\R^d)}\right) & \text{if }d=3 \pt
	\end{cases} 
	\end{align*}
	for any $\epsilon >0$. The same estimate holds by replacing $\tau_n^r$ with $\tau_n^\ell$.
\end{prop}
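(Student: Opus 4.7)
The plan is to decompose $\tau_n^r V_n^{\mathrm{MF}} - V^{\mathrm{MF}}$ into a local piece, handled through the $L^1$ convergence rate of Proposition \ref{prop: rate_convergence}, and a far-field piece, handled by a multipole/Newton analysis. Using the partition of unity $(\chi_n^\ell)^2 + (\chi_n^r)^2 = 1$ and setting $v_n \coloneqq \tau_n^r(\chi_n^\ell u_n^+)$ and $w_n \coloneqq \tau_n^r(\chi_n^r u_n^+)$, I have $\tau_n^r \lvert u_n^+ \rvert^2 = \lvert v_n \rvert^2 + \lvert w_n \rvert^2$, so that
\begin{align*}
\tau_n^r V_n^{\mathrm{MF}} - V^{\mathrm{MF}} = \underbrace{\bigl(\lvert v_n \rvert^2 - \abs{u}^2\bigr) \ast \abs{\cdot}^{-1}}_{E_1} + \underbrace{\lvert w_n \rvert^2 \ast \abs{\cdot}^{-1} - \frac{1}{\abs{\cdot - 2\mathbf{x}_n}}}_{E_2}\, .
\end{align*}
The $-\abs{\cdot}^{-1}$ singularity of the nucleus sitting at $-\mathbf{x}_n$ cancels against that of $V^{\mathrm{MF}}$ after the $\tau_n^r$ shift; I regroup the surviving nucleus potential $-1/\abs{\cdot - 2\mathbf{x}_n}$ with $\lvert w_n \rvert^2\ast\abs{\cdot}^{-1}$ in $E_2$, since reflection symmetry of $u_n^+$ yields $\norme{\chi_n^r u_n^+}_{L^2(\R^d)}^2 = 1$: the right bubble carries exactly the charge of one nucleus, which is precisely what is needed to kill the leading Coulomb tail.

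For the local piece $E_1$, Fubini rewrites $\int w \cdot E_1 = \int (\lvert v_n \rvert^2 - \abs{u}^2)(w \ast \abs{\cdot}^{-1})$. Splitting $\abs{\cdot}^{-1} = \abs{\cdot}^{-1}\mathds{1}_{B_1} + \abs{\cdot}^{-1}\mathds{1}_{B_1^c}$ and applying Young/Cauchy--Schwarz as in the proof of Lemma \ref{lemma: technical2} --- absorbing into $C(c,\alpha)$ the pointwise bound $\norme{w}_{L^\infty(\R^d)}\leq c$ needed when $d=2$ to handle the local singularity --- yields $\norme{w \ast \abs{\cdot}^{-1}}_{L^\infty(\R^d)} \leq C(c,\alpha)\bigl(\norme{w}_{L^1(\R^d)} + \norme{w}_{L^2(\R^d)}\bigr)$. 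Combining this with the $L^1$ rate $\norme{\lvert v_n \rvert^2 - \abs{u}^2}_{L^1(\R^d)} = \grandO{L_n^{-3+\epsilon}}$ ($d=2$) or $\grandO{L_n^{-\infty}}$ ($d=3$) provided by Proposition \ref{prop: rate_convergence} gives the claimed bound for the $E_1$ contribution.

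For the far-field piece $E_2$, I recentre the right bubble by setting $\tilde\rho_n(z) \coloneqq \lvert w_n(z+2\mathbf{x}_n)\rvert^2$; reflection symmetry and Proposition \ref{prop: rate_convergence} give $\norme{\tilde\rho_n - \abs{u}^2}_{L^1(\R^d)} = \grandO{L_n^{-3+\epsilon}}$ ($d=2$) and $\grandO{L_n^{-\infty}}$ ($d=3$). Writing
\begin{align*}
E_2(x) = \bigl[(\tilde\rho_n - \abs{u}^2) \ast \abs{\cdot}^{-1}\bigr](x-2\mathbf{x}_n) + \Bigl[\abs{u}^2 \ast \abs{\cdot}^{-1} - \frac{1}{\abs{\cdot}}\Bigr](x-2\mathbf{x}_n)\, ,
\end{align*}
the second bracket is $\grandO{\abs{x-2\mathbf{x}_n}^{-3}}$ in $d=2$ by the multipole expansion of Lemma \ref{lemma: long range behavior_MF_potential} and exponentially small in $d=3$ by Newton's theorem; splitting the $x$-integration at $\abs{x}=L_n/2$ and using the pointwise exponential decay of $w$ beyond this radius produces $\grandO{L_n^{-3}}\norme{w}_{L^1(\R^d)}+\grandO{L_n^{-\infty}}$ in $d=2$ and $\grandO{L_n^{-\infty}}(1+\norme{w}_{L^1(\R^d)})$ in $d=3$. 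For the first bracket, Fubini gives $\int (\tilde\rho_n - \abs{u}^2)(z)(w \ast \abs{\cdot}^{-1})(z+2\mathbf{x}_n)\,dz$; a direct estimate using the exponential decay of $w$ shows $(w \ast \abs{\cdot}^{-1})(\zeta)=\grandO{\norme{w}_{L^1(\R^d)}/\abs{\zeta}}$ for large $\abs{\zeta}$, and since the $L^1$-mass of $\tilde\rho_n-\abs{u}^2$ lies, up to an exponentially small error, within a ball of radius $L_n/4$ of the origin, this contribution is $\grandO{L_n^{-4+\epsilon}}\norme{w}_{L^1(\R^d)}$, absorbed into the main bound.

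The principal technical delicacy is the dimension-two multipole analysis of $E_2$: one must simultaneously invoke the exact neutrality $\int \lvert w_n \rvert^2 = 1$ to cancel the leading $1/L_n$ Coulomb tail of the surviving nucleus against the monopole of the recentred electron bubble, and then rely on the quadrupole term in Lemma \ref{lemma: long range behavior_MF_potential} to produce the polynomial rate $L_n^{-3}$. In $d=3$ Newton's theorem makes this cancellation exact up to exponentially small tails, and the whole estimate collapses to superpolynomial decay.
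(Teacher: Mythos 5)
Your decomposition of $\tau_n^r V_n^\mathrm{MF}-V^\mathrm{MF}$ into the local piece $E_1=(\abs{v_n}^2-\abs{u}^2)\ast|\cdot|^{-1}$ and the far piece $E_2$ (recentred right bubble minus the surviving nucleus, further split against $\abs{u}^2$ and the multipole tail $\abs{u}^2\ast|\cdot|^{-1}-|\cdot|^{-1}$ evaluated near $2\mathbf{x}_n$) is essentially the paper's decomposition, and your treatment of $E_2$ — multipole expansion/Newton's theorem plus a cutoff of $w$ at radius $\sim L_n/2$ to handle the singularity at $2\mathbf{x}_n$ — is sound. The gap is in the $d=2$ estimate of $E_1$: the inequality $\norme{w\ast|\cdot|^{-1}}_{L^\infty(\R^2)}\leq C(c,\alpha)\bigl(\norme{w}_{L^1(\R^2)}+\norme{w}_{L^2(\R^2)}\bigr)$ is false. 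Since $|\cdot|^{-1}\mathds{1}_{B_1}\notin L^2(\R^2)$, Cauchy--Schwarz does not control the near part, and using $\norme{w}_{L^\infty}\leq c$ there only yields either an additive constant $2\pi c$ or, by interpolation, a sublinear power $c^{1-2/q}\norme{w}_{L^2}^{2/q}$ with $q>2$ — neither of which is of the required linear form. Concretely, $w_\epsilon(y)=\min(c,\epsilon\abs{y}^{-1})e^{-\alpha\abs{y}}$ satisfies the hypothesis, yet $(w_\epsilon\ast|\cdot|^{-1})(0)\gtrsim \epsilon\log(c/\epsilon)$ while $\norme{w_\epsilon}_{L^1}+\norme{w_\epsilon}_{L^2}\lesssim \epsilon\sqrt{\log(c/\epsilon)}$, so the ratio diverges as $\epsilon\to0$. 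This matters: the point of the proposition (and of its use in the bootstrap of Proposition \ref{prop: rate_convergence_bis}) is precisely that the $L_n^{-3+\epsilon}$ term is \emph{proportional} to $\norme{w}_{L^1}+\norme{w}_{L^2}$, uniformly over all admissible $w$; an additive $C(c)L_n^{-3+\epsilon}$ or a $\norme{w}_{L^2}^{2/q}$ dependence would not give the stated bound, and would degrade the rate obtained for $u_n^-$. (In $d=3$ your $L^\infty$ claim is true, since $|\cdot|^{-1}\mathds{1}_{B_1}\in L^2(\R^3)$, so that case is fine.)

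The repair is the pairing the paper uses: do not put $w\ast|\cdot|^{-1}$ in $L^\infty$; instead put the potential difference in $L^r$. Writing $\abs{v_n}^2-\abs{u}^2=(v_n-u)(v_n+u)$, Lemma \ref{lemma: technical2} together with the $H^1$ rate \eqref{eq: first a priori estimate} of Proposition \ref{prop: rate_convergence} gives $\norme{(\abs{v_n}^2-\abs{u}^2)\ast|\cdot|^{-1}}_{L^r(\R^d)}=\grandO{L_n^{-3+\epsilon}}$ ($\grandO{L_n^{-\infty}}$ if $d=3$) for any $r\in\intoo{d}{\infty}$; then H\"older with the conjugate exponent $r'\in\intoo{1}{2}$ and the interpolation bound $\norme{w}_{L^{r'}(\R^d)}\leq \norme{w}_{L^1(\R^d)}+\norme{w}_{L^2(\R^d)}$ yields the $E_1$ contribution in exactly the stated linear form. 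With that substitution the rest of your argument goes through.
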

\begin{proof}
	Let $r \in \intoo{d}{\infty}$. As in the proof of Proposition \ref{prop: rate_convergence}, we use the notation $v_n = \tau_n^r \chi_n^\ell u_n^+$. Using Lemma \ref{lemma: technical2}, the facts (see Corollary \ref{cor: boundedness_u_better} and Proposition \ref{prop: study of d(h)})
	\[
	\sup_n \norme{v_n}_{L^\infty(\R^d)} < \infty \et u \in L^\infty(\R^d) \, ,
	\]
	and Proposition \ref{prop: rate_convergence}, we obtain the following estimate
	\begin{align}
	\label{eq: preliminary_estimate}
	\norme{\left(\lvert v_n \rvert^2 - \abs{u}^2\right)\ast |\cdot|^{-1}}_{L^r(\R^d)} =
	\begin{cases}
	\grandO{L_n^{-3+\epsilon}} & \text{if }d=2\, , \\
	\grandO{L_n^{-\infty}} & \text{if }d=3\pt
	\end{cases} 
	\end{align}
	Let $w \in L^1(\R^d) \cap L^2(\R^d)$ such that $\abs{w(x)} \leq c \exp(-\alpha \abs{x})$. Let $r' \in \intoo{1}{\frac{d}{d-1}}$ such that $\frac{1}{r} + \frac{1}{r'} =1$. Then, using Hölder's inequality, we have
	\begin{align*}
	\abs{\int_{\R^d} w \left(\tau_{\mathbf{x}_n} V_n^\mathrm{MF} - V^\mathrm{MF}\right)}
	\leq & ~\abs{\int_{\R^d} w \left(\lvert v_n \rvert^2 - \abs{u}^2\right)\ast |\cdot|^{-1}} + \abs{\left[w \ast \left(\mathcal{R}[\lvert v_n \rvert^2] - \abs{u}^2\right) \ast |\cdot|^{-1} \right](2\mathbf{x}_n)}\\
	& + \abs{\left[w\ast \left(\abs{u}^2 \ast |\cdot|^{-1} - |\cdot|^{-1}\right)\right](2\mathbf{x}_n)}\\
	\leq & ~ \norme{w}_{L^{r'}(\R^d)} \norme{\left(\lvert v_n \rvert^2 - \abs{u}^2\right)\ast |\cdot|^{-1}}_{L^r(\R^d)} \\
	&+ \abs{\left[w\ast \left(\abs{u}^2 \ast |\cdot|^{-1} - |\cdot|^{-1}\right)\right](2\mathbf{x}_n)} \pt
	\end{align*}
	By \eqref{eq: preliminary_estimate} and Hölder's inequality, the first term is bounded by
	\begin{align*}
	\norme{w}_{L^{r'}(\R^d)} \norme{\left(\lvert v_n \rvert^2 - \abs{u}^2\right)\ast |\cdot|^{-1}}_{L^r(\R^d)}
	=
	\begin{cases}
	\grandO{L_n^{-3+\epsilon}} \left(\norme{w}_{L^1(\R^d)}+\norme{w}_{L^2(\R^d)}\right) & \text{if }d=2 \, , \\
	\grandO{L_n^{-\infty}} \left(\norme{w}_{L^1(\R^d)}+\norme{w}_{L^2(\R^d)}\right) & \text{if }d=3 \, ,
	\end{cases}
	\end{align*}
	where the constants appearing in the $O$'s do not depend on $w$.
	It remains to estimate the second term. Assume $d=2$. From Lemma \ref{lemma: potential_expansion}, we can write $\abs{u}^2 \ast |\cdot|^{-1} - |\cdot|^{-1} = \grandO{|\cdot|^{-3}}$ as $\abs{x} \to \infty$.	We use the decomposition $w = w_{n,1} + w_{n,2} + w_{n,3}$ where $w_{n,1} = \mathds{1}_{B(0,L_n/2)} w$, $w_{n,2} = \mathds{1}_{B(2\mathbf{x}_n,L_n/2)} w$ and $w_{n,3} = w - w_{n,1} - w_{n,2}$. Because $w$ is exponentially decreasing, we have $\norme{w_{n,2}}_{L^\infty(\R^2)} = \grandO{L_n^{-\infty}}$ and $\norme{w_{n,3}}_{L^1(\R^2)} = \grandO{L_n^{-\infty}}$ where the constants only depend on $c$ and $\alpha$.
	Then, by Young's inequality, we have
	\begin{align*}
	\abs{\left[ w \ast \left(\abs{u}^2 \ast |\cdot|^{-1} - |\cdot|^{-1}\right)\right](2\mathbf{x}_n)} 
	\lesssim &~\abs{\left(w_{n,1} \ast |\cdot|^{-3}\right)(2\mathbf{x}_n)} \\
	&+\norme{w_{n,2}}_{L^\infty(\R^2)} \int_{B(0,L_n/2)} \frac{\diff x}{\abs{x}} + \norme{w_{n,3}}_{L^1(\R^2)} \\
	\lesssim&~ L_n^{-3}\norme{w}_{L^1(\R^2)} + L_n^{-\infty} \pt
	\end{align*}
	When $d=3$, we replace the estimate $\abs{u}^2 \ast |\cdot|^{-1} - |\cdot|^{-1} = \grandO{|\cdot|^{-3}}$ by Newton's theorem and similar arguments show
	\[
	\abs{\left[ w \ast \left(\abs{u}^2 \ast |\cdot|^{-1} - |\cdot|^{-1}\right)\right](2\mathbf{x}_n)}  = \grandO{L_n^{-\infty} \left(1+\norme{w}_{L^1(\R^3)}\right)} \pt
	\]
	This concludes the proof of Proposition \ref{lemma: lemme_utile_mais_relou}.
\end{proof}

\subsection{Rate of convergence for the Lagrange multipliers}
The next proposition shows that $\mu_n^-$ exists for $n$ large enough and gives estimates on the convergence of the Lagrange multipliers $\mu_n^\pm$ toward $\mu$.
\begin{prop}[Rate of convergence of Lagrange multipliers]
	\label{prop: rate_convergence_ground_state_energy}
	The diatomic mean-field hamiltonian $h_n$ admits an excited state. Moreover, for any $\epsilon \in \intoo{0}{3}$, there exists $C_\epsilon>0$ such that
	\begin{gather}
	\label{eq: rate_convergence_ground_state_energy}
	\abs{\mu_n^\pm - \mu} =
	\begin{cases}
	\grandO{L_n^{-3+\epsilon}}  & \text{if }d=2 \, ,\\
	\grandO{L_n^{-\infty}} & \text{if }d=3 \, ,
	\end{cases}
	\end{gather}
	for any $\epsilon >0$.
\end{prop}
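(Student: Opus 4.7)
The overall strategy is to test the Euler--Lagrange equation for $u_n^+$ against $u_n^\ell$, and similarly against the antisymmetric combination $u_n^r - u_n^\ell$ for the excited state, reducing each error to a quantity controlled by Proposition~\ref{lemma: lemme_utile_mais_relou}. I first prove the rate for $\mu_n^+$, then construct $u_n^-$ by a min--max argument, and finally obtain the lower bound on $\mu_n^-$ essentially for free.

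\textbf{Rate for $\mu_n^+$.} Writing $h_n - h_n^\ell = V_n^\mathrm{MF} - V_n^{\mathrm{MF},\ell}$ with $V_n^{\mathrm{MF},\ell} \coloneqq V_n^\ell + |u_n^\ell|^2 \ast |\cdot|^{-1}$, I pair the form identity $\pdtsc{u_n^+}{h_n u_n^\ell}_{L^2(\R^d)} = \pdtsc{h_n u_n^+}{u_n^\ell}_{L^2(\R^d)}$ with the monoatomic equation $h_n^\ell u_n^\ell = \mu\, u_n^\ell$ to obtain
\begin{align*}
(\mu_n^+ - \mu)\,\pdtsc{u_n^+}{u_n^\ell}_{L^2(\R^d)} = \int_{\R^d} u_n^+\, u_n^\ell \bigl( V_n^\mathrm{MF} - V_n^{\mathrm{MF},\ell}\bigr)\pt
\end{align*}
The change of variable $x \mapsto x - \mathbf{x}_n$ rewrites the right-hand side as $\int_{\R^d} (\tau_n^r u_n^+)\, u\, \bigl(\tau_n^r V_n^\mathrm{MF} - V^\mathrm{MF}\bigr)$, since $\tau_n^r u_n^\ell = u$ and $\tau_n^r V_n^{\mathrm{MF},\ell} = V^\mathrm{MF}$. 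The weight $w \coloneqq (\tau_n^r u_n^+)\, u$ is dominated, uniformly in $n$, by $C e^{-\alpha|x|}$ for some $\alpha \in \intoo{0}{\sqrt{\abs{\mu}}}$ and has uniformly bounded $L^1$ and $L^2$ norms, thanks to Proposition~\ref{prop: u_exponential_falloff} and the uniform $L^\infty$ bound of Corollary~\ref{cor: boundedness_u_better}. Proposition~\ref{lemma: lemme_utile_mais_relou} then controls the integral by $\grandO{L_n^{-3+\epsilon}}$ when $d=2$ and $\grandO{L_n^{-\infty}}$ when $d=3$. Since $\pdtsc{u_n^+}{u_n^\ell}_{L^2(\R^d)} \to 1$ (by Proposition~\ref{prop: strong_convergence_H1} combined with the exponential decay of $u_n^\ell$ outside the support of $\chi_n^\ell$), this scalar product is bounded away from zero and dividing yields the claimed rate.

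\textbf{Existence of $u_n^-$ and upper bound for $\mu_n^-$.} I would take the antisymmetric trial state $v \coloneqq u_n^r - u_n^\ell$. Since $\mathcal{R}[v] = -v$ and $\mathcal{R}[u_n^+] = u_n^+$, one has $v \perp u_n^+$ in $L^2(\R^d)$. Using the reflection symmetry to identify the diagonal terms in $\pdtsc{v}{h_n v}$ and the same algebraic manipulation as above for the cross term, I derive
\begin{align*}
\pdtsc{v}{h_n v}_{L^2(\R^d)} = \mu \norme{v}_{L^2(\R^d)}^2 + 2\int_{\R^d} u_n^\ell\,(u_n^\ell - u_n^r)\,\bigl(V_n^\mathrm{MF} - V_n^{\mathrm{MF},\ell}\bigr)\pt
\end{align*}
A second application of Proposition~\ref{lemma: lemme_utile_mais_relou} with $w = \tau_n^r\bigl[u_n^\ell(u_n^\ell - u_n^r)\bigr] = u\,\bigl(u - u(\cdot - 2\mathbf{x}_n)\bigr)$ controls the remainder, and Lemma~\ref{lemma: tunneling} ensures $\norme{v}_{L^2(\R^d)}^2 \to 2$. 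The min--max principle then gives $\mu_n^- \leq \mu + \grandO{L_n^{-3+\epsilon}}$ (respectively $\grandO{L_n^{-\infty}}$); since this is strictly negative for $n$ large and $\sigma_\mathrm{ess}(h_n) = \intfo{0}{\infty}$, it must be attained as a genuine eigenvalue, which proves the existence of $u_n^-$ together with the upper bound on $\mu_n^-$.

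\textbf{Lower bound and conclusion.} The min--max principle yields $\mu_n^- \geq \mu_n^+$, so the estimate of the first step automatically gives $\mu_n^- \geq \mu - \grandO{L_n^{-3+\epsilon}}$ (resp.\ $-\grandO{L_n^{-\infty}}$) and closes the argument. The main difficulty is really encapsulated in Proposition~\ref{lemma: lemme_utile_mais_relou}: matching the polynomial rate $L_n^{-3+\epsilon}$ requires a cancellation of the leading monopole contribution inside $V_n^\mathrm{MF} - V_n^{\mathrm{MF},\ell}$, granted by the local neutrality of the left atom. Once this lemma is at hand, the proof reduces to a judicious choice of test functions for the two Euler--Lagrange equations and the identification $\tau_n^r V_n^{\mathrm{MF},\ell} = V^\mathrm{MF}$.
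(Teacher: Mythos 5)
Your argument is correct, and for the excited state it is essentially the paper's own proof: the same antisymmetric trial state $u_n^r-u_n^\ell$, orthogonality to $u_n^+$ via $\mathcal{R}$, control of $\pdtsc{u_n^\ell}{h_n u_n^\ell}-\mu$ by Proposition \ref{lemma: lemme_utile_mais_relou}, existence of $\mu_n^-$ because the Rayleigh quotient drops below $\inf\sigma_{\mathrm{ess}}(h_n)=0$, and the lower bound via $\mu_n^->\mu_n^+$. Where you genuinely diverge is the rate for $\mu_n^+$: the paper goes through the energy identities $\mu=\mathcal{E}(u)+D(|u|^2,|u|^2)$ and $\mu_n^+=\tfrac12\bigl(\mathcal{E}_n(u_n^+)+D(|u_n^+|^2,|u_n^+|^2)\bigr)$, splitting the difference into an energy term, a Coulomb self-energy term and the cross-interaction term, each estimated by reusing the decomposition machinery from the proof of Proposition \ref{prop: rate_convergence}; you instead test the two Euler--Lagrange equations against each other to get the exact identity $(\mu_n^+-\mu)\pdtsc{u_n^+}{u_n^\ell}=\int u_n^+u_n^\ell\,(V_n^\mathrm{MF}-\tau_n^\ell V^\mathrm{MF})$ and feed it straight into Proposition \ref{lemma: lemme_utile_mais_relou}, with $\pdtsc{u_n^+}{u_n^\ell}\to1$ bounded away from zero. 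This is a cleaner, more unified route (the same mechanism handles $\mu_n^+$ and $\mu_n^-$), at no loss of generality since Proposition \ref{lemma: lemme_utile_mais_relou} already carries the needed input from Proposition \ref{prop: rate_convergence}; the paper's detour through the energy, on the other hand, simultaneously produces the expansion of $\mathcal{E}_n(u_n^+)-2\mathcal{E}(u)$ that is later used for the asymptotics of $E_L$ in Theorem \ref{theo: main_theorem}, which your shortcut does not give. One small point of rigor: in dimension $2$ the function $u_n^\ell$ does \emph{not} belong to $\mathcal{D}(h_n)$ (near $+\mathbf{x}_n$ one has $V_n^r u_n^\ell\sim c\,|x-\mathbf{x}_n|^{-1}\notin L^2_{\mathrm{loc}}(\R^2)$), so the pairing $\pdtsc{u_n^+}{h_nu_n^\ell}=\pdtsc{h_nu_n^+}{u_n^\ell}$ should be read in the quadratic-form sense on the form domain $H^1(\R^2)$, where all the integrals you write are finite and the identity is legitimate; the same remark applies to using $u_n^r-u_n^\ell$ as a trial state in \eqref{eq: definition_mu_n_-} (the infimum over $\mathcal{D}(h_n)$ coincides with the one over the form domain), a point the paper itself glosses over.
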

\begin{proof}
	Firstly, we estimate $\abs{\mu_n^+ - \mu}$.
	Using the formulas (recall $\int_{\R^d} \lvert u_n^+ \rvert ^2=2$)
	\begin{gather*}
	\mu = \mathcal{E}(u) + D(\abs{u}^2,\abs{u}^2) \et \mu_n^+ = \frac{1}{2} \Big(\mathcal{E}_n(u_n^+) + D(\lvert u_n^+ \rvert^2,\lvert u_n^+ \rvert^2)\Big) \, ,
	\end{gather*}
	we have
	\begin{align}
	\notag
	\abs{\mu_n^+ - \mu} 
	=&~ \frac{1}{2} \abs{\mathcal{E}_n(u_n^+) - 2\mathcal{E}(u)} + \abs{D(\abs{v_n}^2,\abs{v_n}^2) - D(\abs{u}^2,\abs{u}^2)} \\
	\label{eq: 29_12}
	&+ \frac{1}{2} \abs{D\left(\abs{\chi_n^\ell u_n^+}^2,\abs{\chi_n^r u_n^+}^2\right) - \frac{1}{2L_n}} \, ,
	\end{align}
	where we have denoted $v_n = \tau_{\mathbf{x}_n} \chi^\ell_n u_n^+$. From the proof of Proposition \ref{prop: rate_convergence} and the fact (see Lemma~\ref{lemma: interaction_left_right_d=3})
	\[
	q_n(u) = \begin{cases}
	\grandO{L_n^{-5}} & \text{if }d=2\, ,\\
	\grandO{L_n^{-\infty}} &\text{if }d=3\, ,
	\end{cases}
	\]
	we have the estimate
	\[
	\abs{\mathcal{E}_n(u_n^+) - 2 \mathcal{E}(u)} = \begin{cases}
	\grandO{L_n^{-5}}  & \text{if }d=2 \, ,\\
	\grandO{L_n^{-\infty}} & \text{if }d=3 \, ,
	\end{cases} 
	\]
	for any $\epsilon >0$.
	For the second term, we use the Hardy-Littlewood-Sobolev inequality, Hölder's inequality, the uniform estimates (see for instance Proposition \ref{prop: study of d(h)} for $u$ and Corollary \ref{cor: boundedness_u_better} for $u_n^+$)
	\[
	\forall r \in \intff{2}{\infty},\quad \norme{u}_{L^r(\R^d)} < \infty \et \sup_n \norme{u_n^+}_{L^r(\R^d)} < \infty \, ,
	\]
	and Proposition \ref{prop: rate_convergence} to get
	\begin{align*}
	\abs{D(\abs{v_n}^2,\abs{v_n}^2) - D(\abs{u}^2,\abs{u}^2)} 
	&\leq \norme{\abs{u}^2 - \abs{v_n}^2}_{L^{\frac{2d}{2d-1}}(\R^d)} \left(\norme{u}^2_{L^{\frac{4d}{2d-1}}(\R^d)}+ \norme{v_n}^2_{L^{\frac{4d}{2d-1}}(\R^d)}\right) \\
	&=\begin{cases}
	\grandO{L_n^{-3+\epsilon}}  & \text{if }d=2 \, ,\\
	\grandO{L_n^{-\infty}} & \text{if }d=3 \pt
	\end{cases}
	\end{align*}
	As in the proof of Proposition \ref{prop: rate_convergence}, the last term \eqref{eq: 29_12} is estimated using the decomposition \eqref{eq: decomposition_cool_and_nice} with the additional information that
	\[
	\norme{v_{n,1}}_{L^1(\R^d)} + \norme{v_{n,1}}_{L^2(\R^d)} = \begin{cases}
	\grandO{L_n^{-3+\epsilon}}  & \text{if }d=2 \, ,\\
	\grandO{L_n^{-\infty}} & \text{if }d=3 \pt
	\end{cases}
	\]
	After similar arguments as the ones used in the proof of Proposition \ref{prop: rate_convergence}, one can show
	\[
	\abs{D\left(\abs{\chi_n^\ell u_n^+}^2,\abs{\chi_n^r u_n^+}^2\right) - \frac{1}{2L_n}} =\begin{cases}
	\grandO{L_n^{-3+\epsilon}}  & \text{if }d=2 \, ,\\
	\grandO{L_n^{-\infty}} & \text{if }d=3 \pt
	\end{cases}
	\]
	This concludes the proof of the first estimate in \eqref{eq: rate_convergence_ground_state_energy}.
	
	Now, we look at $\mu_n^-$. We recall that $\mu_n^-$ is defined by
	\begin{align}
	\label{eq: definition_mu_n_-}
	\mu_n^- \coloneqq \min\enstq{\pdtsc{v}{h_n v}_{L^2(\R^d)}}{v \in \mathcal{D}(h_n),\quad \norme{v}_{L^2(\R^d)} =1 \et v \perp u_n^+} \pt
	\end{align}
	We also recall that $u_n^r \coloneqq u(\cdot - \mathbf{x}_n)$ and $u_n^\ell \coloneqq u(\cdot + \mathbf{x}_n)$. Let 
	\[
	u_\mathrm{trial} \coloneqq \frac{(u_n^r - u_n^\ell)}{\norme{u_n^r - u_n^\ell}_{L^2(\R^d)}} \pt
	\]
	Using the reflection symmetry $\mathcal{R}$, we have that $u_\mathrm{trial} \perp u_n^+$. Hence $u_\mathrm{trial}$ is a trial state for the minimization problem \eqref{eq: definition_mu_n_-}. We have
	\begin{align*}
	\pdtsc{u_\mathrm{trial}}{h_n u_\mathrm{trial}}_{L^2(\R^d)} = 2 \norme{u_n^r - u_n^\ell}_{L^2(\R^d)}^{-2} \left(\pdtsc{u_n^\ell}{h_n u_n^\ell}_{L^2(\R^d)} - \pdtsc{u_n^\ell}{h_n u_n^r}_{L^2(\R^d)}\right) \pt
	\end{align*}
	On one hand, remark that by the uniform estimate $\sup_n\norme{\lvert u_n^+ \rvert^2 * |\cdot|^{-1}}_{L^\infty(\R^d)} < \infty$
	(which is a consequence of Lemma \ref{lemma: technical2} and Corollary \ref{cor: boundedness_u_better}) and by Lemma \ref{lemma: tunneling}, we have
	\begin{gather*}
	\pdtsc{u_n^\ell}{h_n u_n^r}_{L^2(\R^d)} = \grandO{L_n^{-\infty}} \et 
	\norme{u_n^r - u_n^\ell}_{L^2(\R^d)}^{-2} = \frac{1}{2} \left(1 + \grandO{L_n^{-\infty}}\right) \pt
	\end{gather*}
	On the other hand, we remark that
	\[
	\pdtsc{u_n^\ell}{h_n u_n^\ell}_{L^2(\R^d)} - \pdtsc{u}{hu}_{L^2(\R^d)} = \int_{\R^d} \abs{u}^2 \left(\tau_{\mathbf{x}_n} V_n^\mathrm{MF} - V^\mathrm{MF}\right) \pt
	\]	
	Then, Lemma \ref{lemma: lemme_utile_mais_relou} gives us
	\begin{align*}
	\pdtsc{u_n^\ell}{h_n u_n^\ell}_{L^2(\R^d)} = \mu +\begin{cases}
	\grandO{L_n^{-3+\epsilon}}  & \text{if }d=2 \, ,\\
	\grandO{L_n^{-\infty}} & \text{if }d=3 \pt
	\end{cases}
	\end{align*}
	All in all, we have shown
	\[
	\pdtsc{u_\mathrm{trial}}{h_n u_\mathrm{trial}}_{L^2(\R^d)}\leq \mu + \begin{cases}
	\grandO{L_n^{-3+\epsilon}}  & \text{if }d=2 \, ,\\
	\grandO{L_n^{-\infty}} & \text{if }d=3 \, ,
	\end{cases}
	\]
	for any $\epsilon >0$. In particular, $h_n$ admits a second negative eigenvalue for $n$ large enough and for all $\epsilon>0$, we have
	\[
	\mu_n^- \leq \mu + \begin{cases}
	\grandO{L_n^{-3+\epsilon}}  & \text{if }d=2 \, ,\\
	\grandO{L_n^{-\infty}} & \text{if }d=3 \pt
	\end{cases}
	\]
	We concludes the proof of \eqref{eq: rate_convergence_ground_state_energy} by noticing $\mu_n^- > \mu_n^+$. 
\end{proof}

\subsection{Approximation for \texorpdfstring{$u_n^-$}{un-}}

Now, we come to the study of $u_n^-$ which mainly follows the one conducted for $u_n^+$ (see Proposition~\ref{prop: rate_convergence}). Recall that $u_n^-$ is the eigenfunction associated with the second lowest eigenvalue $\mu_n^-$ of the mean-field hamiltonian \eqref{eq: EL_equation} of the two nuclei model \eqref{eq: restricted_hartree}. We choose the normalization such that $\norme{u_n^-}^2_{L^2(\R^d)} = 2$. Following step by step the proof of \cite[Lemma 4.2]{olgiati2020hartree}, we see that $u_n^-$ is odd with respect to the line $\{x_1=0\}$ that is we have $\mathcal{R}[u_n^-] = - u_n^-$. Moreover, we can choose $u_n^-$ such that $u_n^- >0$ on the half-plane $\{x_1 >0\}$ and $u_n^- < 0$ on the half-plane $\{x_1 <0\}$.

A consequence of Proposition \ref{prop: rate_convergence_ground_state_energy} is that $u_n^-$ shares similar \emph{a priori} pointwise bounds as $u_n^+$ (see Proposition \ref{cor: exponential decay u^+_n}).
\begin{prop}[\emph{A priori }exponential decay estimate for $u^-_n$]
	\label{cor: exponential decay u^-_n}
	For all $\epsilon \in \intoo{0}{1}$ and for all $n$ large enough, there exists a constant $C_{\epsilon}> 0$ such that the following pointwise estimate holds
	\begin{align}
	\label{eq: a priori exp decay_u^-_n}
	\lvert u^-_n(x) \rvert\leq C_{\epsilon} \left(e^{-(1-\epsilon)\abs{\mu}^{\frac{1}{2}} \abs{x - \mathbf{x}_n}} + e^{-(1-\epsilon)\abs{\mu}^{\frac{1}{2}} \abs{x+\mathbf{x}_n}}\right)  \, ,
	\end{align}
	for all $x\in\R^d$.
\end{prop}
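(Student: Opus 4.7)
The plan is to follow the upper-bound part of the proof of Proposition \ref{cor: exponential decay u^+_n}, replacing $u_n^+$ by $|u_n^-|$. Since $u_n^-$ is real and changes sign, I cannot use it directly as a subsolution, but Kato's inequality remedies this: starting from the Euler--Lagrange equation $h_n u_n^- = \mu_n^- u_n^-$, i.e.\ $-\Delta u_n^- = (\mu_n^- - V_n^{\mathrm{MF}}) u_n^-$, Kato's inequality $\Delta |u_n^-| \geq \sign(u_n^-)\, \Delta u_n^-$ (in $\mathcal{D}'(\R^d)$) gives
\[
\left(-\Delta + V_n^{\mathrm{MF}} - \mu_n^-\right) |u_n^-| \leq 0
\]
in the sense of distributions, so $|u_n^-|$ is a positive subsolution of exactly the same shape of operator treated for $u_n^+$, only with $\mu_n^+$ replaced by $\mu_n^-$.

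Next I combine two ingredients already in the paper. First, by Proposition \ref{prop: rate_convergence_ground_state_energy}, $\mu_n^- \to \mu$, so $|\mu_n^- - \mu| \leq \epsilon/8$ for $n$ large. Second, as in the proof of Proposition \ref{cor: exponential decay u^+_n}, Proposition \ref{prop: stronger_convergence} together with Lemma \ref{lemma: technical2} yields $\|V_n^{\mathrm{MF}} - (\tau_n^\ell + \tau_n^r) V^{\mathrm{MF}}\|_{L^\infty(\R^d)} \leq \epsilon/8$. Inserting both estimates into the subsolution inequality gives, for $n$ large,
\[
\left(-\Delta + (\tau_n^\ell + \tau_n^r) V^{\mathrm{MF}} - \mu - \epsilon/2\right) |u_n^-| \leq 0.
\]
As in the proof of Proposition \ref{cor: exponential decay u^+_n}, Lemma \ref{lemma: long range behavior_MF_potential} furnishes $R_\epsilon > 0$ such that $|(\tau_n^\ell + \tau_n^r)V^{\mathrm{MF}}| \leq \epsilon/2$ on $\Omega_{n,\epsilon} \coloneqq \{|x \pm \mathbf{x}_n| > R_\epsilon\}$, and then $f(x) = e^{-\gamma|x - \mathbf{x}_n|} + e^{-\gamma|x+\mathbf{x}_n|}$ with $\gamma = \sqrt{|\mu|-\epsilon}$ is a supersolution of the same shifted operator on $\Omega_{n,\epsilon}$.

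The last step is to apply the second comparison Lemma \ref{lemma: comparison_hoffmann_ostenhoff} with $\psi = C_\epsilon f$ and $\varphi = |u_n^-|$ on $\Omega = \Omega_{n,\epsilon}$. The decay at infinity of $|u_n^-|$ needed by Lemma \ref{lemma: comparison_hoffmann_ostenhoff} follows from applying the first comparison Lemma \ref{lemma: comparison_supersolution} to $|u_n^-|$ in the region $\{|x| \geq L_n/2 + R_\epsilon\}$ where $f$ is a supersolution for the same operator. For the boundary condition on $\partial \Omega_{n,\epsilon}$, I need a uniform $L^\infty$-bound on $u_n^-$: this is provided by Proposition \ref{prop: study of d(h)} applied to $h_n$ (valid in view of Remark \ref{rem: regularity_h_L}), since $\|u_n^-\|_{H^{2-\alpha}(\R^d)} \lesssim \|h_n u_n^-\|_{L^2(\R^d)} + \|u_n^-\|_{L^2(\R^d)} = |\mu_n^-|\sqrt{2} + \sqrt{2}$ is uniformly bounded, and Sobolev embedding then gives the desired $L^\infty$-bound. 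Consequently we may choose $C_\epsilon$ (independent of $n$) so that $C_\epsilon f \geq |u_n^-|$ on $\partial \Omega_{n,\epsilon}$, and Lemma \ref{lemma: comparison_hoffmann_ostenhoff} yields \eqref{eq: a priori exp decay_u^-_n} on $\Omega_{n,\epsilon}$. The pointwise inequality extends to all of $\R^d$ using again the uniform $L^\infty$-bound on the complement, which is a compact neighborhood of the nuclei.

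The only genuine obstacle compared to the $u_n^+$ case is the sign change of $u_n^-$, which is overcome precisely by Kato's inequality; everything else is a mechanical transposition of the existing argument, the key point being that the subsolution inequality for $|u_n^-|$ involves the same mean-field potential $V_n^{\mathrm{MF}}$ (built from $u_n^+$) and only the eigenvalue label changes from $\mu_n^+$ to $\mu_n^-$, both of which converge to $\mu$.
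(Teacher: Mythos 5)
Your argument is correct, but it handles the one genuine difficulty — the sign change of $u_n^-$ — by a different mechanism than the paper. The paper exploits the nodal structure established earlier ($\mathcal{R}[u_n^-]=-u_n^-$, $u_n^->0$ on $\{x_1>0\}$): it applies the second comparison Lemma \ref{lemma: comparison_hoffmann_ostenhoff} on the domain $\{x_1>0,\ \abs{x-\mathbf{x}_n}>R\}$, where $u_n^-$ is a positive continuous subsolution and where the extra boundary piece $\{x_1=0\}$ costs nothing since $u_n^-$ vanishes there, and then transports the bound to the left half-space by the reflection symmetry. You instead invoke Kato's inequality to show that $\abs{u_n^-}$ is itself a global distributional subsolution of $-\Delta+V_n^{\mathrm{MF}}-\mu_n^-$, after which the two-ball argument of Proposition \ref{cor: exponential decay u^+_n} runs verbatim; this is legitimate here because $u_n^-\in L^\infty(\R^d)$ (your uniform bound via Proposition \ref{prop: study of d(h)} and Remark \ref{rem: regularity_h_L} is sound and avoids any circular use of Proposition \ref{prop: rate_convergence_bis}) and $V_n\in L^1_{\mathrm{loc}}$, so $\Delta u_n^-\in L^1_{\mathrm{loc}}$ and all products make sense. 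The trade-off: the paper's route needs the precise sign/nodal information about $u_n^-$ but stays entirely within classical pointwise comparison for positive functions, whereas your route needs no information on the nodal set at all — it would apply to any eigenfunction of $h_n$ with eigenvalue near $\mu$, sign-changing or complex — at the modest price of checking the hypotheses of Kato's inequality and of feeding a distributional (rather than classical) subsolution into Lemma \ref{lemma: comparison_hoffmann_ostenhoff}, which that lemma indeed permits. The remaining ingredients (approximation of $V_n^{\mathrm{MF}}$ by $(\tau_n^\ell+\tau_n^r)V^{\mathrm{MF}}$, $\mu_n^-\to\mu$, the Bessel-free supersolution $f$ with $\gamma=\sqrt{\abs{\mu}-\epsilon}$, decay at infinity via Lemma \ref{lemma: comparison_supersolution} applied to the eigenfunction $u_n^-$, and the uniform boundary constant) are used exactly as in the paper's proof for $u_n^+$.
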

\begin{proof}
	We follow the proof of Proposition \ref{cor: exponential decay u^+_n}, applying Lemma \ref{lemma: comparison_hoffmann_ostenhoff} to the domain
	\[
	\enstq{(x_1,\dots,x_d)\in\R^d}{x_1 >0 \et \abs{x-\mathbf{x}_n} > R}\, ,
	\]
	where we have chosen $u_n^-$ positive. Then, we extend the estimate to the left half space by using the reflection symmetry $\mathcal{R}$.
\end{proof}
\begin{prop}[Rates of convergence for $u_n^-$]
	\label{prop: rate_convergence_bis}
	For any $\epsilon \in \intoo{0}{3}$, there exists a constant $C_\epsilon$ such that the following estimates hold
	\begin{gather}
	\label{eq: first a priori estimate_bisbis}
	\norme{\abs{u(\cdot + \mathbf{x}_n)}^2 - \abs{\chi_n^\ell u_n^-}^2}_{L^1(\R^d)} = \norme{\abs{u(\cdot - \mathbf{x}_n)}^2 - \abs{\chi_n^\ell u_n^-}^2}_{L^1(\R^d)}  \leq
	\begin{cases}
	C_\epsilon L_n^{-3+\epsilon}  & \text{if }d=2 \, ,\\
	\grandO{L_n^{-\infty}} & \text{if }d=3 \, ,
	\end{cases} \\
	\label{eq: first a priori estimate_bis}
	\norme{u(\cdot + \mathbf{x}_n) + \chi_n^\ell u_n^-}_{H^1(\R^d)} = \norme{u(\cdot - \mathbf{x}_n) - \chi_n^ru_n^-}_{H^1(\R^d)}  \leq
	\begin{cases}
	C_\epsilon L_n^{-3+\epsilon}  & \text{if }d=2 \, ,\\
	\grandO{L_n^{-\infty}} & \text{if }d=3  \pt
	\end{cases} 
	\end{gather}
\end{prop}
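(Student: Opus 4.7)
The plan is to adapt the strategy of the proof of Proposition \ref{prop: rate_convergence} to the linear eigenvalue equation satisfied by $u_n^-$. The key new ingredient is a cancellation of the leading multipolar contribution between an upper and a lower bound on $\mu_n^-$, playing the role of the $2 q_n(u)$ cancellation in that earlier proof.

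Set $w_n \coloneqq \tau_n^\ell(\chi_n^r u_n^-)$. The antisymmetry $\mathcal{R}[u_n^-] = -u_n^-$ together with $\mathcal{R}[\chi_n^r] = \chi_n^\ell$ and the normalisation $\norme{u_n^-}_{L^2(\R^d)}^2 = 2$ force $\norme{w_n}_{L^2(\R^d)} = 1$. On the ``wrong sign'' band $\{-\sqrt{L_n}\delta \leq x_1 \leq 0\} \cap \supp \chi_n^r$, Proposition \ref{cor: exponential decay u^-_n} shows that $\chi_n^r u_n^-$ is $\grandO{L_n^{-\infty}}$ in any Sobolev norm, so $w_n$ is non-negative modulo negligible error. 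Starting from the eigenvalue identity $\pdtsc{u_n^-}{h_n u_n^-}_{L^2(\R^d)} = 2\mu_n^-$, the IMS formula, the reflection symmetry, and a translation by $-\mathbf{x}_n$ yield, with $V_n^\mathrm{MF} = V_n + \abs{u_n^+}^2 \ast |\cdot|^{-1}$,
\[
\mu_n^- = q_h(w_n) + \int_{\R^d} \abs{w_n}^2 \left(\tau_n^\ell V_n^\mathrm{MF} - V^\mathrm{MF}\right) + \grandO{L_n^{-\infty}} \pt
\]
For a matching upper bound I would use the Rayleigh--Ritz characterisation of $\mu_n^-$ with the odd trial state $(u_n^r - u_n^\ell)/\norme{u_n^r - u_n^\ell}_{L^2(\R^d)}$, which is orthogonal to $u_n^+$ by reflection symmetry. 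The cross term $\pdtsc{u_n^\ell}{h_n u_n^r}_{L^2(\R^d)}$ is $\grandO{L_n^{-\infty}}$ by Lemma \ref{lemma: tunneling} and the uniform $L^\infty$ bound on $\abs{u_n^+}^2 \ast |\cdot|^{-1}$; a direct computation of $\pdtsc{u_n^\ell}{h_n u_n^\ell}_{L^2(\R^d)}$ combined with the identity $\int \abs{u}^2 \tau_n^r V_n^\mathrm{MF} = \int \abs{u}^2 \tau_n^\ell V_n^\mathrm{MF}$ (valid because $\abs{u}^2$ is radial and $V_n^\mathrm{MF}$ is $\mathcal{R}$-invariant) produces
\[
\mu_n^- \leq \mu + \int_{\R^d} \abs{u}^2 \left(\tau_n^\ell V_n^\mathrm{MF} - V^\mathrm{MF}\right) + \grandO{L_n^{-\infty}} \pt
\]

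Subtracting the two estimates the common leading term cancels exactly, and the linear stability bound of Proposition \ref{prop: stability} (the minimising phase is $\theta = 0$ up to $\grandO{L_n^{-\infty}}$ since $w_n$ is essentially positive) gives
\[
C \norme{w_n - u}_{H^1(\R^d)}^2 \leq \int_{\R^d} \left(\abs{u}^2 - \abs{w_n}^2\right) \left(\tau_n^\ell V_n^\mathrm{MF} - V^\mathrm{MF}\right) + \grandO{L_n^{-\infty}} \pt
\]
This is the linear analogue of the inequality that drives the bootstrap in Proposition \ref{prop: rate_convergence}. Proposition \ref{lemma: lemme_utile_mais_relou}, applied to the exponentially decaying function $w = \abs{u}^2 - \abs{w_n}^2$ (whose pointwise decay constants are uniform in $n$ by Proposition \ref{prop: u_exponential_falloff} and Corollary \ref{cor: boundedness_u_better}), bounds the right-hand side by $\grandO{L_n^{-3+\epsilon}}\bigl(\norme{w}_{L^1(\R^d)} + \norme{w}_{L^2(\R^d)}\bigr)$ when $d=2$, or $\grandO{L_n^{-\infty}}$ times the same factor when $d=3$. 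Factoring $\abs{u}^2 - \abs{w_n}^2 = (u - w_n)(u + w_n)$ and using the uniform $L^\infty$ control of $u + w_n$, both norms are dominated by $\norme{w_n - u}_{H^1(\R^d)}$. An inductive hypothesis $\norme{w_n - u}_{H^1(\R^d)} = \grandO{L_n^{-\gamma}}$ is therefore upgraded to $\norme{w_n - u}_{H^1(\R^d)} = \grandO{L_n^{-(3+\gamma)/2 + \epsilon/2}}$ in dimension $2$; the iteration contracts geometrically to the fixed point $\gamma = 3 - \epsilon$, proving \eqref{eq: first a priori estimate_bis}. Estimate \eqref{eq: first a priori estimate_bisbis} follows from $\norme{\abs{u}^2 - \abs{w_n}^2}_{L^1(\R^d)} \leq 2\norme{w_n - u}_{L^2(\R^d)}$. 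In dimension $3$, Newton's theorem replaces the multipole expansion at each step and produces arbitrary polynomial decay.

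I expect the main obstacle to be the exact matching of the ``constant'' multipole term $\int \abs{u}^2(\tau_n^\ell V_n^\mathrm{MF} - V^\mathrm{MF})$ between the upper and lower bounds for $\mu_n^-$: without this cancellation, a single application of Proposition \ref{prop: stability} would give only the suboptimal rate $\grandO{L_n^{-(3-\epsilon)/2}}$ in $H^1$-norm. The companion estimates involving $\chi_n^\ell u_n^-$ and $u(\cdot + \mathbf{x}_n)$ then follow immediately from the reflection symmetry and the sign convention $\mathcal{R}[u_n^-] = -u_n^-$.
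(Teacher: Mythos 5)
Your argument is correct and follows essentially the same route as the paper: the odd trial state $(u_n^r-u_n^\ell)/\norme{u_n^r-u_n^\ell}_{L^2(\R^d)}$ for the upper bound on $\mu_n^-$, the IMS-localized and translated lower bound combined with the stability estimate of Proposition \ref{prop: stability}, cancellation of the common term $\int_{\R^d}\abs{u}^2\left(\tau_n^\ell V_n^\mathrm{MF}-V^\mathrm{MF}\right)$, and Proposition \ref{lemma: lemme_utile_mais_relou} applied to the exponentially decaying density difference. The only (harmless) deviation is your closing iteration: since the resulting inequality is quadratic in $\norme{w_n-u}_{H^1(\R^d)}$ but only linear in the small right-hand factor, a single application already yields the rate $L_n^{-3+\epsilon}$, which is how the paper concludes (and the uniform decay of $w_n$ should be credited to Proposition \ref{cor: exponential decay u^-_n} rather than to the bounds on $u$ and $u_n^+$).
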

\begin{proof}
	For $u_n^-$ we will carry out the same strategy as for $u_n^+$ (see proof of Propositions \ref{prop: strong_convergence_H1} and \ref{prop: rate_convergence}). The arguments being similar, we will only sketch the proof.
	Recall that, by the min-max principle, $u_n^-$ is the unique (up to a phase) minimizer of the problem
	\begin{align}
	\label{eq: min_problem_u_n_-}
	\min \enstq{\pdtsc{v}{h_n v}_{L^2(\R^d)}}{ v\in \mathcal{D}(h_n),\quad v \perp u_n^+,\quad \int_{\R^d} \abs{v}^2=2} \pt
	\end{align}
	First, we give an \emph{a priori }estimate on $u_n^-$.
	We introduce
	\[
	u_\mathrm{trial} = \frac{\sqrt{2}(u_n^r - u_n^\ell)}{\norme{u_n^r - u_n^\ell}_{L^2(\R^d)}} \, ,
	\]
	which is a valid trial state for the minimization problem \eqref{eq: min_problem_u_n_-} (using the reflection symmetry, one can check that $u_\mathrm{trial} \perp u_n^+$). As in the proof of Proposition \ref{prop: rate_convergence_ground_state_energy}, we obtain
	\begin{align}
	\label{md1}
	\frac{1}{2}\pdtsc{u_\mathrm{trial}}{h_n u_\mathrm{trial}}_{L^2(\R^d)}
	&\leq \mu + \int_{\R^d} \abs{u}^2 \left(\tau_{\mathbf{x}_n} V_n^\mathrm{MF} - V^\mathrm{MF}\right) + \grandO{L_n^{-\infty}} \\
	\label{md2}
	&\leq \mu_n^- + \begin{cases}
	\grandO{L_n^{-3+\epsilon}}  & \text{if }d=2 \, ,\\
	\grandO{L_n^{-\infty}} & \text{if }d=3 \, ,
	\end{cases}
	\end{align}
	for any $\epsilon \in \intoo{0}{3}$. To get the second bound \eqref{md2}, we have used Corollary \ref{lemma: lemme_utile_mais_relou} and Proposition~\ref{prop: rate_convergence_ground_state_energy}. Hence, $u_\mathrm{trial}$ is a minimizing sequence for the minimization problem \eqref{eq: min_problem_u_n_-}. Proceeding as in the proof of Proposition \ref{prop: strong_convergence_H1} and Proposition \ref{prop: stronger_convergence}, we can show that
	\begin{align*}
	\label{eq: 12012}
	\norme{u_n^- - u_\mathrm{trial}}_{H^{2 - \alpha} (\R^d)} \limit{n \to \infty} 0  \et \sup_n \norme{u_n^-}_{H^{2-\alpha}(\R^d)} <  \infty \, ,
	\end{align*}
	for any $\alpha >0$ if $d=2$ and $\alpha=0$ if $d=3$.
	We denote $v_n \coloneqq \tau_{\mathbf{x}_n} \chi_n^\ell u_n^-$ which, by $\mathcal{R}$ symmetry, satisfies $\norme{v_n}_{L^2(\R^d)} = 1$. Using the IMS formula  and the $\mathcal{R}$ symmetry, we get
	\begin{align*}
	\frac{1}{2}\pdtsc{u_n^-}{h_n u_n^-}_{L^2(\R^d)}
	&= \pdtsc{\chi_n^\ell u_n^-}{h_n \chi_n^\ell u_n^-}_{L^2(\R^d)} - \int_{\R^d} \abs{\nabla \chi_n^\ell}^2 \lvert u_n^- \rvert^2  \\
	&= \pdtsc{v_n}{h v_n}_{L^2(\R^d)} + \int_{\R^d} \abs{v_n}^2 \left(\tau_{\mathbf{x}_n} V_n^\mathrm{MF} - V^\mathrm{MF}\right) - \int_{\R^d} \abs{\nabla \chi_n^\ell}^2 \lvert u_n^- \rvert^2 \pt
	\end{align*}
	We bound from below the first term using Proposition \ref{prop: stability} and the the last term is a $\grandO{L_n^{-\infty}}$ by Proposition \ref{cor: exponential decay u^-_n}. Thus, we obtain
	\[
	\frac{1}{2}\pdtsc{u_n^-}{h_n u_n^-}_{L^2(\R^d)} \geq \mu + \int_{\R^d} \abs{v_n}^2 \left(\tau_{\mathbf{x}_n} V_n^\mathrm{MF} - V^\mathrm{MF}\right) + C \min_{\theta \in \intff{0}{2\pi}} \norme{e^{i\theta} v_n - u}^2_{H^1(\R^d)} + \grandO{L_n^{-\infty}} \pt
	\]
	Recalling \eqref{md1}, using Proposition \ref{lemma: lemme_utile_mais_relou}, the fact that $\sup_{n\in \N} \norme{v_n}_{H^1(\R^d)} < \infty$
	and the Cauchy-Schwarz inequality, we get
	\begin{align*}
	\min_{\theta \in \intff{0}{2\pi}} \norme{e^{i\theta} v_n - u}^2_{H^1(\R^d)} \leq  \begin{cases}
	\grandO{L_n^{-3+\epsilon}} \left(\norme{v_n + u}_{L^1(\R^d)} + \norme{v_n + u}_{L^2(\R^d)}\right) + \grandO{L_n^{-\infty}}  & \text{if }d=2 \, ,\\
	\grandO{L_n^{-\infty}}\left(1+\norme{v_n + u}_{L^1(\R^d)} + \norme{v_n + u}_{L^2(\R^d)}\right) & \text{if }d=3 \, ,
	\end{cases} 
	\end{align*}	
	since $u_n^- <0$ on $\{x_1 <0\}$.
	Because $v_n$ and $u$ are exponentially decaying, we have $\norme{v_n + u}_{L^1(\R^d)} \leq \grandO{L^{\epsilon}\norme{v_n + u}_{L^2(\R^d)}} + \grandO{L_n^{-\infty}}$ for any $\epsilon>0$.  Hence, we have obtained
	\begin{align*}
	\min_{\theta \in \intff{0}{2\pi}} \norme{e^{i\theta} v_n - u}^2_{H^1(\R^d)} \leq  \begin{cases}
	C_\epsilon L_n^{-3+\epsilon} \norme{v_n + u}_{L^2(\R^d)} + \grandO{L_n^{-\infty}}  & \text{if }d=2 \, ,\\
	\grandO{L_n^{-\infty}}\left(1+ \norme{v_n + u}_{L^2(\R^d)}\right) & \text{if }d=3 \, ,
	\end{cases} 
	\end{align*}
	for all $\epsilon >0$. Now, the minimization problem
	\[
	\min_{\theta \in \intff{0}{2\pi}} \norme{e^{i\theta} v_n - u}^2_{L^2(\R^d)} \, ,
	\]
	is solved for $\theta  = \pi$. To see this, one can expand $\norme{e^{i\theta} v_n - u}^2_{L^2(\R^d)}$ and notice that $\pdtsc{v_n}{u}_{L^2(\R^d)} \leq 0$. Thus, we get 
	\[
	\norme{ v_n + u}_{H^1(\R^d)} \leq 
	\begin{cases}
	C_\epsilon L_n^{-3+\epsilon}  & \text{if }d=2 \, ,\\
	\grandO{L_n^{-\infty}} & \text{if }d=3  \pt
	\end{cases} 
	\]
	To obtain \eqref{eq: first a priori estimate_bisbis}, we just have to recall the estimate $\norme{v_n + u}_{L^1(\R^d)} \leq \grandO{L_n^{\epsilon}\norme{v_n + u}_{L^2(\R^d)}} + \grandO{L_n^{-\infty}}$. This concludes the proof of Proposition \ref{prop: rate_convergence_bis}.
\end{proof}

\subsection{Lower bound on the second gap}
\begin{prop}[Lower bound on the second gap]
	\label{prop: the_remaining_spectrum_is_far_away}
	There exists a constant $C>0$ such that
	\begin{align}
	\label{eq: the_remaining_spectrum_is_far_away}
	\dist (\mu_n^-, \sigma(h_n) \setminus\{\mu_n^+,\mu_n^-\}) \geq C \pt		
	\end{align}
\end{prop}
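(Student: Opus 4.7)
The plan is to combine the min-max principle with IMS localization to reduce the estimate to the spectral gap of the monoatomic mean-field operator $h$. Since $\mu$ is a non-degenerate discrete eigenvalue of $h$ isolated from the rest of the spectrum, there exists $G > 0$ with $\sigma(h) \setminus \{\mu\} \subset \intfo{\mu + G}{\infty}$. By min-max applied with the two-dimensional test space $\spn(u_n^+, u_n^-)$, combined with the convergence $\mu_n^- \to \mu$ of Proposition \ref{prop: rate_convergence_ground_state_energy}, it suffices to prove that for any $v \in \mathcal{D}(h_n)$ with $\norme{v}_{L^2(\R^d)} = 1$ and $v \perp u_n^\pm$ one has $\pdtsc{v}{h_n v}_{L^2(\R^d)} \geq \mu + G/2$ for $n$ large; the constant $C = G/4$ then yields \eqref{eq: the_remaining_spectrum_is_far_away}.

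The first step is to IMS-localize using $(\chi_n^\ell)^2 + (\chi_n^r)^2 = 1$ and then compare $h_n$ with the translated monoatomic operators $h_n^\kappa$ on $\supp \chi_n^\kappa$. Writing $h_n - h_n^\ell = V_n^r + (\lvert u_n^+ \rvert^2 - \lvert u_n^\ell \rvert^2)\ast |\cdot|^{-1}$, I use $\abs{V_n^r(x)} = \grandO{L_n^{-1}}$ on $\supp \chi_n^\ell \subset \{x_1 \leq \sqrt{L_n}\delta\}$ together with the $L^\infty$-convergence $\norme{(\lvert u_n^+ \rvert^2 - \lvert u_n^\ell \rvert^2)\ast |\cdot|^{-1}}_{L^\infty(\supp \chi_n^\ell)} \limit{n \to \infty} 0$, obtained by combining Proposition \ref{prop: rate_convergence}, Lemma \ref{lemma: technical2}, and the Newton-type estimate $\norme{\lvert u_n^r \rvert^2 \ast |\cdot|^{-1}}_{L^\infty(\supp \chi_n^\ell)} = \grandO{L_n^{-1}}$. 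Setting $w_\ell \coloneqq \tau_n^r(\chi_n^\ell v)$ and $w_r \coloneqq \tau_n^\ell(\chi_n^r v)$ and translating, these estimates together with the standard IMS error $\grandO{L_n^{-1}}$ yield
\[
\pdtsc{v}{h_n v}_{L^2(\R^d)} \geq \pdtsc{w_\ell}{h w_\ell}_{L^2(\R^d)} + \pdtsc{w_r}{h w_r}_{L^2(\R^d)} - \petito{1} ,
\]
with $\norme{w_\ell}^2_{L^2(\R^d)} + \norme{w_r}^2_{L^2(\R^d)} = 1$.

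The crux is then to convert the two orthogonality constraints on $v$ into approximate orthogonality of both translated pieces $w_\kappa$ with $u$. Expanding $\pdtsc{v}{u_n^+}_{L^2(\R^d)} = 0$ through the partition gives $\pdtsc{\chi_n^\ell v}{\chi_n^\ell u_n^+}_{L^2(\R^d)} + \pdtsc{\chi_n^r v}{\chi_n^r u_n^+}_{L^2(\R^d)} = 0$; using $\norme{\chi_n^\kappa u_n^+ - u_n^\kappa}_{L^2(\R^d)} \to 0$ from Proposition \ref{prop: rate_convergence} this becomes $\pdtsc{\chi_n^\ell v}{u_n^\ell}_{L^2(\R^d)} + \pdtsc{\chi_n^r v}{u_n^r}_{L^2(\R^d)} = \petito{1}$. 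The analogous decomposition of $\pdtsc{v}{u_n^-}_{L^2(\R^d)} = 0$, together with the sign change $\chi_n^\ell u_n^- \approx -u_n^\ell$, $\chi_n^r u_n^- \approx u_n^r$ from Proposition \ref{prop: rate_convergence_bis}, yields $-\pdtsc{\chi_n^\ell v}{u_n^\ell}_{L^2(\R^d)} + \pdtsc{\chi_n^r v}{u_n^r}_{L^2(\R^d)} = \petito{1}$. Adding and subtracting these two relations decouples the constraints and produces $\abs{\pdtsc{w_\kappa}{u}_{L^2(\R^d)}} = \petito{1}$ for both $\kappa \in \{\ell, r\}$. Finally, decomposing $w_\kappa$ on $\spn(u)$ and its orthogonal and applying the spectral gap of $h$ gives
\[
\pdtsc{w_\kappa}{h w_\kappa}_{L^2(\R^d)} \geq (\mu + G) \norme{w_\kappa}^2_{L^2(\R^d)} - G \frac{\abs{\pdtsc{w_\kappa}{u}_{L^2(\R^d)}}^2}{\norme{u}^2_{L^2(\R^d)}} ,
\]
and summing over $\kappa$ yields $\pdtsc{v}{h_n v}_{L^2(\R^d)} \geq \mu + G - \petito{1}$, which concludes the argument. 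The main obstacle is precisely this simultaneous decoupling: the even/odd parity structure of the quasi-modes $u_n^\pm$, reflected in the sign change between Propositions \ref{prop: rate_convergence} and \ref{prop: rate_convergence_bis}, is what allows the linear combinations of the two constraints to reduce cleanly to an individual bound on each $\pdtsc{w_\kappa}{u}_{L^2(\R^d)}$.
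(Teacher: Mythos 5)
Your proposal is correct and follows essentially the same route as the paper: IMS localization with $(\chi_n^\ell)^2+(\chi_n^r)^2=1$ to reduce to the translated monoatomic operator, decoupling of the two constraints $v\perp u_n^\pm$ through the even/odd structure of the quasi-modes (the paper's identities \eqref{eq: baklava_1}--\eqref{eq: baklava_2}), and a per-piece lower bound controlled by the overlap $\abs{\pdtsc{\chi_n^\kappa v}{u_n^\kappa}_{L^2(\R^d)}}$. The only cosmetic difference is that the paper channels the gap of $h$ through the stability estimate of Proposition \ref{prop: stability}, while you invoke the spectral gap $G$ of $h$ directly via the orthogonal decomposition along $\spn(u)$ — the two are equivalent at the point where they are used.
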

\begin{proof}
	We denote by $q_{h_n^r}$ (resp. $q_{h_n^r}$, resp. $q_{h_n}$) the quadratic form associated with $h_n^r$ (resp. $h_n^\ell$, resp. $h_n$) and defined on $H^1(\R^d)$.
	We consider $v \in H^1(\R^d)$ such that $\norme{v}_{L^2(\R^2)}^2=1$, $v \perp u_n^+$ and $v \perp u_n^-$. We denote $G_n \coloneqq \dist (\mu_n^-, \sigma(h_n) \setminus\{\mu_n^+,\mu_n^-\}) \geq 0$. Using the IMS formula, Proposition \ref{prop: strong_convergence_H1} and Corollary~\ref{cor: boundedness_u}, we have
	\begin{align}
	\label{eq: balklava}
	\mu + G_n + \petito{1} = \mu_n^- + G_n \geq q_{h_n}(v) = q_{h_n^\ell}(\chi_n^\ell v)+ q_{h_n^r}(\chi_n^r v) + \petito{1} \pt
	\end{align}
	Then, by Proposition \ref{prop: stability}, we have for some $C>0$
	\begin{align*}
	q_{h_n^\ell}(\chi_n^\ell v)
	&\geq \norme{\chi_n^\ell v}^2_{L^2(\R^d)} \left(\mu + 2C \right) - 2C\norme{\chi_n^\ell v}_{L^2(\R^d)} \abs{\pdtsc{\chi_n^\ell v}{u_n^\ell}_{L^2(\R^d)}} \\
	&\geq \norme{\chi_n^\ell v}^2_{L^2(\R^d)} \left(\mu + 2C \right) -  2C\abs{\pdtsc{\chi_n^\ell v}{u_n^\ell}_{L^2(\R^d)}} \pt
	\end{align*}
	A similar lower bound also holds for $q_{h_n^r}(\chi_n^r v) $. Inserting these estimates into \eqref{eq: balklava}, we obtain
	\begin{align}
	\label{eq: baklava_3}
	G_n \geq 2C \left( 1 - \abs{\pdtsc{\chi_n^\ell v}{u_n^\ell}_{L^2(\R^d)}} -\abs{\pdtsc{\chi_n^r v}{u_n^r}_{L^2(\R^d)}}\right) + \petito{1} \pt
	\end{align}
	By noticing that Proposition \ref{prop: strong_convergence_H1} and the orthogonality conditions $v \perp u_n^+$ imply
	\begin{align}
	\label{eq: baklava_1}
	0 = \pdtsc{\chi_n^\ell v}{\chi_n^\ell u_n^+}_{L^2(\R^d)}+ \pdtsc{\chi_n^r v}{\chi_n^r u_n^+}_{L^2(\R^d)}
	= \pdtsc{\chi_n^\ell v}{u_n^\ell}_{L^2(\R^d)} + \pdtsc{\chi_n^r v}{u_n^r}_{L^2(\R^d)} + \petito{1} \pt
	\end{align}
	In a same manner, using Proposition \ref{prop: rate_convergence_bis}, we also have 
	\begin{align}
	\label{eq: baklava_2}
	0 = \pdtsc{v}{u_n^-}_{L^2(\R^d)}  = -\pdtsc{\chi_n^\ell v}{u_n^\ell}_{L^2(\R^d)} + \pdtsc{\chi_n^r v}{u_n^r}_{L^2(\R^d)} + \petito{1} \pt
	\end{align}
	Suitable linear combinations of \eqref{eq: baklava_1} and \eqref{eq: baklava_2} give
	\begin{align*}
	\pdtsc{\chi_n^r v}{u_n^r}_{L^2(\R^d)}  = \petito{1} \et \pdtsc{\chi_n^\ell v}{u_n^\ell}_{L^2(\R^d)} = \petito{1} \pt
	\end{align*}
	Inserting this into \eqref{eq: baklava_3} shows \eqref{eq: the_remaining_spectrum_is_far_away}.
\end{proof}

\subsection{Convergence rates in higher Sobolev spaces}

\begin{prop}[Convergence rate in higher Sobolev spaces]
	\label{prop: convergence_rate_higher_sobolev_spaces}
	Let $\alpha \in \intoo{0}{1}$ if $d=2$ and $\alpha = 0$ if $d=3$. For any $\epsilon \in \intoo{0}{3}$, there exists a constant $C_\epsilon>0$ such that the following estimates hold
	\begin{align*}
	\norme{u_n^\pm - (u(\cdot-\mathbf{x}_n) \pm u(\cdot + \mathbf{x}_n))}_{H^{2-\alpha}(\R^d)}\leq \begin{cases}
	C_\epsilon L_n^{-3 + \epsilon} & \text{if }d=2 \, , \\
	\grandO{L_n^{-\infty}} & \text{if }d=3 \pt
	\end{cases}
	\end{align*}
\end{prop}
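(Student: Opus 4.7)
The plan is to upgrade the $H^1$ estimates of Propositions~\ref{prop: rate_convergence} and \ref{prop: rate_convergence_bis} to $H^{2-\alpha}$-estimates via the elliptic regularity of Proposition~\ref{prop: study of d(h)} applied to the translated operator $h_n^\ell$ (valid by Remark~\ref{rem: regularity_h_L} and translation invariance). Exploiting $(\chi_n^\ell)^2 + (\chi_n^r)^2 = 1$ together with the fact that $\chi_n^\ell \equiv 1$ in a neighborhood of $-\mathbf{x}_n$ and $\chi_n^r \equiv 1$ near $+\mathbf{x}_n$ for $L_n$ large, I decompose
\begin{equation*}
u_n^\pm - (u_n^r \pm u_n^\ell) = \chi_n^\ell\bigl(\chi_n^\ell u_n^\pm \mp u_n^\ell\bigr) + \chi_n^r\bigl(\chi_n^r u_n^\pm - u_n^r\bigr) \pm (\chi_n^\ell - 1)u_n^\ell + (\chi_n^r - 1)u_n^r.
\end{equation*}
The two boundary terms $(\chi_n^\kappa - 1)u_n^\kappa$ are supported at distance $\gtrsim L_n$ from the corresponding singularity of $u$, so Proposition~\ref{prop: u_exponential_falloff} makes them $\grandO{L_n^{-\infty}}$ in $H^{2-\alpha}(\R^d)$.

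For the left-localized piece (the other being symmetric under reflection), $\chi_n^\ell u_n^\pm \mp u_n^\ell$ lies in $\mathcal{D}(h_n^\ell)$: its only possible Coulomb singularity is at $-\mathbf{x}_n$, where the contributions of $\chi_n^\ell u_n^\pm$ and $u_n^\ell$ cancel. Proposition~\ref{prop: study of d(h)} therefore reduces the task to bounding $\norme{h_n^\ell(\chi_n^\ell u_n^\pm \mp u_n^\ell)}_{L^2(\R^d)}$ by $\grandO{L_n^{-3+\epsilon}}$ (resp.\ $\grandO{L_n^{-\infty}}$ if $d=3$), since the $L^2$-norm of the difference already carries this rate by Propositions~\ref{prop: rate_convergence} and \ref{prop: rate_convergence_bis}. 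A direct computation mimicking the proof of Proposition~\ref{prop: stronger_convergence}, using $h_n u_n^\pm = \mu_n^\pm u_n^\pm$ and $h_n^\ell u_n^\ell = \mu u_n^\ell$, yields
\begin{multline*}
h_n^\ell\bigl(\chi_n^\ell u_n^\pm \mp u_n^\ell\bigr) = \mu\bigl(\chi_n^\ell u_n^\pm \mp u_n^\ell\bigr) + (\mu_n^\pm - \mu)\chi_n^\ell u_n^\pm \\
- 2\nabla\chi_n^\ell \cdot \nabla u_n^\pm - (\Delta\chi_n^\ell) u_n^\pm - \bigl(V_n^r + (\lvert u_n^+\rvert^2 - \lvert u_n^\ell\rvert^2)\ast|\cdot|^{-1}\bigr)\chi_n^\ell u_n^\pm.
\end{multline*}
The $\mu(\cdots)$ piece and the $(\mu_n^\pm - \mu)\chi_n^\ell u_n^\pm$ piece are controlled by the $L^2$-bound above and by Proposition~\ref{prop: rate_convergence_ground_state_energy} respectively, while the two commutator terms are $\grandO{L_n^{-\infty}}$ by Proposition~\ref{cor: exponential decay u^+_n}, since $\nabla\chi_n^\ell$ and $\Delta\chi_n^\ell$ are supported where $u_n^\pm$ and $\nabla u_n^\pm$ are exponentially small.

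The crucial, and hardest, term is the last one, which explains the polynomial $L_n^{-3}$ rate in dimension~2. I split
\begin{equation*}
V_n^r + (\lvert u_n^+\rvert^2 - \lvert u_n^\ell\rvert^2)\ast|\cdot|^{-1} = \tau_n^r V^{\mathrm{MF}} + \bigl((\lvert u_n^+\rvert^2 - \lvert u_n^r\rvert^2 - \lvert u_n^\ell\rvert^2)\ast |\cdot|^{-1}\bigr).
\end{equation*}
On $\supp\chi_n^\ell$, the distance to $\mathbf{x}_n$ is at least $L_n/4$, so Lemma~\ref{lemma: long range behavior_MF_potential} provides $\norme{\tau_n^r V^{\mathrm{MF}}}_{L^\infty(\supp\chi_n^\ell)} = \grandO{L_n^{-3}}$ if $d=2$ and $\grandO{L_n^{-\infty}}$ if $d=3$, which is exactly the announced rate. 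For the nonlinear remainder, I use the identity $\lvert u_n^+\rvert^2 - \lvert u_n^r\rvert^2 - \lvert u_n^\ell\rvert^2 = (\lvert u_n^+\rvert^2 - \lvert u_n^r + u_n^\ell\rvert^2) + 2 u_n^r u_n^\ell$: the cross term $u_n^r u_n^\ell$ is exponentially small by Lemma~\ref{lemma: tunneling} and Proposition~\ref{prop: u_exponential_falloff}, while the first piece factors as $(u_n^+ - u_n^r - u_n^\ell)(u_n^+ + u_n^r + u_n^\ell)$, whose $L^p$-norms inherit the $\grandO{L_n^{-3+\epsilon}}$ rate from the $H^1$-version of the decomposition above (which follows once Propositions~\ref{prop: rate_convergence} and \ref{prop: rate_convergence_bis} are in hand). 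Combining these with Lemma~\ref{lemma: technical2} for the convolution with $|\cdot|^{-1}$ and Corollary~\ref{cor: boundedness_u_better} for the uniform $L^\infty$-bound on $\chi_n^\ell u_n^\pm$ delivers the required $L^2$ estimate.
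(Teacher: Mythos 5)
Your argument is correct in substance, but it takes a genuinely different route from the paper. The paper treats the full difference at once: it writes $(-\Delta+\abs{\mu})^{1-\alpha}\bigl(u_n^\pm-(u_n^r\pm u_n^\ell)\bigr)$ as $(-\Delta+\abs{\mu})^{-\alpha}$ applied to a sum of source terms (see \eqref{eq: ligne1}--\eqref{eq: ligne4}), and the delicate contribution $(\tau_n^r V^{\mathrm{MF}})(u_n^r-\chi_n^r u_n^+)$ --- a Coulomb singularity multiplying a difference that is only known to be small in $H^{1-2\alpha}(\R^d)$ --- is controlled through the fractional Hardy-type bound of Lemma \ref{lemma: fractional_hardy_bis}. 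You instead localize first with the partition of unity and estimate $\chi_n^\ell u_n^\pm \mp u_n^\ell$ via the domain estimate of Proposition \ref{prop: study of d(h)} for the translated operator $h_n^\ell$, i.e.\ you run a quantitative version of the argument the paper uses qualitatively for $d=3$ in Proposition \ref{prop: stronger_convergence}. In your identity the singularity at $-\mathbf{x}_n$ is absorbed into $h_n^\ell$, and the singularity at $+\mathbf{x}_n$ only ever multiplies functions supported at distance of order $L_n$ from it, so Lemma \ref{lemma: fractional_hardy_bis} is never needed; your splitting of the potential term (the piece $\tau_n^r V^{\mathrm{MF}}$, which is $\grandO{L_n^{-3}}$ on $\supp\chi_n^\ell$ by Lemma \ref{lemma: long range behavior_MF_potential}, plus a nonlinear remainder factored through $\norme{u_n^+-u_n^r-u_n^\ell}_{H^1(\R^d)}$ and Lemma \ref{lemma: technical2}) reproduces exactly the announced rates. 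The trade-off: your route is more elementary (no fractional-power manipulations beyond Proposition \ref{prop: study of d(h)}) but generates cutoff commutator terms; the paper's route avoids those commutators in the main identity at the price of the fractional Hardy machinery.

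Two points to tidy up. First, for the commutator terms $2\nabla\chi_n^\ell\cdot\nabla u_n^- + (\Delta\chi_n^\ell)u_n^-$ you invoke Proposition \ref{cor: exponential decay u^+_n}, but the stated decay results cover $u_n^+$, $\nabla u_n^+$ and $u_n^-$ only; no bound on $\nabla u_n^-$ appears in the paper. You need exponential smallness of $\norme{\nabla u_n^-}_{L^2(\{\abs{x_1}\leq \delta\sqrt{L_n}\})}$; this is easy to obtain --- e.g.\ by a Caccioppoli-type estimate, testing the eigenvalue equation for $u_n^-$ against $\theta_n^2 u_n^-$ with $\theta_n$ supported in a slightly larger strip and using the pointwise bound of Proposition \ref{cor: exponential decay u^-_n}, or by repeating for $u_n^-$ the Yukawa-kernel argument used for $\nabla u_n^+$ --- but it must be said. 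Second, membership of $\chi_n^\ell u_n^\pm \mp u_n^\ell$ in $\mathcal{D}(h_n^\ell)$ requires no cancellation of singularities at $-\mathbf{x}_n$: both functions lie in the domain separately, since $V_n^r$ is bounded on $\supp\chi_n^\ell$; your parenthetical remark there is unnecessary, though harmless.
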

\begin{proof}
	We do the proof for $u_n^+$, the arguments being similar for $u_n^-$. Let $\alpha \in \intoo{0}{\frac{1}{2}}$.
	We write
	\begin{align}
	\label{eq: ligne1}
	(-\Delta + \abs{\mu})^{1-\alpha}\left(u_n^+ - (u_n^r + u_n^\ell)\right)
	=&~ \frac{1}{(-\Delta+\abs{\mu})^\alpha} \Big[(\mu_n^+ - \mu)u_n^+  \\
	\label{eq: ligne2}
	& +\left((\tau_n^r + \tau_n^\ell) V^\mathrm{MF} - V^\mathrm{MF}_n\right)u_n^+ \\
	\label{eq: ligne3}
	& +(1+\mathcal{R}) \left[(\tau_n^r V^\mathrm{MF}) (u_n^r - \chi_n^r u_n^+)\right] \\
	\label{eq: ligne4}
	& + (1+\mathcal{R}) \left[(\tau_n^r V^\mathrm{MF}) (\chi_n^r - 1)u_n^+\right] \Big] \pt
	\end{align}
	The right side of \eqref{eq: ligne1} is bounded in $L^2$ using Proposition \ref{prop: rate_convergence_ground_state_energy} and the boundedness of $(-\Delta + \abs{\mu})^{-\alpha}$. The term \eqref{eq: ligne2} is bounded using Hölder's inequality, estimate \eqref{eq: preliminary_estimate} from the proof of Proposition~\ref{lemma: lemme_utile_mais_relou} and Corollary \ref{cor: boundedness_u_better}. To bounds \eqref{eq: ligne3}, we write
	\begin{align*}
	&\norme{\frac{1}{(-\Delta + \abs{\mu})^{\alpha}} (\tau_n^r V^\mathrm{MF}) (u_n^r - \chi_n^r u_n^+)}_{L^2(\R^d)}\\
	&\lesssim \norme{\frac{1}{(-\Delta + \abs{\mu})^{\alpha}} \left(1+\frac{1}{\abs{x}}\right) \frac{1}{(-\Delta + \abs{\mu})^{-\frac{1}{2} + \alpha}}} \norme{u_n^r - \chi_n^r u_n^+}_{H^{1 - 2 \alpha}(\R^d)} \, ,
	\end{align*}
	then we use Lemma \ref{lemma: fractional_hardy_bis} and Proposition \ref{prop: rate_convergence}. Finally, to bound \eqref{eq: ligne4}, we recall that $\supp (1 - \chi_n^r) \subset \{x_1 \leq \delta \sqrt{L_n}\}$ for some $\delta >0$. Then, using Lemma \ref{lemma: long range behavior_MF_potential} and $\sup_n \norme{u_n^+}_{L^\infty(\R^d)} < \infty$, we obtain 
	\[
	\norme{(\tau_n^r V^\mathrm{MF}) (\chi_n^r - 1)u_n^+}_{L^2(\R^d)} \leq \begin{cases}
	\grandO{L_n^{-3}} & \text{if }d=2 \, ,\\
	\grandO{L_n^{-\infty}} & \text{if }d=3 \pt
	\end{cases}
	\]
	This ends the proof of Proposition \ref{prop: convergence_rate_higher_sobolev_spaces}.
\end{proof}

\subsection{Sharper exponential bounds for \texorpdfstring{$u_n^\pm$}{un+-}}

\begin{prop}[Sharper exponential pointwise bounds for $u_n^+$]
	\label{prop: sharper_exponential_bounds}
	There exists $C >0$ such that for $n$ large enough we have for all $x\in \R^d$
	\begin{align}
	\label{eq: sharper_bound_1}
	u_n^+(x) &\geq \frac{1}{C} \left(\frac{e^{-\left(\abs{\mu} + L_n^{-1}\right)^{\frac{1}{2}} \abs{x - \mathbf{x}_n}}}{1+\abs{x - \mathbf{x}_n}^{\frac{d-1}{2}}} + \frac{e^{-\left(\abs{\mu} + L_n^{-1}\right)^{\frac{1}{2}} \abs{x+\mathbf{x}_n}}}{1+\abs{x+\mathbf{x}_n}^{\frac{d-1}{2}}}\right) \, ,\\
	\label{eq: sharper_bound_2}
	u^+_n(x) &\leq C \left(\frac{e^{-\left(\abs{\mu} - L_n^{-1}\right)^{\frac{1}{2}} \abs{x - \mathbf{x}_n}}}{1+\abs{x-\mathbf{x}_n}^\frac{d-1}{2}} + \frac{e^{-\left(\abs{\mu} - L_n^{-1}\right)^{\frac{1}{2}} \abs{x+\mathbf{x}_n}}}{1+\abs{x+\mathbf{x}_n}^\frac{d-1}{2}}\right)  \pt
	\end{align}
\end{prop}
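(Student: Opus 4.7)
The plan is to sharpen the comparison argument of Proposition \ref{cor: exponential decay u^+_n} by replacing the crude exponential barriers $e^{-\gamma|x\pm\mathbf{x}_n|}$ with explicit positive solutions of $(-\Delta + \beta^2)\varphi = 0$ away from the origin: the modified Bessel function $K_0(\beta|\cdot|)$ in dimension $2$ and the Yukawa potential $e^{-\beta|\cdot|}/(4\pi|\cdot|)$ in dimension $3$. Denoting the relevant fundamental solution by $Y_\beta$, I set $\beta_n^- = \sqrt{|\mu|-L_n^{-1}}$ and $\beta_n^+ = \sqrt{|\mu|+L_n^{-1}}$, and introduce the comparison functions
\[
f_n^\pm(x) = Y_{\beta_n^\pm}(x-\mathbf{x}_n) + Y_{\beta_n^\pm}(x+\mathbf{x}_n).
\]
Using the classical asymptotics $K_0(r)\sim\sqrt{\pi/(2r)}\,e^{-r}$, these functions are comparable up to multiplicative constants to the right-hand sides of \eqref{eq: sharper_bound_1}--\eqref{eq: sharper_bound_2}, so the task reduces to proving $f_n^+/C \leq u_n^+ \leq C f_n^-$ on $\R^d$.

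For $x$ away from the nuclei, one computes
\[
(h_n - \mu_n^+)\,Y_{\beta_n^\pm}(x\mp\mathbf{x}_n) = \bigl(V_n^{\mathrm{MF}}(x) + |\mu_n^+| - (\beta_n^\pm)^2\bigr)\,Y_{\beta_n^\pm}(x\mp\mathbf{x}_n),
\]
and by Proposition \ref{prop: rate_convergence_ground_state_energy} the shifts satisfy $|\mu_n^+|-(\beta_n^-)^2 = L_n^{-1} + \petito{L_n^{-1}}$ and $|\mu_n^+|-(\beta_n^+)^2 = -L_n^{-1} + \petito{L_n^{-1}}$. To close the argument one needs the pointwise smallness $|V_n^{\mathrm{MF}}(x)| \leq L_n^{-1}/4$ on a suitable domain. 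I would obtain this by combining the pointwise approximation $V_n^{\mathrm{MF}} = (\tau_n^\ell + \tau_n^r)V^{\mathrm{MF}} + \petito{L_n^{-1}}$, deduced from the expansion $|u_n^+|^2 - |u_n^\ell|^2 - |u_n^r|^2 = (u_n^+-u_n^\ell-u_n^r)(u_n^++u_n^\ell+u_n^r) - 2u_n^\ell u_n^r$ via Proposition \ref{prop: rate_convergence}, Lemma \ref{lemma: technical2} and Lemma \ref{lemma: tunneling}, with the long-range behavior of $V^{\mathrm{MF}}$ from Lemma \ref{lemma: long range behavior_MF_potential}. This gives $|V_n^{\mathrm{MF}}|\leq L_n^{-1}/4$ on $\Omega_n := \{|x-\mathbf{x}_n|>R_n\}\cap\{|x+\mathbf{x}_n|>R_n\}$, with $R_n$ a fixed large constant in dimension $2$ and of order $\log L_n$ in dimension $3$; consequently $f_n^-$ is a supersolution and $f_n^+$ a subsolution of $h_n-\mu_n^+$ on $\Omega_n$.

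The two inequalities then follow from Lemma \ref{lemma: comparison_hoffmann_ostenhoff} applied on $\Omega_n$: for the upper bound with $\psi = C f_n^-$, $\varphi = u_n^+$, $W_1 = (\beta_n^-)^2$, $W_2 = V_n^{\mathrm{MF}}+|\mu_n^+|$ (the crucial inequality $W_1\leq W_2$ is precisely the supersolution condition and $W_1\geq 0$ is trivial); for the lower bound with the roles of $\psi$ and $\varphi$ exchanged, $W_1 = V_n^{\mathrm{MF}}+|\mu_n^+|$, $W_2 = (\beta_n^+)^2$. In dimension $2$, the boundary conditions on $\partial\Omega_n$ follow immediately from the uniform $L^\infty$-bound on $u_n^+$ (Corollary \ref{cor: boundedness_u_better}), from the positivity and boundedness of $f_n^\pm$ on the fixed shell, and from the pointwise lower bound of Proposition \ref{cor: exponential decay u^+_n} on the shell; the comparison constant is then uniform in $n$. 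One extends the bounds to all of $\R^d$ by combining again the uniform $L^\infty$-bound on $u_n^+$ with the boundedness from below of $f_n^\pm$ on compact neighborhoods of each nucleus.

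The main obstacle is the boundary condition in dimension $3$, where $R_n \asymp \log L_n$: with fixed $\epsilon>0$, the crude bounds of Proposition \ref{cor: exponential decay u^+_n} produce on $\partial\Omega_n$ ratios $u_n^+/f_n^-$ and $f_n^+/u_n^+$ of order $L_n^{\epsilon\sqrt{|\mu|}}$, which destroys uniform control of the comparison constants. To handle this I would choose $\epsilon = \epsilon_n \to 0$ at rate $(\log L_n)^{-1}$ and track the polynomial-in-$\epsilon^{-1}$ blow-up of $C_{\epsilon_n}$ coming from the scale $R_\epsilon$ at which $V^{\mathrm{MF}}$ becomes smaller than $\epsilon$, so that the final constants in \eqref{eq: sharper_bound_1}--\eqref{eq: sharper_bound_2} remain independent of $n$.
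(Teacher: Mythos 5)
Your overall architecture (comparison on the exterior of neighborhoods of the nuclei, Bessel/Yukawa barriers with shifted rates $\sqrt{|\mu|\pm L_n^{-1}}$, boundary data from the a priori bounds, Lemma \ref{lemma: comparison_hoffmann_ostenhoff}) is the same as the paper's, and your upper bound in dimension $2$ does go through, because there one only needs $V_n^{\mathrm{MF}}\gtrsim -L_n^{-2}$ on the exterior region, which holds at fixed distance from the nuclei. The genuine gap is the smallness claim on which your lower bound rests: in dimension $2$ one has $V^{\mathrm{MF}}(x)=\frac{m_1}{4|x|^3}+\grandO{|x|^{-5}}>0$ (Lemma \ref{lemma: long range behavior_MF_potential}), so on $\{|x\pm\mathbf{x}_n|>R_n\}$ with $R_n$ a \emph{fixed} constant the potential $V_n^{\mathrm{MF}}$ is of size $\sim R_n^{-3}$, a fixed positive number, not $\leq L_n^{-1}/4$; hence the subsolution condition $V_n^{\mathrm{MF}}+|\mu_n^+|\leq(\beta_n^+)^2$ fails on the annuli $R_n\leq|x\mp\mathbf{x}_n|\lesssim L_n^{1/3}$. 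You have the two cases backwards: it is $d=2$, not $d=3$, that forces the inner radius to grow, and it must grow like $L_n^{1/3}\gg\log L_n$ (moreover, since $V^{\mathrm{MF}}\leq 0$ when $d=3$, the same difficulty there afflicts the \emph{upper} bound). Once the inner radius grows, your boundary comparison relies on Proposition \ref{cor: exponential decay u^+_n} at distance $R_n$, which costs a factor $C_\epsilon e^{\epsilon\sqrt{|\mu|}R_n}$; in $d=2$ this is $e^{\epsilon\sqrt{|\mu|}L_n^{1/3}}$, superpolynomial for fixed $\epsilon$, and your proposed cure (taking $\epsilon_n\to 0$ and invoking a polynomial-in-$\epsilon^{-1}$ growth of $C_{\epsilon_n}$) is unsubstantiated: nothing in the paper or in your argument quantifies the $\epsilon$-dependence of those constants, and the target estimate has no slack to absorb an $n$-dependent loss, since perturbing the rate at scale $L_n^{-1}$ only changes the bound by $O(1)$ factors on the relevant range $|x\mp\mathbf{x}_n|\lesssim L_n$.

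The paper's proof avoids both problems by keeping the inner radius $R$ fixed, so that Proposition \ref{prop: stronger_convergence} gives boundary constants uniform in $n$, and by enriching the barrier for the lower bound: it uses $K_\alpha\bigl((|\mu|+L_n^{-1})^{1/2}|x\mp\mathbf{x}_n|\bigr)$ with a large parameter ($\alpha\geq 10+4\sqrt{5}$), whose centrifugal term $\sim\alpha^2/|x\mp\mathbf{x}_n|^2$ dominates the own-well tail $0\leq V^{\mathrm{MF}}\leq|x\mp\mathbf{x}_n|^{-2}$; the remaining cross term (the left potential against the right bubble) still has the wrong sign on a disk $\mathcal{C}_\beta$, and a geometric comparison of $\dist(\mathbf{x}_n,\mathcal{C}_\beta)$ with $\max_{x\notin\mathcal{C}_\beta}|x+\mathbf{x}_n|$ shows that contribution is exponentially outweighed by the reflected negative one. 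Some device of this kind (or a genuinely quantified version of your $\epsilon_n$-scheme) is needed; as written, the lower bound \eqref{eq: sharper_bound_1} is not proved.
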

\begin{proof}
	We write the proof in details in the case $d=2$ and mention the modifications when $d=3$. By Lemma \ref{lemma: technical2}, Corollary \ref{cor: boundedness_u} and Proposition \ref{prop: rate_convergence}, we can write
	\begin{align}
	\label{eq: decomposition_MF_potential}
	V_n^\mathrm{MF} = \left(\tau_n^\ell + \tau_n^r\right) V^\mathrm{MF} + \grandO{L_n^{-2}} \, ,
	\end{align}
	where the $O$ makes sense in $L^\infty(\R^2)$. By Proposition \ref{prop: rate_convergence_ground_state_energy}, we have
	\begin{align}
	\label{eq: decomposition_lagrange_multiplier}
	\mu_n^+ =  \mu +   \grandO{L_n^{-2} } \pt
	\end{align}	
	By Lemma \ref{lemma: long range behavior_MF_potential}, there exists $R >0$ such that
	\begin{align}
	\label{eq: behavior_MF}
	0 \leq V^\mathrm{MF}(x) \leq \abs{x}^{-2} \, ,
	\end{align}
	for all $\abs{x} \geq R$.
	Let $\Omega = B(-\mathbf{x}_n,R)^c \cup B(\mathbf{x}_n,R)^c$.
	For $\alpha \geq 0$, we introduce
	\[
	Y_{n,\alpha,\pm} = K_\alpha\left(\left(\abs{\mu} \pm L_n^{-1}\right)^\frac{1}{2} \abs{x}\right) \, ,
	\]
	where $K_\alpha$ denotes the modified Bessel function of the second kind with parameter $\alpha$. The function $Y_{n,\alpha,\pm} $ is positive and satisfies the equation
	\[
	\left(-\Delta + \frac{\alpha}{\abs{x}} - \mu \pm L_n^{-1}\right)Y_{n,\alpha,\pm} = 0\, ,
	\]
	in the region $\{\abs{x} \geq R\}$. Fix $A>0$. Using the following asymptotics \cite[pp. 266–267]{olver1997asymptotics}
	\begin{align*}
	\forall x >0,\quad K_\alpha(x) =\sqrt{ \frac{\pi}{2x} } e^{-x} \left(1 + R(\alpha,x)\right) \quad \text{where} \quad \abs{R(\alpha,x)} \leq \abs{\alpha^2 - 1/4}\frac{e^{\frac{\lvert\alpha^2 - 1/4\rvert}{x}}}{x} \, ,
	\end{align*}
	we show there exists $C>0$ such that
	\begin{align}
	\label{eq: bessel_asymptotics}
	\frac{1}{C \sqrt{\abs{x}}} e^{-\left(\abs{\mu} \pm L_n^{-1}\right)^{\frac{1}{2}} \abs{x}} \leq Y_{n,\alpha,\pm}(x) \leq \frac{C}{ \sqrt{\abs{x}}} e^{-\left(\abs{\mu} \pm L_n^{-1}\right)^{\frac{1}{2}} \abs{x}} \, ,
	\end{align}
	for $n$ large enough (depending only on $A$), for all $\abs{x} \geq R$ and for all $\alpha \in \intff{0}{A}$. In the following, we denote by $Y^{\ell/r}_{n,\alpha,\pm} =  \tau_n^{\ell/r} Y_{n,\alpha,\pm}$ the translations of $Y_{n,\alpha,\pm}$ by $\mathbf{x}_n$ or $-\mathbf{x}_n$.
	
	First, we show the upper bound $u_n^+ \leq C \left(Y^r_{n,0,-} + Y^\ell_{n,0,-}\right)$ on $\R^2$ for some constant $C>0$. By Proposition \ref{prop: stronger_convergence}, there exists $C>0$ such that $u_n^+ \leq C \left(Y^r_{n,0,-} + Y^\ell_{n,0,-}\right)$ on $\Omega^c$. Using \eqref{eq: decomposition_MF_potential}, \eqref{eq: decomposition_lagrange_multiplier} and the first inequality in \eqref{eq: behavior_MF}, we have for $n$ large enough
	\begin{align*}
	(-\Delta + V_n^\mathrm{MF} - \mu_n^+)\left(Y^r_{n,0,-} + Y^\ell_{n,0,-}\right) 
	&= \left(V_n^\mathrm{MF} - \mu_n^+ + \mu + L_n^{-1}\right)\left(Y^r_{n,0,-} + Y^\ell_{n,0,-}\right) \\
	&\geq \left( (\tau_n^\ell + \tau_n^r)V^\mathrm{MF} + L_n^{-1} + \grandO{L_n^{-2}}\right)\left(Y^r_{n,0,-} + Y^\ell_{n,0,-}\right) \\
	&\geq 0 \pt
	\end{align*}	
	By the second comparison Lemma \ref{lemma: comparison_hoffmann_ostenhoff}, we deduce
	\begin{align}
	\label{eq: upper_bound_sharper_1}
	u_n^+ \leq C \left(Y^r_{n,0,-} + Y^\ell_{n,0,-}\right) \quad \text{on} \quad \R^2 \pt
	\end{align}
	
	Now, we show the lower bound $u_n^+ \geq C^{-1} \left(Y^r_{n,\alpha,+} + Y^\ell_{n,\alpha,+}\right)$ for some $\alpha >2$. By Proposition~\ref{prop: stronger_convergence}, there exists $C>0$ such that $u_n^+ \geq C^{-1} \left(Y^r_{n,\alpha,+} + Y^\ell_{n,\alpha,+}\right)$ on $\Omega^c$. Using \eqref{eq: decomposition_MF_potential}, \eqref{eq: decomposition_lagrange_multiplier} and the second inequality in \eqref{eq: behavior_MF}, we have for $n$ large enough
	\begin{align*}
	(-\Delta + V_n^\mathrm{MF} - \mu_n^+) \left(Y^r_{n,\alpha,+} + Y^\ell_{n,\alpha,+}\right)
	\leq& ~\left( \frac{1}{\abs{x+\mathbf{x}_n}^2} - \frac{\alpha-1}{\abs{x-\mathbf{x}_n}^2}\right)Y^r_{n,\alpha,+} \\
	&+ \left( \frac{1}{\abs{x-\mathbf{x}_n}^2} - \frac{\alpha-1}{\abs{x+\mathbf{x}_n}^2}\right)Y^\ell_{n,\alpha,+} \\
	\leq &~ \left(1 + \mathcal{R}\right)\left[f_n Y^r_{n,\alpha,+}\right] \, ,
	\end{align*}
	where we have denoted $f_n(x) = \frac{1}{\abs{x+\mathbf{x}_n}^2} - \frac{\beta}{\abs{x-\mathbf{x}_n}^2}$ and $\beta = \alpha-1>1$. An elementary computation shows that $f_n(x) \geq 0$ if and only if $x \in \mathcal{C}_\beta$ where $\mathcal{C}_\beta$ is the disk defined by $\{\abs{x+\gamma_\beta \mathbf{x}_n} \leq s_\beta \frac{L_n}{2}\}$ with $\gamma_\beta = \frac{\beta+1}{\beta-1}$ and $s_\beta = \frac{2\sqrt{\beta}}{\beta-1}$. We also have
	\[
	d_\mathrm{min} \coloneqq \dist(\mathbf{x}_n,\mathcal{C}_\beta) = \frac{2\sqrt{\beta}}{\sqrt{\beta}+1}\frac{L_n}{2} \et d_\mathrm{max} \coloneqq \max_{x \notin \mathcal{C}_\beta} \abs{x+\mathbf{x}_n} = \frac{2}{\sqrt{\beta}-1} \frac{L_n}{2}\pt
	\]
	Let $x \in \mathcal{C}_\beta \setminus B(0,R)$. Using the estimates \eqref{eq: bessel_asymptotics}, we see that
	\begin{gather*}
	f_n(x)Y^r_{n,\alpha,+}(x) \leq \frac{C}{R^2 \sqrt{d_\mathrm{min}} } e^{-\left(\abs{\mu} + L_n^{-1}\right)^{\frac{1}{2}} d_\mathrm{min}} \, ,\\
	\mathcal{R}[f_nY^r_{n,\alpha,+}](x) \leq - \frac{\beta -1}{\abs{x+\mathbf{x}_n}^2} \frac{C}{\sqrt{d_\mathrm{max}} } e^{-\left(\abs{\mu} + L_n^{-1}\right)^{\frac{1}{2}} d_\mathrm{max}} \pt
	\end{gather*}
	Comparing the exponents, we see the amplitude of the second term is larger if we choose $\alpha >2$ such that $\frac{2\sqrt{\beta}}{\sqrt{\beta}+1} \geq \frac{2}{\sqrt{\beta}-1} + 1$ that is if $\alpha \geq 10 + 4\sqrt{5}$. As a consequence, for $n$ large enough, we have 
	\[
	(-\Delta + V_n^\mathrm{MF} - \mu_n^+) \left(Y^r_{n,\alpha,+} + Y^\ell_{n,\alpha,+}\right) \leq 0 \, ,
	\]
	on $\Omega$ and, by the second comparison Lemma \ref{lemma: comparison_hoffmann_ostenhoff}, we have
	\begin{align}
	\label{eq: upper_bound_sharper_2}
	u_n^+ \geq C^{-1} \left(Y^r_{n,\alpha,+} + Y^\ell_{n,\alpha,+}\right) \quad \text{on} \quad \R^2 \pt
	\end{align}
	We obtain the exponential bounds \eqref{eq: sharper_bound_1} and \eqref{eq: sharper_bound_2} for $u_n^+$ by using \eqref{eq: upper_bound_sharper_1} and \eqref{eq: upper_bound_sharper_2} together with \eqref{eq: bessel_asymptotics}.
	
	For the case $d=3$, we use modified spherical Bessel functions instead of $K_\alpha$. The rest of the argument is the same.	
\end{proof}
\begin{rem}
	In the statement of Proposition \ref{prop: sharper_exponential_bounds}, we can replace $L_n^{-1}$ by $L_n^{-\delta}$ for any $\delta >0$ if $d=2$ and by $L_n^{-k}$ for any $k\in \N$ if $d=3$, the constant $C$ remaining independent from $\delta$ or $k$.
\end{rem}

\subsection{Estimate on the spectral gap \texorpdfstring{$\mu_L^- - \mu_L^+$}{}}

In this section, we estimate the spectral gap $\mu_n^- - \mu_n^+$. From Proposition \ref{prop: rate_convergence_ground_state_energy}, we already know that $\abs{\mu_n^- - \mu_n^+} = \grandO{L_n^{-3+\epsilon}}$ for any $\epsilon >0$. The next theorem says this gap is in fact exponentially small.
\begin{theo}[Spectral gap estimation]
	\label{th: spectral_gap_estimation}
	There exists $C>0$ such that we have for $n$ large enough
	\begin{align}
	\label{eq: spectral_gap_estimation_bis}
	\boxed{\frac{1}{C}\frac{T_n}{L_n^d} \leq \mu_n^- - \mu_n^+ \leq C T_n \pt}
	\end{align}
\end{theo}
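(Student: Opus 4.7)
The plan is to apply the ground state substitution formula advertised in the strategy of proof (Lemma \ref{lemma: ground_state_resolution}) and then estimate the resulting boundary flux using the sharp pointwise bounds from Proposition \ref{prop: sharper_exponential_bounds}. Since both $u_n^+$ and $u_n^-$ are eigenfunctions of the \emph{same} linear operator $h_n$ (whose nonlinearity is fixed by $u_n^+$), the potential and nonlinear terms cancel from the commutator and
\[
u_n^- h_n u_n^+ - u_n^+ h_n u_n^- = \divergence\bigl(u_n^+\nabla u_n^- - u_n^-\nabla u_n^+\bigr) = (\mu_n^+-\mu_n^-)\,u_n^+u_n^-.
\]
Integrating over $\Omega = \{x_1>0\}$ and using the parities $u_n^-(0,y)=0$ and $\partial_{x_1}u_n^+(0,y)=0$ yields
\begin{align*}
\mu_n^- - \mu_n^+ = \frac{\displaystyle\int_{\{x_1=0\}} u_n^+(0,y)\,\partial_{x_1}u_n^-(0,y)\,d\sigma(y)}{\displaystyle\int_\Omega u_n^+\,u_n^-}.
\end{align*}
By Proposition \ref{prop: convergence_rate_higher_sobolev_spaces}, Corollary \ref{cor: boundedness_u_better}, and the exponential smallness of $u_n^\ell$ on $\Omega$ (Proposition \ref{prop: u_exponential_falloff}), one has $\int_\Omega u_n^+u_n^- = \int_\Omega|u_n^r|^2 + \petito{1} \to 1$, so the denominator is pinched between two positive constants and the problem reduces to sandwiching the numerator between constant multiples of $T_n$.

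For the \emph{upper bound}, Proposition \ref{prop: sharper_exponential_bounds} gives the pointwise control of $u_n^+$ on the midplane; the analogous upper bound for $|u_n^-|$ is obtained by the same proof applied on each of the two half-spaces $\{\pm x_1 > 0\}$, where $u_n^-$ has constant sign and vanishes on the Dirichlet boundary of the second comparison Lemma \ref{lemma: comparison_hoffmann_ostenhoff}. To transfer the bound to $|\nabla u_n^-|$, observe that on any unit ball centered at a point of the midplane the coefficients of $h_n$ are uniformly bounded (we are at distance $\geq L_n/2$ from the nuclei) and $\Delta u_n^- = (V_n^\mathrm{MF}-\mu_n^-)u_n^-$, so standard interior elliptic regularity plus a Sobolev embedding yields $|\nabla u_n^-(x)| \lesssim \norme{u_n^-}_{L^\infty(B(x,1))}$, which inherits the exponential bound. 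Along the midplane one has $|x\mp\mathbf{x}_n|\geq L_n/2$, and a Laplace-type evaluation of the resulting Gaussian integral produces a total flux of order $T_n L_n^{-(d-1)/2}$, hence $\mu_n^- - \mu_n^+ \leq CT_n$.

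The \emph{lower bound} is the main obstacle: one needs a pointwise \emph{lower} bound on $\partial_{x_1}u_n^-$ at $x_1=0$, which Proposition \ref{prop: sharper_exponential_bounds} does not directly supply. The plan is to adapt the subsolution argument of that proposition to $\Omega$ using an \emph{odd} Bessel-type comparison function vanishing on $\partial\Omega$: set $\phi_n(x) = Y_{n,\alpha,+}(x-\mathbf{x}_n) - Y_{n,\alpha,+}(x+\mathbf{x}_n)$ with $Y_{n,\alpha,+}(\cdot)=K_\alpha(\sqrt{|\mu|+L_n^{-1}}\,|\cdot|)$, where $\alpha$ is chosen large enough (following the same strategy as in the proof of Proposition \ref{prop: sharper_exponential_bounds}) so that the centrifugal terms $\alpha|x\mp\mathbf{x}_n|^{-2}$ absorb the effective potential $(\tau_n^\ell+\tau_n^r)V^\mathrm{MF}$ everywhere on $\Omega$. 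Since $\phi_n$ vanishes on $\{x_1=0\}$ and $u_n^->0$ on $\Omega$, the second comparison Lemma \ref{lemma: comparison_hoffmann_ostenhoff} applied on $\Omega\setminus(B(\mathbf{x}_n,R)\cup B(-\mathbf{x}_n,R))$ delivers $u_n^-\geq c\,\phi_n$ on $\Omega$. Consequently $\partial_{x_1}u_n^-(0,y)\geq c\,\partial_{x_1}\phi_n(0,y)$, an explicit quantity bounded from below via the asymptotics of $K_\alpha$ and $K_\alpha'$; combined with the lower bound on $u_n^+$ from Proposition \ref{prop: sharper_exponential_bounds} and the same Gaussian evaluation, this yields a flux bounded below by a positive constant times $T_n/\sqrt{L_n}$ independently of the dimension, which is \emph{a fortiori} $\geq T_n/(CL_n^d)$. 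The hard work is thus concentrated in constructing $\phi_n$ so that it is globally a subsolution on $\Omega$ up to the Dirichlet boundary \emph{and} has a nondegenerate $x_1$-derivative there.
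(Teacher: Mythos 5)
Your argument is correct in outline, but it takes a genuinely different route from the paper, and it is worth spelling out the trade-off. The paper never writes the gap as a boundary flux: for the upper bound it feeds the odd trial state $g_nu_n^+$ into the variational characterization of $\mu_n^-$ and converts the Rayleigh quotient into $\tfrac12\norme{(\nabla g_n)u_n^+}_{L^2}^2$ via the substitution formula of Lemma \ref{lemma: ground_state_resolution} (note, incidentally, that the identity you actually use is a Green/Wronskian identity across the midplane, not that lemma), and for the lower bound it applies the same substitution formula to $\widetilde f_n=f_nu_n^-/u_n^+$ and runs a one-dimensional Cauchy--Schwarz argument along horizontal segments joining the midplane, where $\widetilde f_n=0$, to a neighborhood of $\mathbf{x}_n$, where $\widetilde f_n\ge\tfrac12$; in this way only the sharp two-sided bounds on $u_n^+$ from Proposition \ref{prop: sharper_exponential_bounds} are needed, and no pointwise information on $u_n^-$ beyond the lossy bound of Proposition \ref{cor: exponential decay u^-_n}. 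Your flux formula requires precisely the two ingredients the paper avoids: a loss-free upper bound on $u_n^-$ and $\nabla u_n^-$ at the midplane (the $(1-\epsilon)$ bound of Proposition \ref{cor: exponential decay u^-_n} would only give $T_n^{1-\epsilon}$, which is not $O(T_n)$), and a lower bound on $\partial_{x_1}u_n^-$ there. Both can indeed be supplied along the lines you indicate: the proof of Proposition \ref{prop: sharper_exponential_bounds} repeats verbatim for $u_n^-$ on each half-space with the midplane as Dirichlet boundary, interior elliptic estimates transfer the sup bound to the gradient near the midplane (the coefficients are uniformly bounded there), and your odd comparison function $\phi_n=Y^r-Y^\ell$ does work --- in fact more easily than the two-sided construction of Proposition \ref{prop: sharper_exponential_bounds}, because on $\{x_1>0\}$ one has $Y^r\ge Y^\ell$ and $\abs{x-\mathbf{x}_n}\le\abs{x+\mathbf{x}_n}$, so the cross terms in $\Delta\phi_n-(V_n^{\mathrm{MF}}-\mu_n^-)\phi_n$ appear with a favorable sign and a fixed moderate Bessel index suffices, the $L_n^{-1}$ shift in the exponent absorbing the $O(L_n^{-2})$ potential and eigenvalue errors; the Hopf-type passage from $u_n^-\ge c\,\phi_n$ on the half-space to $\partial_{x_1}u_n^-(0,y)\ge c\,\partial_{x_1}\phi_n(0,y)$ is legitimate since both functions vanish on the midplane and are smooth there. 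What your route buys, at the price of these additional estimates, is a two-sided asymptotic of order $T_nL_n^{-(d-1)/2}$ for the gap, consistent with the paper's remark that the $L^{-d}$ in the lower bound of \eqref{eq: spectral_gap_estimation} should be removable; what the paper's route buys is economy, since it dispenses with any fine pointwise analysis of $u_n^-$ altogether.
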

As in \cite[Proposition 2.2]{simon1984semiclassicalII} or \cite{olgiati2020hartree}, we use the ground state substitution formula to improve the convergence rate.
\begin{lemma}[Ground state substitution formula \cite{simon1984semiclassicalII}]
	\label{lemma: ground_state_resolution}
	Let $\Omega \subset \R^d$ be an open set.
	Let $H = -\Delta + V$ be a bounded from below self-adjoint operator on $L^2(\Omega)$. Let $\psi$ (resp. $\lambda$) denotes the ground state (resp. ground state energy) associated with $H$. Let $g$ be any $\mathcal{C}^1(\Omega)$ uniformly bounded function. Then, we have
	\[
	\pdtsc{g\psi}{\left(H - \lambda\right)g\psi}_{L^2(\Omega)} = \frac{1}{2} \norme{(\nabla g) \psi}_{L^2(\Omega)}^2 \pt
	\]
\end{lemma}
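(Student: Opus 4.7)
The plan is to prove the identity by a direct algebraic manipulation combining the eigenvalue equation $H\psi = \lambda \psi$ with one integration by parts. The key observation is that the pointwise expansion
\[
-\Delta(g\psi) = -(\Delta g)\psi - 2\,\nabla g \cdot \nabla \psi - g\,\Delta \psi\, ,
\]
combined with $-\Delta\psi = (\lambda - V)\psi$, produces an exact cancellation of $g\,\Delta\psi$ with the $g(V-\lambda)\psi$ coming from the potential part, leaving the pointwise identity
\[
(H-\lambda)(g\psi) = -(\Delta g)\psi - 2\,(\nabla g \cdot \nabla \psi)\, .
\]
The fact that $\Delta g$ drops out of the final formula (where only $\nabla g$ appears) signals that the identity really lives at the level of the quadratic form and that the assumption $g\in \mathcal{C}^1$ should be enough.

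Consequently, I would work directly at the level of the quadratic form $Q_H(v) = \int_\Omega \abs{\nabla v}^2 + V\abs{v}^2$ in order to avoid ever needing $\Delta g$. The chain rule gives
\[
Q_H(g\psi) = \int_\Omega g^2 \abs{\nabla \psi}^2 + \int_\Omega \psi^2 \abs{\nabla g}^2 + 2\int_\Omega g\psi\,(\nabla g \cdot \nabla \psi) + \int_\Omega V g^2 \psi^2\, ,
\]
which is valid since $g$ is bounded and $\mathcal{C}^1$ and $\psi\in H^1$. Testing the weak form of $H\psi = \lambda\psi$ against $g^2\psi\in H^1$ yields
\[
\int_\Omega g^2 \abs{\nabla\psi}^2 + 2\int_\Omega g\psi\,(\nabla g \cdot \nabla\psi) + \int_\Omega V g^2\psi^2 = \lambda \int_\Omega g^2\psi^2\, ,
\]
and substituting this back into the previous identity produces the desired cancellation and leaves $Q_H(g\psi) - \lambda\norme{g\psi}^2_{L^2(\Omega)}$ as a multiple of $\norme{(\nabla g)\psi}^2_{L^2(\Omega)}$, which after rearrangement yields the stated identity.

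The only obstacle is purely technical: one must ensure that $g^2\psi$ is an admissible test function for the weak equation. This requires $g^2\psi \in H^1$ (immediate since $g,\nabla g \in L^\infty$ and $\psi \in H^1$) and $Vg^2\psi^2 \in L^1$ (immediate from $V\psi \in L^2$ and $g\psi \in L^2$). For the mean-field operators $h$ and $h_L$ considered in this paper, these conditions are automatic from Proposition \ref{prop: study of d(h)} and Remark \ref{rem: regularity_h_L}. If $\Omega$ is unbounded, a standard cutoff argument, combined with the exponential decay of $\psi$ provided by Proposition \ref{prop: u_exponential_falloff} and Proposition \ref{cor: exponential decay u^+_n}, handles the vanishing of the boundary terms at infinity and closes the argument.
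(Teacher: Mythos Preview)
The paper does not give its own proof of this lemma; it simply cites \cite{simon1984semiclassicalII}. Your approach---expanding $\abs{\nabla(g\psi)}^2$ via the product rule and then testing the weak form of the eigenvalue equation against $g^2\psi$---is exactly the standard proof and is correct.

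One small point worth flagging: carrying out your computation to the end gives
\[
\pdtsc{g\psi}{(H-\lambda)g\psi}_{L^2(\Omega)} = \norme{(\nabla g)\psi}_{L^2(\Omega)}^2\, ,
\]
\emph{without} the factor $\frac{1}{2}$. The $\frac{1}{2}$ in the lemma as stated appears to be a typo in the paper (Simon's original identity in \cite[Proposition~2.2]{simon1984semiclassicalII} has no such factor), so your ``after rearrangement yields the stated identity'' is off by this constant. This is harmless for the application in Theorem~\ref{th: spectral_gap_estimation}, where only estimates up to a multiplicative constant are used.
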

\begin{proof}[Proof of Theorem \ref{th: spectral_gap_estimation}]
	Let $\epsilon >0$. We introduce the set
	\[
	\Omega_n \coloneqq \enstq{x \in \R^2}{\abs{x - \mathbf{x}_n} \leq L_n \et \abs{x + \mathbf{x}_n} \leq L_n} \pt
	\]
	To get the upper bound in \eqref{eq: spectral_gap_estimation_bis}, we use Lemma \ref{lemma: ground_state_resolution} with a function $g_n \in \mathcal{C}^1(\R^d)$ satisfying
	\begin{gather*}
	\mathcal{R}[g_n] = - g_n ~\, ,\quad -1 \leq g_n \leq 1~\, , \quad \sup_n \norme{\nabla g_n }_{L^\infty(\R^d)} < \infty\, , \\
	g_n \equiv 1 \text{ on } B\left(\mathbf{x}_n,\frac{L_n - \delta}{2}\right) \et g_n \equiv 0 \text{ on } \Omega_n^c \, ,
	\end{gather*}
	for some constant $\delta>0$.
	Using the reflection symmetry $\mathcal{R}$, we have $g_nu_n^+ \perp u_n^+$. Hence, $g_n u_n^+$ is a trial state for the minimization problem \eqref{eq: min_problem_u_n_-} and we have
	\begin{align*}
	\mu_n^- - \mu_n^+ \leq \frac{\pdtsc{g_nu_n^+}{\left(h_n - \mu_n^+\right) g_nu_n^+}_{L^2(\R^d)}}{\norme{g_n u_n^+}^2_{L^2(\R^d)}} = \frac{1}{2}\left(\frac{\norme{(\nabla g_n) u_n^+}_{L^2(\R^d)}}{\norme{g_n u_n^+}_{L^2(\R^d)}}\right)^2 \pt
	\end{align*}
	Using the exponential bounds for $u_n^+$ (see for instance Proposition \ref{cor: exponential decay u^+_n}) and 
	\[
	\supp \nabla g_n\subset \Omega_n \setminus \left(B\left(\mathbf{x}_n,\frac{L_n - \delta}{2}\right) \cup B\left(-\mathbf{x}_n,\frac{L_n - \delta}{2}\right)\right) \, ,
	\]
	one can show that for $n$ large enough $\norme{g_n u_n^+}^2_{L^2(\R^d)} \geq 1$. Moreover, using the sharper exponential bounds on $u_n^+$ from Proposition \ref{prop: sharper_exponential_bounds}, we also have
	\begin{align*}
	\norme{\left(\nabla g_n\right) u_n^+}^2_{L^2(\R^d)}
	\lesssim \int_{\frac{L_n - \delta}{2} \leq \abs{x} \leq 2L_n} \frac{e^{-2(\abs{\mu} - L_n^{-1})^\frac{1}{2}\abs{x}}}{1+\abs{x}^{d-1}} \diff x \lesssim e^{-(\abs{\mu} - L_n^{-1})^\frac{1}{2}(L_n - \delta)} \lesssim e^{-\abs{\mu}^\frac{1}{2}L_n} = T_n \pt
	\end{align*}
	for some $C_\epsilon>0$. This gives the upper bound of \eqref{eq: spectral_gap_estimation_bis}.
	
	To get the lower bound in \eqref{eq: spectral_gap_estimation_bis}, we introduce a function $f_n \in \mathcal{C}^1(\R^d)$ satisfying the following conditions
	\begin{gather}
	\mathcal{R}[f_n] = f_n~\, ,\quad 0 \leq f_n \leq 1~\, , \quad \sup_n \norme{\nabla f_n }_{L^\infty(\R^d)} < \infty\, , \\
	f_n \equiv 1 \text{ on } \Omega_n \et f_n \equiv 0 \text{ on } \left(\Omega_n + B(0,1)\right)^c \pt
	\end{gather}
	We notice that by Proposition \ref{cor: exponential decay u^-_n} we have
	\begin{align*}
	\norme{f_n u_n^- - u_n^-}^2_{H^1(\R^d)} \lesssim \int_{\Omega_n^c} \left(\abs{\nabla u_n^-}^2 + \abs{u_n^-}^2\right) = \grandO{L_n^{-\infty}T_n} \pt
	\end{align*}
	This implies $\norme{f_n u_n^-}^2_{L^2(\R^2)} = 2 + \grandO{L_n^{-\infty}T_n}$.
	Then, recalling that for all $\epsilon >0$ there exists $C_\epsilon>0$ such that $0 \leq h_n - \mu_n^+ \leq -(1+\epsilon)\Delta + C_\epsilon$ in the sense of quadratic forms, we obtain
	\begin{align*}
	\mu_n^- - \mu_n^+ = \frac{1}{2} \pdtsc{u_n^-}{(h_n - \mu_n^+)u_n^-}_{L^2(\R^2)} 
	&= \frac{1}{2} \pdtsc{f_n u_n^-}{(h_n - \mu_n^+) f_n u_n^-}_{L^2(\R^d)} + \grandO{\norme{f_n u_n^- - u_n^-}^2_{H^1(\R^d)}} \\
	&= \frac{1}{2} \pdtsc{f_n u_n^-}{(h_n - \mu_n^+) f_n u_n^-}_{L^2(\R^d)} + \grandO{L_n^{-\infty}T_n} \pt
	\end{align*}
	We want to apply Lemma \ref{lemma: ground_state_resolution} with $\widetilde{f}_n = f_nu_n^- / u_n^+$ which is well-defined since $u_n^+ >0$. First, by Proposition~\ref{prop: regularity}, we have $\widetilde{f}_n \in \mathcal{C}^1(\R^d \setminus\{\pm \mathbf{x}_n\})$. Using Lemma \ref{lemma: ground_state_resolution}, we get
	\begin{align*}
	\mu_n^- - \mu_n^+ = \frac{1}{2} \pdtsc{\tilde{f}_n u_n^+}{(h_n - \mu_n^+) \tilde{f}_n u_n^+}_{L^2(\R^d)} + \grandO{L_n^{-\infty}T_n} = \frac{1}{4} \norme{(\nabla \widetilde{f}_n) u_n^+}_{L^2(\R^d)}^2 + \grandO{L_n^{-\infty}T_n} \pt
	\end{align*}
	It remains to bound from below the right side. We write the argument only for $d=2$, the other case being identical. From Proposition \ref{prop: sharper_exponential_bounds}, Young's inequality and triangle inequality, we have
	\begin{align}
	\label{eq: lower_bound}
	\forall x\in\R^d,\quad u_n^+(x) \gtrsim  \frac{e^{- \left(\abs{\mu} + L_n^{-1}\right)^\frac{1}{2} \frac{\abs{x-\mathbf{x}_n} + \abs{x+\mathbf{x}_n}}{2}}}{(1+\abs{x-\mathbf{x}_n}+\abs{x+\mathbf{x}_n})^{\frac{1}{2}}}  \pt
	\end{align}
	We consider the set $\Gamma_n = \{0 \leq x_1 \leq L_n/2\} \cap \{\abs{x_2} \leq 1\}$ where we have $\abs{x-\mathbf{x}_n}+\abs{x+\mathbf{x}_n} \leq L_n + 1$. Then, from \eqref{eq: lower_bound}, we have
	\begin{align*}
	\norme{(\nabla \widetilde{f}_n) u_n^+}_{L^2(\R^2)}^2 
	\geq \int_{\Gamma_n} \abs{(\nabla \widetilde{f}_n) u_n^+ }^2  \gtrsim \frac{T_n}{L_n}  \int_{\Gamma_n} \lvert \nabla \widetilde{f}_n\rvert^2  \gtrsim \frac{T_n}{L_n}\int_{-1}^1 \diff x_2  \int_0^{\frac{L_n}{2}} \diff x_1 \abs{\partial_{x_1} \widetilde{f}_n(x_1,x_2)}^2 \pt
	\end{align*}
	Because $u_n^- \equiv 0$ on $\{x_1 = 0\}$, we have
	\begin{align*}
	\abs{\widetilde{f}_n\left(\frac{L_n}{2},x_2\right)}^2 
	&= \abs{ \widetilde{f}_n\left(\frac{L_n}{2},x_2\right) - \widetilde{f}_n(0,x_2)}^2 = \abs{\int_{0}^{\frac{L_n}{2}} \diff x_1 \partial_{x_1} \widetilde{f}_n(x_1,x_2)}^2 \\
	&\leq \frac{L_n}{2} \int_{0}^{\frac{L_n}{2}} \diff x_1  \abs{\partial_{x_1}\widetilde{f}_n(x_1,x_2)}^2 \pt
	\end{align*}
	Moreover, by Proposition \ref{prop: convergence_rate_higher_sobolev_spaces}, $u_n^+$ and $u_n^-$ converge toward $u$ in the vicinity of $\mathbf{x}_n$ in $L^\infty(\R^2)$. We deduce that for all $\abs{x_2} \leq 1$, we have $\tilde{f}_n\left(\frac{L_n}{2},x_2\right) \geq \frac{1}{2}$ for $n$ large enough.
	Therefore, we have shown
	\[
	\norme{(\nabla \widetilde{f}_n) u_n^+}_{L^2(\R^2)}^2  \gtrsim \frac{T_n}{L_n^2} \, ,
	\]
	which is the lower bound in \eqref{eq: spectral_gap_estimation_bis}.
\end{proof}

\appendix

\section{Two-dimensional multipole expansion}\label{sec: expansion_formula}

In this section, we give a precise expansion formula for the quasi-coulombic potential created by a exponentially decreasing charge distribution $\rho$. When the potential is the Green function of the Laplace operator, this question is resolved by the Newton's theorem \cite{newton1833philosophiae} and all the orders except the first one disappear. The situation we consider here is more complicated as the potential is not harmonic but derives from the Green function of the non-local operator $\sqrt{-\Delta}$.
\begin{lemma}
	\label{lemma: potential_expansion}
	Let $\rho \in L^1(\R^2)$ be exponentially decaying from the origin, that is $\abs{\rho(x)} \leq C \exp(-\alpha\abs{x})$ for some $C,\alpha >0$. For $x\in\R^2 \setminus \{0\}$, we denote $\hat{x} = x/\abs{x}$. Then, for any $N\in\N$, we have the expansion formula up to order $N$
	\begin{align}
	\label{eq: expansion_formula_general_order_N}
	\left(\rho \ast |\cdot|^{-1}\right)(x)
	= \sum_{n =0}^{N-1}\frac{1}{\abs{x}^{n+1}} \left(\int_{\R^2} \rho(y)P_n(\widehat{x}\cdot\widehat{y})\abs{y}^n \diff y\right) + \grandO{\frac{1}{\abs{x}^{N+1}}}  \, ,
	\end{align}
	where the $O$ depends on $\alpha$, $C$ and $N$ and where $P_n$ denotes the \emph{Legendre} \emph{polynomial} of order $n$.
	If $\rho$ is radial then
	\begin{align}
	\label{potential_expansion2}
	\int_{\R^2} \frac{\rho(y)}{\abs{x - y}} \diff y 
	=\sum_{n =0}^{N-1} \frac{1}{4^{2n}} \binom{2n}{n}^2 \frac{1}{\abs{x}^{2n+1}}  \left(\int_{\R^2} \rho(y) \abs{y}^{2n} \diff y\right) +\grandO{\frac{1}{\abs{x}^{2N+1}}} \, ,
	\end{align}
	where the $O$ depends on $\alpha$, $C$ and $N$.
\end{lemma}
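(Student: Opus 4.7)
The plan is to reduce the problem to the classical three-dimensional Laplace expansion of $|x-y|^{-1}$ into Legendre polynomials, which is valid in any Euclidean space since the kernel only depends on the distance. Concretely, for $|y| < |x|$ one has
\[
\frac{1}{|x-y|} = \sum_{n=0}^\infty \frac{|y|^n}{|x|^{n+1}} P_n(\widehat{x}\cdot\widehat{y}) \, ,
\]
and since $|P_n|\leq 1$ on $[-1,1]$, the tail starting at $N$ is geometric and bounded by $\frac{|y|^N}{|x|^{N+1}}\cdot(1-|y|/|x|)^{-1}$.

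First I would split $(\rho\ast|\cdot|^{-1})(x) = I_1(x) + I_2(x)$, with $I_1$ the integral over $\{|y|\leq |x|/2\}$ and $I_2$ the one over $\{|y|>|x|/2\}$. Thanks to the hypothesis $|\rho(y)|\leq C e^{-\alpha|y|}$, the term $I_2$, and more generally $\int_{|y|>|x|/2}|\rho(y)| |y|^n\diff y$ for any $n$, decays faster than any polynomial in $|x|$ and is thus $\grandO{|x|^{-\infty}}$; this also absorbs the singularity of $|x-y|^{-1}$ near $y=x$ via the Hardy--Littlewood--Sobolev inequality applied to $\rho\cdot \mathds{1}_{\{|y|>|x|/2\}}$. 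On $I_1$, $|y|/|x|\leq 1/2$, so the Laplace expansion converges with a remainder estimated by $2|y|^N/|x|^{N+1}$, giving
\[
I_1(x) = \sum_{n=0}^{N-1}\frac{1}{|x|^{n+1}}\int_{|y|\leq |x|/2}\rho(y)P_n(\widehat{x}\cdot\widehat{y})|y|^n\diff y + \grandO{|x|^{-N-1}} \, ,
\]
using again exponential decay to bound $\int|\rho(y)| |y|^N \diff y < \infty$. Extending each integral back to $\R^2$ introduces only a $\grandO{|x|^{-\infty}}$ error, yielding \eqref{eq: expansion_formula_general_order_N}.

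For the radial case, I would pass to polar coordinates $y = r(\cos\theta\,\widehat{x}+\sin\theta\,\widehat{x}^\perp)$, so that $\widehat{x}\cdot\widehat{y}=\cos\theta$ and
\[
\int_{\R^2} \rho(y) P_n(\widehat{x}\cdot\widehat{y})|y|^n\diff y = \left(\int_0^{2\pi} P_n(\cos\theta)\diff\theta\right)\int_0^\infty \rho(r) r^{n+1}\diff r \, .
\]
The angular integral is computed via the classical identity $\frac{1}{2\pi}\int_0^{2\pi}P_n(\cos\theta)\diff\theta = P_n(0)^2$ (which one can check by expanding $P_n(\cos\theta)$ in Fourier series, or by squaring the generating function at $t=0$). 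Since $P_{2k+1}(0)=0$, the odd terms vanish, and the explicit value $P_{2k}(0)=(-1)^k\binom{2k}{k}/4^k$ gives $P_{2k}(0)^2 = \binom{2k}{k}^2/4^{2k}$; recognizing that $\int_{\R^2}\rho(y)|y|^{2k}\diff y = 2\pi\int_0^\infty \rho(r)r^{2k+1}\diff r$, we recover exactly \eqref{potential_expansion2}.

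The main technical point, rather than a real obstacle, is controlling the near-singularity piece of $I_2$: the integrand $|\rho(y)|/|x-y|$ is not pointwise controlled by $e^{-\alpha|y|/2}$, but after the change of variables $z=x-y$ we may apply Young's inequality with $|\cdot|^{-1}\mathds{1}_{B_1}\in L^p$ for $p<2$, which together with $\rho\in L^q$ for all $q$ yields an $\grandO{L^{-\infty}}$ bound uniform in the direction of $x$. Everything else is a straightforward bookkeeping of Legendre expansion remainders.
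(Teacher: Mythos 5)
Your proof is correct, and although it rests on the same central identity as the paper -- the generating-function expansion $(1-2tu+t^2)^{-1/2}=\sum_n P_n(u)t^n$ with $t=\abs{y}/\abs{x}$, $u=\widehat{x}\cdot\widehat{y}$ -- your treatment of the remainder is genuinely different and in fact tighter. The paper splits $\R^2$ into \emph{three} regions using an intermediate scale $R(\abs{x})$ growing strictly slower than $\abs{x}$ (a ball $\abs{y}\le R(\abs{x})$, a ball $\abs{x-y}\le R(\abs{x})$ around the singularity, and the rest), applies Fubini to the full infinite series on the first region and then resums its tail; with $R(\abs{x})\sim\abs{x}^\gamma$ this first gives only $\grandO{\abs{x}^{-N-1+\epsilon}}$, and the $\epsilon$ is removed by expanding to one order higher and truncating. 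Your cut at $\abs{y}=\abs{x}/2$ avoids all of this: on $\{\abs{y}\le\abs{x}/2\}$ the tail of the Legendre series is bounded pointwise by $(\abs{y}/\abs{x})^N(1-\abs{y}/\abs{x})^{-1}\le 2(\abs{y}/\abs{x})^N$, so integrating against $\abs{\rho}$ yields the sharp $\grandO{\abs{x}^{-N-1}}$ directly, with a constant controlled by $C$, $\alpha$, $N$ through $\int\abs{\rho(y)}\abs{y}^N\diff y$; the complementary region, including the singularity at $y=x$, is $\grandO{\abs{x}^{-\infty}}$ by H\"older/Young with $|\cdot|^{-1}\mathds{1}_{B(0,1)}\in L^p$, $p<2$, against the exponentially small $L^q$ and $L^1$ norms of $\rho\,\mathds{1}_{\{\abs{y}>\abs{x}/2\}}$, which replaces the paper's explicit polar-coordinate estimate of the ball around $x$ and is equally uniform in the direction of $x$. (Your passing appeal to Hardy--Littlewood--Sobolev is the one imprecise phrase, since a pointwise rather than integral bound is needed, but the H\"older/Young argument you give afterwards is the correct one.) For the radial case you derive the angular average $\frac{1}{2\pi}\int_0^{2\pi}P_n(\cos\theta)\diff\theta=P_n(0)^2$ instead of quoting the tabulated identity used in the paper; the value $P_{2k}(0)^2=4^{-2k}\binom{2k}{k}^2$ and the vanishing of the odd terms then reproduce \eqref{potential_expansion2}, provided you apply your general expansion at order $2N$ (or $2N+1$) so that the error is indeed $\grandO{\abs{x}^{-2N-1}}$ -- a triviality, but worth stating explicitly.
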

\begin{proof}
	Let $x,y \in \R^2$ such that $\abs{x} > \abs{y}$. We recall the identity \cite[Eq. 22.9.3]{abramowitz1964handbook}
	\begin{align}
	\label{eq: gegenbauer_expansion}
	\frac{1}{\abs{x - y}} 
	= \frac{1}{(\abs{x}^2 - 2 x \cdot y + \abs{y}^2)^{1/2}}
	=\frac{1}{\abs{x}} \sum_{n=0}^\infty P_n(\widehat{x}\cdot\widehat{y}) \left(\frac{\abs{y}}{\abs{x}}\right)^{n} \, ,
	\end{align}
	where $P_n$ denotes the Legendre polynomial of order $n$. Let $R : \R_+ \to \R^*_+ $ be a function such that :
	\[
	\exists \gamma \in \intoo{0}{1} ,~ R(r)/r^\gamma \to \infty \, , \quad \forall r \geq 0,~ R(r) \leq r/2 \et  \lim\limits_{r \to \infty}R(r)/r = 0 \pt
	\]
	Then, we have for all $x \in \R^2 \setminus \{0\}$
	\begin{align}
	\label{eq: decomposition_chasles}
	\left(\rho \ast |\cdot|^{-1}\right)(x)
	= \int_{\abs{y} \leq R(\abs{x})} \frac{\rho(y)}{\abs{x - y}}\diff y 
	+ \int_{\abs{x - y} \leq R(\abs{x})} \frac{\rho(y)}{\abs{x - y}}\diff y
	+ \int_{\min(\abs{y},\abs{x-y})> R(\abs{x})} \frac{\rho(y)}{\abs{x - y}}\diff y \pt
	\end{align}
	Using \eqref{eq: gegenbauer_expansion} and the estimates (see \cite[Ineq. 22.14.7]{abramowitz1964handbook} for the first one)
	\begin{gather}
	\label{eq: 10_02_2021}
	\forall \abs{z}<1\, ,\abs{P_n(x)} \leq 1 \et
	\forall n\in \R,~ \forall R \geq 0 \, , \int_{\abs{y} \leq R} e^{-\alpha \abs{y}} \abs{y}^n \diff y \leq 2\pi \frac{R^{n+1}}{n+1} \,  ,
	\end{gather}
	we can apply Fubini's theorem for
	the first term in \eqref{eq: decomposition_chasles}
	\begin{align*}
	\int_{\abs{y} \leq R(\abs{x})} \frac{\rho(y)}{\abs{x - y}}\diff y  = \frac{1}{\abs{x}} \sum_{n=0}^\infty  \frac{\widehat{P}_n(x)}{\abs{x}^n} \quad \text{with} \quad \widehat{P}_n(x) = \int_{\abs{y} \leq R(\abs{x})} \rho(y)P_n(\widehat{x}\cdot\widehat{y})\abs{y}^n \diff y \pt
	\end{align*}
	To bound the second term in \eqref{eq: decomposition_chasles}, we use the triangle inequality $\abs{y} \geq \abs{x} - \abs{x-y}$ and the properties of the function $R$. We get
	\begin{align*}
	\int_{\abs{x - y} \leq R(\abs{x})} \frac{\rho(y)}{\abs{x - y}}\diff y \leq 2\pi C e^{-\alpha \abs{x}} \int_{0}^{R(\abs{x})} e^{\alpha r} \diff r \leq \frac{2\pi}{\alpha} e^{-\alpha(\abs{x} - R(\abs{x}))} \leq \frac{2\pi}{\alpha} e^{-\alpha R(\abs{x})} \pt
	\end{align*}
	The last term in \eqref{eq: decomposition_chasles} is easily estimated by $ \frac{2\pi C}{\alpha}\left(1+\frac{1}{\alpha R(\abs{x})}\right)e^{-\alpha R(\abs{x})}$. Then, we have shown 
	\begin{align}
	\label{eq: 10_02_2021_2bis}
	\left(\rho \ast |\cdot|^{-1}\right)(x)
	= \sum_{n=0}^\infty  \frac{\widehat{P}_n(x)}{\abs{x}^{n+1}} + \grandO{e^{-\alpha R(\abs{x})}} \, ,
	\end{align}
	where the $O$ depends only on $\alpha$ and $C$. We truncate this series expansion up to the order $N$ for some $N\in\N$. Then we use the estimate
	\begin{align*}
	\abs{\int_{\abs{y} > R(\abs{x})} \rho(y)P_n(\widehat{x}\cdot\widehat{y})\abs{y}^{n-1} \diff y}
	\leq 2 \pi C \int_{R(\abs{x})}^\infty r^{n} e^{-\alpha r} \diff r = \frac{2\pi C n!}{\alpha^{n+1}} e^{-\alpha R(\abs{x})} \sum_{k \leq n} \frac{\left(\alpha R(\abs{x})\right)^k}{k!} \, ,
	\end{align*}
	to get
	\begin{align}
	\label{eq: 10_02_2021_3bis}
	\sum_{n\leq N-1} \frac{\widehat{P}_n(x)}{\abs{x}^{n+1}}
	= \sum_{n \leq N-1} \frac{1}{\abs{x}^{n+1}} \left(\int_{\R^2} \rho(y)P_n(\widehat{x}\cdot\widehat{y})\abs{y}^n \diff y\right) + \grandO{\left(\frac{R(\abs{x})}{\abs{x}}\right)^Ne^{-\alpha R(\abs{x})}} \, ,
	\end{align}
	where the $O$ depends only on $C$, $\alpha$ and $N$. Recalling estimates from \eqref{eq: 10_02_2021}, we have
	\begin{align*}
	\abs{\sum_{n\geq N}  \frac{\widehat{P}_n(x)}{\abs{x}^{n+1}}} 
	&\leq 2\pi C \sum_{n\geq N} \frac{1}{n+1} \left(\frac{R(\abs{x})}{\abs{x}}\right)^{n+1} 
	\leq 2\pi C \sqrt{\sum_{n\geq N} \frac{1}{(n+1)^2}} \sqrt{\sum_{n\geq N} \left(\frac{R(\abs{x})}{\abs{x}}\right)^{2(n+1)}} \\
	&= \grandO{\frac{1}{\sqrt{N}} \left(\frac{R(\abs{x})}{\abs{x}}\right)^{N+1}} \, ,
	\end{align*}
	where the $O$ depend only on $C$ and $N$. Of course, we have used the Cauchy-Schwarz inequality to get the second line. We insert this previous estimate and \eqref{eq: 10_02_2021_3bis} into \eqref{eq: 10_02_2021_2bis}. By choosing $R(\abs{x})$ small enough compared to $\abs{x}$, we get \eqref{eq: expansion_formula_general_order_N} with $\grandO{\abs{x}^{-N-1+\epsilon}}$ as remaining term for some $\epsilon \in \intoo{0}{1}$. To get rid of the $\epsilon$, we write the next order expansion then truncate the expansion to the last order term.
	
	If we assume that $\rho$ is radial, we can explicitly compute the integral $\int_{\R^2} \rho(y)P_n(\widehat{x}\cdot\widehat{y})\abs{y}^n \diff y$ for $n\in\N$. By radial symmetry, we can assume $\widehat{x} = (1,0)$ and switch to polar coordinates. Identity \cite[Eq. 22.13.6]{abramowitz1964handbook} show that $\int_0^{2\pi} P_{2n}(\cos \theta) \diff \theta = \frac{2\pi}{4^{2n}} \binom{2n}{n}^2$. Recall that $P_{2n+1}$ involves only odd degree monomials. As a consequence: $\int_0^{2\pi} P_n(\cos \theta) \diff \theta = 0$. Then, we have shown \eqref{potential_expansion2}.
\end{proof}
\begin{rem}
	\label{rem: expansion_formula}
	\begin{enumerate}[noitemsep, label=(\roman*)]
		\item In the case where $\rho$ is radial, the two first orders are given by
		\begin{align*}
		\int_{\R^2} \frac{\rho(y)}{\abs{x - y}} \diff y 
		= \frac{1}{\abs{x}} \int_{\R^2} \rho(y) \diff y + \frac{1}{4\abs{x}^3} \int_{\R^2} \rho(y) \abs{y}^2 \diff y + \frac{9}{64\abs{x}^5} \int_{\R^2} \rho(y) \abs{y}^4 \diff y + \grandO{\frac{1}{\abs{x}^{7}}} \pt
		\end{align*}
		\item More generally, with the same assumptions as in Lemma \ref{lemma: potential_expansion} and following a similar proof, we can show that for all $a,\delta>0$ and all $N\geq 0$, we have
		\begin{align}
		\label{eq: expansion_formula_general_order_N_bis}
		\int_{\abs{y} \leq (1-\delta)\abs{x}} \frac{\rho(y)}{\abs{x-y}^a} \diff y
		= \sum_{n =0}^{N-1} \frac{1}{\abs{x}^{n+a}} \left(\int_{\R^2} \rho(y)C_n^{(a/2)}(\widehat{x}\cdot\widehat{y})\abs{y}^n \diff y\right) + \grandO{\frac{1}{\abs{x}^{N+a}}}  \, ,
		\end{align}
		where the $O$ depends on $C$, $\alpha$, $N$ and $a$. Here, $C_n^{(a/2)}$ denotes the \emph{ultraspherical} (or \emph{Gegenbauer}) polynomials with parameter $a/2$. The cutoff in \eqref{eq: expansion_formula_general_order_N_bis} is mandatory only in the case $a\geq 2$. Otherwise, the $\rho \ast |\cdot|^{-a}$ would not be well-defined. If $\rho$ is radial, then we can explicit the coefficients
		\begin{align}
		\label{potential_expansion2_bis}
		\int_{\abs{y} \leq (1-\delta)\abs{x}} \frac{\rho(y)}{\abs{x - y}^a} \diff y 
		=\sum_{n =0}^{N-1} \binom{-a/2}{n}^2\frac{1}{\abs{x}^{2n+a}}  \left(\int_{\R^2} \rho(y) \abs{y}^{2n} \diff y\right) +\grandO{\frac{1}{\abs{x}^{2N+a}}} \pt
		\end{align}
		If we assume that $\rho$ has compact support then we can get rid of the cutoff in \eqref{eq: expansion_formula_general_order_N_bis} and \eqref{potential_expansion2_bis}.
	\end{enumerate}	
\end{rem}

\section{Proof of Lemma \ref{lemma: lemma_technical_exponential_convolution}}\label{sec:technical-results}

	The proof of the first part of \eqref{eq: lemma_technical_exponential_convolution} is given in \cite[Lemma 21]{gontier2020nonlinear} where the authors give an efficient proof of in the case $k = \frac{d-1}{2}$ and in dimensions $d \geq 1$. Our proof is more computational, not optimal when $k = \frac{d-1}{2}$ and dimension dependent but we are able to handle estimate the second part of \eqref{eq: lemma_technical_exponential_convolution}.
We recall the following basic convexity/concavity inequalities
\begin{gather}
\label{eq: convexity_inequality_1}
\forall k \geq 1,~\forall a \geq 0,~ 1+a^k \leq (1+a)^k \leq 2^{k-1}(1+a^k) \, ,\\
\label{eq: convexity_inequality_2}
\forall k \in \intff{0}{1},~\forall a\geq 0,~ (1+a)^k \leq 1+a^k \leq 2^{1-k}(1+a)^k \pt
\end{gather}
First, we treat the $d=2$ case.	
By radial symmetry, we only have to show the first part of \eqref{eq: lemma_technical_exponential_convolution} for $x = (L,0)$ for any $L\geq 0$. Using \eqref{eq: convexity_inequality_1} if $k\geq 1$ or \eqref{eq: convexity_inequality_2} if $k\in\intof{0}{1}$ and after an affine change of variables, we obtain
\[
\abs{(v\ast v)(x)} \lesssim \int_{\R^2} \frac{e^{-\nu  (\abs{y-\frac{x}{2}}+\abs{y+\frac{x}{2}})}}{\left(1 + \abs{y-\frac{x}{2}}\right)^k \left(1+ \abs{y+\frac{x}{2}}\right)^k } \diff y \lesssim  
\int_{\R^2} \frac{e^{-\nu  (\abs{y-\frac{x}{2}}+\abs{y+\frac{x}{2}})} }{(1 + \abs{y-\frac{x}{2}} + \abs{y+\frac{x}{2}})^k}\diff y  \pt
\]
We denote by $I$ the second integral.
For all $a \geq L/2$, the level set $\abs{y-\frac{x}{2}} + \abs{y+\frac{x}{2}} = 2a$ is the ellipse $\mathcal{E}_a$ with linear excentricity $c = L/2$ and semi-major axis $a$. Recalling the circumference of $\mathcal{E}_a$ is equal to $4aE(\frac{L}{2a})$ where $E$ denoted the complete elliptic integral of second kind, we have
\[
I = \int_{L/2}^\infty \frac{e^{-2\nu a}}{(1+2a)^k} \left(\int_{\mathcal{E}_a} \diff y\right)  \diff a = \int_{L/2}^\infty \frac{4aE(\frac{L}{2a})}{(1+2a)^k} e^{-2\nu a} \diff a = \int_{L}^\infty \frac{aE(\frac{L}{a})}{(1+a)^k} e^{-\nu a} \diff a \pt
\]
The map $e \mapsto E(e)$ being bounded on $\intff{0}{1}$ by $\frac{\pi}{2}$, we deduce
\[
I \leq \frac{\pi}{2} \int_{L}^\infty \frac{e^{-\nu a}}{(1+a)^{k-1}} \diff a \pt
\]
When $k \geq 1$, we bound $(1+a)^{-(k-1)}$ by $(1+L)^{-(k-1)}$. When $0 \leq k < 1$, an integration by parts leads to
\[
I \leq \frac{\pi}{2 \nu} \frac{e^{-\nu L}}{(1+L)^{k-1}} + \frac{\pi (1-k)}{2 \nu} \int_{L}^\infty \frac{e^{-\nu a}}{(1+a)^{k}} \diff a \leq \frac{\pi e^{-\nu L}}{2\nu} \left( \frac{1}{(1+L)^{k-1}} + \frac{1-k}{\nu(1+L)^k}\right) \pt
\]
Using again the convexity/concavity inequalities \eqref{eq: convexity_inequality_1} and \eqref{eq: convexity_inequality_2} depending on $k$ shows the first part of \eqref{eq: lemma_technical_exponential_convolution}. 

To prove the second part of \eqref{eq: lemma_technical_exponential_convolution}, we use the same strategy which amounts to bound
\[
J \coloneqq \int_{L/2}^\infty  \frac{e^{-2\nu a}}{(1+2a)^k} \left(\int_{\mathcal{E}_a} \frac{\diff y}{\abs{y+\frac{x}{2}}} \right)\diff a \pt
\]
First, we estimate $\int_{\mathcal{E}_a} \frac{\diff y}{\abs{y+\frac{x}{2}}}$.
Let $b = \sqrt{a^2-c^2}$ (recall that $c = L/2$) be the semi-minor axis of $\mathcal{E}_a$. Then using the standard parametric representation $y(s) = (a\cos(s),b\sin(s))$, we have
\begin{gather*}
\abs{y(s) + \frac{x}{2}} = \sqrt{(c+a\cos(s))^2+b^2\sin^2(s)} = a + c \cos(s) \, , \\
\abs{y'(s)} = \sqrt{a^2 \sin^2(s)+b^2\cos^2(s)}= \sqrt{a^2 - c^2 \cos^2(s)} \pt
\end{gather*}
This leads to
\[
\int_{\mathcal{E}_a} \frac{\diff y}{\abs{y+\frac{x}{2}}} = 2 \int_0^\pi \frac{\abs{y'(s)}}{\abs{y(s) + \frac{x}{2}}} \diff s = 2 \int_{0}^{\pi} \sqrt{\frac{a - c \cos(s)}{a+ c \cos(s)}} \diff s \leq 2 \pi \sqrt{\frac{a+c}{a-c}} \leq \frac{2\sqrt{2a} \pi}{\sqrt{a-c}} \pt
\]
Going back to $J$, we have
\begin{align}
\label{eq: nom_a_la_con}
J \leq 2\sqrt{2}\pi \int_{L/2}^\infty  \frac{\sqrt{a}e^{-2\nu a}}{(1+2a)^k\sqrt{a-c}}  \diff a \leq \frac{\sqrt{2}\pi}{(1+L)^k} \int_L^\infty \frac{\sqrt{a} e^{-\nu a}}{\sqrt{a- L}} \diff a \pt
\end{align}
The map $a \mapsto \frac{\sqrt{a}}{\sqrt{a-L}}$ being decreasing on $\intoo{L}{\infty}$, we get
\[
\int_{L+1}^\infty \frac{\sqrt{a} e^{-\nu a}}{\sqrt{a- L}} \diff a \leq \frac{\sqrt{L+1} }{\nu} e^{-\nu (L+1)} \pt
\]
Then, by the Hölder's inequality, we get
\[
\int_L^{L+1}\frac{\sqrt{a} e^{-\nu a}}{\sqrt{a- L}} \diff a 
\leq \sqrt{L+1} \left( \int_{0}^{1} a^{-\frac{p}{2}} \diff a\right)^{1/p} \left(\int_L^{L+1} e^{-\nu q a} \diff a\right)^{1/q} \, ,
\]
where we have choosen $p \in \intoo{1}{2}$ and $q \in \intoo{2}{\infty}$ such that $\frac{1}{p} + \frac{1}{q} = 1$. After some computations, we found
\[
\int_L^{L+1}\frac{\sqrt{a} e^{-\nu a}}{\sqrt{a- L}} \diff a \leq \frac{\left(1 - e^{-\nu q}\right)^{1/q}}{(\nu q)^{1/q}(1-p/2)^{1/p}} \sqrt{L+1} e^{-\nu L} \pt
\]
We could optimize with respect to $p$ and $q$ but it is not necessary. We inject this two previous estimate into \eqref{eq: nom_a_la_con} and we get
\[
J \leq \frac{C'}{(1+L)^{k - \frac{1}{2}}} e^{-\nu L} \, ,
\]
for some constant $C'$. Then we use the convexity/concavity inequalities \eqref{eq: convexity_inequality_1} or \eqref{eq: convexity_inequality_2} depending on $k$ to conclude the proof of the second part of \eqref{eq: lemma_technical_exponential_convolution}.

Now, we treat the $d=3$ case. The arguments are similar so we only give a sketch of the proof. To show left estimate of \eqref{eq: lemma_technical_exponential_convolution}, we have to bound the integral
\begin{align*}
I = \int_{\R^2} \frac{e^{-\nu  (\abs{y-\frac{x}{2}}+\abs{y+\frac{x}{2}})} }{(1 + \abs{y-\frac{x}{2}} + \abs{y+\frac{x}{2}})^k}\diff y  \pt
\end{align*}
For all $a\geq L/2$, the level set $\abs{y-\frac{x}{2}}+\abs{y+\frac{x}{2}}= 2a$ is a prolate spheroid (or ellipsoid of revolution) denotes $\mathcal{E}_{a,2}$. Its parameters are the distance to the poles $a$, the equatorial semi-axis $b = \sqrt{a^2-c^2}$ where $c = L/2$ and the eccentricity $e=c/a$. The surface of $\mathcal{E}_{a,2}$ is equal to
\begin{align*}
\int_{\mathcal{E}_{a,2}} \diff y = 2\pi \left(b^2 + \frac{ba}{e} \arcsin(e)\right) \pt
\end{align*}
Using the coarse upper bounds $\arcsin(x)/x \leq \pi/2$ and $b\leq a$, we have
\begin{align*}
\int_{\mathcal{E}_{a,2}} \diff y \leq 2\pi \left(1+\frac{\pi}{2}\right)a^2\pt
\end{align*}
We conclude by performing the same computations as in the two-dimensional case.

It remains to show the right estimate of \eqref{eq: lemma_technical_exponential_convolution} when the $d=3$. For this aim, we need to evaluate the integral $\int_{\mathcal{E}_{a,2}} \frac{\diff y}{\abs{y + \frac{x}{2}}}$. We introduce the parametrization
\[
\forall s \in \intff{0}{2\pi},~\forall t \in \intff{0}{\pi},~\mathbf{x}(s,t) = \begin{pmatrix}
a \cos (s)\cos (t) \\
a \cos (s) \sin (t) \\
b\sin (s)
\end{pmatrix}   \pt
\]
Then, we have
\begin{gather*}
\norme{\left(\frac{\partial \mathbf{x}}{\partial s} \wedge \frac{\partial \mathbf{x}}{\partial t}\right)(s,t)} = a \abs{\cos(s)} \sqrt{a^2 - c^2 \cos^2(s)} \, ,\\
\abs{\mathbf{x}(s,t) + \frac{x}{2}} = \sqrt{a^2 + c^2\cos^2(s) + 2 a c \cos(s) \cos(t)} \pt
\end{gather*}
We denote by $K$ the complete elliptic integral of the first kind. Then, we have
\begin{align*}
\int_{\mathcal{E}_{a,2}} \frac{\diff y}{\abs{y + \frac{x}{2}}}
= 4a \int_0^{\pi/2} \cos(s) \sqrt{\frac{a - c \cos(s)}{a+c\cos(s)}} K\left(\frac{2\sqrt{e\cos(s)}}{1+e \cos(s)} \right)\diff s \pt
\end{align*}
The maps $e \mapsto K\left(\frac{2\sqrt{e}}{1+e}\right)$ being increasing on $\intff{0}{1}$, we have
\begin{align*}
\int_{\mathcal{E}_{a,2}} \frac{\diff y}{\abs{y + \frac{x}{2}}}
\leq 4a K\left(\frac{2\sqrt{e}}{1+e}\right) \int_0^{\pi/2}  \sqrt{\frac{a - c \cos(s)}{a+c\cos(s)}} \diff s \leq \frac{4\pi a\sqrt{2a}}{\sqrt{a-c}} K\left(\frac{2\sqrt{e}}{1+e}\right) \pt
\end{align*}
Then, using the inequality $K(k) \leq \pi/2 - \ln(\sqrt{1-k^2})$ which is valid for any $k\in \intoo{0}{1}$, we obtain
\begin{align*}
\int_{\mathcal{E}_{a,2}} \frac{\diff y}{\abs{y + \frac{x}{2}}} \leq \frac{4\pi a\sqrt{2a}}{\sqrt{a-c}} \left(\frac{\pi}{2} + \ln \left(1+\frac{2c}{a-c}\right)\right) \pt
\end{align*}
The remaining computations are similar to the two-dimensional case. We just need to notice that $t \mapsto \frac{1}{\sqrt{t} \ln(t)}$ is integrable in the vicinity of $0$.
	

\end{document}